\def\shadeB{\cellcolor{blue!5}}
\def\shadeR{\cellcolor{red!5}}
\newtheorem{theorem}{Theorem}[section]
\newtheorem{lemma}[theorem]{Lemma}
\newtheorem{definition}{Definition}[]
\definecolor{maxcolor}{rgb}{1,0.03,0}
\definecolor{dblue}{rgb}{0.25,0.03,0.8}
\definecolor{vecolor}{rgb}{0.7,0.3,0.9}
\definecolor{color1a}{RGB}{240,20,20}
\colorlet{color1}{color1a!15!white}
\definecolor{color2a}{RGB}{20,240,20}
\colorlet{color2}{color2a!15!white}
\definecolor{color3a}{RGB}{20,20,240}
\colorlet{color3}{color3a!15!white}
\definecolor{color4a}{RGB}{250,250,0}
\colorlet{color4}{color4a!30!white}
\definecolor{color5a}{RGB}{00,90,250}
\colorlet{color5}{color5a!30!white}
\definecolor{color6a}{RGB}{250,10,0}
\colorlet{color6}{color6a!33!white}
\definecolor{color7a}{RGB}{180,40,220}
\colorlet{color7}{color7a!40!white}
\def\o{\mathcal{O}}
\def\a{\mathcal{A}}
\def\b{\mathcal{B}}
\def\cs{\mathcal{C}}
\def\d{\mathcal{D}}
\def\e{\mathcal{E}}
\def\I{{\bf I}}
\def\c{\mathscr{C}}
\def\f{\mathscr{F}}
\def\om{\boldsymbol{\Omega}}
\DeclareMathOperator{\Tr}{Tr}
\newcommand\eqdef{\mathrel{\overset{\makebox[0pt]{\mbox{\normalfont\tiny\sffamily def}}}{=}}}
\def\n{{\sf n}}
\def\N{{\sf N}}
\def\r{{\sf R}}
\def\sym{\textbf{Sym}}
\def\aut{\textbf{Aut}}
\def\per{\textbf{Per}}
\def\domain{\bm{\mathscr{D}}}
\newcommand{\qcf}[1]{Q_{#1}}
\newcommand{\bQ}{{\mathbf Q}}
\newcommand{\extr}[1]{\mathcal{E}_{#1}}
\def\esf{\omega}
\def\regA{\mathcal{A}}
\def\bulk{\mathcal{M}}
\def\univ{\mathcal{O}}
\newcommand{\si}{\mathscr{I}}
\newcommand{\sj}{\mathscr{J}}
\newcommand{\sk}{\mathscr{K}}
\newcommand{\arr}{\text{{\Large $\mathbb{A}$}}}
\newcommand{\hyper}{\mathbb{h}}
\newcommand{\pset}{\bm{\Delta}_*([{\sf N}])}
\newcommand{\psett}{\bm{\Delta}_*}
\newcommand{\comb}{\raisebox{-4pt}{\text{\PHrosette}}}
\title{The holographic entropy arrangement}
\author[a]{Veronika  E. Hubeny,}
\author[a]{Mukund Rangamani,}
\author[b]{\! Massimiliano Rota}
\affiliation[a]{Center for Quantum Mathematics and Physics (QMAP)\\
Department of Physics, University of California, Davis, CA 95616 USA}
\affiliation[b]{Department of Physics, University of California, Santa Barbara, CA 93106, USA}
\emailAdd{veronika@physics.ucdavis.edu}
\emailAdd{mukund@physics.ucdavis.edu}
\emailAdd{mrota@physics.ucsb.edu}
\abstract{
 We develop a convenient framework for characterizing multipartite entanglement in composite systems, based on relations between entropies of various subsystems.  This continues the program initiated in  \citep{Hubeny:2018trv}, of using holography to effectively recast the geometric problem into an algebraic one.  We prove that, for an arbitrary number of parties, our procedure identifies a finite set of entropic information quantities that we conveniently represent geometrically in the form of an arrangement of hyperplanes. This leads us to define the \emph{holographic entropy arrangement}, whose  algebraic and combinatorial aspects we explore in detail. Using the framework, we derive three new information quantities for four parties, as well as a new infinite family for any number of parties. A natural construct from the arrangement is the  \emph{holographic entropy polyhedron}  which captures holographic entropy inequalities describing the physically allowed region of  entropy space.  We illustrate how to obtain the polyhedron by winnowing down the arrangement  through a sieve to pick out candidate sign-definite information quantities. Comparing the polyhedron with the holographic entropy cone, we find perfect agreement for 4 parties and corroborating evidence for the conjectured 5-party entropy cone.  We work with explicit configurations in  arbitrary (time-dependent) states leading to both simple derivations and an intuitive picture of the entanglement pattern.
}
\begin{document} 

\maketitle
\flushbottom

\section{Introduction}
\label{sec:intro}

Developing a general theory of multipartite correlations for arbitrary quantum states is an extremely interesting, but hard, problem. In general, little is known even for very simple quantum systems.\footnote{ For a recent introduction to the subject see \cite{2016arXiv161202437W}, more extended reviews are \citep{Horodecki_2009,G_hne_2009}.} Nevertheless, in specific contexts, there are certain useful measures  which appear  to capture some kind of multipartite correlation (at least intuitively). 

A paradigmatic example is the \textit{tripartite information}, which has found many applications in areas of quantum physics.\footnote{ It is also related to the idea of interaction information which is a generalization of mutual information to the three-party setting, first discussed in \cite{McGill:1954aa}.}  For instance, in condensed matter theory it can characterize topological phases \citep{Kitaev:2005dm, Levin:2006zz}. In holographic field theories, it is associated to an inequality (the monogamy of mutual information \citep{Hayden:2011ag}, or MMI for short) which is often believed to characterize \textit{geometric states}, i.e., states of holographic CFTs dual to classical geometries. Moreover, it has also been argued to provide a useful measure for detecting quantum chaos \citep{Hosur:2015ylk} probed by out-of-time-order correlators (averaged over the set of local operators). Its ubiquity lends support to the thesis that other measures of mutipartite correlations could likewise provide a useful diagnostic for interesting states and dynamics in QFTs.  Specifically, in the context of holography, they might be useful tools to uncover some features of the mechanism by which the bulk theory is encoded in the boundary \citep{VanRaamsdonk:2010pw,Maldacena:2013xja}, and perhaps even a mechanism whereby the bulk arises in the first place.

The first step in developing a theory of multipartite correlations is to specify what type of correlations one is interested in. From a purely quantum information theoretic standpoint, this is typically done from an operational perspective -- more precisely, from the point of view of resource theories. In a nutshell, one first specifies what states and operations are available `for free', which in turn defines what the `precious' resources are. Perhaps the best known example is the resource theory of quantum entanglement \cite{Plenio:2007aa}, where the allowed operations are called LOCC (local operation and classical communication).\footnote{ This means that the various parties can perform arbitrary operations on their share of the system and are allowed to communicate classically with each other.} In this case, the states which are available for free are the ones which can be prepared using only LOCC operations -- they are the so called `separable states'. States which are not separable, the `entangled states', can then be thought of as a resource for various tasks. In principle, given an $\N$-partite Hilbert space, one can classify states (or more generally density matrices) into equivalence classes, by declaring states to be equivalent if they can be mapped into each other by LOCC operations. The problem with this approach is that even for pure states of small quantum systems, one finds an under-determined classification problem; eg., there are infinitely many classes for pure states of four qubits \citep{Verstraete_2002}.\footnote{ More precisely, the classes defined in \citep{Verstraete_2002} were obtained by considering an even weaker requirement for equivalence, wherein one only requires that the conversion is achieved with some probability and not necessarily with certainty.} 

It  therefore seems quite evident that this approach has inherent limitations in QFTs, as both the infinitude of the Hilbert space and the unclarity in the nature of LOCCs present obstacles.  To make progress, we can either try to access further information contained in correlation functions of local operators, or better yet, come up with interesting quantum information inspired measures. In particular, we can specify a-priori  criteria for a reasonable measure of correlations and attempt to identify quantities that satisfy them. This can be done in a restricted context to begin with. However, once obtained, these quantities can be examined for their utility more broadly. This philosophy was exploited in our recent framework  \citep{Hubeny:2018trv} where we used the holographic setting to propose new \emph{information quantities}.  The specifics of the construction relied on the tools of  holographic entanglement entropy using the  RT and HRT proposals \citep{Ryu:2006bv,Hubeny:2007xt},\footnote{ Namely, the entanglement entropy of a given spatial region in a geometric state of a holographic CFT is determined by the quarter-area of a smallest area extremal surface homologous to that region; for reviews see \cite{VanRaamsdonk:2016exw,Rangamani:2016dms,Harlow:2018fse}.} which resulted in measures given by specific linear combinations of entropies.  

To identify potentially useful measures, two conditions were required in \citep{Hubeny:2018trv}. The first requirement was that it should be possible for any reasonable quantity to vanish, so that one can examine situations in which a particular type of correlations is absent.  
This is hard to do in general QFTs, since not only does the entanglement entropy itself typically diverge, but even the finite quantities such as mutual information constructed therefrom typically remain non-vanishing --- in other words, it is hard to fully decorrelate spatially-separated regions in a connected QFT.  On the other hand, in the more restricted context of holographic CFTs,  we can ask for information quantities to vanish, in a regulator independent manner, at least to leading order in the planar (large $N$ or equivalently large central charge) expansion.  Quantities which are in this sense well behaved, and can vanish, are said to be \textit{faithful}. 
However, as one might easily guess, faithfulness is a very weak requirement and does not suffice to extract a finite set of information quantities. For example, arbitrary linear combinations of instances of the mutual information, evaluated for various pairs of subsystems, are all faithful. The second, more powerful, requirement is a notion of independence for the various measures. Heuristically, different information quantities measure different kinds of correlations,  so they should vanish  independently from each other, thus allowing us to isolate circumstances where only certain subsets of correlations are present. Faithful quantities which also satisfy this notion of independence are said to be \textit{primitive}.

Using a partial reformulation of the RT/HRT prescriptions, in terms of what we termed  \textit{proto-entropy},\footnote{ Essentially one thinks  of the entropy of a region as being represented by a bulk extremal surface itself, rather than by its area.
} the conditions of faithfulness and primitivity can then usefully be converted into algebraic relations. This allowed us to  rephrase the search for the primitive information quantities into a combinatorial problem for connected components of bulk extremal surfaces. In particular, the search reduces to a scan over possible field theory configurations. The procedure is furthermore simplified by a sort of `gauge fixing', allowing one to restrict attention to the vacuum of a holographic CFT$_3$ on $\mathbb{R}^{2,1}$. 

The construction was exemplified for the case of three parties, and provides the first derivation of the tripartite information using only holographic arguments.\footnote{ The procedure also generates the mutual information.} As the number of parties $\N$ increases, the situation gets much more intricate, and in general it is an open question how to find all the primitive quantities. A first step in this direction however was already presented in \citep{Hubeny:2018trv}. Under a certain restriction on the topology of the allowed field theory configurations, one can show that the problem simplifies dramatically, and one can in fact find all the primitive quantities for an arbitrary number of parties. This result, which we refer to as the ${\bf I}_\n$-theorem, shows that under such topological restriction,  the primitive quantities correspond to the natural ${\sf n}$-party generalization of the mutual and tripartite information.

One motivation for the present work is to take a further step towards the derivation of new information quantities, going beyond the result of the ${\bf I}_\n$-theorem. While one might suspect that the topological restriction on field theory configurations that led to the ${\bf I}_\n$-theorem was quite special, and that it should be easy to find other configurations that generate new quantities, this is far from true. A remarkable property of the framework of \cite{Hubeny:2018trv} is the relatively small number of quantities that get generated as the number of parties $\N$ increases. While the present work will still mostly focus on a relatively small $\N$, we will show how one is forced to consider rather fine-tuned classes of configurations to circumvent the result of the ${\bf I}_\n$-theorem and generate new information quantities. By considering a carefully chosen set of building blocks, we will explicitly derive three new information quantities for four parties and a new infinite family containing a new quantity for each value of $\N$. 

Although here we will not yet explore the potential applications of these quantities, either in holography or more generally, we will prove that all the primitive quantities found so far are well defined measures of correlations in arbitrary QFTs, since they, like the mutual information, are finite and independent of any regulating scheme (when evaluated on configurations of non-adjoining regions). As we will see, this fact is related to a simple algebraic property of the information quantities, usually referred to as \emph{balance}. Furthermore, all the information quantities that we find via our construction for $\N\geq 3$ satisfy a natural generalization of this property, that we will call \emph{superbalance}, which guarantees that the new quantities are finite and scheme-independent in an arbitrary QFT even more generally, when evaluated on certain configurations where regions are adjoining. 

As explained in \citep{Hubeny:2018trv}, each primitive quantity can naturally be associated to a hyperplane in entropy space. The set of all primitive quantities for a given number of parties is then an arrangement of hyperplanes --  the \textit{holographic entropy arrangement}. Part of the present work is dedicated to a first exploration of the structure of this arrangement. We will highlight its fundamental structural properties, in particular its symmetries with respect to various permutations, and explain how arrangements defined for a different number of parties can be related using our technology. A natural expectation is that this geometric representation of the set of such correlation measures which emerges from our framework will prove useful to characterize the entanglement structure of geometric states in holography.

It is important to note that we are not a-priori requiring any sign-definiteness for our primitive quantities. While this would be a standard requirement for any correlation measure, there are advantages to our sign-agnostic stance. In particular, being maximally inclusive allows for a fuller and potentially more natural characterization of the entanglement structure.  Furthermore, it may happen that  sign-indefinite quantities become sign-definite under further restrictions.
 For example, the tripartite information ${\bf I}_3$ is not generally sign-definite for arbitrary quantum states, but ends up being so for geometric states in holography (MMI), to leading order in $1/N$. 
 Therefore, if one specializes to the holographic context, and restricts to the limit $N\rightarrow\infty$, a natural question arises: Which primitive quantities, like ${\bf I}_3$, have a definite sign? This will lead us to introduce a new object, based on the arrangement, that we will call the \textit{holographic entropy polyhedron,} drawing a clear connection between our framework and that of the \textit{holographic entropy cone}  \citep{Bao:2015bfa}. 
 
In \citep{Bao:2015bfa} it was shown that for three and four parties ($\N=3,4$), there are no holographic entropy inequalities other than MMI; new inequalities have been found for five  (or more) parties. These new inequalities were the result of a search algorithm which unfortunately does not provide a systematic way to derive new ones, nor does it suggest an interpretation for their significance in the holographic context. An interesting suggestion in this direction was recently put forth in \citep{Cui:2018dyq}, which used arguments based on the bit-thread interpretation \citep{Freedman:2016zud} of the RT formula to conjecture a particular decomposition of geometric states. 

While we will not provide any interpretation of the holographic inequalities, we show that our framework has the potential to derive new ones, for  any number of parties. Specifically, we develop an algorithm,  called the \textit{sieve}, which can be used to extract, from a list of primitive quantities, a subset of candidates for new inequalities. More generally, we will show how this procedure can be employed to construct a candidate for the holographic entropy polyhedron. We will first exemplify the construction in the simple case of four parties, showing  that the outcome of the algorithm is precisely the $4$-party holographic entropy cone derived in \citep{Bao:2015bfa}. For the more complicated case of five parties, we will not show how to derive the information quantities associated to the new inequalities of \citep{Bao:2015bfa} directly, and remain agnostic about whether they are indeed primitive (although we suspect that this is indeed the case). However, a useful property of the sieve is that it can be applied more abstractly, even without the explicit knowledge of a set of possible candidates. Running this procedure for $\N=5$ we are able to derive all the new inequalities of \citep{Bao:2015bfa} with remarkable simplicity.

The holographic entropy cone of \citep{Bao:2015bfa} was only obtained for static or time-reflection symmetric situations. For strong subadditivity (SSA) and MMI, using the maximin construction, \cite{Wall:2012uf} gives a way to generalize the original proofs of \cite{Headrick:2007km} and \cite{Hayden:2011ag}, respectively.  However, \citep{Rota:2017ubr} argues that the techniques of \citep{Wall:2012uf} cannot be employed for the proof of the five (or more) party inequalities of \citep{Bao:2015bfa} in dynamical situations.\footnote{
Recently, \cite{Bao:2018wwd} have advanced arguments in favour of the applicability of the holographic entropy cone in  dynamical settings. Specifically they show for certain configurations in a collapsing black hole geometry, the inequalities continue to hold at late times.  One can argue that this restriction can be lifted in the case of two-dimensional conformal field theories with AdS$_3$ holographic duals \cite{Czech:2018aa}.  } A rigorous proof of the inequalities of \citep{Bao:2015bfa}, and possibly new inequalities for $\N\geq 5$ which can emerge from our framework \citep{Hubeny:2018aa}, will therefore require some new technology (perhaps based on bit-threads, as suggested by the new proofs of MMI given in \citep{Cui:2018dyq,Hubeny:2018bri}). However, the fact that the notion of primitive quantities, and therefore the full holographic entropy arrangement, is insensitive to the distinction between static and dynamical spacetimes, lends support to the intuition that the RT and HRT holographic entropy cones may in fact coincide.

The plan of the paper is as follows. In \S\ref{sec:review} we briefly review the definitions and the technology of the framework introduced in \citep{Hubeny:2018trv}. In \S\ref{sec:arrangement} we introduce the holographic entropy arrangement,  investigate some of its general properties, and introduce a general notation to catalog the information quantities for an arbitrary number of parties. In \S\ref{sec:degenerate} we discuss the relation between algebraic properties of the information quantities, like balance and superbalance, their properties as measures of correlations in arbitrary QFT, and certain topological properties of the configurations from which they are generated.  \S\ref{sec:relations} then focuses on how our technology can be employed to investigate the relation between arrangements obtained for different number of parties. The derivation of new information quantities for four parties, as well as the new infinite family, is covered in \S\ref{sec:four}. The holographic entropy polyhedron is defined in \S\ref{sec:sieve}, where we also present the sieve for the $\N=4,5$ cases and discuss its possible generalizations. We conclude in \S\ref{sec:discuss} with a discussion of the results and some comments on multiple future directions. A short table summarizing our notation and vocabulary can found overleaf in Table \ref{tab:summary}.

\begin{table}[htbp]
\begin{center}
\footnotesize
\begin{tabular}{||l|c|l||}
\hline
\hline
 \shadeB{Number of colors} & $\N$  &   The total number of parties that defines the set-up \\
   \hline
    \shadeB{Region} &  $\a_\ell^i$ &   A connected region labeled by a color $\ell$ \\
 \hline
\shadeB{Monochromati{}c subsystem} & $\a_\ell$  &   The union of all the regions with the same color $\ell$ \\
   \hline
  \shadeB{Configuration} &  $\c_\N$  & A collection of $\N$ monochromatic subsystem    \\
 \hline
  \shadeB{Polychromatic subsystem} &  $\a_\si$ &     A collection (union) of monochromatic subsystem\\
 \hline
  \shadeB{Entropy vector} &  ${\bf S}$ &    The vector of entropies of all polychromatic subsystem \\
   \hline
  \hline
 \shadeR{Information quantity} & $\bQ$  &  A linear combination of entropies with $Q_\si\in\mathbb{Z}$  \\
  \hline
\shadeR{Rank}   &  $\r$  & Min number of colors required to define $\widetilde{\bQ}$   \\
   \hline
   \shadeR{Abstract quantity} & $\widetilde{\bQ}_\r$  &  An information quantity with unspecified subsystem  \\
 \hline
  \shadeR{Standard isomer}  & $\widetilde{\bQ}_\r^e$ &  A form of $\widetilde{\bQ}_\r$ used to construct the isomers  \\
 \hline
 \shadeR{Isomers}  & $\widetilde{\bQ}_\r[\sigma_\bQ]$ &  Forms of $\widetilde{\bQ}_\r$ obtained from  $\widetilde{\bQ}_\r^e $ by permutations  \\
 \hline
 \hline
 \shadeB{Instance}    &  $\bQ_\r$   &   A realizations of $\widetilde{\bQ}_\r$ in a $\N$-party set-up \\
\hline
\shadeB{Character}    &  $\vec{\n}$   &  The number of colors merged in each slot of $\bQ_\r$   \\
\hline
\shadeB{Standard instance}  &  & An instance of $\bQ_\r[\sigma_\bQ]$ with the first $\n$ colors from $\N$  \\
\hline
\shadeB{Natural instance}    &     & An instance of  $\widetilde{\bQ}_\r$ in a set-up with $\N=\r$   \\
\hline
\shadeB{Uplifting}    &     &   An instance of  $\widetilde{\bQ}_\r$ in a set-up with $\N>\r$  \\
\hline
\shadeB{Trivial uplifting}   &   &  An uplifting of  $\widetilde{\bQ}_\r$ with total character $\n=\r$    \\
 \hline
 \shadeB{$\N$-orbit} &   &   Equiv. class of instances under the action of $\sym_\N$  \\
  \hline
 \shadeB{Purification (operator)} & $\mathbb{P}_\ell$  &  Purification transformation w.r.t. the color $\ell$   \\
  \hline
   \shadeB{Purification (quantity)} &  $\widetilde{\bQ}_\r^{[i]}$  & The result of $\mathbb{P}_\ell\widetilde{\bQ}_\r$ when different from $\widetilde{\bQ}_\r$    \\
  \hline
 \shadeB{$(\N+1)$-orbit} &   &   Equiv. class of instances under the action of $\sym_{\N+1}$  \\
 \hline
  \hline
  \shadeR{Can. build. block} &  $\c_\N^\circ[\si]$   & Fundamental constituent of $\c_\N$ that generates $\I_\n$    \\
 \hline
 \shadeR{Canonical constraint} & $\f_\si^\text{can}$  & Characteristic constraint associated to $\c_\N^\circ[\si]$   \\
 \hline
 \shadeR{Uncorrelated union} & $\sqcup$  &   Fundamental operation to combine configurations  \\
 \hline
 \shadeR{$\I_\n$-basis} & $\{\I_{\si_\n}\}$  &  Collection of all $\I_{\si_\n}$ at given $\N$, with $\I_1=S_\ell$   \\
 \hline
 \shadeR{Locally purified c.b.b.} & $\c_\N^\circledcirc[\ell(\si)]$  & Configuration derived from $\c_\N^\circ[\si]$ and used for the sieve     \\
 \hline
\hline
\shadeB{Color-reducing scheme} & $\mathfrak{R}[\ell|\si]$  &  Specification of colors to be deleted/merged in $\bQ$   \\
 \hline
 \shadeB{Color-deleting scheme} & $\mathfrak{R}[\ell|{\bm \cdot}]$  &  Specification of colors to be deleted in $\bQ$   \\
 \hline
 \shadeB{Color-merging scheme} & $\mathfrak{R}[{\bm \cdot}|\si]$  &  Specification of colors to be merged in $\bQ$   \\
 \hline
 \hline
  \shadeR{$\r$-balanced} $\bQ$ &   &  In the $\I_\n$-basis $\bQ$ only contains terms with $\n>\r$   \\
 \hline
 \shadeR{Balanced} $\bQ$ &   &  A quantity $\bQ$ which is $1$-balanced   \\
 \hline
  \shadeR{Superbalanced} $\bQ$ &   &  A quantity $\bQ$ which is $2$-balanced   \\
 \hline
  \hline
\end{tabular}
\end{center}
\caption{A quick reference table summarizing the terminology and notation used.}
\label{tab:summary}
\end{table}

\section{Review of the framework}
\label{sec:review}

To set the stage for our discussion, we first briefly review the basic aspects of the framework introduced in \citep{Hubeny:2018trv}.
First in \S\ref{subsec:review0} we re-motivate informally the idea of \textit{faithfulness} and \textit{primitivity} for information quantities, quite generally in QFTs.  This differs in substance from the original arguments of  \citep{Hubeny:2018trv} where the motivation was in part based on the holographic entropy cone of \citep{Bao:2015bfa}.\footnote{ Readers familiar with our earlier discussion might find this new perspective and motivation interesting to peruse.}  Subsequently, we introduce in \S\ref{subsec:review1} the notion of \textit{proto-entropy}, along with a formulation of the faithfulness and primitivity requirements for entropic information quantities in the holographic context. In \S\ref{subsec:review2} we explain our algorithm for scanning over configurations of spatial regions to find primitive information quantities. Finally, in \S\ref{subsec:review3} we review the ${\bf I}_\n$-theorem, the central result of \citep{Hubeny:2018trv}.  To this end, we introduce the key concepts  of \textit{canonical constraints}, \textit{canonical building blocks}, and \textit{uncorrelated union}, which will  be used extensively in the sequel.

\subsection{Measures of multipartite correlations in QFT}
\label{subsec:review0}

Our main goal is to  find \textit{entropic information quantities} that satisfy certain `nice properties', rendering them suitable as good candidates for measures of multipartite correlations. Though our eventual focus will be within the holographic context, the quantities nevertheless may be employed more broadly in QFTs. We will therefore begin with motivating features that are desirable for such measures in QFTs, and only subsequently specialize to holographic field theories.

To begin with, consider a pure state $\ket{\psi_\Sigma}$ of an arbitrary QFT on a Cauchy slice $\Sigma$ of the (fixed) background spacetime on which the field theory lives.  A subsystem $\regA=\bigcup_i\regA^i$ is defined as the union of an arbitrary number of disjoint\footnote{
We use  the standard definition  of disjoint to disallow any intersection (including those of higher co-dimension), i.e., $\regA^i\cap\regA^j=\emptyset\quad\forall\;i,j.$} \textit{regions} $\regA^i$ on $\Sigma$. A  \textit{region} $\regA^i$, denoted by an \textit{upper index}, is defined as a connected subset of $\Sigma$. A crucial parameter in our construction will be the \textit{number of parties} $\N$ which specifies the set-up. Specifically, on $\Sigma$, we consider a \textit{configuration} $\c_{\sf N}$ of ${\sf N}$ subsystems as defined above, labeled by $\regA_\ell$
\begin{equation}
\c_{\sf N} = \big\{\regA_\ell = \bigcup_i \regA_\ell^i\big\} \,, \qquad \ell \in \{1,2,...,{\sf N}\} \eqdef  \left[{\sf N}\right]\,.
\label{eq:}
\end{equation}  
The subsystems are arbitrary, though by convention we will take them to be non-overlapping.\footnote{ That is, for any pair of subsystems $\regA_{\ell_1}$ and $\regA_{\ell_2}$, we demand that their interiors do not intersect; for closed subsystems  we can equivalently demand $\regA_{\ell_1} \cap\regA_{\ell_2} \subseteq \partial\regA_{\ell_1} \cap\partial\regA_{\ell_2}$.} We will refer to the  \textit{lower index} $\ell$ as the \textit{color label} (to be distinguished from the upper index labeling the regions) and to a single subsystem $\a_\ell$ as a \textit{monochromatic subsystem}.\footnote{ When we work with small values of ${\sf N} \leq 5$ we will revert to labeling our monochromatic subsystems by $\mathcal{A}, \mathcal{B}, \mathcal{C}$, etc. }  The union of a collection of monochromatic subsystems constitutes a \textit{polychromatic subsystem} and will be denoted by
\begin{equation}
\a_\si\eqdef\bigcup_{\ell\in\si}\a_\ell
\end{equation}
where $\si$ is a \textit{polychromatic index}. More precisely
\begin{equation}
\si\in\pset  \equiv \bm{\Delta}([{\sf N}])\setminus\emptyset
\label{eq:power_set}
\end{equation}
where $\bm{\Delta}([{\sf N}])$ is the power set of $[{\sf N}]$. 
In other words, we think of $\si$ as taking values over collections of color indices, $\si = \ell_1, \ell_2, \ldots, \ell_1 \ell_2, \ldots$.
The complement $\univ$ of the union of all monochromatic subsystems will be called the purifier, which we take to be uncolored.

Integrating out the degrees of freedom on the complement $\a_\si^c \equiv \Sigma \backslash \a_\si$, the state of the field theory on the subsystem $\a_\si$ is described by the reduced density matrix 
\begin{equation}
\rho_{\a_\si}=\Tr_{\a_\si^c}\left(\ket{\psi_\Sigma}\bra{\psi_\Sigma}\right) \,.
\end{equation}
 The  (regulated) von Neumann entropy of $\rho_{\a_\si}$ will be denoted by $S_\epsilon(\rho_{\a_\si})$, where $\epsilon$ is a  UV regulator introduced to make the entropy finite. For a state $\ket{\psi_\Sigma}$ and a configuration $\c_{\sf N}$ we can then consider the collection of the regulated entropies of all subsystems (mono and polychromatic) and arrange them into a vector
\begin{equation}
{\bf S}_\epsilon(\c_{\sf N},\psi_\Sigma)=\{S_\epsilon(\rho_{\regA_{\si}}),\; \si\in\pset\}\in\mathbb{R}_+^{\sf D}
\label{eq:CFT_entropy_vector}
\end{equation} 
which we will call the \textit{entropy vector} of this particular pair of configuration and state  $(\c_{\sf N},\psi_\Sigma)$.   
The space $\mathbb{R}_+^{\sf D}$, where ${\sf D}=2^\N-1$, will be referred to as the \textit{$\N$-party entropy space}. 

In this $\N$-party setting, we define information quantities to be linear combinations of the components of the entropy vector, i.e., they have the general form  
\begin{equation}
\bQ=\sum_\si Q_\si \,S_\si,\qquad   Q_\si\in\mathbb{R}\,.
\label{eq:info_quantity}
\end{equation}
For a fixed value of $\N$, we would like to identify a \textit{finite} set $\{\bQ\}_\N$, which in general can (and in fact will) depend on $\N$, of  quantities of the form \eqref{eq:info_quantity}, which have the potential to be useful measures of multipartite correlations. This will entail the need to  impose some restrictions on the coefficients $Q_\si$ so that elements of $\{\bQ\}_\N$ satisfy a list of physically desired properties.   We now briefly motivate four particular properties we may wish to impose; of these, three we will require to be upheld, but the fourth one (sign-definiteness) we will not a-priori impose in our construction.  

\paragraph{(1). Finiteness and scheme-independence:}  As mentioned in \S\ref{sec:intro}, it is a well known fact that von Neumann entropies associated with spatial regions of a field theory in any given state are generically meaningless, as they suffer from short-distance divergences, which cannot be regulated in a scheme-independent manner. This is not a fundamental issue, for  we are not interested in the entropies themselves, but rather in linear combinations thereof. This motivates our first requirement: to have a well defined measure, we need an appropriate cancellation of divergences, so that the quantity is finite and independent of how individual entropies are being regulated.

The canonical  example is the \textit{mutual information}\footnote{ For our purposes here we consider the common definition of the mutual information in terms of von Neumann entropies. However, the fact that the mutual information is a physically meaningful quantity is related to the fact that it can also be defined, perhaps more fundamentally, using relative entropy.} between two disjoint spatial regions $\mathcal{A}$ and $\mathcal{B}$,
\begin{equation}
{\bf I}_2(\mathcal{A}:\mathcal{B})=S(\rho_\mathcal{A})+S(\rho_\mathcal{B})-S(\rho_\mathcal{AB})\,.
\end{equation}
For a choice of   configuration  and state $(\c_2,\psi_\Sigma)$, the entropy vector is
\begin{equation}
{\bf S}_\epsilon(\c_2 ,\psi_\Sigma)=\{S_{\epsilon}(\rho_\mathcal{A}),S_{\epsilon}(\rho_\mathcal{B}),S_{\epsilon}(\rho_\mathcal{AB})\}\in\mathbb{R}^3_+ \,.
\label{eq:3dvector}
\end{equation}
The key issue is that if we change the regulator $\epsilon$,  the values of the various entropies will change, even if we hold $(\c_2,\psi_\Sigma)$ fixed. Therefore, a single pair $(\c_2,\psi_\Sigma)$ cannot be unambiguously associated to a single vector in $\mathbb{R}^3_+$, but only to a collection of infinitely many vectors, one for every choice of $\epsilon$. This should be contrasted with the case of finite dimensional Hilbert spaces, where a choice of state and subsystems unambiguously corresponds to a single vector.\footnote{ In fact, even for infinite-dimensional Hilbert space, there can be special types of configurations for which the entropy vector remains unambiguous.
 In particular, this happens if one considers certain entangled states of multiple copies of a given field theory and the entire $\Sigma$ (of a single field theory) as a subsystem, like for instance in \citep{Bao:2015bfa}.} However, despite these ambiguities in the entropy vector, taking the limit where the regulator disappears, we obtain
\begin{equation}
\lim_{\epsilon\rightarrow 0}\left[S_{\epsilon}(\rho_\mathcal{A})+S_{\epsilon}(\rho_\mathcal{B})-S_{\epsilon}(\rho_\mathcal{AB})\right]={\bf I}_2(\c_2,\psi_\Sigma) \,.
\label{eq:relation}
\end{equation}
The actual value of the mutual information, for this particular choice of configuration and state, ${\bf I}_2(\c_2,\psi_\Sigma)$,  \textit{is finite and independent of the regulator}.

There is a caveat, however, to the example that we just discussed. The mutual information is finite provided that the two regions $\a$ and $\b$ are disjoint -- otherwise it would diverge. This is the general behavior of information quantities which are \textit{balanced}, by which we mean that for every color $\ell$, the sum of the coefficients $Q_\si$ of the terms with $\ell\in\si$ vanishes.\footnote{ Later on, we will also introduce the notion of \textit{superbalanced} which refers to the quantity remaining balanced under certain natural operations -- see {\S\ref{subsec:super_balance}}.}  
As we will discuss in more detail in \S\ref{subsec:balanced_primitives}, 
all balanced information quantities are finite and scheme-independent when evaluated on configurations where all regions are disjoint. On the other hand, they can diverge in particular situations where some of the regions are adjoining. 

However, some  information quantities may be infinite even in situations where all the regions are disjoint.
 In general it is not possible, without making further assumptions, to specify a set of quantities that are finite and scheme-independent in QFT for an \textit{arbitrary} choice of regions on which these quantities are supposed to be evaluated. In the language of `localization' of entropy vectors described above, this fact has to be interpreted as a partial indeterminacy which cannot be fully resolved. Therefore in identifying a set of quantities $\{\bQ\}_\N$, we should require that these quantities are finite and scheme-independent at least for a reasonable class of configurations. 
Nevertheless, we will find below that the requisite reasonable class is in fact quite generic and robust.

\paragraph{(2). Vanishing in absence of correlations:} If an information quantity $\bQ$ is supposed to measure a particular kind of correlation, one obvious property to require is that it should vanish when such correlations are absent. Therefore, for each quantity in the set $\{\bQ\}_\N$ that one would like to identify, there should exist
at least one choice of $(\c_\N,\psi_\Sigma)$ such that $\bQ(\c_\N,\psi_\Sigma)=0$. This requirement is however overly restrictive for a generic QFT, as already presaged in the Introduction.  To motivate how to interpret this condition more usefully, consider again the mutual information -- as we increase the separation between the subsystems $\mathcal{A}$ and $\mathcal{B}$, the mutual information will decrease, but it never vanishes exactly. Therefore, it is more appropriate to enforce a weaker requirement that it can be made smaller than some threshold value chosen, i.e., $\bQ(\c,\psi_\Sigma)\ll 1$. Note that while  ${\bf I}_2(\c_2,\psi_\Sigma)=0$ is never really attained, the  statement of vanishing mutual information would naively be interpretable as a factorization of the density matrix.\footnote{ This statement is predicated on assuming a factorization of the Hilbert space, which strictly-speaking does not hold in QFTs.
} For the information quantities we are after, we can a-posteriori investigate the nature of correlations they measure, by examining configurations where they are almost absent.

In the following, the information quantities which are scheme-independent and can vanish, at least approximately, for some choice of state and configuration, will be said to be \textit{faithful}.

\paragraph{(3). Independent measures of correlations:} The faithfulness requirement is by itself very weak  and does not allow for extraction of a finite set of information quantities. To see this, consider a $3$-party set-up and the following instances of the mutual information: ${\bf I}_2(\mathcal{A}:\mathcal{B})$ and ${\bf I}_2(\mathcal{A}:\mathcal{C})$, each of which is faithful as defined above. We can use these quantities to construct an infinite family of faithful quantities; for example, for all $\lambda \in {\mathbb R}_+$,
\begin{equation}
\bQ(\lambda)={\bf I}_2(\mathcal{A}:\mathcal{B})+\lambda\, {\bf I}_2(\mathcal{A}:\mathcal{C})\,.
\end{equation}

Hence to identify a set of useful measures, one has to impose a more stringent requirement. The one which we employ is a notion of \textit{independence} among the various information quantities. Intuitively, if different information quantities are to measure different types of correlations, they should be able vanish independently, depending on presence/absence of said correlations. This requirement rules out $\bQ(\lambda)$ which can be seen by noting that each term in the sum is non-negative,\footnote{ Non-negativity of mutual information is a universal inequality known as subadditivity (SA) of the von Neumann entropy.} so it can only vanish when both terms in the sum do. 

As in the case of the faithfulness condition,  this notion of independence likewise should be understood to be approximate in a generic QFT, and can be phrased as follows. For any quantity $\bQ$ in the set $\{\bQ\}_\N$, there should be at least one choice of state and configuration $(\c_\N,\psi_\Sigma)$ such that $\bQ(\c_\N,\psi_\Sigma)\ll 1$ while $\bQ'(\c_\N,\psi_\Sigma)\sim o(1)$ for every other $\bQ'\neq\bQ$ in $\{\bQ\}_\N$. Faithful information quantities that satisfy this condition will be said to be \textit{primitive}.

\paragraph{(4). Sign-definiteness:} 
While conventional measures are usually defined to be non-negative,  we will not a-priori require sign-definiteness for constructing information quantities. Such a requirement would be too restrictive; e.g., it would rule out, in the case of three parties, the tripartite information 
\begin{equation}
{\bf I}_3(\mathcal{A}:\mathcal{B}:\mathcal{C})=S(\rho_\mathcal{A})+S(\rho_\mathcal{B})+S(\rho_\mathcal{C})-S(\rho_\mathcal{AB})-S(\rho_\mathcal{AC})-S(\rho_\mathcal{BC})+S(\rho_\mathcal{ABC})
\end{equation}
which nevertheless has useful applications in quantum field theory \citep{Kitaev:2005dm, Hosur:2015ylk}. More generally, without making any restriction on the class of theories and/or states under consideration, $\{\bQ\}_\N$ would simply correspond to the set of information quantities associated to the inequalities which specify the \textit{quantum entropy cone} \citep{Pippenger:2003aa}, for which little is known for four or more parties. 

On the other hand, in a more restricted scenario, certain quantities in fact do have a definite sign. 
An important example is the class of geometric states of holographic field theories for which, in the $N\rightarrow\infty$ limit, certain information quantities satisfy the holographic inequalities of \citep{Hayden:2011ag,Bao:2015bfa}. 
While this is an interesting context in its own right, which we further explore in  \S\ref{sec:sieve}, even in holography the a priori requirement of sign-definiteness might be too restrictive and therefore could obfuscate the correct interpretation of these inequalities. For instance, the inclusion of $1/N$ corrections, which would require relaxing the sign-definiteness, may be necessary to extract the true physical content. The fact that certain quantities, perhaps only a subset of a larger $\{\bQ\}_\N$,  end up having a definite sign in the $N\rightarrow\infty$ limit, could be interpreted as a signal that other information quantities, which do not 
appear in $\{\bQ\}_\N$, do not measure independent forms of correlations, in the sense discussed above (see also \S\ref{sec:discuss}).\\

Ideally, one would like to find, for any given $\N$, all the primitive quantities. However, a-priori it is unclear whether a universal answer to this question even exists. Namely, whether for any given $\N$ there exists a set $\{\bQ\}$ which satisfies the above properties for \textit{all} QFTs, and if so, how to identify them without making further assumptions. We will therefore tackle the problem in the case of holographic field theories, where certain simplifications allow us to make inroads.  It is an independent question whether the quantities we extract thus, find further utility beyond the holographic context (see \S\ref{sec:discuss} for further comments).  
In this regard, we find the prototypical example of the tripartite information rather encouraging.

\subsection{The holographic set-up: proto-entropy, faithfulness and primitivity}
\label{subsec:review1}
To extract information quantities satisfying the requirements discussed in \S\ref{subsec:review0} we now restrict attention to the special class of holographic QFTs, where the previous notions of faithfulness and primitivity can be phrased as precise algebraic constraints. Our focus will be on states within the code subspace, where the entropies can be associated with geometric objects. This enables us to make progress for the reasons given in \cite{Hubeny:2018trv}, reviewed below.

Specifically, we consider an asymptotically AdS manifold $\bulk$ of arbitrary dimension, with ${\sf M}$ disjoint causally disconnected  boundaries,\footnote{
We specify this general setup merely to indicate that we are not restricted to a single boundary; however we will not actually need to evoke multiple boundaries in anything that follows.
} $\partial\bulk=\bigcup_{m=1}^{{\sf M}}\partial\bulk_m$. 
The bulk dynamics is dual to the time evolution of the tensor product $\text{CFT}^{\otimes {\sf M}}$ of multiple copies of a holographic CFT living on $\partial\bulk$. The state of the field theories on a Cauchy slice\footnote{
To generalize the notion of Cauchy slice to multiple disconnected (boundary) spacetime components $\partial\bulk_i$, we simply take a collection of Cauchy slices (one for each component), such that initial data on the full collection determines the evolution throughout the entire $\partial\bulk$.} $\Sigma$ of $\partial\bulk$ is a pure state $\ket{\psi_\Sigma}$. The monochromatic and polychromatic subsystems are now simply a collection of regions, $\a_\ell$ and $\a_\si$ respectively, on $\Sigma$.  

The entropy of a subsystem $\a_\si$ (either mono or polychromatic) is given in such a class of geometric states by the RT/HRT prescriptions in terms of the area of a bulk extremal surface $\extr{\regA_\si}$ homologous to $\regA_\si$ (and therefore anchored to the \textit{entangling surface} $\partial\regA_\si=\bigcup_j \partial\regA_\si^j$) in Planck units, viz., 
\begin{equation}
S_\epsilon(\rho_{\regA_\si})=\frac{\text{Area}_\epsilon(\extr{\regA_\si})}{4\,G_N}\,.
\label{eq:HRT_original}
\end{equation}
We have made explicit the fact that the area of $\extr{\regA_\si}$ is infinite -- to obtain a finite value one has to introduce a cut-off surface which truncates the geometry $\bulk$. This corresponds to introducing a regulator $\epsilon$ in the field theory as discussed before. We will for the most part work to leading order in the planar, large $N$ approximation and relegate a discussion of subleading $1/N$ corrections \cite{Faulkner:2013ana} to \S\ref{sec:discuss}.

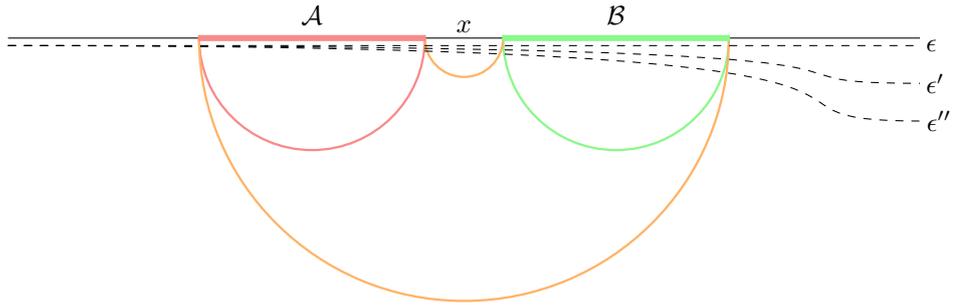
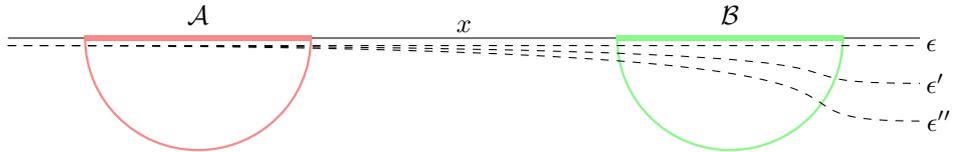
\begin{figure}[tb]
\centering
\begin{subfigure}{1\textwidth}
\centering
\begin{tikzpicture}
\draw (-6,0) -- (6,0);
\draw[line width=0.3mm, color1a!50] (-3.485,0) arc (-180:0:1.485); 
\draw[line width=0.3mm, color2a!50] (0.515,0) arc (-180:0:1.485);
\draw[line width=0.3mm, orange!60] (-0.515,0) arc (-180:0:0.515);
\draw[line width=0.3mm, orange!60] (-3.485,0) arc (-180:0:3.485); 
\draw [line width=0.8mm, color1a!50] (-3.5,0) -- (-0.5,0); 
\draw [line width=0.8mm, color2a!50] (0.5,0) -- (3.5,0);  
\draw[dashed] (-6,-0.1) -- (6,-0.1);
\draw[dashed] (-6,-0.1) .. controls (8,-0.1) and (3,-0.6) ..  (6,-0.6);
\draw[dashed] (-6,-0.1) .. controls (8,-0.1) and (3,-1.1) .. (6,-1.1);
\node at (-2,0.3) {\footnotesize{$\mathcal{A}$}};
\node at (2,0.3) {\footnotesize{$\mathcal{B}$}};
\node at (0,0.15) {\footnotesize{$x$}};
\node at (6.15,-0.1) {\footnotesize{$\epsilon$}};
\node at (6.2,-0.6) {\footnotesize{$\epsilon'$}};
\node at (6.25,-1.1) {\footnotesize{$\epsilon''$}};
\end{tikzpicture}
\caption{Mutual information is non-vanishing for small $x$.}
\label{fig:MIvanishes1}
\end{subfigure}

\vspace{1cm}

\begin{subfigure}{1\textwidth}
\centering
\begin{tikzpicture}
\draw (-6,0) -- (6,0);
\draw[line width=0.3mm, color1a!50] (-4.985,0) arc (-180:0:1.485);
\draw[line width=0.3mm, color2a!50] (2.015,0) arc (-180:0:1.485);
\draw [line width=0.8mm, color1a!50] (-5,0) -- (-2,0); 
\draw [line width=0.8mm, color2a!50] (2,0) -- (5,0); 
\draw[dashed] (-6,-0.1) -- (6,-0.1);
\draw[dashed] (-6,-0.1) .. controls (8,-0.1) and (3,-0.6) ..  (6,-0.6);
\draw[dashed] (-6,-0.1) .. controls (8,-0.1) and (3,-1.1) .. (6,-1.1);
\node at (-3.5,0.3) {\footnotesize{$\mathcal{A}$}};
\node at (3.5,0.3) {\footnotesize{$\mathcal{B}$}};
\node at (0,0.15) {\footnotesize{$x$}};
\node at (6.15,-0.1) {\footnotesize{$\epsilon$}};
\node at (6.2,-0.6) {\footnotesize{$\epsilon'$}};
\node at (6.25,-1.1) {\footnotesize{$\epsilon''$}};
\end{tikzpicture}
\caption{Mutual information vanishes for large $x$.}
\label{fig:MIvanishes2}
\end{subfigure}
\caption{A configuration (top) where the mutual information does not vanish. When the subsystems are sufficiently separated (bottom), the minimal surface homologous to $\mathcal{AB}$ is the union of the minimal surfaces homologous to $\mathcal{A}$ and $\mathcal{B}$ individually. Figure taken from \cite{Hubeny:2018trv} for illustration.}
\label{fig:MIvanishes}
\end{figure}

Special features of the holographic set-up are easily illustrated by considering mutual information. Consider the vacuum of a $(1+1)$-dimensional holographic CFT and take two subsystems $\mathcal{A}$ and $\mathcal{B}$. When the distance between them is sufficiently small (cf.\ Fig.~\ref{fig:MIvanishes1}), the  mutual information is non-vanishing as in any QFT.  The actual value is regulator independent in  the limit $\epsilon(x)\rightarrow 0$, reproducing \eqref{eq:relation}.   Of course, for finite $\epsilon(x)$, the value will depend on this function. 
However, for larger separations we have an interesting difference:  past a certain threshold the bulk surface which computes the entropy of $\mathcal{AB}$ undergoes a phase transition \citep{Headrick:2010zt}, and the mutual information vanishes exactly (see Fig.~\ref{fig:MIvanishes2}). Consequently, something remarkable happens to the regulator dependence. Since the surfaces cancel, the regulator dependence of the entropies also cancels explicitly,  even before taking the limit, viz.,
\begin{equation}
S_{\epsilon (x)}(\rho_\mathcal{A})+S_{\epsilon (x)}(\rho_\mathcal{B})-S_{\epsilon (x)}(\rho_\mathcal{AB})=0\,,
\label{eq:MIvanishes2}
\end{equation}
for all `reasonable' choices of $\epsilon(x)$. 

A few comments are in order. The cancellation between surfaces is necessary for ${\bf I}_2(\c_2,\psi_\Sigma)=0$ and only happens to leading order in the large $N$ limit. More generally, any information quantity \eqref{eq:info_quantity} can likewise vanish when the connected components of extremal surfaces which compute the various entropies mutually cancel. This implies that the precise values of the areas of the extremal surfaces are irrelevant. Since it is the computation of areas that requires a regulator, by focusing on the vanishing locus of $\bQ$ for some $(\c,\psi_\Sigma)$, we can isolate faithful quantities. This is the motivation behind the concept of the ``proto-entropy" which we now introduce.

To keep track of the connectivity of an extremal surface $\extr{\regA_\si}$ computing the entropy of a subsystem $\a_\si$, we rewrite \eqref{eq:HRT_original} in terms connected codimension-2 bulk surfaces, $\esf^\mu$, viz., 
\begin{equation}
\extr{\regA_\si}=\bigcup_\mu \esf^\mu\,, \qquad S_\epsilon(\rho_{\regA_\si})=\frac{1}{4G_N}\sum_\mu\,\text{Area}_\epsilon(\esf^\mu)\,.
\label{eq:HRT_original2}
\end{equation}
The \textit{proto-entropy} of a subsystem $\a_\si$ is then defined as the following \textit{formal linear combination} of connected bulk extremal surfaces
\begin{equation}
S_\si=\sum_\mu \esf^\mu\,.
\label{eq:protoS}
\end{equation}  
There is no regulator $\epsilon$-dependence as we consider surfaces in their entirety, and hence also no need to keep track of normalization. We will refer to the proto-entropy which remains a functional of the reduced density matrix, as
 $S_\si$, while actual entropies will have manifest regulator dependence (e.g., $S_\epsilon(\rho_{\regA_\si})$). Henceforth, we will focus exclusively on the proto-entropy, and often colloquially conflate it with entropy for simplicity.

Using the notion of  proto-entropy  we can correspondingly generalize  the entropy vector in a natural way
\begin{equation}
{\bf S}(\c_{\sf N},\psi_\Sigma)=\{S_\si,\; \si\in\pset\}\,.
\label{eq:CFT_entropy_vector}
\end{equation} 
For each of the subsystems $\regA_{\si}$, we  can build the list  $\om_{\si} = \bigcup_{\mu[\si]} \, \omega^{\mu[\si]}$ of all the connected bulk surfaces $\omega^{\mu[\si]}$ which enter in the computation of the entropy $S_\si$.\footnote{ $\mu[\si]$ is a shorthand to denote the set of bulk surfaces which are associated with a particular polychromatic subsystem $\mathcal{A}_{\si}$.}  
The union of all the sets $\om_{\si}$, for all $\si$,
 is a finite set $\om(\c_{\sf N},\psi_\Sigma)$, completely determined by the state and the choice of configuration. We then use $\om(\c_{\sf N},\psi_\Sigma)$ as a basis for the construction of an abelian free group 
$\boldsymbol{\mathscr{E}}(\c_{\sf N},\psi_\Sigma)$, which is the space of formal integer linear combinations of the elements of $\om(\c_{\sf N},\psi_\Sigma)$, and contains the zero  element $0_{\boldsymbol{\mathscr{E}}}$ (i.e., no surface).

We are interested in information theoretic quantities $\bQ$ which are linear combinations of entropies, as in \eqref{eq:info_quantity}. If we replace the entropy vector ${\bf S}_\epsilon(\c_{\sf N},\psi_\Sigma)$ with the one based on the proto-entropy ${\bf S}(\c_{\sf N},\psi_\Sigma)$, an expression like \eqref{eq:info_quantity} is an element of $\boldsymbol{\mathscr{E}}(\c_{\sf N},\psi_\Sigma)$ provided  the each coefficient $\qcf{\si}$ of the entropy $S_{\si}$ (viewed  as a basis element of an abstract vector space), is rational. 
We therefore restrict the space of information quantities of interest to the following\footnote{ Note that the coefficients $Q_\si$ are taken to be rationals rather than integers as we are insensitive to the overall normalization of the information quantities. With this change we also have  a natural vector space of information quantities, which we would not if $Q_\si \in {\mathbb Z}$.}
\begin{equation}
\bQ=\sum_\si \qcf{\si}\, S_{\si}\,,\qquad  \qcf{\si}\in\mathbb{Q} \,.
\label{eq:info_quantity_abstract}
\end{equation}
 For a fixed state $\psi_\Sigma$, we say the configuration $\c_{\sf N}$ \textit{generates} the quantity $\bQ$ if $\bQ({\bf S}(\c_{\sf N},\psi_\Sigma))=0_{\boldsymbol{\mathscr{E}}}$.

We are now in a position to introduce the precise definitions for faithful and primitive information quantities. 

\begin{definition}
In an ${\sf N}$-partite holographic setting, an entropic information quantity $\bQ$ is \emph{faithful} if there exists at least one pair $(\c_{\sf N},\psi_\Sigma)$ such that $\bQ({\bf S}(\c_{\sf N},\psi_\Sigma))=0_{\boldsymbol{\mathscr{E}}}$. 
\label{def:faith}
\end{definition}

\begin{definition}
In an ${\sf N}$-partite holographic setting, an entropic information quantity $\bQ$ is \emph{primitive} if there exists at least one pair $(\c_{\sf N},\psi_\Sigma)$ such that $\bQ'({\bf S}(\c_{\sf N},\psi_\Sigma))=0_{\boldsymbol{\mathscr{E}}}$ if and only if $\bQ'=\lambda\bQ$ for $\lambda \ne 0$.
\label{def:fun}
\end{definition}

\noindent
In other words, for any faithful quantity there exists a configuration and a state on which the surfaces all cancel identically, while for any primitive quantity (which is necessarily faithful), any configuration manifesting faithfulness cannot simultaneously generate an independent information quantity.

\subsection{Searching for measures of multipartite correlations in holography}
\label{subsec:review2}

Armed with the idea of  proto-entropy, we argued in \cite{Hubeny:2018trv} for a constructive algorithm  to extract primitive  
quantities for any number of parties $\N$.  For concreteness, let us first illustrate this idea with a simple example. Consider the situation described in Fig.~\ref{fig:MIvanishes2} and let  $a$ and $b$ be the extremal surfaces homologous to $\mathcal{A}$ and $\mathcal{B}$ respectively. The proto-entropies of the various subsystems $\mathcal{A},\mathcal{B},\mathcal{AB}$ are
\begin{equation}
S_\mathcal{A}=a,\qquad S_\mathcal{B}=b,\qquad S_\mathcal{AB}=a+b
\end{equation}
Substituting these expressions into \eqref{eq:info_quantity}, we can derive a formal expression for the quantity $\bQ$ (to be determined), evaluated on this configuration:
\begin{equation}
\bQ(\c_2,\psi_\Sigma)=(Q_\mathcal{A}+Q_\mathcal{AB}) \, a+(Q_\mathcal{B}+Q_\mathcal{AB}) \, b
\end{equation}
Since $a$ and $b$ are independent objects, the equation $\bQ({\bf S}(\c_2,\psi_\Sigma))=0_{\boldsymbol{\mathscr{E}}}$ translates to the following system of linear equations
\begin{equation}
 \begin{cases}
       Q_\mathcal{A}+Q_\mathcal{AB}=0\\
       Q_\mathcal{B}+Q_\mathcal{AB}=0\\
 \end{cases}
 \label{eq:simple_canon_ex}
\end{equation}
whose solution, up to an irrelevant overall constant, is the mutual information ${\bf I}_2(\mathcal{A}:\mathcal{B})$. Furthermore, since this is the \textit{only} solution, it follows that the mutual information is primitive, since for the particular configuration we considered, there cannot be any other (inequivalent) quantity which vanishes.

We  now formalize this procedure in full generality. Consider a pair of  a configuration and a state $(\c_{\sf N},\psi_\Sigma)$ and the corresponding vector ${\bf S}(\c_{\sf N},\psi_\Sigma)$. 
Upon formally evaluating an unknown information quantity, we obtain
\begin{equation}
\begin{split}
\bQ({\bf S}(\c_{\sf N},\psi_\Sigma)) &= 
  \sum_\si \,  \qcf{\si}\, S_{\si} (\c_{\sf N},\psi_\Sigma) = 
   \sum_\si \qcf{\si}  \left(\sum_{\mu[\si]} \, \omega^{\mu[\si]}\right) \\
&   \equiv   \sum_\si \qcf{\si}  \left(\sum_{\mu} \, M_{\si \mu} \, \omega^{\mu}\right)      
 =
   \sum_{\mu} \left(\sum_\si\,  M_{\si \mu} \, \qcf{\si} \right) \omega^{\mu}\,.
   \end{split}
\label{eq:surface_decomposition}
\end{equation}
A-priori, the index $\mu[\si]$ runs over the elements of the set $\om_\si$, i.e., over connected components of the extremal surfaces computing  $S_{\si} (\c_{\sf N},\psi_\Sigma)$. We extend this sum to all elements of $\om$ so that we can swap the order of the summation,  by introducing a $(0,1)$-matrix $M_{\si \mu}$ which for every polychromatic subsystem $\si$ takes into account which surfaces in $\om$ enter in the computation. The index $\mu$ in the final expression now runs over all elements of $\om$. Since all the surfaces $\omega^\mu$ are independent, we have
\begin{equation}
\bQ({\bf S}(\c_{\sf N},\psi_\Sigma))=0_{\boldsymbol{\mathscr{E}}} \ \ \iff \ \ \left\{\sum_\si\,  M_{\si \mu} \, \qcf{\si}=0, \; \; \forall \mu\right\}
\label{eq:constaints_definition}
\end{equation}
The equations on the right hand side of \eqref{eq:constaints_definition} will be called \textit{constraints}. For a pair $(\c_{\sf N},\psi_\Sigma)$, we will indicate the list of corresponding constraints as $\{\f(\c_{\sf N},\psi_\Sigma)\}$. Given a fixed choice of a pair $(\c_{\sf N},\psi_\Sigma)$, we will think of the coefficients $\{ \qcf{\si}\}$ as variables, and solve the set of constraints $\{\f(\c_{\sf N},\psi_\Sigma)\}$. Any solution will correspond to a faithful quantity (making evident the weakness of such property). On the other hand, when the constraints $\{\f(\c_{\sf N},\psi_\Sigma)\}$ for a chosen pair $(\c_{\sf N},\psi_\Sigma)$ have a one parameter family of solutions, the pair $(\c_{\sf N},\psi_\Sigma)$ generates a primitive quantity $\bQ$. 

Therefore, to find all primitive information quantities for any given number of parties ${\sf N}$, we will have to scan over all possible $(\c_{\sf N},\psi_\Sigma)$ in the space of holographic field theories. This is clearly a daunting task, but as argued in  \citep{Hubeny:2018trv}, the problem  has a huge amount of redundancy which allows for a drastic simplification. We note first that different pairs  $(\c_{\sf N},\psi_\Sigma)$ could have the same constraints $\{\f(\c_{\sf N},\psi_\Sigma)\}$ which allows us to define an equivalence relation:
\begin{equation}
(\c_{\sf N},\psi_\Sigma) \simeq (\c'_{\sf N},\psi'_\Sigma) \;\; \Longleftrightarrow \;\; \{\f(\c_{\sf N},\psi_\Sigma)\} 
= \{\f(\c'_{\sf N},\psi'_\Sigma)\}
\label{eq:equivalence}
\end{equation}  
The idea essentially is that small deformations of  regions will not alter the constraint (modulo phase transitions). Likewise we can compensate any change of state  $\ket{\psi_\Sigma}$ by alternation of the configuration. In both cases the actual surfaces (and entropies) will change, but the linear relation of interest will not. This redundancy can be used to argue that we can restrict to the space of configurations $\c_\N$ in the vacuum state of a single $(2+1)$-dimensional CFT on ${\mathbb R}^{2,1}$. To lighten notation, we henceforth write  $\{\f(\c_{\sf N})\}$, leaving it understood that we work in the vacuum state. In fact, we are only interested in the space of solutions to the constraints, not the constraints themselves. So we have a further simplification:
two (possibly different) sets of constraints $\{\f(\c_{\sf N}\}$ and $\{\f(\c'_{\sf N})\}$ are equivalent if they have the same space of solutions, viz.,
\begin{equation}
\c_{\sf N} \simeq \c'_{\sf N} \;\; \Longleftrightarrow \;\; \{\f(\c_{\sf N}\} \simeq \{\f(\c'_{\sf N})\}
\label{eq:equivalence3}
\end{equation}  

To summarize, we have reduced the problem of finding the primitive information quantities for ${\sf N}$ parties to the problem of classifying all the equivalence classes of configurations under the relation \eqref{eq:equivalence3}, and identifying among them all those which are associated to a set of constraints which has a one-dimensional space of solutions.

\subsection{Towards a general solution: the ${\bf I}_\n$-theorem}
\label{subsec:review3}

The reduction described  above in \S\ref{subsec:review2} is a vast simplification, but performing a full scan over \textit{all} possible configurations is still a very hard problem. To make a first step in this direction, in \citep{Hubeny:2018trv} we approached a simpler problem. We considered a particular class of configurations specified by certain topological restrictions, and showed how under such assumptions one can perform the full scan for an arbitrary number of colors.  We now review these ideas, giving a glimpse not only of the  the main result, the ${\bf I}_\n$-theorem, but also more crucially the various constructs introduced for its derivation. Of particular importance will be the notions of \textit{canonical constraints}, \textit{canonical building blocks} and \textit{uncorrelated union}, which will all be used extensively in the following sections.

The first restriction made in \cite{Hubeny:2018trv} was the requirement that the regions $\a_\ell^i$ composing the various subsystems do not share any portion of their boundaries, i.e.,
\begin{equation}
\regA_{\ell_1}^{i_1}\cap\regA_{\ell_2}^{i_2}=\emptyset\qquad\forall \ell_1,\ell_2,i_1,i_2
\end{equation}
This characterized what we called a \textit{disjoint scenario}, and is a natural assumption to make from a QFT perspective, since it guarantees that the mutual information between any two polychromatic subsystems is finite. In fact, as we will discuss in \S\ref{subsec:balanced_primitives},
primitive quantities derived from such configurations are all balanced and therefore finite when evaluated on disjoint regions in QFT.

Restricting the scan to the disjoint scenario simplifies the problem considerably, since the nature of the constraints becomes more transparent. However, even in this simplified case, it is still unclear how to perform a full scan (\S\ref{sec:four} will take further steps in this direction).  To tackle the problem, in \citep{Hubeny:2018trv} we further characterized the configurations according to an additional property, dubbed \textit{enveloping} and defined as follows. Since  all the regions composing the various subsystems are assumed to be compact, the complement $\univ$ of any configuration $\c_{\sf N}$ (the purifier) is a union of  at most a finite number of compact regions and a remaining part which is non-compact and extends to infinity. We will refer to this latter component of the purifier as the \textit{universe}. We will then say that the region $\regA_{\ell_1}^{i_1}$ 
\textit{is enveloping} (or envelops)
the region $\regA_{\ell_2}^{i_2}$ if for every pair of points $P,P'$ in the universe and the region $\regA_{\ell_2}^{i_2}$ respectively, any connected path from $P$ to $P'$ has to cross the region $\regA_{\ell_1}^{i_1}$.\footnote{ This notion of enveloping can be generalized to the case of \textit{multiple enveloping} (for example the enveloped region $\regA_{\ell_2}^{i_2}$ is itself enveloping a third region $\regA_{\ell_3}^{i_3}$).  Furthermore, this notion applies in general to any geometrical state of a holographic CFT living on spacetime with well-defined spatial infinity and all regions compact.
} 

Restricting to non-enveloping regions in the disjoint scenario allowed us  in \citep{Hubeny:2018trv} to perform the full scan over all possible configurations and thence find all  possible primitive quantities for any number of parties. We denote this particular space of configurations as  $\mathfrak{C}_\N$. The solution is summarized by the following theorem:

\begin{theorem}
{\bf (``${\bf I}_{\sf n}$-Theorem")}
For a given ${\sf N}$, the set of all the primitive information quantities generated by all the configurations in $\mathfrak{C}_{\sf N}$ is
\begin{equation}
\{{\bf I}_{\sf n},\;2\leq {\sf n}\leq {\sf N}\}
\end{equation}
where ${\bf I}_{\sf n}$ is the ${\sf n}$-partite information for a collection $[\n]\subseteq[\N]$ of $\n$ colors out of $[\N]$
\begin{equation}
\begin{split}
{\bf I}_{\sf n}(\regA_{\ell_1}:\regA_{\ell_2}:\ldots:\regA_{\ell_{\sf n}}) = &\; S_{\ell_1}+S_{\ell_2}+\cdots+S_{\ell_\n} \\
& - S_{\ell_1\ell_2}-S_{\ell_1\ell_3}-\cdots-S_{\ell_{\n-1}\ell_\n} \\
& + S_{\ell_1\ell_2\ell_3}+\cdots+(-1)^{\n+1}S_{\ell_1\ell_2\ldots\ell_\n}
\end{split}
\end{equation}
\label{thm:In}
\end{theorem}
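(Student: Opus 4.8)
The statement has two halves to be shown separately: \textbf{(a)} every $\I_\n$ with $2\le \n\le \N$ is primitive and is generated by some configuration in $\mathfrak{C}_\N$; and \textbf{(b)} no configuration in $\mathfrak{C}_\N$ generates any other primitive quantity. By the reduction at the end of \S\ref{subsec:review2}, half \textbf{(b)} amounts to the claim that, among all $\simeq$-classes of configurations in $\mathfrak{C}_\N$, those whose constraint set \eqref{eq:constaints_definition} has a one-dimensional solution space are exactly the ones yielding the $\I_\n$. A preliminary reduction I would make for both halves: the constraint set of an uncorrelated union $\c\sqcup\c'$ is the union of the two constraint sets (the surfaces of the two pieces being mutually independent), so its solution space is the intersection of the two; hence a configuration with one-dimensional solution space is never a nontrivial uncorrelated union of pieces each involving $\ge 2$ colors, and it suffices to treat ``connected'' configurations. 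The whole argument rests on the fact that in the vacuum of the $(2{+}1)$-dimensional CFT the disjointness and non-enveloping hypotheses defining $\mathfrak{C}_\N$ force the extremal surfaces to have a tame, essentially ``nearest-neighbour'' bridging structure, so the $\simeq$-classes admit a finite combinatorial description.

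For half \textbf{(a)} I would, for each $\si\subseteq[\N]$ with $|\si|=\n\ge 2$, construct the canonical building block $\c_\N^\circ[\si]$: a manifestly disjoint and non-enveloping configuration in which the $\n$ colors of $\si$ are placed so that for every $J\subseteq\si$ the surface homologous to $\a_J$ is a prescribed union of connected arcs (the $\n=2$ case being Fig.~\ref{fig:MIvanishes2}), while the remaining $\N-\n$ colors sit far away and enter additively. Reading off the matrix $M_{\si\mu}$ produces the canonical constraints $\f_\si^{\text{can}}$; solving this linear system — for $\N=2$ it is exactly \eqref{eq:simple_canon_ex} — I would check that its solution space is one-dimensional and spanned by $\I_\n$, equivalently that the constraints have maximal rank $2^\N-2$. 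The only nonroutine point is to give an honest geometric realization of the prescribed arc pattern and to verify the rank; I expect both to follow by induction on $\n$, adding the colors one at a time and tracking how each new color subdivides the existing minimal surfaces.

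For half \textbf{(b)} — the scan — I would argue that the $\simeq$-class of a connected configuration in $\mathfrak{C}_\N$ is captured by a finite datum, morally the linear (or cyclic) order of the regions together with the bridging pattern of the RT surfaces, with multiple same-colored regions absorbed and contributing no new solution spaces. Running over these data, for each one I would compute the rank of the associated constraint matrix and show that it is maximal, and hence the solution space one-dimensional, precisely when the configuration is $\simeq$-equivalent to some $\c_\N^\circ[\si]$ — in which case the kernel is $\mathbb{Q}\,\I_\n$ by half \textbf{(a)}; in every other case the rank is strictly smaller, the solution space has dimension $\ge 2$, and no new primitive quantity is generated.

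The main obstacle is precisely this scan in half \textbf{(b)}: organizing the combinatorics of all disjoint non-enveloping configurations and carrying out the rank count uniformly in $\N$. This is where the non-enveloping assumption does the real work — it is what keeps the extremal-surface bookkeeping finite and local, and dropping it (the subject of \S\ref{sec:four}) reopens the full, far harder scan. A secondary point requiring care is that the passage to connected, one-region-per-color configurations is genuinely lossless, i.e.\ that collapsing a color's union of regions to a single region, or splitting off uncorrelated pieces, never enlarges the set of attainable one-dimensional solution spaces beyond the family $\{\I_\n\}$.
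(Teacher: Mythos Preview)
Your preliminary reduction to ``connected'' configurations is wrong, and this error propagates through both halves. You correctly note that the solution space of an uncorrelated union $\c'\sqcup\c''$ is the intersection of the two solution spaces, but the inference you draw --- that a one-dimensional solution space cannot arise from a nontrivial union --- is false: the intersection of several higher-dimensional subspaces can perfectly well be a line. In fact the configurations that generate $\I_\n$ are precisely such unions. For instance, to generate $\I_2(\a:\b)$ in the $\N=3$ setting one takes $\c_3^\circ[\a\cs]\sqcup\c_3^\circ[\b\cs]\sqcup\c_3^\circ[\a\b\cs]$: three pieces, each with a $3$-dimensional solution space, intersecting in the line $\mathbb{Q}\,\I_2^{\a\b}$. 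More generally $\I_{\si_\n}$ is generated by the uncorrelated union of \emph{all} canonical building blocks \emph{except} $\c_\N^\circ[\si_\n]$, i.e.\ by the constraint set $\mathfrak{F}^{\text{can}}\setminus\{\f_{\si_\n}^{\text{can}}\}$, which has ${\sf D}-1$ elements (Lemma~\ref{lemma:permutations}). A single building block $\c_\N^\circ[\si]$ carries only $\N+1$ constraints, far short of the $2^\N-2$ you claim; your half~\textbf{(a)} conflates the building block indexed by $\si$ with the full generating configuration for $\I_{\si}$, and the $\N=2$ coincidence --- where the generating configuration happens to be two uncorrelated disks --- is misleading you.

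The structural insight you are missing is Lemma~\ref{lemma:canonical_form}: for \emph{any} configuration in $\mathfrak{C}_\N$, the constraint set is equivalent (up to linear combinations) to a subset of the fixed finite set $\mathfrak{F}^{\text{can}}$ of canonical constraints \eqref{eq:canonical_form_constaints}. This is what collapses the scan over geometric configurations to a scan over subsets of $\mathfrak{F}^{\text{can}}$. Lemma~\ref{lemma:building_blocks} then pins down which subsets are realizable --- exactly those containing $\mathfrak{F}_{[\N]}$ --- and the primitive-generating ones are those of size ${\sf D}-1$, hence those omitting a single $\f_{\si_\n}^{\text{can}}$ with $\n\ge 2$; solving that system gives $\I_\n$. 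Your attempt to classify connected configurations by their ``linear or cyclic order'' and ``bridging pattern'' bypasses this reduction and, given the flawed connectedness step, would miss every generating configuration for $\N\ge 3$.
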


We give a quick sketch of the logic of the proof, which helps introduce the various concepts alluded to above (for details, see \citep{Hubeny:2018trv}). As discussed earlier, configurations can be organized into equivalence classes according to associated sets of constraints. This  logic of course applies even with additional restrictions. Therefore, the goal will be to find all equivalence classes of configurations within the topological class of interest. One then identifies among them those associated to a set of constraints with a one-parameter-family of solutions. The solution to the constraints will then give the desired quantities.

To implement the scheme, we first introduce a particular class of constraints which we will call \textit{canonical constraints}. The set $\mathfrak{F}^\text{can}$ of canonical form constraints, for fixed value of ${\sf N}$, is a set of ${\sf D=2^{\sf N}-1}$ linearly independent equations defined as follows
\begin{equation}
\mathfrak{F}^\text{can}=\{\f_{\si}^{\text{can}},\forall\,\si\in\pset\},\qquad\f_{\si}^{\text{can}}:\;\sum_{\mathscr{K}\supseteq\mathscr{I}}Q_{\mathscr{K}}=0\,.
\label{eq:canonical_form_constaints}
\end{equation}
We say that a set of constraints $\mathfrak{F}$ is of the \textit{canonical form} if it is a subset of this set, $\mathfrak{F} \subseteq \mathfrak{F}^\text{can}$. 
For example, for $\N=2$ both constraints in \eqref{eq:simple_canon_ex} are of the canonical form, but their sum would not be.
One can then prove the following

\begin{lemma}
For any configuration $\c_{\sf N}\in\mathfrak{C}_{\sf N}$, the set of corresponding constraints $\{\mathscr{F}(\c_{\sf N})\}$ is equivalent, up to linear combinations, to a subset of $\mathfrak{F}^\text{can}$.
\label{lemma:canonical_form}
\end{lemma}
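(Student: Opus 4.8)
The plan is to analyze the structure of the constraints that arise from a configuration $\c_\N$ in the restricted class $\mathfrak{C}_\N$ (disjoint, non-enveloping regions in the vacuum of CFT$_3$) and show that each such constraint, after taking suitable integer linear combinations, reduces to one of the canonical constraints $\f^{\text{can}}_\si: \sum_{\sk\supseteq\si} Q_\sk = 0$. Recall from \S\ref{subsec:review2} that a constraint associated to a bulk surface $\omega^\mu$ reads $\sum_\si M_{\si\mu}\, Q_\si = 0$, where $M_{\si\mu}=1$ precisely when the connected surface $\omega^\mu$ appears in the RT/HRT decomposition of $\extr{\regA_\si}$. So the content of the lemma is entirely about which polychromatic subsystems $\si$ ``see'' a given connected bulk component $\omega^\mu$: I must show that the set $\{\si : \omega^\mu \subseteq \extr{\regA_\si}\}$ is always (up to taking linear combinations over surfaces) of the form $\{\sk : \sk \supseteq \si_0\}$ for some fixed $\si_0$, i.e.\ an up-set generated by a single element in the Boolean lattice $\pset$.

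The key geometric step is to understand, for a fixed connected extremal surface $\omega^\mu$, the pattern of its membership across the various $\extr{\regA_\si}$. First I would set up the homology/minimality characterization: $\omega^\mu$ enters $\extr{\regA_\si}$ iff it is a connected component of the minimal homology representative of $\regA_\si$. In the disjoint non-enveloping scenario, the candidate minimal surfaces are built from the surfaces homologous to individual regions and their ``exchanges,'' and because no region envelops another and all boundaries are disjoint, the bulk regions are organized simply: each connected bulk surface $\omega^\mu$ separates the bulk into pieces, and the subsystems on whose side it lies determine whether it is picked up. I would argue that the defining feature is a monotonicity: if $\omega^\mu$ appears in $\extr{\regA_\si}$ and $\si \subseteq \sk$, it may or may not appear in $\extr{\regA_\sk}$, but the precise combination of surfaces entering a given entropy organizes so that the \emph{differences} $S_\sk - S_{\si}$ for nested subsystems have a controlled surface content. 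The cleanest route is to show that the raw constraints from $\c_\N\in\mathfrak{C}_\N$ span the same solution space as a subset of the canonical ones by exhibiting, for each raw constraint, an explicit rewriting: a surface $\omega^\mu$ which is ``anchored'' to a particular minimal collection of colors contributes exactly the canonical constraint $\f^{\text{can}}_{\si_0}$ for the minimal $\si_0$ it is anchored to, and non-anchored (``bulk-interior'') surfaces either contribute nothing new or combine pairwise to canonical constraints. This is where the non-enveloping and disjointness hypotheses do the real work, since they forbid the more intricate nesting patterns that would produce constraints outside $\mathfrak{F}^{\text{can}}$.

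I expect the main obstacle to be precisely this geometric classification: proving rigorously that in $\mathfrak{C}_\N$ \emph{every} connected component of \emph{every} minimal surface has membership set equal to a principal up-set $\{\sk:\sk\supseteq\si_0\}$, or that any deviations cancel in linear combinations. One must carefully enumerate how phase transitions of the minimal surface as one merges colors can change which components appear, and rule out ``exotic'' components whose membership set is, say, $\{\si_0\}\cup\{\si_1\}$ for two incomparable $\si_0,\si_1$ without being expressible via canonical constraints. The disjoint non-enveloping restriction should make the homology classes of minimal surfaces combinatorially tame — essentially each minimal surface is a union of ``elementary'' pieces each associated with a subset of colors in a nested way — but turning this intuition into a clean proof that $\{\mathscr{F}(\c_\N)\}$ is equivalent to a subset of $\mathfrak{F}^{\text{can}}$ is the crux. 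Once the geometric classification is in hand, the reduction ``up to linear combinations'' is linear algebra: the raw constraint matrix $M$ and the canonical matrix (whose rows are indicator vectors of principal up-sets) have the same row space restricted to the relevant subsystems, because the principal up-set indicators form a unimodular basis for the space of ``monotone'' functions on the Boolean lattice, so any raw constraint that is a sum of such indicators is an integer combination of canonical constraints.
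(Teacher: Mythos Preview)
The paper does not actually prove this lemma: it appears in the review section \S\ref{subsec:review3}, where the proof is explicitly deferred to the companion work \cite{Hubeny:2018trv}. There is therefore no in-text argument to compare your proposal against.

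On the merits of your plan: the overall strategy is sound, and you correctly isolate the crux as the geometric classification of membership sets $\{\si:\omega^\mu\subseteq\extr{\regA_\si}\}$ for connected surfaces in the non-enveloping disjoint class. One point you should sharpen, however, is the final linear-algebra step. The canonical constraints $\mathfrak{F}^{\text{can}}$ comprise ${\sf D}$ linearly independent equations and hence span the entire dual space; consequently \emph{every} constraint is trivially an integer combination of canonical constraints, and your closing sentence (``any raw constraint that is a sum of such indicators is an integer combination of canonical constraints'') is vacuous as written. The actual content of the lemma is that the span of $\{\f(\c_\N)\}$ coincides with the span of some \emph{subset} $\mathfrak{F}\subseteq\mathfrak{F}^{\text{can}}$. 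This is a two-sided statement: you need both that each raw constraint lies in $\mathrm{span}(\mathfrak{F})$ and that each $\f^{\text{can}}_{\si}\in\mathfrak{F}$ is recoverable from the raw constraints. Your observation that an octopus anchored to colors $\si_0$ yields exactly $\f^{\text{can}}_{\si_0}$ is the right ingredient for the second direction, but you must also ensure that the dome constraints (which are \emph{differences} of canonical constraints, e.g.\ $\f^{\text{can}}_\ell$ minus the constraints of the octopi that absorb that dome) do not introduce any canonical constraint outside $\mathfrak{F}$. Making this absorption structure precise across multiple regions per color and competing octopi is exactly the geometric work you flag as the obstacle, and it is indeed where the proof lives.
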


Notice that this Lemma only tells us that for any configuration $\c_{\sf N}\in\mathfrak{C}_{\sf N}$ we can convert the constraints into the canonical form defined above; but it does not guarantee that an \textit{arbitrary} subset $\mathfrak{F}\subseteq\mathfrak{F}^\text{can}$ can  be actually realized by some configuration. As it turns out, this is  in fact not the case. In particular, a generating configuration exists only if the constraints  include all ${\sf N}$ equations of the form of right hand expression in \eqref{eq:canonical_form_constaints} with $\si$ corresponding to a monochromatic index $\ell$.
 The consistent possibilities are listed in the following result

\begin{lemma}
For any subset $\mathfrak{F}\subseteq\mathfrak{F}^{\rm can}$, there exists a configuration $\c_{\sf N}\in\mathfrak{C}_{\sf N}$ such that $\{\f(\c_{\sf N})\}=\mathfrak{F}$
if and only if
\begin{equation}
\mathfrak{F} \supseteq \mathfrak{F}_{[{\sf N}]}\ , \qquad {\rm where} \ \ \mathfrak{F}_{[{\sf N}]}\eqdef\{\f^{\rm can}_\ell,\; \ell\!\in\!\left[{\sf N}\right]\}\,.
\label{eq:subsets}
\end{equation}
\label{lemma:building_blocks}
\end{lemma}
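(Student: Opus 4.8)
The statement is a biconditional and I would prove the two implications separately, using Lemma~\ref{lemma:canonical_form} throughout so that ``$\{\f(\c_{\sf N})\}=\mathfrak{F}$'' can be read unambiguously as the assertion that the canonical representative of $\{\f(\c_{\sf N})\}$ equals the subset $\mathfrak{F}$ (well defined because the $2^{\sf N}-1$ forms $\f^{\rm can}_\si$ are linearly independent). For the ``only if'' direction I must show that every $\c_{\sf N}\in\mathfrak{C}_{\sf N}$ carries the ${\sf N}$ monochromatic canonical constraints, i.e.\ that $\f^{\rm can}_\ell:\sum_{\mathscr K\ni\ell}Q_{\mathscr K}=0$ lies in $\mathrm{span}\{\f(\c_{\sf N})\}$ for each color $\ell$ --- equivalently, that every faithful quantity generated by $\c_{\sf N}$ is balanced in $\ell$. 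The mechanism is the one that motivated the disjoint scenario in the first place: since the regions of color $\ell$ meet no region of any other color, the surface pool $\om(\c_{\sf N})$ contains a connected bulk surface $\omega$ --- morally the component that ``hugs'' the color-$\ell$ regions --- which is a connected component of $\extr{\a_\si}$ exactly when $\ell\in\si$, so that its constraint $\sum_\si M_{\si\omega}Q_\si=0$ is precisely $\f^{\rm can}_\ell$. If no such component already occurs for $\c_{\sf N}$ as drawn, I would pass to an equivalent configuration under~\eqref{eq:equivalence3}, shrinking the color-$\ell$ regions until the hugging component detaches from the remainder of every $\extr{\a_\si}$, and check that this preserves the solution space while making $\f^{\rm can}_\ell$ manifest; this is the holographic counterpart of the finiteness of the mutual information of disjoint regions.

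For the ``if'' direction I would argue constructively. Given $\mathfrak{F}$ with $\mathfrak{F}_{[{\sf N}]}\subseteq\mathfrak{F}\subseteq\mathfrak{F}^{\rm can}$, write $\mathfrak{F}\setminus\mathfrak{F}_{[{\sf N}]}=\{\f^{\rm can}_\si:\si\in T\}$ for a (possibly empty) family $T$ of non-monochromatic indices. If $T=\emptyset$ the generic configuration --- all regions small and mutually far apart --- has canonical constraint set exactly $\mathfrak{F}_{[{\sf N}]}$, each region contributing only its own hugging surface. In general, for each $\si\in T$ I would invoke a canonical building block $\c^\circ_{\sf N}[\si]$ whose characteristic constraint is $\f^{\rm can}_\si$, that is, a configuration built on the colors of $\si$ (with the remaining colors inert) whose surface pool adds, on top of the universal single-region hugging surfaces, precisely $\f^{\rm can}_\si$: for $|\si|=2$ this is the linked pair of color-$\si$ subsystems in the connected phase, and for $|\si|\ge 3$ it is the analogous construction from~\citep{Hubeny:2018trv}. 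Placing these blocks in mutually uncorrelated positions, and adding an inert tiny region for any color not yet present, the additivity of constraints under uncorrelated union yields a configuration $\c_{\sf N}\in\mathfrak{C}_{\sf N}$ whose canonical constraint set is $\bigcup_{\si\in T}(\mathfrak{F}_{[{\sf N}]}\cup\{\f^{\rm can}_\si\})=\mathfrak{F}$, completing the construction (that all ${\sf N}$ colors appear and that $\c_{\sf N}$ is disjoint and non-enveloping are immediate from the construction).

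The step I expect to be the main obstacle is the necessity direction: upgrading ``there is always a hugging component realizing $\f^{\rm can}_\ell$'' to a statement valid for \emph{every} configuration in $\mathfrak{C}_{\sf N}$, uniformly across all the RT/HRT phases that the various subsystems $\a_\si$ can occupy. This requires a genuine global minimality and topology argument --- that near the color-$\ell$ regions no surface can undercut the hugging component, and that this component stays disconnected from the rest of $\extr{\a_\si}$ --- and it is exactly here that the disjointness and non-enveloping hypotheses are used. By comparison, the sufficiency direction is essentially bookkeeping once one grants the existence and constraint content of the canonical building blocks and the additivity of constraints under uncorrelated union, which I would import from~\citep{Hubeny:2018trv} rather than re-derive, verifying only that the $|\si|\ge 3$ blocks contribute no constraints beyond $\f^{\rm can}_\si$.
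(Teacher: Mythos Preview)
Your sufficiency (``if'') argument is essentially the paper's: take canonical building blocks $\c^\circ_{\sf N}[\si]$ for each non-monochromatic $\si$ with $\f^{\rm can}_\si\in\mathfrak{F}$, combine them by uncorrelated union, and invoke the additivity of constraints (Lemma~\ref{lemma:uncorrelated_union}). That part is fine.

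The necessity (``only if'') argument, however, has a genuine gap. You look for a \emph{single} connected surface $\omega$ that appears in $\extr{\a_\si}$ precisely when $\ell\in\si$, so that its constraint is literally $\f^{\rm can}_\ell$. Such a surface need not exist. Already in the canonical building block $\c^\circ_{\sf N}[\si]$ with $\ell\in\si$ and $|\si|\ge 2$, the dome over $\a_\ell$ computes $S_\ell$ but the octopus computes $S_\si$; no single surface anchored to $\partial\a_\ell$ serves all $\sk\ni\ell$. Your fallback --- shrink the color-$\ell$ regions until the dome detaches --- does not rescue this: shrinking can trigger phase transitions that alter the constraint set (in the same example, shrink $\a_\ell$ far enough and the octopus disappears, killing $\f^{\rm can}_\si$), so the deformed configuration is not equivalent under~\eqref{eq:equivalence3}. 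Verifying that the deformation ``preserves the solution space while making $\f^{\rm can}_\ell$ manifest'' is precisely the statement you are trying to prove, so the argument is circular.

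The paper's argument (carried out in the proof of Lemma~\ref{lemma:balance_from_configurations}, which in fact works for any disjoint configuration, enveloping or not) avoids this entirely by \emph{summing} constraints rather than seeking a single witness. Fix one region $\a_\ell^i$. In the disjoint scenario, for each $\si$ with $\ell\in\si$ the extremal surface $\extr{\a_\si}$ has exactly one connected component anchored to $\partial\a_\ell^i$, and no surface anchored to $\partial\a_\ell^i$ can appear in $\extr{\a_\si}$ when $\ell\notin\si$. Hence the constraints coming from the set of all surfaces anchored to $\partial\a_\ell^i$ partition $\{\si:\ell\in\si\}$, and their sum is exactly $\sum_{\si\ni\ell}Q_\si=0$, i.e.\ $\f^{\rm can}_\ell$. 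No deformation, no phase analysis, and the non-enveloping hypothesis is not even needed for this direction.
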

Hence, given a set of constraints $\mathfrak{F}\subseteq\mathfrak{F}^\text{can}$, to know whether there exists a configuration with a set of constraints $\{\f\}\simeq\mathfrak{F}$ one simply has to check if $\mathfrak{F}$ includes $\mathfrak{F}_{[\N]}$. This result can easily be understood constructively using two very useful concepts that we will now introduce.

\paragraph{Canonical building block:} The first one is the notion of a \textit{canonical building block}. Consider a particular choice $\si_\n$ of $\n\geq 2$ colors and the corresponding canonical constraint $\f_{\si_\n}^{\text{can}}$. We will now construct a particular configuration, which we will denote by $\c_\N^\circ[\si_\n]$, which is associated to the following set of $\N+1$ constraints

\begin{equation}
\{\f(\c_\N^\circ[\si_\n])\}=\mathfrak{F}_{[\N]}\cup\{\f_{\si_\n}^{\text{can}}\}
\end{equation}
To construct such a configuration we start from $\N$ disks, one per color, with a size and location for each disk chosen such that they are all completely uncorrelated, i.e., 
\begin{equation}
\I_2(\a_{\ell_i}:\bigcup_{\ell_j\neq\ell_i}\a_{\ell_j})=0,\qquad  
\forall \ell_i
\end{equation}
Next, we consider the disks corresponding to the $\n$ colors which enter in $\si_\n$, and move them closer to each other until we reach a point where we have the following correlation pattern
\begin{equation}
\begin{cases}
\I_2(\a_{\ell_i}:\bigcup_{\ell_j\neq\ell_i}\a_{\ell_j})\neq0,\qquad \forall \, \ell_i,\ell_j\in\si_\n\\
\I_2(\a_{\ell_i}:\bigcup_{\ell_j\neq\ell_i}\a_{\ell_j}\setminus\a_{\ell_k})=0,\qquad \forall \, \ell_i,\ell_j,\ell_k\in\si_\n\\
\end{cases}
\end{equation}
At the same time, we still keep the other disks (the ones which do not enter in $\si_\n$) far away, such that we still have
\begin{equation}
\I_2(\a_{\ell_i}:\bigcup_{\ell_j\neq\ell_i}\a_{\ell_j})=0,\qquad \forall \, \ell_i\notin\si_\n .  
\end{equation}
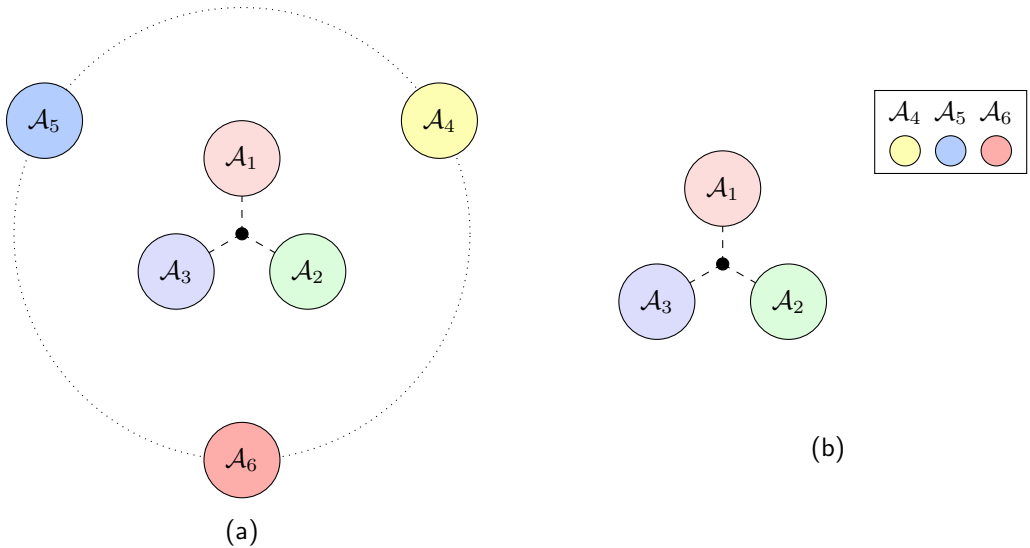
\begin{figure}[tb]
\centering
\begin{subfigure}{0.49\textwidth}
\centering
\begin{tikzpicture}
\draw[dotted] (0,0) circle (3cm);
\draw[fill=black] (0,0) circle (0.08cm); 
\draw[dashed] (0,0) -- (0,1);
\draw[dashed] (0,0) -- (0.866,-0.5);
\draw[dashed] (0,0) -- (-0.866,-0.5);
\draw[fill=color1] (0,1) circle (0.5cm);
\draw[fill=color3] (-0.866,-0.5) circle (0.5cm);
\draw[fill=color2] (0.866,-0.5) circle (0.5cm);
\draw[fill=color4] (2.598,1.5) circle (0.5cm);
\draw[fill=color5] (-2.598,1.5) circle (0.5cm);
\draw[fill=color6] (0,-3) circle (0.5cm);
\node at (0,1) {\footnotesize{$\regA_1$}};
\node at (-0.866,-0.5) {\footnotesize{$\regA_3$}};
\node at (0.866,-0.5) {\footnotesize{$\regA_2$}};
\node at (2.598,1.5) {\footnotesize{$\regA_4$}};
\node at (-2.598,1.5) {\footnotesize{$\regA_5$}};
\node at (0,-3) {\footnotesize{$\regA_6$}};
\end{tikzpicture}
\caption{}
\label{fig:building_blocks_a}
\end{subfigure}
\hfill
\begin{subfigure}{0.49\textwidth}
\centering
\begin{tikzpicture}
\draw[fill=black] (0,0) circle (0.08cm);
\draw[dashed] (0,0) -- (0,1);
\draw[dashed] (0,0) -- (0.866,-0.5);
\draw[dashed] (0,0) -- (-0.866,-0.5);
\draw[fill=color1] (0,1) circle (0.5cm);
\draw[fill=color3] (-0.866,-0.5) circle (0.5cm);
\draw[fill=color2] (0.866,-0.5) circle (0.5cm);  
\draw[fill=color4] (2.4,1.5) circle (0.2cm);
\draw[fill=color5] (3,1.5) circle (0.2cm);
\draw[fill=color6] (3.6,1.5) circle (0.2cm);
\node at (0,1) {\footnotesize{$\regA_1$}};
\node at (-0.866,-0.5) {\footnotesize{$\regA_3$}};
\node at (0.866,-0.5) {\footnotesize{$\regA_2$}};
\node at (2.4,2) {\footnotesize{$\regA_4$}};
\node at (3,2) {\footnotesize{$\regA_5$}};
\node at (3.6,2) {\footnotesize{$\regA_6$}};
\draw (2,1.2) rectangle (4,2.3);
\end{tikzpicture}
\vspace{1cm}
\caption{}
\label{fig:building_blocks_b}
\end{subfigure}
\caption{The canonical building block $\c_6^\circ[\a_1\a_2\a_3]$, where the central vertex with the dashed lines represents the particular pattern of mutual information specified in the main text (a) shows the pictorial representation used in \cite{Hubeny:2018trv}. (b) shows the same building block in a more compact form, where we only draw the disks which are correlated (and list the other ones in a box for completeness).}
\label{fig:building_blocks}
\end{figure}
An example of this construction is shown in Fig.~\ref{fig:building_blocks}. The final result is a configuration such that the RT surfaces which appear in the computation of the various entropies are only the $\N$ `domes' homologous to the various disks and one $\n$-legged `octopus' surface connecting the colors in $\si_\n$.

\paragraph{Uncorrelated union:} 
The second useful concept to introduce is an operation that conveniently allows us to combine two configurations $\c'_{\sf N}$ and $\c''_{\sf N}$ to obtain a new one, which we will call the \textit{uncorrelated union}, denoted by $\c'_{\sf N}\sqcup\c''_{\sf N}$. This is simply obtained by considering the two configurations in the same copy of the vacuum state, but sufficiently far apart from each other such that ${\bf I}_2(\c'_{\sf N}:\c''_{\sf N})=0$. The resulting configuration then inherits the following property:

\begin{lemma}
For a configuration $\c_{\sf N}=\c'_{\sf N}\sqcup\c''_{\sf N}$, the list of constraints $\{\f(\c_{\sf N})\}$ is the union of the two lists of constraints $\{\f(\c'_{\sf N})\}$ and $\{\f(\c''_{\sf N})\}$ for $\c'_{\sf N}$ and $\c''_{\sf N}$, respectively.
\label{lemma:uncorrelated_union}
\end{lemma}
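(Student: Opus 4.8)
\textbf{Proof proposal for Lemma \ref{lemma:uncorrelated_union}.}

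The plan is to unpack the definition of the constraints $\{\f(\c_\N)\}$ as given in \eqref{eq:constaints_definition}, and show that the bulk ingredients entering the right-hand side simply decouple into the two pieces coming from $\c'_\N$ and $\c''_\N$. The key geometric input is that, by construction of the uncorrelated union, the two sub-configurations sit in the same copy of the vacuum but so far apart that ${\bf I}_2(\c'_\N:\c''_\N)=0$. First I would argue that this decorrelation means the RT/HRT surface homologous to \emph{any} polychromatic subsystem $\a_\si$ of $\c_\N$ splits as a disjoint union $\extr{\a_{\si'}}\cup\extr{\a_{\si''}}$, where $\si'$ and $\si''$ are the restrictions of $\si$ to the colors living in $\c'_\N$ and $\c''_\N$ respectively; this is precisely the phase-transition statement already invoked around \eqref{eq:MIvanishes2} and Fig.~\ref{fig:MIvanishes2}, applied to the bipartition of $\c_\N$ into its two widely-separated halves. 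Consequently the set $\om(\c_\N,\psi_\Sigma)$ of connected bulk surfaces is the disjoint union $\om(\c'_\N,\psi_\Sigma)\sqcup\om(\c''_\N,\psi_\Sigma)$, with no surface ever straddling the two regions.

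Next I would track this splitting through the matrix $M_{\si\mu}$ of \eqref{eq:surface_decomposition}. Because no connected surface $\omega^\mu$ belongs simultaneously to a subsystem of $\c'_\N$ and one of $\c''_\N$, the matrix is block structured: for $\mu$ indexing a surface in $\om(\c'_\N,\psi_\Sigma)$ the row $\sum_\si M_{\si\mu}\,Q_\si$ involves only those $S_\si$ whose surface content lives entirely in $\c'_\N$, and reproduces verbatim the constraint one would obtain by evaluating $\bQ$ on $\c'_\N$ alone; symmetrically for $\mu$ indexing a surface in $\om(\c''_\N,\psi_\Sigma)$. Therefore the full list of equations $\{\sum_\si M_{\si\mu}Q_\si=0,\ \forall\mu\}$ is literally the concatenation of $\{\f(\c'_\N)\}$ and $\{\f(\c''_\N)\}$, i.e.\ $\{\f(\c_\N)\}=\{\f(\c'_\N)\}\cup\{\f(\c''_\N)\}$, which is the claim. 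One subtlety to address carefully is that a given color $\ell$ may appear in \emph{both} $\c'_\N$ and $\c''_\N$ — in the setup these are both $\N$-party configurations — so $S_\ell$ in the ambient configuration $\c_\N$ receives surface contributions from both halves; but this is harmless, since the decorrelation guarantees those contributions are carried by distinct connected surfaces, so the corresponding pieces of $M$ still occupy different $\mu$-columns and the two constraint systems remain cleanly separated (they share the \emph{variables} $Q_\si$ but not the \emph{equations}).

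I expect the main obstacle to be making the surface-splitting step fully rigorous rather than merely intuitive: one must be sure that for \emph{every} polychromatic subsystem — including those mixing many colors from both halves — the globally minimal (or extremal, in the HRT case) surface is genuinely the disjoint union of the two separately-minimal surfaces, and that no exotic connected surface bridging the gap can ever win. In the time-reflection-symmetric case this follows from a standard cut-and-paste/area-comparison argument once the separation is large enough; in the general HRT case one invokes the maximin technology of \cite{Wall:2012uf} on the relevant bulk Cauchy slice. Since the ambient framework has already committed (cf.\ \S\ref{subsec:review2}) to working in the vacuum of a single CFD where such phase transitions are well understood, I would treat this decoupling as the geometric analogue of the ${\bf I}_2=0$ phase transition illustrated in Fig.~\ref{fig:MIvanishes2} and regard the rest of the argument as bookkeeping on the block structure of $M_{\si\mu}$.
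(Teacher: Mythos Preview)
The paper does not actually prove this lemma here; it is stated in the review section \S\ref{subsec:review3} and the reader is referred to \citep{Hubeny:2018trv} for details. Your argument is correct and is the natural one: the decorrelation ${\bf I}_2(\c'_\N:\c''_\N)=0$ forces the extremal surface of every polychromatic subsystem $\a_\si$ to be the disjoint union of the extremal surfaces of its two halves, so $\om(\c_\N)=\om(\c'_\N)\sqcup\om(\c''_\N)$, and each constraint (one per connected surface $\omega^\mu$) is verbatim the corresponding constraint from whichever sub-configuration $\omega^\mu$ lives in.

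One minor wording issue: when you write ``$\si'$ and $\si''$ are the restrictions of $\si$ to the colors living in $\c'_\N$ and $\c''_\N$'', this is slightly off --- both $\c'_\N$ and $\c''_\N$ are $\N$-party configurations carrying \emph{all} $\N$ colors, so there is no restriction on the polychromatic index $\si$ itself. What splits is the set of \emph{regions}: $\a_\si$ in $\c_\N$ is the union of the $\si$-colored regions of $\c'_\N$ with those of $\c''_\N$, and it is this geometric union whose extremal surface decouples. You effectively say this correctly in your ``subtlety'' paragraph, so the argument goes through; just tighten the earlier phrasing.
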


By taking uncorrelated unions of canonical building blocks we can then realize configurations corresponding to the set of constraints listed in Lemma~\ref{lemma:building_blocks}. The second part of the Lemma, namely the fact that there are no other possibilities, was proven in \citep{Hubeny:2018trv} for the topological class $\mathfrak{C}_\N$. However, we will see in \S\ref{sec:degenerate} that it can be generalized to an analogous statement holding in the disjoint scenario (even when the configuration is enveloping). 

Using the uncorrelated union and the canonical building blocks we can then construct all the equivalence classes of configurations in $\mathfrak{C}_\N$. The representative of the classes are simply all the inequivalent combinations of building blocks. It then follows that to obtain a primitive information quantity (namely, a single relation between the $\sf{D}$ entropies  $S_\si$, obtained from $\sf{D}-1$ independent relations between the $Q_\si $'s), we should combine $\sf{D} -1$ of these canonical constraints:

\begin{lemma}
The equivalence classes of configurations in $\mathfrak{C}_{\sf N}$ which generate primitive information quantities are the ones which are associated to the following sets of constraints  
\begin{equation}
\mathfrak{F}^{\rm can}\setminus \{\f_{\si_\n}^{\rm can}\} ,\qquad {\rm for \ any \ choice \ of \ a \ single} \quad \f_{\si_\n}^{\rm can}
\label{eq:generating_equations}
\end{equation}  
with $2\leq{\sf n}\leq {\sf N}$.
\label{lemma:permutations}
\end{lemma}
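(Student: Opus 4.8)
The plan is to run a linear-algebra dimension count on the space of information quantities and then combine it with the realizability criterion already established. First I would recall that, by \eqref{eq:info_quantity_abstract}, an information quantity is just a vector $\{\qcf{\si}\}_{\si\in\pset}$ in $\mathbb{Q}^{\sf D}$ with ${\sf D}=2^{\sf N}-1$, and that, by Definition~\ref{def:fun} together with \eqref{eq:constaints_definition}, a configuration generates a primitive quantity precisely when its constraint system cuts out a one-dimensional solution space. By Lemma~\ref{lemma:canonical_form}, for any $\c_{\sf N}\in\mathfrak{C}_{\sf N}$ this system may be replaced, without changing its solution space, by a subset $\mathfrak{F}\subseteq\mathfrak{F}^{\rm can}$. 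The key elementary input is that the ${\sf D}$ equations in $\mathfrak{F}^{\rm can}$ are linearly independent, so any $\mathfrak{F}\subseteq\mathfrak{F}^{\rm can}$ of cardinality $k$ has solution space of dimension exactly ${\sf D}-k$; hence a one-dimensional solution space forces $k={\sf D}-1$, i.e.\ $\mathfrak{F}=\mathfrak{F}^{\rm can}\setminus\{\f_{\si}^{\rm can}\}$ for a single polychromatic index $\si$.

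Next I would impose realizability. Lemma~\ref{lemma:building_blocks} says $\mathfrak{F}^{\rm can}\setminus\{\f_{\si}^{\rm can}\}$ is the constraint set of some configuration in $\mathfrak{C}_{\sf N}$ if and only if it still contains $\mathfrak{F}_{[{\sf N}]}$, which fails precisely when $\f_{\si}^{\rm can}$ is one of the monochromatic constraints, i.e.\ when $|\si|=1$; so realizability forces $|\si|={\sf n}\geq 2$. For the converse --- that every such set is genuinely realized --- I would exhibit the configuration directly: by Lemma~\ref{lemma:uncorrelated_union}, the uncorrelated union of the canonical building blocks $\c_{\sf N}^\circ[\si']$ taken over all polychromatic $\si'$ with $|\si'|\geq 2$ and $\si'\neq\si_{\sf n}$ has constraint set $\bigcup_{\si'}\big(\mathfrak{F}_{[{\sf N}]}\cup\{\f_{\si'}^{\rm can}\}\big)=\mathfrak{F}^{\rm can}\setminus\{\f_{\si_{\sf n}}^{\rm can}\}$, as required. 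Finally, since the equivalence relation \eqref{eq:equivalence3} identifies configurations exactly by their solution spaces, the map sending $\mathfrak{F}^{\rm can}\setminus\{\f_{\si_{\sf n}}^{\rm can}\}$ to its one-dimensional solution space is a bijection onto the equivalence classes of $\mathfrak{C}_{\sf N}$-configurations that generate primitive quantities, which is the assertion of the Lemma.

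I expect the only genuine obstacle to be the linear independence of $\mathfrak{F}^{\rm can}$ --- without it the dimension count is not rigid --- but this should already be in hand from the definition \eqref{eq:canonical_form_constaints}, since the matrix taking the $S_\si$-basis to the $\f_\si^{\rm can}$-basis is the incidence (zeta) matrix of the poset of polychromatic indices under inclusion, which is unitriangular and hence invertible. The remaining work is bookkeeping: checking that the uncorrelated union of the listed building blocks produces no spurious additional constraints, which is immediate from Lemma~\ref{lemma:uncorrelated_union} and the construction of $\c_{\sf N}^\circ[\si']$. Identifying each resulting one-parameter family with the ${\sf n}$-partite information ${\bf I}_{\sf n}$ is a separate explicit computation (solving $\mathfrak{F}^{\rm can}\setminus\{\f_{\si_{\sf n}}^{\rm can}\}$ directly) and belongs to the proof of Theorem~\ref{thm:In} rather than to this Lemma.
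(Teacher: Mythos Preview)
Your proposal is correct and follows essentially the same approach as the paper: a dimension count using the linear independence of the ${\sf D}$ canonical constraints to force the removal of exactly one constraint, combined with the realizability criterion of Lemma~\ref{lemma:building_blocks} to exclude the monochromatic case ${\sf n}=1$, and the explicit uncorrelated-union construction for the converse. Your treatment is slightly more detailed (e.g.\ the unitriangular argument for independence and the explicit building-block union), but the underlying logic is the same as the paper's sketch.
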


Finally, to find the desired primitive information quantities we just need to solve these systems of equations, proving the theorem.

\section{The holographic entropy arrangement}
\label{sec:arrangement}

Having reviewed our basic framework, we now proceed to introduce a geometric object, the arrangement of \textit{hyperplanes} in entropy space, which we call the \textit{holographic entropy arrangement}.\footnote{ The concept of hyperplane arrangement is well studied in geometry and combinatorics, cf., \citep{Stanley:2004aa,Orlik:2013ab}.} The arrangement constitutes a geometric representation of the full set of primitive quantities associated to $\N$ colors.

A detailed study of the structure of the arrangement for fixed $\N$, specifically how the hyperplanes intersect with each other and decompose the ambient space into distinct cells, requires the knowledge of the full list of hyperplanes (i.e., of all primitive quantities). This would require performing the full scan reviewed in the previous section. We believe such a scan is best examined case by case, for different number of colors. It is conceivable that at least part of this structure is universal (i.e., independent from $\N$), though it seems likely that a more detailed knowledge of complex arrangement patterns would be necessary to unpack it. We will not aim to be comprehensive at present, but we do envision the arrangement as the natural framework for the characterization of multipartite entanglement structure of geometric states, and possibly other QFTs (see  \S\ref{sec:discuss} for additional  comments).  

In this  section we introduce the arrangement and initiate a study of its structural properties. In \S\ref{subsec:definition} we  first define the arrangement for an arbitrary number of colors and prove some simple results about its general structure. A systematic notation for the information quantities associated to the hyperplanes is developed in \S\ref{subsec:taxonomy}, while \S\ref{subsec:symmetries} organizes the information quantities, and the corresponding hyperplanes, into equivalence classes according to certain symmetries. 

The discussion about the construction of the arrangement beyond the $\I_\n$-theorem is postponed to \S\ref{sec:relations} and \S\ref{sec:four}. The arrangement will play a central role in \S\ref{sec:sieve},  where we define the \textit{holographic entropy polyhedron} and construct a  \textit{sieve} that enables us to extract candidates for holographic entropy inequalities.

\subsection{Definition and general structure}
\label{subsec:definition}

Let us start with the definition of the arrangement. To any primitive quantity $\bQ$ in a holographic $\N$-party setting, we will associate a hyperplane\footnote{ We adopt the standard convention: hyperplane implicitly refers to a codimension-one surface.} $\hyper_\bQ$ given by the following expression
\begin{equation}
\hyper_\bQ:\;\;\bQ({\bf S})=0\,,
\label{eq:hyperplane}
\end{equation}
where the components $S_\si$ of the entropy vector ${\bf S}$ are treated as real variables in the $\N$-party \textit{extended entropy space}\footnote{ We remind the reader that in the $\N$-party set-up, entropy space is defined as $\mathbb{R}^{\sf D}_+$. Here we consider extending past the positive orthant for geometric convenience, since the hyperplanes themselves, which are associated to equations, are not sensitive to the non-negativity of the entropy, and eventual sign-definiteness of some primitive quantities. \label{fn:extent}} $\mathbb{R}^{\sf D}$, with ${\sf D}=2^{\N}-1$. We then define

\begin{definition}
\emph{\textbf{(Holographic entropy arrangement)}} In a holographic $\N$-party setting, the \emph{holographic entropy arrangement} $\arr_\N$ is the collection $\{ \hyper_\bQ \}$ of the hyperplanes associated to all primitive quantities. 
\end{definition}
\noindent
In what follows, for succinctness we will use the expression `a quantity in the arrangement', to informally refer to  a hyperplane associated to the given quantity.

To appreciate why the  holographic entropy arrangement is a natural structure, let us re-examine the regulator independence of mutual information discussed in  \S\ref{sec:review}. While it is generically not possible to associate a single entropy vector ${\bf S}_\epsilon(\c_2,\psi_\Sigma)$ to a pair $(\c_2,\psi_\Sigma)$ in a QFT,  the value of ${\bf I}_2(\c_2,\psi_\Sigma)$ does not depend, in the limit $\epsilon\rightarrow 0$, on how we regulate the entropy. This motivates our  \textit{entropy relation} \eqref{eq:relation}. Viewing entropies $S_\mathcal{A},S_\mathcal{B},S_\mathcal{AB}$ (now for an unspecified state and/or configuration) as coordinates in $\mathbb{R}^3$, the equation 
\begin{equation}
S_\mathcal{A}+S_\mathcal{B}-S_\mathcal{AB}={\bf I}_2(\c_2,\psi_\Sigma)
\label{eq:MIhyperplane}
\end{equation}
describes a two-dimensional plane. We can interpret \eqref{eq:relation} as saying that for fixed $(\c_2,\psi_\Sigma)$ the limit of the sequence of collections of vectors \eqref{eq:3dvector} associated to decreasing values of $\epsilon$ will belong to this particular plane.

Suppose now that we modify the pair $(\c_2,\psi_\Sigma)$, either by deforming the configuration, or by changing the state, or both. The value of ${\bf I}_2(\c_2,\psi_\Sigma)$, and consequently the plane \eqref{eq:MIhyperplane}, will change. This being simply a translation in entropy space, by changing $(\c_2,\psi_\Sigma)$ we obtain an infinite family of planes, parallel to each other. We can then choose one particular plane as a representative of the entire family.  The natural choice is, of course, 
\begin{equation}
\hyper_{\I_2}:\quad S_\mathcal{A}+S_\mathcal{B}-S_\mathcal{AB}=0\,.
\label{eq:MIvanishes}
\end{equation}
In a generic QFT, the vectors associated to an arbitrary $(\c_2,\psi_\Sigma)$ are never really localized on this particular plane,  since the mutual information never vanishes exactly. However, as the separation between $\mathcal{A}$ and $\mathcal{B}$ grows, and the amount of correlation decreases, they will be localized, in the limit $\epsilon\rightarrow 0$, on planes which are closer and closer to \eqref{eq:MIvanishes}. The particular plane \eqref{eq:MIvanishes} corresponds to the special case where the correlations are exactly absent.

As we discussed \S\ref{sec:review}, this behavior becomes particularly clear in the holographic context, if we work in the strict large $N$ limit. 
The exact vanishing and explicit regulator independence (via cancellation between surfaces) in the  mutual information, implies that any entropy vector \eqref{eq:3dvector}  (for sufficiently separated regions)
 will precisely satisfy the relation \eqref{eq:MIvanishes} and hence will be localized on the plane \eqref{eq:MIvanishes}. If $1/N$ corrections are taken into account, the situation is very much the same as in a generic QFT.

The logic applies more generally to all primitive information quantities, for arbitrary $\N$. The algebraic relation $\bQ({\bf S}(\c_\N,\psi_\Sigma))=0_{\boldsymbol{\mathscr{E}}}$, at the level of the proto-entropy, implies that an arbitrary collection of regulated entropy vectors ${\bf S}_\epsilon(\c_\N,\psi_\Sigma)$ will, in the large $N$ limit, be localized on the hyperplane \eqref{eq:hyperplane} since it will satisfy the corresponding relation.

It is important to note that in \eqref{eq:MIhyperplane}  the value of ${\bf I}_2(\c_2,\psi_\Sigma)$ depends not only on the configuration $\c_2$, but also on the global state $\ket{\psi_\Sigma}$. We stress that this does not contradict the `gauge-fixing' discussion of \S\ref{subsec:review2}.
The crucial point is that such `gauge-fixing' procedure can be employed to \textit{find} the primitive quantities. This in turn determines the  arrangement, which however is a universal structure (at least within all geometric states of holographic field theories). The goal is to first determine the arrangement (for some $\N$) and then use it to characterize the multipartite entanglement structure of holographic (and perhaps more general) states (see \S\ref{sec:discuss}).

\paragraph{Properties of the hyperplane arrangement:}
Having introduced the general logic behind the concept of the holographic entropy arrangement, we will now discuss some of its general properties, which are independent of the number of colors $\N$. This allows us to establish some basic terminology which is standard in the mathematical literature on hyperplane arrangements \cite{Stanley:2004aa}.

A hyperplane arrangement is said to be \textit{finite} if it is a collection of a finite number of hyperplanes and \textit{central} if the intersection of all the hyperplanes is exactly the origin. The dimension of the arrangement is defined to be the dimension of the ambient space, in this case ${\sf D}$, while the \textit{rank} is the dimension of the space spanned by the vectors normal\footnote{ We use the standard inner product on $\mathbb{R}^{\sf D}$.} to the hyperplanes. An arrangement with rank equal to its dimension is said to be \textit{essential}. The following lemma summarizes the fundamental properties of the holographic entropy arrangement.

\begin{lemma}
For any number of parties $\N$, the holographic entropy arrangement $\arr_\N$ is essential, central, finite, and symmetric under a particular action of the group $\emph{\sym}_{\N+1}$ which permutes the $\N$ colors along with the purifier $\o$.
\end{lemma}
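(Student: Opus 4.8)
The plan is to verify each of the four asserted properties in turn, drawing on the algebraic characterization of primitive quantities developed in \S\ref{subsec:review1}--\S\ref{subsec:review2} and, where needed, on the $\I_\n$-theorem.

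\textbf{Finiteness.} First I would argue that there are only finitely many primitive quantities for fixed $\N$. By \eqref{eq:constaints_definition}, a primitive quantity is the (unique up to scale) solution of a system of constraints $\{\f(\c_\N)\}$ with a one-dimensional solution space, and by \eqref{eq:equivalence3} only the solution space of the constraints matters. Each constraint is a linear equation on the $\sf D$ coefficients $Q_\si$ with integer coefficients; more to the point, by Lemma~\ref{lemma:canonical_form} (to be invoked in the form generalized in \S\ref{sec:degenerate}, or just as stated for $\mathfrak{C}_\N$ together with the fact proven later that the canonical form is generic) the constraints are always equivalent to a subset of the finite set $\mathfrak{F}^\text{can}$. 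Since $\mathfrak{F}^\text{can}$ has $\sf D$ elements, there are at most $2^{\sf D}$ distinct solution spaces, hence finitely many primitive quantities, hence finitely many hyperplanes. (If one does not wish to invoke the generalization of Lemma~\ref{lemma:canonical_form} to all configurations, one can note that each primitive $\bQ$ is faithful, so is generated by \emph{some} configuration, and the number of inequivalent constraint systems is in any case bounded because the matrix $M_{\si\mu}$ is a $(0,1)$-matrix with $\sf D$ rows.)

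\textbf{Essentiality and centrality.} These two follow together once I exhibit, among the primitive quantities, $\sf D$ whose associated hyperplanes have linearly independent normals. The natural candidates are the $\I_\n$ of Theorem~\ref{thm:In}, ranging over all $[\n]\subseteq[\N]$ with $2\le\n\le\N$, together with the single-party entropies $S_\ell=\I_1$; these are $\sum_{\n=1}^{\N}\binom{\N}{\n}=2^\N-1={\sf D}$ quantities. Writing each in the entropy basis $\{S_\si\}_{\si\in\pset}$, the coefficient vector of $\I_{\si_\n}$ is supported on subsets of $\si_\n$ and has a nonzero entry (namely $(-1)^{\n+1}$) on $\si_\n$ itself, so ordering the quantities by $\si_\n$ in a way compatible with inclusion makes the coefficient matrix triangular with nonzero diagonal. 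Hence the $\sf D$ normals span $\mathbb{R}^{\sf D}$: the rank equals the dimension, so $\arr_\N$ is essential. Centrality is then immediate --- every $\hyper_\bQ$ passes through the origin since each $\bQ$ is a \emph{homogeneous} linear form in the $S_\si$, and with $\sf D$ independent normals the common intersection is forced to be $\{0\}$.

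\textbf{$\sym_{\N+1}$-symmetry.} The $\sym_\N$ part is clear: relabeling colors permutes configurations among themselves and correspondingly permutes the coefficients $Q_\si$, carrying primitive quantities to primitive quantities and hence hyperplanes to hyperplanes. The content is the extra generator, exchanging a color $\ell$ with the purifier $\o$. The key input is that $\ket{\psi_\Sigma}$ is pure, so $S_{\si}=S_{\si^c}$ where the complement is taken within $[\N]$ (the purifier carrying the ``missing'' entropy); concretely, purification with respect to $\ell$ acts on entropies by $S_\si\mapsto S_{\si\triangle\{\ell\}}$ up to the purity relations, and this is realized geometrically by the purification operator $\mathbb{P}_\ell$ alluded to in Table~\ref{tab:summary}. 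I would show that $\mathbb{P}_\ell$ maps any generating pair $(\c_\N,\psi_\Sigma)$ to another generating pair (same bulk surfaces, reorganized), so it permutes the set of constraint systems and therefore the set of primitive quantities; the induced linear map on entropy space permutes the hyperplanes. Together with $\sym_\N$ these transformations generate a group acting as $\sym_{\N+1}$ on the $\N+1$ objects $\{1,\dots,\N,\o\}$, because purification-with-respect-to-$\ell$ followed by color relabelings realizes exactly the transpositions $(\ell\,\o)$.

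The main obstacle I expect is the symmetry statement, specifically making precise that $\mathbb{P}_\ell$ sends generating configurations to generating configurations and that the induced action on primitive quantities is well-defined (it must be checked that purity does not collapse or identify previously distinct hyperplanes in a way that breaks the group action); finiteness is the next most delicate point if one wants it independently of the $\I_\n$-theorem, but essentiality and centrality are essentially bookkeeping once the $\I_\n$'s and $S_\ell$'s are put on the table as an explicit triangular family.
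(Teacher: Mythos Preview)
Your finiteness, centrality, and symmetry sketches are fine and track the paper's argument closely. The gap is in your essentiality argument: you complete the set of $\sf D-\N$ multipartite informations $\I_\n$ (for $2\le\n\le\N$) with the $\N$ single-party entropies $S_\ell=\I_1$, but the $S_\ell$ are \emph{not} primitive quantities and therefore are not hyperplanes in $\arr_\N$. For $S_\ell$ to be faithful in the sense of Definition~\ref{def:faith} you would need a configuration on which the proto-entropy $S_\ell=0_{\boldsymbol{\mathscr{E}}}$, which forces $\a_\ell$ to be degenerate (empty or the full slice); and even granting faithfulness, primitivity fails because on any such degenerate configuration many other information quantities vanish simultaneously. (Compare the paper's later remark that ${\bf J}_2=S_{\a\b}$ cannot be primitive.) So your triangular family, while a perfectly good basis for the space of information quantities, does not consist of elements of the arrangement, and the argument does not establish that the arrangement itself has full rank.

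The paper's fix is to replace the $\N$ offending $S_\ell$ by the $\N$ Araki--Lieb quantities $\bQ_2^{\text{AL}}(\a_{\ell}:\a_{[\N]\setminus\ell})=S_\ell+S_{[\N]}-S_{[\N]\setminus\ell}$, obtained from the primitive $\I_2$ by purification with respect to one color. These \emph{are} primitive (the generating configuration for $\I_2$ becomes, after swapping a color with the purifier, a generating configuration for $\bQ_2^{\text{AL}}$), so they legitimately lie in $\arr_\N$. The resulting $\sf D\times\sf D$ coefficient matrix is not immediately triangular; the paper uses the identity $\bQ_2^{\text{AL}}(\a_\ell:\a_{[\N]\setminus\ell})=2S_\ell-\I_2(\a_\ell:\a_{[\N]\setminus\ell})$ together with the reduction of $\I_2(\a_\ell:\a_{[\N]\setminus\ell})$ into trivial upliftings of $\I_\n$'s to row-reduce to an upper-triangular form with nonzero diagonal. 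Once you make this substitution, your argument goes through.
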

\begin{proof} $ $\newline
\vspace{-1em}
\begin{itemize}
\item \textit{Essential:}  We will demonstrate this by showing that we already have ${\sf D}$ linearly independent hyperplanes associated with the ${\bf I}_{\sf n}$ information quantities (which necessarily belong to the arrangement)
 with ${\sf n} = \{2, 3, \cdots, {\sf N}\}$, after including all combinations of colors along with certain purifications. 

For given $\N$, consider the collection of all the hyperplanes associated to the quantities found by the ${\bf I}_\n$-theorem and note that there are ${\sf D}-\N=2^\N-\N-1$ of them. Now consider the mutual information between any two colors  ${\bf I}_2(\a_{\ell_1}:\a_{\ell_2})$. By ``purifying'' with respect to $\a_{\ell_2}$ one gets the quantity\footnote{ This is the standard procedure to derive the Araki-Lieb inequality (from which the name $\bQ_2^\text{AL}$ derives) from subadditivity. A similar procedure also allows us to derive, e.g., weak monotonicity from strong subadditivity; we will describe this in greater detail in \S\ref{subsubsec:purifications}.}
\begin{equation}
\bQ_2^\text{AL}(\a_{\ell_1}:\a_{\si_\N\setminus\ell_1})=S_{\ell_1}+S_{\si_{\N}}-S_{\si_\N\setminus\ell_1}
\end{equation}
where $\si_\N=[\N]$ is the polychromatic index which includes all colors. Clearly there are $\N$ different such expressions, therefore, combining these hyperplanes with the previous ones, we obtain a collection of ${\sf D}$ hyperplanes in $\mathbb{R}_+^{\sf D}$. We now need to show that the vectors normal to these hyperlanes are all linearly independent. For any hyperplane $\hyper_\bQ$, the coefficients $\qcf{\si}$ appearing in the equation \eqref{eq:hyperplane} (when explicitly written out as \eqref{eq:info_quantity_abstract})
 are the components of the vector orthogonal to the hyperplane (in the standard orthonormal basis of $\mathbb{R}^{\sf D}$).  Let us arrange these vectors into a ${\sf D}\times{\sf D}$ matrix where the first rows are the quantities $\bQ_2^\text{AL}$, listed at increasing value of $\ell_1$. The rows corresponding to the various ${\bf I}_\n$ are ordered such that $\n$ is increasing. When two rows have the same value of $\n$ they are ordered such that $\ell_1<\ell_2<\cdots<\ell_\n$. This matrix is almost upper triangular, except for some $\pm 1$ entries in the rows corresponding to the $\bQ_2^\text{AL}$. 

However, note the following identity:
\begin{equation}
\bQ_2^\text{AL}(\a_{\ell_1}:\a_{\si_\N\setminus\ell_1})=2S_{\a_{\ell_1}}-{\bf I}_2(\a_{\ell_1}:\a_{\si_\N\setminus\ell_1})
\end{equation}
Further simplification is afforded by rewriting the mutual information ${\bf I}_2(\a_{\ell_1}:\a_{\si_\N\setminus\ell_1})$ as a linear combination of the ${\bf I}_\n$'s as follows:\footnote{ The reader is invited to consult \eqref{eq:general_reduction_formula} for general formulas and the explicit examples in \S\ref{sec:sieve}, where we carry out similar manipulations extensively for various information quantities.}
\begin{equation}
{\bf I}_2(\a_{\ell_1}:\a_{\si_\N\setminus\ell_1})=\sum_{\n=2}^\N\;\sum_{\{\ell_2,\ell_3,\ldots,\ell_\n\}} (-1)^\n\, {\bf I}_\n(\a_{\ell_1}:\a_{\ell_2}:
\cdots :\a_{\ell_\n})
\end{equation}
Using these two relations to replace the first $\N$ rows, we bring the resulting matrix into an upper-triangular form, with all entries on the diagonal non-vanishing.  This establishes the rank of the arrangement to be ${\sf D}$.
\item \textit{Central:} Since all the equations which define the hyperplanes are homogeneous, the intersection of all hyperplanes in the arrangement is a linear subspace. But since the arrangement is essential, this subspace is trivial, consisting of only the origin of the extended entropy space.
\item \textit{Finite:} In an arbitrary $\N$-color configuration $\c_\N$, consider a surface $\omega\in\om(\c_\N)$. The constraint $\f(\omega)$ is an equation in ${\sf D}$ variables with the property that for all variables $Q_\si$, the corresponding coefficients $c_\si$ are $c_\si\in\{0,1\}$. Therefore, for a given number of colors $\N$, there exist at most $2^{\sf D}$ different constraints. Since a quantity $\bQ$ is a solution of a system of ${\sf D}-1$ linearly independent equations, we have the (very weak, but finite) bound
\begin{equation}
\#\arr_\N\leq\binom{2^{\sf D}}{{\sf D}-1}
\end{equation}
In fact, as we will see, the number of hyperplanes in the arrangement is expected to be far smaller.
\item \textit{Symmetric:} The symmetry under $\sym_{\N}$ (which acts on the set $[\N]$ canonically by permuting the elements) can easily be understood by observing that there should be no fundamental difference between the $\N$ colors. The symmetry enhancement to $\sym_{\N+1}$ has instead a quantum origin, it is related to the fact that once a purification of the full $\N$-partite density matrix is considered, the various entropies $S_\si$ are equal to the entropies of the complementary subsystems $S_{\si^c}$. This allows us to permute not only the $\N$ colors, but also the purifier $\mathcal{O}$. A thorough analysis of this symmetry structure will be carried out in \S\ref{subsec:symmetries}.
\end{itemize}
\label{lemma:arrangement}
\end{proof}

\subsection{Taxonomy of primitive information quantities}
\label{subsec:taxonomy}
 
In general, for not too small values of  $\N$, the holographic entropy arrangement has a very complicated structure. It will be important to have a formalism that allows us to catalog the various hyperplanes systematically. It will become clear as we proceed that a large number of primitive quantities in $\arr_{\N}$ are simple ``upliftings'' of quantities appearing in arrangements defined for fewer colors. Being able to distinguish such  upliftings will be particularly important for efficient classification. We want to identify genuinely new information emerging as $\N$ increases. Relatedly, the absence in $\arr_\N$ of certain upliftings of quantities found for fewer colors, will turn out to signal the presence of new holographic inequalities.  

Let us first illustrate this with a simple example  (the logic of the argument here is general and does not rely on holography, or even a QFT). We have seen in \S\ref{sec:review} how one can derive the mutual information $\I_2(\a:\b)$ in a $2$-party setting.  Suppose now that we have a $3$-party quantum system. We can consider all possible bipartitions of these three subsystems and evaluate the mutual information on all pairs.  Accounting for  symmetry under the swap $\a\leftrightarrow\b$ we have the following six possibilities:
\begin{align}
\begin{split}
&{\bf I}_2(\mathcal{A}:\mathcal{B}),\;\; \;\; {\bf I}_2(\mathcal{A}:\mathcal{C}),\;\; \;\;\,{\bf I}_2(\mathcal{B}:\mathcal{C})\\
&{\bf I}_2(\mathcal{A}:\mathcal{BC}),\;\;  {\bf I}_2(\mathcal{B}:\mathcal{AC}),\;\; {\bf I}_2(\mathcal{C}:\mathcal{AB})
\end{split}
\label{eq:MI_3instances}
\end{align}
Collectively, we will call the various instances appearing in \eqref{eq:MI_3instances}  ``upliftings'', since the mutual information requires two parties for its definition, but here it is being evaluated in a context where we have three parties at our disposal. Intuitively, we will think of these quantities as not `genuinely tripartite'. The instances appearing in the first line will also be referred to as ``trivial upliftings" since they are formally analogous to the instances of the mutual information in its `natural' bipartite set-up, i.e., $\I_2(\a:\b)$, which we will call the ``natural instance'' of the mutual information. All these notions will be made precise in the following.

As we argued in \citep{Hubeny:2018trv}, not all upliftings in \eqref{eq:MI_3instances} are primitive quantities. Specifically, the ones in the first row are primitive, while the ones in the second are not.\footnote{ In general it is not the case that only trivial upliftings of a quantity $\bQ$ are primitive. In the case of \eqref{eq:MI_3instances} it is consequence of the simplicity of the example under consideration. On the other hand, all trivial upliftings of a (non-degenerate) primitive quantity remain primitive; cf.\ Lemma \ref{lemma:primitivity}.} By definition of primitivity, this means that there is no pair $(\c_3,\psi_\Sigma)$ that generates, for example, ${\bf I}_2(\mathcal{A}:\mathcal{BC})$ alone (and no other independent information quantity).  This can be understood as follows. We can rewrite ${\bf I}_2(\mathcal{A}:\mathcal{BC})$ as
\begin{equation}
\I_2(\a:\b\cs)=\I_2(\a:\b)+\I_2(\a:\cs)-\I_3(\a:\b:\cs)
\label{eq:non_primitivity_proof}
\end{equation}
Since the right hand side corresponds to a sum of  non-negative terms, $\I_2(\a:\b\cs)$ can vanish if and only if all the other quantities simultaneously vanish, and therefore it cannot be primitive. Of course, in making this argument we have explicitly used the fact that holographically one has $\I_3(\a:\b:\cs)\leq 0$. However, the statement can also be understood in the converse direction: we can interpret non-primitivity of $\I_2(\a:\b\cs)$ (once we independently verify the same) as a hint that $\I_3(\a:\b:\cs)$ might have a definite sign (see \S\ref{sec:sieve} for a discussion about holographic inequalities in our framework). 

Motivated by the intuition from the above, we want to develop a general formalism that allows us to determine whether or not an arbitrary primitive quantity $\bQ$ derived in an $\N$-party setting is an uplifting of a quantity defined for fewer colors. Furthermore, we want this formalism to be able to efficiently distinguish between different upliftings. 

The first step in this direction is to make a clear distinction between an
 ``abstract definition'' of an entropic information quantity, which does not depend on the set-up, and its specific instances, which instead depend on the total number of parties $\N$. For example, consider again the mutual information, which we now write as
\begin{equation}
\widetilde{\bf I}_2(\mathcal{X}_1,\mathcal{X}_2)=S_{\mathcal{X}_1}+S_{\mathcal{X}_2}-S_{\mathcal{X}_1\mathcal{X}_2}
\label{eq:abstract}
\end{equation}
Here the symbols $\mathcal{X}_1,\mathcal{X}_2$ indicate generic subsystems, the tilde stresses the fact that we are working with an abstract quantity, and the lower index in $\widetilde{\bf I}_2$ indicates the number of objects which are necessary for the definition. The key point is that the number of subsystems $\N$ which defines our set-up can in general be greater than the number $\r$ of subsystems which are necessary to define an abstract quantity. Therefore, the  variables $\mathcal{X}_1,\mathcal{X}_2$ can represent arbitrary (but distinct) collections of the $\N$ monochromatic subsystems, as in the second line of \eqref{eq:MI_3instances}. 

To be more precise, let us first recall our definition of the power set of $[\N]$ (sans the empty set) introduced in \eqref{eq:power_set}, for which we will now use the shorthand $\psett$, viz.,  
\begin{equation}
\psett \equiv \pset  = 
\{\si\subseteq[\N]\}\setminus\{\emptyset\}  \,.
\end{equation}
The expression \eqref{eq:abstract} is then a map
\begin{equation}
\widetilde{\bf I}_2:\domain_2\subset\psett\times\psett\rightarrow\mathfrak{S} \qquad (\mathcal{X}_1,\mathcal{X}_2)\mapsto \widetilde{\bf I}_2(\mathcal{X}_1,\mathcal{X}_2)
\end{equation}
where the image set, $\mathfrak{S}$, depends on the context. For the standard HRT formula, it would be the space of real functions (once the regulating surfaces $\epsilon (x)$ are introduced). Since we are working with the proto-entropy instead, it will be an abstract space of formal linear combination of surfaces. The domain $\domain_2$ (with the subscript indicating the number of arguments) is defined as
\begin{equation}
\domain_2=\{(\mathcal{X}_1,\mathcal{X}_2)\in\psett\times\psett,\;\;\mathcal{X}_1\cap\mathcal{X}_2=\emptyset\} \,.
\end{equation}
We then define the \emph{instances} of $\widetilde{\I}_2$ in an $\N$-party setting ($\N\geq 2$) as the elements of the set $\widetilde{\I}_2(\domain_2)$
\begin{equation} 
\widetilde{\I}_2(\domain_2)  = \bigg\{ \widetilde{{\bf I}}_2(\mathcal{X}_1,\mathcal{X}_2) \bigg\} , \qquad \forall  \;  
  (\mathcal{X}_1,\mathcal{X}_2) \in \domain_2  \,. 
\label{eq:}
\end{equation}  
 It is immediate to check that for $\N=3$ this corresponds to the list \eqref{eq:MI_3instances}. 
We call the instances for $\N=2$ the \emph{natural instances}, while the \textit{upliftings} of $\widetilde{\I}_2$ are its instances when $\N >2$.

This approach can be easily generalized to any number of parties. Before doing so, let us take note of a subtle but crucial aspect. Primitive quantities found from the study of configurations are not abstract quantities in the sense of \eqref{eq:abstract}, but rather instances like in \eqref{eq:MI_3instances}. While our examples thus far are trivial, involving known quantities like mutual information, for larger $\N$ (in particular $\N\geq 4$), our procedure will generate new quantities (see \S\ref{sec:four}) which do not have a standard definition. In addition, we find primitive quantities by solving a system of linear equations, which leaves an overall factor (and sign) unspecified.  We should fix this by some convention to facilitate comparison, and specify how to associate an abstract quantity to a primitive found from configurations.

Before getting into the technical discussion, let us intuitively understand what the issues are. An information quantity is characterized by two distinct features. On the one hand, it cares about the number of subsystems which show up (depending on $\N$). On the other, it more simply cares about how many slots there are for us to insert polychromatic subsystems. It is helpful to a-priori separate these two facts.

 We will regard the number of slots in an information quantity as its primary characteristic and refer to this as its `rank', denoted $\r$. We then worry about permutations among the slots -- some will leave the quantity unchanged, other will give us new variants. We will focus on permutations that give us new variants and call these `isomers'. All of this can be  easily accomplished using the idea of the abstract information quantity introduced above. Once we have the isomers of the abstract quantity, we pick $\n \leq \N$ colors, which we now refer to as the `total character'. We consider ordered partitions (see Eq.~\eqref{eq:ytabpart} below) of $\n$ into $\r$ parts, referring to each such as a `character', and  use this  to assign polychromatic subsystems $\a_\si$ into our slots. We have to do this for each isomer of the abstract quantity, all values of $\n$ with $\r\leq\n\leq \N$, all partitions of $\n$ into $\r$ parts and all choices of $\n$ colors from the full set $[\N]$. We will now formalize these statements. 

We start by explaining how one can proceed to associate an abstract quantity to a primitive found via configurations. Consider a primitive quantity $\bQ$ generated by some configuration in an  $\N$-party setting, defined thus far only up to an overall coefficient, with unspecified sign. We will say that $\bQ$ is \textit{reducible} if there exists a collection of colors  
\begin{equation*}
\{\ell_1,\ell_2,\ldots,\ell_k\}\equiv\widehat{\sk}  \qquad {\rm with} \; k>1,
\end{equation*}
such that
\begin{equation}
\forall\;\si\;\;\text{such that}\;\;Q_\si\neq 0,\qquad\text{either}\quad\widehat{\sk}\subseteq\si \quad\text{or}\quad\widehat{\sk}\cap\si=\emptyset.
\end{equation}
If $\bQ$ is reducible for a collection of colors $\{\ell_1,\ell_2,\ldots,\ell_k\}$, we can then introduce a \textit{color redefinition} as follows
\begin{equation}
\{\ell_1,\ell_2,\ldots,\ell_k\}\rightarrow\ell_1
\label{eq:color_redef}
\end{equation}
For example, by applying the redefinition $\mathcal{AB}\rightarrow\mathcal{A}$ to ${\bf I}_2(\mathcal{AB}:\mathcal{C})$ one gets ${\bf I}_2(\mathcal{A}:\mathcal{C})$. Starting from a reducible primitive quantity $\bQ$, we iterate the procedure until it is no longer possible to reduce it further, so that we reduce $\bQ$ to an \textit{irreducible} form $\bQ'$. Once this form is obtained, we can pick some (ad-hoc) convenient convention to recast the quantity into a canonically-ordered form, so as to facilitate comparison with other quantities.  For example, we can reorder the terms $Q'_\si \, S_\si$ of this expression in order of increasing degree.\footnote{ The \textit{degree} of an index $\si$ is the number of colors which belong to that index. This was denoted as $\kappa$ in \cite{Hubeny:2018trv} but we will find it convenient to equate it with the idea of cardinality which we denote as $\#\si$ in the sequel. \label{fn:degcard}} When two terms have the same degree, we order them according to the first color in the index $\si$.\footnote{ We always assume that the colors in an index $\si$ are increasing when read from left to right.} If the first color coincides, we order them according to the second color, and so on.
Finally, we relabel the colors as $\a_1,\ldots,\a_\r$ following the order by which we encounter them while reading from left to right, and by convention, we choose the overall coefficient such that all the coefficients are co-prime and the first term is positive. 

\begin{definition}
\emph{\textbf{(Abstract information quantity associated to a primitive)}} For a primitive quantity $\bQ$, derived in an $\N$-party setting, the associated \emph{abstract information quantity} is the one obtained from the result of the reduction procedure described above, by replacing the colors with the $\mathcal{X}$ variables as follows $\{\a_1\rightarrow\mathcal{X}_1,\a_2\rightarrow\mathcal{X}_2,\ldots,\a_\r\rightarrow\mathcal{X}_\r\}$. The index $\r$ will be called the \emph{rank} of the abstract quantity and is the number of variables $\mathcal{X}$ which appear in the definition. We will write the abstract information quantity associated to a primitive $\bQ$ as $\widetilde{\bQ}_\r$. 
\end{definition}

An abstract quantity of rank $\r$ is then defined as a map
\begin{equation}
\widetilde{\bQ}_\r:\domain_\r\subset\psett\times\cdots\times\psett\rightarrow\mathfrak{S} \qquad (\mathcal{X}_1,\ldots,\mathcal{X}_\r)\mapsto \widetilde{\bQ}_\r(\mathcal{X}_1,\ldots,\mathcal{X}_\r)
\label{eq:abstract_general}
\end{equation}
with domain
\begin{equation}
\domain_\r=\{(\mathcal{X}_1,\ldots,\mathcal{X}_\r)\in\psett\times\cdots\times\psett,\;\;\mathcal{X}_i\cap\mathcal{X}_j=\emptyset\;\;\forall\,i\ne j\in\{1,\ldots,\r\}\}
\end{equation}
By convention, the rank of a primitive quantity is defined as the rank of its corresponding abstract form
\begin{equation}
\text{rank}(\bQ)\eqdef\text{rank}(\widetilde{\bQ}_\r)=\r
\end{equation}

Having introduced the notion of an abstract information quantity, we can now define its instances,  in an $\N$-party set-up, as follows: 

\begin{definition}
\emph{\textbf{(Instances of abstract quantities)}} Given an abstract quantity $\widetilde{\bQ}_\r$, its \emph{instances} in an $\N$-party setting are the elements of the set $\widetilde{\bQ}_\r(\domain_\r)$. When $\N>\r$ the instances are called \emph{upliftings}, when $\N=\r$ the instances are called \emph{natural instances}.
\end{definition}

With this definition, we can now introduce a notation for the various instances of an abstract quantity $\widetilde{\bQ}_\r$. These will generically be denoted by $\bQ_\r$ followed, as conventional in information theory, by the list of arguments separated by semicolons 
\begin{align}
\bQ_\r(\a_{\si_{n_1}}:\a_{\si_{n_2}}: \; \cdots \;:\a_{\si_{n_\r}}),\qquad \r\leq\sum_{i=1}^\r n_i=\n\leq\N
\end{align}
where $n_i\geq1 \ \forall \, i$, and we combined the monochromatic colors for simplicity into polychromatic labels, viz., 
\begin{equation}
\begin{split}
&\si_{n_1}\equiv\{\ell_1,\ell_2, \ldots ,\ell_{n_1}\},\quad \si_{n_2}\equiv\{\ell_{n_1+1}, \ldots, \ell_{n_1+n_2}\}, 
\ldots  \\
&\qquad \ldots ,\; \si_{n_\r}\equiv\{\ell_{n_1+n_2 + \cdots +n_{\r-1}+1}, \ldots, \ell_{n_1+ n_2 + \cdots +n_{\r-1}+ n_\r}\}
\end{split}
\label{eq:polychardef}
\end{equation}
Each instance $\bQ_\r(\a_{\si_{n_1}}:\a_{\si_{n_2}}: \; \cdots \;:\a_{\si_{n_\r}})$ is an element of the set $\widetilde{\bQ}_\r(\domain_\r)$.
 The vector $\vec{\n}=\{n_1,n_2,\ldots,n_\r\}$ is called the \textit{character}, and the value of its $L^1$-norm, $\n$, the 
 \textit{total character}. In the particular case where $\N>\r$ and $n_i=1,\forall i$, the corresponding instances will be referred to as a \textit{trivial upliftings}.

This description contains some redundancy, because it does not take into account the symmetries of the abstract quantity for which we are listing the instances. For example, for the mutual information, it would include also expressions like $\I_2(\b:\a)$. This is not efficient when $\N$ is large and the quantities have a complicated pattern of symmetries. Furthermore, in the next section we will see that the various primitive quantities can be organized into equivalent classes, and for this purpose, it will be useful to have a more convenient description, at least one that takes into account the symmetries at the level of the abstract expression \eqref{eq:abstract_general}.

Consider an abstract quantity $\widetilde{\bQ}_\r$ of rank $\r$ and the set $[\r]$. We will denote by $\sym_{\r}$ the symmetric group over $[\r]$, i.e., the group of all permutations of the elements of $[\r]$ defined as\footnote{ To simplify the notation we will often identify the indices $1,2,\ldots,\r$ of $\mathcal{X}$ with the abstract subsystems $\mathcal{X}_1,\mathcal{X}_2,\ldots,\mathcal{X}_\r$ themselves.} 
\begin{equation}
\sigma \in \sym_\r \,,\qquad \sigma:[\r]\rightarrow[\r],\qquad \mathcal{X}\mapsto\sigma(\mathcal{X}) \,.
\label{eq:sigma_permutation}
\end{equation}
We define the action of $\sym_{\r}$ over the functions $\widetilde{\bQ}_\r$ as 
\begin{equation}
\sigma\widetilde{\bQ}_\r(\mathcal{X}_1,\mathcal{X}_2,\ldots,\mathcal{X}_\r)\eqdef\widetilde{\bQ}_\r(\sigma(\mathcal{X}_1),\sigma(\mathcal{X}_2),\ldots,\sigma(\mathcal{X}_\r)),\qquad \sigma\in\sym_{\r}
\label{eq:action}
\end{equation}
An abstract quantity $\widetilde{\bQ}_\r$ can be symmetric (i.e., invariant) under the action of some elements of $\sym_{\r}$. We define the automorphism group of $\widetilde{\bQ}_\r$ as
\begin{equation}
\aut(\widetilde{\bQ}_\r)\equiv\{\sigma\in\sym_{\r},\;\; \sigma\widetilde{\bQ}_\r=\widetilde{\bQ}_\r\}
\end{equation}
and we construct the quotient\footnote{ Note that in general $\per(\widetilde{\bQ}_\r)$ is not a group, as $\aut(\widetilde{\bQ}_\r)$ is not a normal subgroup of $\sym_{\r}$. }
\begin{equation}
\per(\widetilde{\bQ}_\r)=\frac{\sym_{\r}}{\aut(\widetilde{\bQ}_\r)} \, .
\end{equation}
The elements of $\per(\widetilde{\bQ}_\r)$,  which we denote as $\sigma_\bQ$, act on $\widetilde{\bQ}_\r$ as in \eqref{eq:action} and generate different forms of $\widetilde{\bQ}_\r$. We will call these the \emph{isomers} of $\widetilde{\bQ}_\r$ and denote them by
 $\widetilde{\bQ}_\r[\sigma_\bQ]$.

In the following it will be convenient to choose among the various isomers of an abstract quantity $\widetilde{\bQ}_\r$ a ``reference isomer'' from which we imagine to construct all the others by acting with the permutations $\sigma_\bQ$. It is clear that a-priori the choice is completely arbitrary. For known information quantities like the mutual information (or more generally the multipartite information) we will choose their conventional form. For the new quantities that will emerge from our construction, we will simply choose the form that we get when we first discover them.\footnote{ In principle one could imagine  introducing a more sophisticated version of the color reduction procedure discussed above, such that starting from any possible (would be) instance of a certain abstract quantity, one always ends up with the same isomer. However, such a procedure is at present somewhat ad hoc, for it is unclear whether there exists a particular choice that is naturally preferred on physical grounds.} We will refer to this particular isomer as the \textit{standard isomer} and denote it by 
\begin{equation}
 \widetilde{\bQ}_\r^e   \;\eqdef\;  \widetilde{\bQ}_\r[\sigma_\bQ  =e] \, ,
\label{eq:stisomer}
\end{equation}  
imagining that it is obtained using the identity element of $\sigma_\bQ = e$.

Having classified the different isomers of an abstract quantity based on its symmetries, we can now classify the instances of $\widetilde{\bQ}_\r$, without redundancy, by considering all possible distinct instances of the various $\widetilde{\bQ}_\r[\sigma_\bQ]$, for all choices of $\sigma_\bQ \in \per(\widetilde{\bQ}_\r)$. To do this, we need to start filling in the slots, i.e., replace $\mathcal{X}_k$ by polychromatic subsystems. To avoid redundancy, we will consider \textit{ordered partitions of $[\N]$ and its subsets}. 

Specifically, for an isomer $\widetilde{\bQ}_\r[\sigma_\bQ]$ of a quantity $\widetilde{\bQ}_\r$, we start with a fixed value of the total character $\n$ with $\r\leq\n\leq\N$. We first want to construct instances for this particular value of $\n$; later we will have to repeat the same procedure for each value of $\n$ consistent with the aforementioned constraints. We need to pick $\n$ monochromatic subsystems out of  $\N$ and distribute these into the $\r$ available slots of the abstract quantity as polychromatic subsystems. The important point is that there are different options for which $\n$ monochromatic subsystems we choose, and for how we organize each choice into $\r$ parts. There are then two equivalent ways to proceed. We can either consider a fixed choice of the monochromatic subsystems, and arrange them in all possible ways consistently with $\n$, or we can list all possible ways to organize an arbitrary choice of subsystems into $\r$ parts, and then scan over all possible subsytem choices. We will follow the second approach. 

To do so, consider all possible partitions of $\n$. A generic element of this set has the form $\{n_1, n_2, \ldots , n_\r\}$ and has no ordering. Since the different ways of ordering the $\r$ slots are classified using the isomers, and we are now classifying the instances of a fixed isomer $\widetilde{\bQ}_\r[\sigma_\bQ]$, we choose by convention to order the elements of a partition of $\n$ in decreasing order and write $\vec{\n} = (n_1, n_2, \ldots , n_\r)$, with $n_1 \geq n_2 \geq \cdots \geq n_\r$. In other words, the character $\vec{\n}$  is now simply an  $\r$-tuple corresponding to an ordered partition of the total character $\n$.  From now on we will always assume that $\vec{\n}$ is an ordered tuple.

A choice of character $\vec{\n}$ specifies the size of the $\r$ slots, i.e., it tells us how many monochromatic subsystems we should populate each slot with. The goal now is to fill in these slots in all possible inequivalent ways, for all possible choices of $\n$ monochromatic subsystems out of $\N$. For any given character, we associate a collection of mutually non-overlapping polychromatic indices built from a specific choice of $\n$  monochromatic colors as follows:
\begin{equation}
\{\{\ell_1^1,\ell_1^2,\ldots,\ell_1^{n_1}\},\{\ell_2^1,\ldots,\ell_2^{n_2}\},\ldots,\{\ell_\r^1,\ldots,\ell_\r^{n_\r}\}\} \,,
\label{eq:partition}
\end{equation}
where now the lower index labels the slot and the upper index a particular color in that slot.
The ordering of the colors within a polychromatic subsystem is irrelevant and by convention we will order the colors in increasing order from left to right. The various slots have already been ordered by the definition of the character, but there is an ambiguity when two or more slots have equal size. For multiple slots of equal size, if we consider all different fillings of the slots, we would consider  fillings which correspond to a permutation of the slots as inequivalent, and this leads to redundancy when we repeat the construction for all isomers. To avoid this, we choose an order by convention and simply require that for slots of equal size, the sequence of the first colors of the slots is increasing from left to right in \eqref{eq:partition}.

We can equivalently understand the construction pictorially.  For given $\n$, its partitions are given by Young tableaux having $\n$ boxes in $\r$ rows, each of which corresponds to a choice of character  $\vec{\n}$. For a fixed tableau, our partitions of subsets of $[\N]$ with $\n$ elements, are given by \textit{decorated} Young tableaux. The decorations are monochromatic color labels which are assigned according to the rules just described. In equations: 
\begin{equation}
\begin{split}
& n_1\geq n_2\geq\cdots\geq n_\r     \\
& \sum_{i=1}^\r \; n_i = \n \\
& \ell_i^1<\ell_i^2< \cdots < \ell^{n_i}_i  \\
& \ell_i^1<\ell_j^1 \,, \text{for} \ n_i = n_j, i<j \\
\end{split}
\begin{cases}
\quad 
\ytableausetup
 {mathmode, boxsize=2.5em}
\begin{ytableau}
 \none[\mathcal{X}_1:] & \ell_1^1 &  \ell_1^2 &  \ell_1^3 & \dots &\dots & \ell_1^{n_1-1} & \ell_1^{n_1} \\
 \none[\mathcal{X}_2:] &  \ell_2^1 &  \ell_2^2 & \ell_2^3 & \ldots & \dots & \ell_2^{n_2}   \\
\none[]&    \vdots & \vdots & \vdots &  \vdots& \vdots \\
 \none[\mathcal{X}_j:] &  \ell_j^1 &  \ell_j^2 &  \dots & \ell_j^{n_j}   \\
 \none[\mathcal{X}_{j+1}:\ ] &   \ell_{j+1}^1 &  \ell_{j+1}^2 &  \dots & \ell_{j+1}^{n_j}   \\
\none[]&   \vdots & \vdots & \vdots\\
 \none[\mathcal{X}_{\r}:] &   \ell_{\r}^1 &    \dots & \ell_{\r}^{n_\r} 
  \end{ytableau}
\end{cases}
\label{eq:ytabpart}
\end{equation}

For example, suppose that we want to construct the instances of an isomer $\widetilde{\bQ}_3[\sigma_\bQ]$ of a quantity $\widetilde{\bQ}_3$ of rank $\r=3$ in a set-up where we have a total of $\N=6$ colors. The possible choices of total character are $\n\in\{3,4,5,6\}$ and for each value  of $\n$, we should consider all possible ordered partitions (i.e. the characters). These are classified by the following Young tableaux 
\begin{equation}
\begin{split}
\ytableausetup
 {mathmode, boxsize=1em}
\underbrace{\begin{ytableau}
{} \\
{}  \\
{} \\
\none
\end{ytableau}
}_{\n=3}
\qquad
\underbrace{\begin{ytableau}
{} & {}\\
{}  \\
{}  \\
\none
\end{ytableau}
}_{\n=4}
\qquad
\underbrace{\begin{ytableau}
{} & {} & {}\\
{}  \\
{}  \\
\none 
\end{ytableau}
\qquad
\begin{ytableau}
{} & {} \\
{} & {} \\
{}  \\
\none
\end{ytableau}
}_{\n=5}
\qquad
\underbrace{\begin{ytableau}
{} & {} & {} & {}\\
{}  \\
{}  \\
\none
\end{ytableau}
\qquad
\begin{ytableau}
{} & {} & {}\\
{} & {} \\
{}  \\
\none
\end{ytableau}
\qquad
\begin{ytableau}
{} & {} \\
{} & {} \\
{} & {}  \\
\none
\end{ytableau}
}_{\n=6}
\end{split}
\label{eq:tableaux_partitions}
\end{equation}
For each such tableau we should then consider all possible decorations which are consistent with the above rules. For example, in case of the last tableau above we have
\begin{align}
\ytableausetup
 {mathmode, boxsize=1em}
\begin{split}
&\begin{ytableau}
{1} & {2} \\
{3} & {4}\\
{5} & {6}
\end{ytableau}
\quad\;\;
\begin{ytableau}
{1} & {2} \\
{3} & {5}\\
{4} & {6}
\end{ytableau}
\quad\;\;
\begin{ytableau}
{1} & {2} \\
{3} & {6}\\
{4} & {5}
\end{ytableau}
\quad\;\;
\begin{ytableau}
{1} & {3} \\
{2} & {4}\\
{5} & {6}
\end{ytableau}
\quad\;\;
\begin{ytableau}
{1} & {3} \\
{2} & {5}\\
{4} & {6}
\end{ytableau}
\quad\;\;
\begin{ytableau}
{1} & {3} \\
{2} & {6}\\
{4} & {5}
\end{ytableau}
\quad\;\;
\begin{ytableau}
{1} & {4} \\
{2} & {3}\\
{5} & {6}
\end{ytableau}
\quad\;\;
\begin{ytableau}
{1} & {4} \\
{2} & {5}\\
{3} & {6}
\end{ytableau}
\quad\;\;
\begin{ytableau}
{1} & {4} \\
{2} & {6}\\
{3} & {5}
\end{ytableau}
\\
&\begin{ytableau}
\none\\
{1} & {5} \\
{2} & {3}\\
{4} & {6}
\end{ytableau}
\quad\;\;
\begin{ytableau}
\none\\
{1} & {5} \\
{2} & {4}\\
{3} & {6}
\end{ytableau}
\quad\;\;
\begin{ytableau}
\none\\
{1} & {5} \\
{2} & {6}\\
{3} & {4}
\end{ytableau}
\quad\;\;
\begin{ytableau}
\none\\
{1} & {6} \\
{2} & {3}\\
{4} & {5}
\end{ytableau}
\quad\;\;
\begin{ytableau}
\none\\
{1} & {6} \\
{2} & {4}\\
{3} & {5}
\end{ytableau}
\quad\;\;
\begin{ytableau}
\none\\
{1} & {6} \\
{2} & {5}\\
{3} & {4}
\end{ytableau}
\end{split}
\end{align}
Notice that in this particular example the color indices in the various boxes are increasing both from left to right and (in the first column) from top to bottom. This is just a consequence of the fact that all the rows have the same length. More generally, it should be clear that the color indices are increasing downwards only between rows that have the same length. So for example, possible decorations of the fourth tableau in \eqref{eq:tableaux_partitions} include
\begin{equation}
\ytableausetup
 {mathmode, boxsize=1em}
 \begin{ytableau}
{2} & {3} \\
{4} & {5}\\
{1} 
\end{ytableau}
\qquad
 \begin{ytableau}
{2} & {3} \\
{4} & {6}\\
{1} 
\end{ytableau}
\qquad
 \begin{ytableau}
{1} & {6} \\
{2} & {3}\\
{4} 
\end{ytableau}
\qquad
 \cdots
\label{eq:221_tableaux_examples}
\end{equation}

With this convention the instances of an abstract quantity $\widetilde{\bQ}_\r$ can then be written as
\begin{equation}
\bQ_\r[\sigma_\bQ](\a_{\si_{n_1}}:\a_{\si_{n_2}}:\cdots:\a_{\si_{n_\r}}),\qquad \r\leq\sum_{i=1}^\r n_i=\n\leq\N
\label{eq:instance_general}
\end{equation}
where $\si_{n_j}$ were defined earlier in \eqref{eq:partition} and follow the rules we just described. The set of all instances associated to a given isomer and character will be denoted by
\begin{equation}
\bQ_\r[\sigma_\bQ](n_1:n_2:\cdots:n_\r)
\label{eq:instances_sets}
\end{equation}

In the following it will be convenient to have a convention for choosing a representative of the sets \eqref{eq:instances_sets}. We will call such representative the \textit{standard instance} of $\widetilde{\bQ}_\r[\sigma_\bQ]$ for the character $\vec{\n}$ and define it as follows. We simply choose the first $\n$ colors out of the $\N$ and decorate the tableau filling the slots with colors in ascending order from left to right and top to bottom. As an example consider the abstract quantity $\widetilde{\bf I}_3$. Being permutation ($\sym_3$) symmetric, there is only the standard isomer $\widetilde{\bf I}_3(\mathcal{X}_1 , \mathcal{X}_2 , \mathcal{X}_3)$. If $\N=6$ the possible characters of the various instances are described by the tableaux that we listed above. The corresponding standard instances, for each character, are then given by the following decorated tableaux
\begin{equation}
\ytableausetup
 {mathmode, boxsize=1em}
\begin{split}
\ytableausetup
 {mathmode, boxsize=1em}
\begin{ytableau}
{1} \\
{2}  \\
{3}
\end{ytableau}
\qquad
\begin{ytableau}
{1} & {2}\\
{3}  \\
{4}
\end{ytableau}
\qquad
\begin{ytableau}
{1} & {2} & {3}\\
{4}  \\
{5}
\end{ytableau}
\qquad
\begin{ytableau}
{1} & {2} \\
{3} & {4} \\
{5}
\end{ytableau}
\qquad
\begin{ytableau}
{1} & {2} & {3} & {4}\\
{5}  \\
{6}
\end{ytableau}
\qquad
\begin{ytableau}
{1} & {2} & {3}\\
{4} & {5} \\
{6}
\end{ytableau}
\qquad
\begin{ytableau}
{1} & {2} \\
{3} & {4} \\
{5} & {6}
\end{ytableau}
\end{split}
\end{equation}
and in the conventional notation for the instances these would be
\begin{align}
&{\bf I}_3(\a_1\ : \a_2 : \a_3),\quad {\bf I}_3(\a_1\a_2 : \a_3: \a_4),\quad  {\bf I}_3(\a_1\a_2\a_3 : \a_4 : \a_5), \nonumber\\
&{\bf I}_3(\a_1\a_2 : \a_3 \a_4: \a_5),\quad  {\bf I}_3(\a_1\a_2\a_3 \a_4: \a_5: \a_6),\quad  {\bf I}_3(\a_1\a_2\a_3 : \a_4 \a_5: \a_6), \nonumber\\
&{\bf I}_3(\a_1\a_2 : \a_3 \a_4: \a_5\a_6)
\label{eq:instances_for_I3}
\end{align}
%



The fact that we have eliminated all the redundancy in the descriptions guarantees that each single expression in \eqref{eq:instances_for_I3}, or more generally \eqref{eq:instance_general}, corresponds to a different instance of $\widetilde{\bQ}_\r$. To count the total number of instances, it is more convenient to follow the other approach mentioned above. For each isomer of $\widetilde{\bQ}_\r$, we consider all possible values of $\n$ in the range $\r\leq\n\leq\N$. For each $\n$, the number of possible choices of $\n$ colors out of $\N$ is $\binom{\N}{\n}$ and the number of partitions of this subset of $[\N]$ into $\r$ parts is computed by the Stirling number of the second kind\footnote{ The Stirling numbers can be computed explicitly using 
$ \genfrac\{\}{0pt}{1}{x}{y} = \frac{1}{y!} \, \sum_{p=0}^y\, (-1)^{y-p}\; {y\choose p} \; p^x$. They are  bounded  between $\frac{1}{2}\, (y^2 + y+2) \, y^{x-y-1} -1$ and $\frac{1}{2}\, {x\choose y} \, y^{x-y}$.} $\genfrac\{\}{0pt}{1}{\n}{\r}$. The total number of instances associated to an abstract quantity $\widetilde{\bQ}_\r$ in an $\N$-party set-up is therefore  
\begin{equation}
\#\per(\widetilde{\bQ}_\r)\times\sum_\n\binom{\N}{\n}\genfrac\{\}{0pt}{0}{\n}{\r}=\#\per(\widetilde{\bQ}_\r)\times\genfrac\{\}{0pt}{0}{\N+1}{\r+1}
\label{eq:number_instances}
\end{equation}

The notion of rank induces a natural decomposition of the arrangement $\arr_\N$ into various subsets called \textit{subarrangements}. Since we can unambiguously associate a rank $\r$ to each primitive quantity $\bQ$ associated to a hyperplane $\hyper_\bQ\in\arr_\N$ (the rank  of the corresponding abstract quantity $\widetilde{\bQ}_\r$), we can decompose $\arr_\N$ as
\begin{equation}
\arr_\N=\arr_\N^2\cup\arr_\N^3\cup\cdots\cup\arr_\N^\N
\end{equation}
where $\arr_\N^\r$ is the \textit{rank-$\r$ subarrangement} defined as follows 
\begin{equation}
\arr_\N^\r=\{\hyper_\bQ\in\arr_\N,\; \text{rank}(\bQ)=\r\}
\end{equation}
The primitive quantities which belong to the  $\arr_\N^\N$ subarrangement, i.e., those of maximal rank, are the genuinely new quantities found for $\N$ parties. All other primitive quantities, belonging to subarrangements of rank $\r<\N$, are upliftings of other information quantities which can be defined for fewer colors. 

We conclude this section with a few comments about the derivation of the arrangement and the definitions that we introduced. As we discussed, the primitive quantities in the arrangement are found constructively, starting from configurations. Suppose that we are working in a set-up with $\N$ colors and we find a primitive $\bQ$. We can then color-reduce it, determine its rank $\r\leq\N$ and introduce the standard isomer of the corresponding abstract quantity $\widetilde{\bQ}_\r^e$, from which we can obtain all other isomers and all instances for any $\N'\geq\r$. As we explained, part of this construction is purely combinatorial and is not necessarily related to our notion of primitivity. We should then be clear about what is the useful physical information that we shall retain about $\bQ$. 

There are two important elements. The first is the defining expression of the standard isomer $\widetilde{\bQ}_\r^e$. This could be a newly discovered quantity, or an isomer of a quantity found previously.\footnote{ If $\widetilde{\bQ}_\r^e$ were equal to an isomer of a previously found quantity only up to an overall coefficient, it would just be a consequence of the ambiguity that we discussed, and we would flip the choice of sign in the definition of $\widetilde{\bQ}_\r^e$.} If it is a new quantity, we update our `library' of abstract quantities that constitute the arrangement. The second important element, that we should consider irrespective of whether $\widetilde{\bQ}_\r^e$ is a new quantity or not,  is the pair $(\sigma_\bQ,\vec{\n})$ which characterizes the quantity $\bQ$ that we found from configurations. This is important because it tells us which instances of $\widetilde{\bQ}_\r^e$ are primitive.   Although the isomer and character do not entirely specify the particular instance $\bQ$, we will see in the next section that if an instance of $\widetilde{\bQ}_\r$ specified by $(\sigma_\bQ,\vec{\n})$ is primitive, then all other instances with the same isomer and character are also primitive.

\subsection{Symmetries}
\label{subsec:symmetries}

We now have a procedure to extract, unambiguously, an abstract information quantity $\widetilde{\bQ}_\r$ for any primitive $\bQ$ which emerges from the framework reviewed in \S\ref{sec:review}. We also have at hand a notation to catalog, without redundancy, all possible instances of such an abstract quantity in an $\N$-party setting, irrespective of whether these are primitive or not. Our next step (\S\ref{subsubsec:N_orbit}) will be to explain how these various instances can be organized into equivalence classes, or orbits, of the symmetric group $\sym_{\N}$, and how these orbits respect the notion of primitivity. We will also see (\S\ref{subsubsec:purifications}) that certain quantities, despite being associated to formally different abstract quantities, should in fact be considered equivalent. Correspondingly, instances of different quantities can be organized into even larger orbits, now under a certain action of the group $\sym_{\N+1}$. Based on the definitions of \S\ref{subsec:taxonomy}, we further introduce a convenient notation for these orbits.

\subsubsection{Equivalence between instances of an abstract quantity}
\label{subsubsec:N_orbit}

%
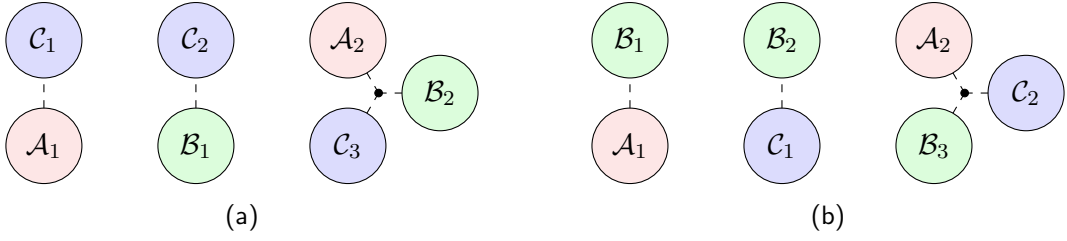
\begin{figure}[ht]
\centering
\begin{subfigure}{0.49\textwidth}
\centering
\begin{tikzpicture}
\draw[fill=color1!70] (-2,-0.7) circle (0.5cm);
\draw[fill=color3] (-2,0.7) circle (0.5cm);
\draw[fill=color2] (0,-0.7) circle (0.5cm);
\draw[fill=color3] (0,0.7) circle (0.5cm);
\draw[fill=color3] (2,-0.7) circle (0.5cm);
\draw[fill=color1!70] (2,0.7) circle (0.5cm);
\draw[fill=color2] (3.212,0) circle (0.5cm);
\draw[dashed] (-2,-0.2) -- (-2,0.2);
\draw[dashed] (0,-0.2) -- (0,0.2);
\draw[dashed] (2.404,0) -- (2.25,-0.267);
\draw[dashed] (2.404,0) -- (2.712,0);
\draw[dashed] (2.404,0) -- (2.25,0.267);
\draw[fill=black] (2.404,0) circle (0.05cm);
\node at (-2,-0.7) {\small{$\mathcal{A}_1$}};
\node at (-2,0.7) {\small{$\mathcal{C}_1$}};%
\node at (0,-0.7) {\small{$\mathcal{B}_1$}};
\node at (0,0.7) {\small{$\mathcal{C}_2$}};%
\node at (2,-0.7) {\small{$\mathcal{C}_3$}};%
\node at (2,0.7) {\small{$\mathcal{A}_2$}};
\node at (3.212,0) {\small{$\mathcal{B}_2$}};
\end{tikzpicture}
\caption{}
\label{fig:MI_permutation1}
\end{subfigure}
\hfill
\begin{subfigure}{0.49\textwidth}
\centering
\begin{tikzpicture}
\draw[fill=color1!70] (-2,-0.7) circle (0.5cm);
\draw[fill=color2] (-2,0.7) circle (0.5cm);
\draw[fill=color3] (0,-0.7) circle (0.5cm);
\draw[fill=color2] (0,0.7) circle (0.5cm);
\draw[fill=color2] (2,-0.7) circle (0.5cm);
\draw[fill=color1!70] (2,0.7) circle (0.5cm);
\draw[fill=color3] (3.212,0) circle (0.5cm);
\draw[dashed] (-2,-0.2) -- (-2,0.2);
\draw[dashed] (0,-0.2) -- (0,0.2);
\draw[dashed] (2.404,0) -- (2.25,-0.267);
\draw[dashed] (2.404,0) -- (2.712,0);
\draw[dashed] (2.404,0) -- (2.25,0.267);
\draw[fill=black] (2.404,0) circle (0.05cm);
\node at (-2,-0.7) {\small{$\mathcal{A}_1$}};
\node at (-2,0.7) {\small{$\mathcal{B}_1$}};
\node at (0,-0.7) {\small{$\mathcal{C}_1$}};%
\node at (0,0.7) {\small{$\mathcal{B}_2$}};
\node at (2,-0.7) {\small{$\mathcal{B}_3$}};
\node at (2,0.7) {\small{$\mathcal{A}_2$}};
\node at (3.212,0) {\small{$\mathcal{C}_2$}};%
\end{tikzpicture}
\caption{}
\label{fig:MI_permutation2}
\end{subfigure}
\caption{(a) a configuration that generates ${\bf I}_2(\mathcal{A}:\mathcal{B})$. (b) By holding the regions fixed and permuting the colors ($\mathcal{B}\leftrightarrow\mathcal{C}$) one obtains a new configuration that generates ${\bf I}_2(\mathcal{A}:\mathcal{C})$.}
\end{figure}

Let us again begin with a simple example. In a $3$-party setting, consider the configuration in Fig.~\ref{fig:MI_permutation1} which generates 
\begin{equation}
{\bf I}_2(\mathcal{A}:\mathcal{B})=S_\mathcal{A}+S_\mathcal{B}-S_{\mathcal{AB}}
\label{eq:MI}
\end{equation}
To belabor the point, not only is \eqref{eq:MI} the usual definition of  mutual information, but it is also precisely the quantity generated by the configuration in Fig.~\ref{fig:MI_permutation1}, i.e., it is a specific instance (a trivial uplifting) of the abstract quantity \eqref{eq:abstract} for $\N=3$. In particular, notice that despite the color $\mathcal{C}$ being present in the configuration, it does not appear in \eqref{eq:MI}. Moreover, the invariance of \eqref{eq:MI} under the swap $\mathcal{A}\leftrightarrow\mathcal{B}$ is guaranteed by the symmetry at the level of the configuration. The quantities ${\bf I}_2(\mathcal{A}:\mathcal{B})$ and ${\bf I}_2(\mathcal{B}:\mathcal{A})$ therefore obviously correspond to the same hyperplane in $\arr_3$ -- indeed, they are both identical. 

If we were working in a $2$-party setting, \eqref{eq:MI} would be generated by a different configuration (see \S\ref{sec:review}) and this swap would be the only possible permutation of the colors. Instead, since we are now working in a $3$-party setting, there are more options. If we hold the various regions in Fig.~\ref{fig:MI_permutation1} fixed and we permute the colors $\mathcal{A},\mathcal{B},\mathcal{C}$ in all possible ways (see Fig.~\ref{fig:MI_permutation2} for an example),
 it is clear that we can generate all the instances of the mutual information listed in the first line of \eqref{eq:MI_3instances}. The important point here is that these instances now correspond to different hyperplanes in $\arr_3$. From a physical perspective, however, these are completely equivalent, since they all derive from the same configuration. 

While such an equivalence is evident at the level of configurations, a similar logic also applies to non-primitive quantities. This is purely a combinatorial statement and depends only on the total number of parties. In an $\N$-party setting, a
 permutation  of the colors
\begin{equation}
\pi:[\N]\rightarrow [\N],\qquad   \{\ell_1,\ell_2,\ldots,\ell_\N\}\mapsto\{\pi(\ell_1),\pi(\ell_2),\ldots,\pi(\ell_\N)\}\,,
\label{eq:N_action}
\end{equation}
induces an action of $\sym_{\N}$ on the instances of $\widetilde{\bQ}_\r$, similar to our earlier discussion of abstract quantities \eqref{eq:action}. For an instance \eqref{eq:instance_general} of total character $\n$, we consider the restriction of the map $\pi$ to the subset $[\n]\subseteq [\N]$
\begin{equation}
\pi|_{[\n]}:[\n]\rightarrow [\N],\qquad   \{\ell_1,\ell_2,\ldots,\ell_\n\}\mapsto\{\pi(\ell_1),\pi(\ell_2),\ldots,\pi(\ell_\n)\}
\label{eq:N_action_restriction}
\end{equation}
and then define
\begin{equation}
\pi\bQ_\r[\sigma_\bQ](\a_{\si_{n_1}}:\a_{\si_{n_2}}:\cdots:\a_{\si_{n_\r}})\eqdef \bQ_\r[\sigma_\bQ](\pi(\a_{\si_{n_1}}):\pi(\a_{\si_{n_2}}):\cdots:\pi(\a_{\si_{n_\r}}))
\label{eq:N_action_quantity}
\end{equation}

Not all permutations $\pi\in\sym_\N$ will map an instance of $\widetilde{\bQ}_\r$ to a different one, since it is clear that \eqref{eq:N_action_quantity} is invariant under the action of the subgroup $
\sym(\a_{\si_{n_1}})\times\sym(\a_{\si_{n_2}})\times\cdots\times\sym(\a_{\si_{n_\r}})  \subset \sym_\N$, 
which only permutes the colors within each row of the Young tableau \eqref{eq:ytabpart}.  Permutations which do not belong to this subgroup will map an instance to a different one. Notice in particular that these permutations not only can permute a 
fixed set of colors across the various boxes in a tableaux, but can also change the full set of colors which appear in the tableau. For example, they can map the first tableau of \eqref{eq:221_tableaux_examples} to the second one. However, any two instances which are 
related by a permutation ought to be considered equivalent; we can simply relabel the (physical) subsystems under the said permutation, and thus identify the two instances.  So we should understand how the action of $\sym_{\N}$ partitions the set of instances of an abstract quantity (which depends on $\N$) into orbits, and how to label the various orbits.

The subtle point that we need to elucidate is that the instances of an abstract quantity $\widetilde{\bQ}_\r$ are classified, following the scheme introduced in \S\ref{subsec:taxonomy}, according to the various isomers $\widetilde{\bQ}_\r[\sigma_\bQ]$ of $\widetilde{\bQ}_\r$. Therefore, to organize the instances into orbits under the action of $\sym_\N$, we need to understand if and when this action can relate instances associated to different isomers. To do this, let us first notice that given an instance of character $\vec{\n}$, a permutation $\pi$ can change the colors which appear in the various polychromatic subsystems $\a_{\si_{n_i}}$ according to \eqref{eq:N_action_quantity}, but it cannot change the character $\vec{\n}$, i.e., the degree of the various polychromatic indices $\si_{n_i}$. 

Consider then a choice of character such that all the components of $\vec{\n}$ are distinct ($n_i\neq n_j,\;\forall\; i,j\in[\r]$). In this particular case, all the polychromatic subsystems which appear in the arguments of the abstract quantity $\widetilde{\bQ}_\r$ cannot be related to each other by permutations $\pi$, and instances associated to different isomers $\widetilde{\bQ}_\r[\sigma_\bQ^1]$ and $\widetilde{\bQ}_\r[\sigma_\bQ^2]$ must belong to different orbits. This can easily be seen pictorially, using again the language of Young tableaux. For $\N=6$, consider for example two different decorations of the tableau associated to the character $\vec{\n}=(3,2,1)$ 
\begin{equation}
\ytableausetup
 {mathmode, boxsize=1em}
\begin{ytableau}
{1} & {2} & {3}\\
{4}  & {5}\\
{6}
\end{ytableau}
\qquad 
\begin{ytableau}
{2} & {3} & {6}\\
{4}  & {5}\\
{1}
\end{ytableau}
\end{equation}
which are related by permutation $\pi$. Different isomers are related by permutations of the abstract subsystems $\mathcal{X}_i$, and we can imagine obtaining one of them from the other by permuting the rows of the tableaux. However, there is clearly no way to achieve the result of such a ($\sigma_\bQ$) permutation of rows, by holding the shape of the tableau fixed and permuting the colors with a permutation $\pi$.

 On the other hand, if some components of $\vec{\n}$ are equal, instances of two or more isomers, with fixed $\vec{\n}$, can belong to the same orbit. To see when this happens, let us again consider an example for $\N=6$, but now with a choice of character $\vec{\n}=(2,2,1)$. Examples of possible decorations of the corresponding tableau are shown in \eqref{eq:221_tableaux_examples}. Consider for example second tableau and the following sequence of permutations:
$\pi = (24)(36) \in \sym_6$ and $\sigma = (12)\in\sym_5$.\footnote{ We are using the standard cycle notation for  the permutations.} The action is easy to visualize on the decorated tableau (ignoring for the moment the ordering prescription), leading to 
\begin{equation}
  \begin{ytableau}
{2} & {3} \\
{4} & {6}\\
{1}\end{ytableau} 
\qquad \stackrel{\pi}{\longmapsto} \qquad 
  \begin{ytableau}
{4} & {6} \\
{2} & {3}\\
{1}\end{ytableau} 
\qquad \stackrel{\sigma}{\longmapsto} \qquad 
  \begin{ytableau}
{2} & {3} \\
{4} & {6}\\
{1}\end{ytableau} 
\label{eq:}
\end{equation}  
Essentially we are seeing that we can undo the permutation of colors by a permutation of the rows. In the present case, the former is a restriction $\pi|_{[5]}$ acting on five colors, of a permutation of colors $\pi\in\sym_6$, while the latter is an element of the residual permutations that act on the quantity  $\widetilde{\bQ}_\r$, mapping one isomer into another. Assuming that the permutation $(12)$ is indeed relating two distinct isomers, the two tableaux correspond to two different instances for each isomer, and all these instances belong to the same orbit under the action of $\sym_6$.

More generally, consider the standard isomer $\widetilde{\bQ}_\r^e$ of an abstract quantity $\widetilde{\bQ}_\r$, and another isomer $\widetilde{\bQ}_\r[\sigma_\bQ]$ obtained from $\widetilde{\bQ}_\r^e$ under the action of a permutation $\sigma_\bQ\in\per(\widetilde{\bQ})$ as defined in \eqref{eq:action}. For an instance $\bQ_\r^e(\a_{\si_{n_1}}:\a_{\si_{n_2}}:\cdots:\a_{\si_{n_\r}})$ of $\widetilde{\bQ}_\r^e$, we can write the corresponding instance of $\widetilde{\bQ}_\r[\sigma_\bQ]$ as
\begin{equation}
\bQ_\r[\sigma_\bQ](\a_{\si_{n_1}}:\a_{\si_{n_2}}:\cdots:\a_{\si_{n_\r}})=\bQ_\r^e(\sigma_\bQ(\a_{\si_{n_1}}):\sigma_\bQ(\a_{\si_{n_2}}):\cdots:\sigma_\bQ(\a_{\si_{n_\r}}))
\end{equation}
where, as clarified by the left hand side, the polychromatic subsystems are only reordered, but are left unchanged. Suppose then that $\sigma_\bQ$ maps a subsystem $\a_{\si_{n_i}}$ to another subsystem $\a_{\si_{n_j}}$. If $n_i=n_j$, then the same transformation can be realized by a permutation $\pi$, since we can imagine holding the two subsystems fixed and transforming the colors instead. More abstractly, we can imagine an action of $\sigma_\bQ$ on the components of the character
\begin{equation}
\sigma_\bQ:\;\vec{\n}=(n_1,n_2,\ldots,n_\r)\mapsto \sigma_\bQ\vec{\n}=(\sigma_\bQ(n_1),\sigma_\bQ(n_2),\ldots,\sigma_\bQ(n_\r))
\label{eq:isomers_joining_orbit}
\end{equation}
and consider all the permutations $\sigma_\bQ\in\per(\widetilde{\bQ}_\r)$ such that the character is invariant under this action, i.e., such that  $\sigma_\bQ(\vec{\n})=\vec{\n}$. Note that this is an equation for $\sigma_\bQ$, not for $\vec{\n}$.
We collect all these permutations into a set, which depends on $\vec{\n}$, and we denote by $\{\sigma_\bQ\}_{\vec{\n}}$. We are then in a position to define

\begin{definition}
\emph{\textbf{($\N$-orbits)}} for an abstract information quantity $\widetilde{\bQ}_\r$ in an $\N$-party setting, the instances $\widetilde{\bQ}_\r(\domain_\r)$ are organized into the following \emph{$\N$-orbits} under the action of \emph{$\sym_\N$}
\begin{equation}
\bQ_\r[\{\sigma_\bQ\}_{\vec{\n}}](n_1:n_2:\cdots:n_\r)\eqdef \bigcup_{\sigma_\bQ(\vec{\n})=\vec{\n}}\bQ_\r[\sigma_\bQ](n_1:n_2:\cdots:n_\r)
\label{eq:n_orbit}
\end{equation}
\end{definition}

To label an orbit, occasionally we will also write more compactly $\bQ_\r[\{\sigma_\bQ\}_{\vec{\n}}](\vec{\n})$, or simply $\bQ_\r[\sigma_\bQ](\vec{\n})$, when all the components of the character are distinct and each isomer is associated to a different orbit. In the particular case of a quantity $\widetilde{\bQ}_\r$ which has just a single isomer (i.e., $\aut(\widetilde{\bQ}_\r)=\sym_\r$ and $\per(\widetilde{\bQ}_\r)$ is trivial), we will drop the specification of the isomer from the notation of instances and orbits. Likewise, for any quantity $\widetilde{\bQ}_\r$, we will drop the specification of the isomer for all natural instances and trivial upliftings, since the distinction becomes irrelevant as a consequence of \eqref{eq:isomers_joining_orbit}.

We stress again that this classification of the instances of $\widetilde{\bQ}_\r$ into orbits under the action of $\sym_{\N}$ holds irrespective of whether the instances are primitive or not. However, the crucial aspect is that since for primitive quantities, as described above, this action can be understood at the level of the generating configurations, the partitions of the set of instances into orbits respects primitivity, i.e., if an instance is (not) primitive, all other instances in the same orbit are also (not) primitive. 

This fact underlies the discussion at the end of the \S\ref{subsec:taxonomy} about the important information we need to collect for the construction of the arrangement. When different instances of a same quantity $\widetilde{\bQ}_\r$ are derived from configurations, the details of the specific collection of all polychromatic subsystems is irrelevant --  what is crucial is instead the pair $(\sigma_\bQ,\vec{\n})$. For a given value of $\N$, the symmetry under $\sym_\N$  described above guarantees that \textit{all} instances of the isomer $\sigma_\bQ$ and character $\vec{\n}$ are primitive.

\subsubsection{Equivalence between instances of different abstract quantities}
\label{subsubsec:purifications}

%
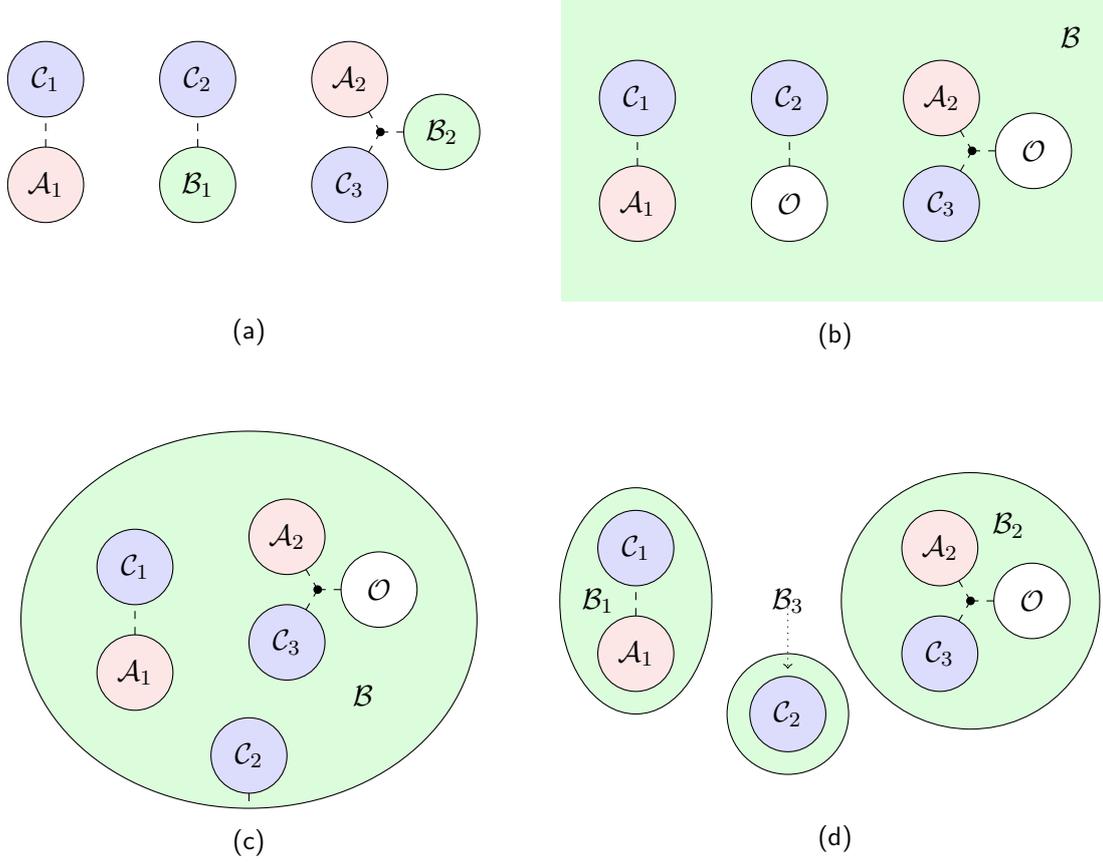
\begin{figure}[tb]
\centering
\begin{subfigure}{0.49\textwidth}
\centering
\vspace{0.5cm}
\begin{tikzpicture}
\draw[fill=color1!70] (-2,-0.7) circle (0.5cm);
\draw[fill=color3] (-2,0.7) circle (0.5cm);
\draw[fill=color2] (0,-0.7) circle (0.5cm);
\draw[fill=color3] (0,0.7) circle (0.5cm);
\draw[fill=color3] (2,-0.7) circle (0.5cm);
\draw[fill=color1!70] (2,0.7) circle (0.5cm);
\draw[fill=color2] (3.212,0) circle (0.5cm);
\draw[dashed] (-2,-0.2) -- (-2,0.2);
\draw[dashed] (0,-0.2) -- (0,0.2);
\draw[dashed] (2.404,0) -- (2.25,-0.267);
\draw[dashed] (2.404,0) -- (2.712,0);
\draw[dashed] (2.404,0) -- (2.25,0.267);
\draw[fill=black] (2.404,0) circle (0.05cm);
\node at (-2,-0.7) {\small{$\mathcal{A}_1$}};
\node at (-2,0.7) {\small{$\mathcal{C}_1$}};%
\node at (0,-0.7) {\small{$\mathcal{B}_1$}};
\node at (0,0.7) {\small{$\mathcal{C}_2$}};%
\node at (2,-0.7) {\small{$\mathcal{C}_3$}};%
\node at (2,0.7) {\small{$\mathcal{A}_2$}};
\node at (3.212,0) {\small{$\mathcal{B}_2$}};
\end{tikzpicture}
\vspace{1cm}
\caption{}
\end{subfigure}
\hfill
\begin{subfigure}{0.49\textwidth}
\centering
\begin{tikzpicture}
\fill[color2] (-3,-2) -- (-3,2) --  (4.2,2) -- (4.2,-2) ;
\draw[fill=color1!70] (-2,-0.7) circle (0.5cm);
\draw[fill=color3] (-2,0.7) circle (0.5cm);
\draw[fill=white] (0,-0.7) circle (0.5cm);
\draw[fill=color3] (0,0.7) circle (0.5cm);
\draw[fill=color3] (2,-0.7) circle (0.5cm);
\draw[fill=color1!70] (2,0.7) circle (0.5cm);
\draw[fill=white] (3.212,0) circle (0.5cm);
\draw[dashed] (-2,-0.2) -- (-2,0.2);
\draw[dashed] (0,-0.2) -- (0,0.2);
\draw[dashed] (2.404,0) -- (2.25,-0.267);
\draw[dashed] (2.404,0) -- (2.712,0);
\draw[dashed] (2.404,0) -- (2.25,0.267);
\draw[fill=black] (2.404,0) circle (0.05cm);
\node at (3.7,1.5) {\small{$\mathcal{B}$}};
\node at (-2,-0.7) {\small{$\mathcal{A}_1$}};
\node at (-2,0.7) {\small{$\mathcal{C}_1$}};%
\node at (0,-0.7) {\small{$\mathcal{O}$}};
\node at (0,0.7) {\small{$\mathcal{C}_2$}};%
\node at (2,-0.7) {\small{$\mathcal{C}_3$}};%
\node at (2,0.7) {\small{$\mathcal{A}_2$}};
\node at (3.212,0) {\small{$\mathcal{O}$}};
\end{tikzpicture}
\caption{}
\end{subfigure}

\vspace{1cm}

\begin{subfigure}{0.49\textwidth}
\centering
\begin{tikzpicture}
\draw[fill=color2] (0,0) circle [x radius=3cm, y radius=2.5cm, rotate=0];
\draw[fill=color1!70] (-1.5,-0.7) circle (0.5cm);
\draw[fill=color3] (-1.5,0.7) circle (0.5cm);
\draw[fill=color3] (0,-1.8) circle (0.5cm);
\draw[fill=color3] (0.5,-0.3) circle (0.5cm);
\draw[fill=color1!70] (0.5,1.1) circle (0.5cm);
\draw[fill=white] (1.712,0.4) circle (0.5cm);
\draw[dashed] (-1.5,-0.2) -- (-1.5,0.2);
\draw[dashed] (0,-2.3) -- (0,-2.5);
\draw[dashed] (0.904,0.4) -- (0.75,0.133);
\draw[dashed] (0.904,0.4) -- (1.212,0.4);
\draw[dashed] (0.904,0.4) -- (0.75,0.667);
\draw[fill=black] (0.904,0.4) circle (0.05cm);
\node at (1.5,-1) {\small{$\mathcal{B}$}};
\node at (-1.5,-0.7) {\small{$\mathcal{A}_1$}};
\node at (-1.5,0.7) {\small{$\mathcal{C}_1$}};%
\node at (0,-1.8) {\small{$\mathcal{C}_2$}};%
\node at (0.5,-0.3) {\small{$\mathcal{C}_3$}};%
\node at (0.5,1.1) {\small{$\mathcal{A}_2$}};
\node at (1.712,0.4) {\small{$\mathcal{O}$}};
\end{tikzpicture}
\caption{}
\end{subfigure}
\hfill
\begin{subfigure}{0.49\textwidth}
\centering
\vspace{0.5 cm}
\begin{tikzpicture}
\draw[fill=color2] (-2,0) circle [x radius=1cm, y radius=1.5cm, rotate=0];
\draw[fill=color2] (2.404,0) circle (1.7cm);
\draw[fill=color2] (0,-1.5) circle (0.8cm);
\draw[fill=color1!70] (-2,-0.7) circle (0.5cm);
\draw[fill=color3] (-2,0.7) circle (0.5cm);
\draw[fill=color3] (0,-1.5) circle (0.5cm);
\draw[fill=color3] (2,-0.7) circle (0.5cm);
\draw[fill=color1!70] (2,0.7) circle (0.5cm);
\draw[fill=white] (3.212,0) circle (0.5cm);
\draw[dashed] (-2,-0.2) -- (-2,0.2);
\draw[dashed] (2.404,0) -- (2.25,-0.267);
\draw[dashed] (2.404,0) -- (2.712,0);
\draw[dashed] (2.404,0) -- (2.25,0.267);
\draw[dotted,->] (0,0) -- (0,-0.9);
\draw[fill=black] (2.404,0) circle (0.05cm);
\node at (-2.5,0) {\small{$\mathcal{B}_1$}};
\node at (2.9,1) {\small{$\mathcal{B}_2$}};
\node at (0,0) {\small{$\mathcal{B}_3$}};
\node at (-2,-0.7) {\small{$\mathcal{A}_1$}};
\node at (-2,0.7) {\small{$\mathcal{C}_1$}};%
\node at (0,-1.5) {\small{$\mathcal{C}_2$}};%
\node at (2,-0.7) {\small{$\mathcal{C}_3$}};%
\node at (2,0.7) {\small{$\mathcal{A}_2$}};
\node at (3.212,0) {\small{$\mathcal{O}$}};
\end{tikzpicture}
\vspace{0.4 cm}
\caption{}
\label{fig:AL_via_purification4}
\end{subfigure}
\caption{Starting from the same configuration (a) of Fig.~\ref{fig:MI_permutation1}, we now permute the subsystem $\mathcal{B}$ with the purifier $\mathcal{O}$ (b). The new configurations (c) and equivalently (d) likewise generate \eqref{eq:AL_raw}.}
\label{fig:AL_via_purification}
\end{figure}

The equivalence between primitive quantities in the arrangement extends beyond that obtained by the action of $\sym_\N$, which relates  different instances of the same abstract quantity $\widetilde{\bQ}_\r$. As we will see, there exist certain sets of primitive quantities which should be considered equivalent even if they are associated to abstract quantities $\widetilde{\bQ}_\r$  and $\widetilde{\bQ}'_\r$ (not just different isomers) which are distinct according to the previous definitions. Furthermore, even for a fixed abstract quantity $\widetilde{\bQ}_\r$, it can happen that different orbits under the action of $\sym_\N$ are related in a subtle way. This broader equivalence relation is associated to a symmetry under a particular action of the group $\sym_{\N+1}$ which acts on the collection of subsystems and their purifier. The goal of this section is to describe this action, and to organize the hyperplanes in the arrangement into orbits of this larger group, clarifying which instances of which abstract quantities are related to each other, as well as to introduce some useful  notation. 

Let us begin the discussion with an example. Consider again the configuration of Fig.~\ref{fig:MI_permutation1}, which generates \eqref{eq:MI}. In this configuration we now permute the color $\mathcal{B}$ with the purifier $\mathcal{O}$, obtaining a new configuration (see Fig.~\ref{fig:AL_via_purification}b)  which generates the following information quantity\footnote{ The nomenclature will become clear momentarily.} 
\begin{equation}
\bQ_2^{\text{AL}}=S_\mathcal{A}+S_\mathcal{ABC}-S_\mathcal{BC}
\label{eq:AL_raw}
\end{equation}
At least at a formal level, this expression is new and cannot be written as an instance of the mutual information. Performing a color reduction $\b\cs\rightarrow\b$ as described above, \eqref{eq:AL_raw} reduces to 
\begin{equation}
\bQ_2^{'\text{AL}}=S_\mathcal{A}-S_\mathcal{B}+S_\mathcal{AB}
\label{eq:AL}
\end{equation}
which can be recognized as the expression which appears in the Araki-Lieb inequality, wherefrom the name. The corresponding abstract quantity is then
\begin{equation}
\widetilde{\bQ}^{\text{AL}}_2[e](\mathcal{X}_1,\mathcal{X}_2)=S_{\mathcal{X}_1}-S_{\mathcal{X}_2}+S_{\mathcal{X}_1\mathcal{X}_2}
\label{eq:AL_abstract_1}
\end{equation}
which by convention we have chosen as the standard isomer $[\sigma_\bQ=e]$. The other isomer is then
\begin{equation}
\widetilde{\bQ}^\text{AL}_2[(12)](\mathcal{X}_1,\mathcal{X}_2)=S_{\mathcal{X}_2}-S_{\mathcal{X}_1}+S_{\mathcal{X}_1\mathcal{X}_2}
\label{eq:AL_abstract_2}
\end{equation}
where $(12)=\sigma_\bQ\in\per(\widetilde{\bQ}_2^\text{AL})$ is the permutation which exchanges $\mathcal{X}_1$ and $\mathcal{X}_2$.

Even if these quantities are formally different from the mutual information, it should already be clear that they are physically equivalent, since they are generated by the same configuration where we have simply changed the role of the subsystems by a permutation. While for a primitive quantity this equivalence is made particularly manifest by the generating configuration, the same logic extends more generally to non-primitive quantities as well. Moreover, it has a simple quantum mechanical origin. 

In the previous example, to obtain \eqref{eq:AL_raw} from the configurations in Fig.~\ref{fig:AL_via_purification}c (or equivalently Fig.~\ref{fig:AL_via_purification}d)  we have secretly used the fact that the overall state is pure and hence the entropies are equal to the entropies of the complementary subsystems. We can therefore implement a similar transformation even if a quantity is not primitive. Starting directly from \eqref{eq:MI}, we can replace all the entropies that include the subsystem $\b$ with the entropies of the complementary subsystems, obtaining
\begin{equation}
\bQ_2^{\text{AL}}=S_\mathcal{A}+S_\mathcal{ACO}-S_\mathcal{CO} \, .
\label{eq:AL_raw_formal}
\end{equation}
Redefining $\o\rightarrow\b$ we obtain \eqref{eq:AL_raw}, and we can then proceed as before. In the following we will refer to this transformation as a ``purification with respect to $\b$''. While this particular way of transforming one quantity into another makes less evident that the operation is essentially a permutation, one should  keep in mind that this is in fact the case.

More generally, we can imagine performing this manipulation at the abstract level. Starting from the abstract form of the mutual information \eqref{eq:abstract}, we introduce an auxiliary subsystem $\mathcal{O}$ and formally perform the replacement just described. Specifically, we choose one abstract subsystem $\mathcal{X}_i$ and replace all the entropies in the abstract expression with the entropies of the complementary subsystems, including the auxiliary subsystem $\o$. Finally, we redefine $\o\rightarrow\mathcal{X}_i$. The outcome of this transformation depends on which abstract subsystem $\mathcal{X}_i$ we choose. Choosing $\mathcal{X}_1$ we get \eqref{eq:AL_abstract_2}, while the other choice gives \eqref{eq:AL_abstract_1}.

As usual, working at the abstract level is a convenient way to separate the purely algebraic properties of a given quantity from issues related to primitivity and configurations. However, since what we ultimately want to do is to relate different hyperplanes in the arrangement under this more general mapping, we need to know how instances of different isomers of different quantities are related by these transformations.

Going back to the previous example, focusing for the moment on the simple $\N=2$ set-up, the only possible choice of character is $\vec{\n}=(1,1)$, and the two instances $\bQ^\text{AL}_2[e](\a:\b)$ and $\bQ^\text{AL}_2[(12)](\a:\b)$ should be joined to form the orbit (under the action of $\sym_2$)
\begin{equation}
\bQ^\text{AL}_2[\{e,(12)\}](1:1)
\label{eq:AL_orbit_2}
\end{equation}
according to \eqref{eq:n_orbit}. The mutual information has only one isomer, and therefore a single orbit $\I_2[e](1:1)$ which contains a single instance. It is immediate to check  that by purifying this particular instance with respect to $\a$ or $\b$ gives the two instances in the above orbit of $\widetilde{\bQ}_2^\text{AL}$. This is, of course, the same transformation we just performed at the abstract level. As we discussed, these transformations are essentially the permutations $(\a\o)$ and $(\b\o)$.\footnote{ Using again the cyclic notation.} Therefore, we can collect all the instances of $\widetilde{\bQ}_2^\text{AL}$ and $\widetilde{\I}_2$ in this set-up into a single larger orbit
\begin{equation}
\I_2[e](1:1)\cup\bQ^\text{AL}_2[\{e,(12)\}](1:1)
\label{eq:I2_AL_larger_orbit_2}
\end{equation}
under the action of the group $\sym_3$. Since the mutual information is a primitive quantity, the instances in \eqref{eq:AL_orbit_2} are also primitive and are generated by the same configuration that generates $\I_2[e](1:1)$ after performing the aforementioned permutations.\footnote{ We refer the reader to \citep{Hubeny:2018trv} for further details regarding these simple configurations.}

This example was straightforward, but in the $\N=3$ case the situation is more interesting (and instructive). One possible choice of character is again $\vec{\n}=(1,1)$, which again corresponds to an orbit $\I_2[e](1:1)$ for the mutual information (now containing the three instances in the first row of \eqref{eq:MI_3instances}) and similarly an orbit for $\widetilde{\bQ}_2^\text{AL}$ which still takes the form \eqref{eq:AL_orbit_2}, but also contains more instances. The key point is that now these two orbits are not related by purifications, and therefore, unlike \eqref{eq:I2_AL_larger_orbit_2}, cannot be joined into a larger orbit under the action of $\sym_4$. This can  be immediately seen from the fact that, as we discussed above, purifying \eqref{eq:MI} with respect to $\b$ gives \eqref{eq:AL_raw}, which is an instance of \eqref{eq:AL_abstract_2} of character $\vec{\n}=(2,1)$ (and not $\vec{\n}=(1,1)$).   

For this other choice of character ($\vec{\n}=(2,1)$), the two isomers of $\widetilde{\bQ}_2^\text{AL}$ are now associated to two distinct orbits under the action of $\sym_3$ 
\begin{equation}
\bQ^\text{AL}_2[e](2:1),\qquad     \bQ^\text{AL}_2[(12)](2:1),
\end{equation}
while for the mutual information we have the single orbit $\I_2[e](2:1)$ (with instances listed in the second row of \eqref{eq:MI_3instances}). As we have shown, the instances of the second orbit above are the ones which are obtained via purifications from the instance in $\I_2[e](1:1)$. We leave it as an exercise for the reader to verify that the same transformations relate the instances of the first orbit above to the instances in $\I_2[e](2:1)$ and in \eqref{eq:AL_orbit_2}. Overall, all these instances are therefore organized into two orbits under the action of $\sym_4$
\begin{align}
\begin{split}
&\I_2[e](1:1)  \cup   \bQ^\text{AL}_2[(12)](2:1)\\
&\I_2[e](2:1)  \cup   \bQ^\text{AL}_2[e](2:1)  \cup \bQ^\text{AL}_2[\{e,(12)\}](1:1)
\end{split}
\label{eq:I2_AL_larger_orbit_3}
\end{align}
The crucial point is that only the quantities which belong to the first orbit above are primitive, while the quantities in the second one are not.\footnote{ We have already shown in \eqref{eq:non_primitivity_proof} that for $\N=3$ the instances in $\I_2[e](2:1)$ are non-primitive. We leave it as an exercise to verify, using a similar argument, that also the other quantities in the orbit are non-primitive.}

The example discussed above is particularly instructive because it demonstrates, quite simply, all the aspects of interest in dealing with these more general transformations. The rest of this section will be devoted to formalizing the problem and to extending the discussion to the general case. We will start by working with abstract quantities, since this conveniently allows us to separate the analysis of formal manipulations from issues related to instances and primitivity, which depend on the total number of colors $\N$ of a specific set-up. The next step will be to apply this technology to a specific set-up and to organize the instances of the various quantities mapped by these transformations into orbits under the action $\sym_{\N+1}$.

\paragraph{Mapping between different abstract quantities:} Consider an isomer $\widetilde{\bQ}_\r[\sigma_\bQ]$ of an abstract quantity $\widetilde{\bQ}_\r$ of rank $\r$. We introduce an auxiliary abstract subsystem $\mathcal{X}_{\r+1}$ which for distinctness we denote by $\o$. In analogy to \eqref{eq:sigma_permutation} and \eqref{eq:action} we consider the set $[\r+1]$ and we introduce the \textit{generalized permutation} $\sigma^\sharp \in \sym_{\r+1}$
\begin{equation}
\sigma^\sharp:[\r+1]\rightarrow[\r+1],\qquad (\mathcal{X}_1,\ldots,\mathcal{X}_\r,\o)\mapsto(\sigma^\sharp(\mathcal{X}_1),\ldots,\sigma^\sharp(\mathcal{X}_\r),\sigma^\sharp(\o)) \,.
\end{equation}
We then introduce the restriction of this map to the subset $[\r]\subset [\r+1]$
\begin{equation}
\sigma^\sharp|_{[\r]}:[\r]\rightarrow[\r+1],\qquad (\mathcal{X}_1,\ldots,\mathcal{X}_\r)\mapsto(\sigma^\sharp(\mathcal{X}_1),\ldots,\sigma^\sharp(\mathcal{X}_\r)),
\end{equation}
and define the action of $\sym_{\r+1}$ on $\widetilde{\bQ}_\r[\sigma_\bQ]$ as 
\begin{equation}
\sigma^\sharp\widetilde{\bQ}_\r[\sigma_\bQ](\mathcal{X}_1,\mathcal{X}_2,\ldots,\mathcal{X}_\r)\eqdef\widetilde{\bQ}_\r[\sigma_\bQ](\sigma^\sharp(\mathcal{X}_1),\sigma^\sharp(\mathcal{X}_2),\ldots,\sigma^\sharp(\mathcal{X}_\r)).
\label{eq:sigma_sharp_action}
\end{equation}

When $\sigma^\sharp(\o)=\o$, the action of $\sigma^\sharp$ on $\widetilde{\bQ}_\r[\sigma_\bQ]$ is equivalent to the action of a permutation $\sigma\in\sym_\r$, which, as discussed in the previous sections, can map $\widetilde{\bQ}_\r[\sigma_\bQ]$ to another isomer. On the other hand, when $\sigma^\sharp(\o)\neq\o$, the auxiliary subsystem $\o$ will appear in at least one of the arguments of the resulting expression \eqref{eq:sigma_sharp_action}. In this case, we want to transform the explicit expression that defines \eqref{eq:sigma_sharp_action} to obtain an information quantity which is a function of the original subsystems $(\mathcal{X}_1,\ldots,\mathcal{X}_\r)$. To do this, we formally replace each entropy $S_\si$, with $\o\in\si$, with the entropy of the complementary subsystem defined as\footnote{ Here the usual polychromatic index $\si$ labels a collection of abstract subsystems $\mathcal{X}_i$ in the obvious way, in perfect analogy to the case of colors. We will use this convention also in later sections.} 
\begin{equation}
\si^c=\{\mathcal{X}_1,\mathcal{X}_2,\ldots,\mathcal{X}_\r,\o\}\setminus\si
\end{equation}
In general, the outcome of this transformation can be an isomer $\widetilde{\bQ}_\r[\sigma'_\bQ]$ of the same abstract quantity $\widetilde{\bQ}_\r$ (possibly even $\sigma'_\bQ=\sigma_\bQ$), or a new expression that cannot be written as an isomer of $\widetilde{\bQ}_\r$. The details will depend on the algebraic structure of $\widetilde{\bQ}_\r$ and the specific permutation. 

Starting from an information quantity $\widetilde{\bQ}_\r$, in principle we can imagine acting with these transformations an arbitrary number of times. It will therefore be  useful to understand how these maps combine. In particular, we want to understand how many formally inequivalent quantities could in principle be obtained starting from $\widetilde{\bQ}_\r$ and which transformations we should use to obtain them. 

We will denote by $\sigma$ a permutation $\sigma^\sharp$ such that $\sigma^\sharp(\o)=\o$ and introduce an operator $\mathbb{T}$ which implements the subsystem replacement introduced above. Specifically, $\mathbb{T}$ acts linearly on the explicit expression that defines a quantity $\sigma^\sharp\widetilde{\bQ}_\r$ and it replaces all terms $S_\si$ such that $\o\in\si$ with the entropy of the complementary subsystems. In the following we will simply say that $\mathbb{T}$ ``removes $\o$''. Since we are interested in understanding how to obtain quantities which are formally different, the distinction between isomers is immaterial, and in order to simplify the notation, we will dispense with their specification. For example, since a transformation $\sigma$ can only change an isomer,  we will simply write 
\begin{equation}
\sigma\widetilde{\bQ}_\r\simeq\widetilde{\bQ}_\r \, .
\end{equation}

An arbitrary generalized permutation $\sigma^\sharp$ can always be written as a product of a transposition $(\mathcal{X}_i\o)$, which swaps $\o$ and an abstract subsystem $\mathcal{X}_i$, and a permutation $\sigma$. We can then write a transformation of the kind described above as
\begin{equation}
\mathbb{T}\sigma^\sharp\widetilde{\bQ}_\r=\mathbb{T}(\mathcal{X}_i\o)\sigma\widetilde{\bQ}_\r\simeq\mathbb{T}(\mathcal{X}_i\o)\widetilde{\bQ}_\r \, .
\end{equation}
Therefore, in order to find information quantities which potentially have a different form, all we have to do is to start from an arbitrary isomer of $\widetilde{\bQ}_\r$, swap a subsystem $\mathcal{X}_i$ with $\o$, and remove $\o$ with $\mathbb{T}$. 

If we repeat this type of transformation a second time we find the following:
\begin{equation}
\mathbb{T}(\mathcal{X}_j\o)\mathbb{T}(\mathcal{X}_i\o)\widetilde{\bQ}_\r=(\mathcal{X}_j\mathcal{X}_i)\mathbb{T}(\mathcal{X}_j\o)\widetilde{\bQ}_\r \, .
\label{eq:purification_composition_1}
\end{equation}
To see that this is the case, let us write the collection of all subsystems $(\mathcal{X}_1,\ldots,\mathcal{X}_\r,\o)$ 
in a way which lets us keep track of the two purifying systems while retaining complete generality otherwise,
as $(\mathcal{X}_i,\mathcal{X}_j,\sk,\mathscr{L},\o)$, where $\sk,\mathscr{L}$ are usual polychromatic indices (one capturing the remaining content of $\si$ and the other the remaining content of its complement).  Now consider a term $S_\si$ in the expression of $\widetilde{\bQ}_\r$. The index $\si$ can have four different forms with respect to the inclusion of the subsystems of interest $\mathcal{X}_i,\mathcal{X}_j$:
\begin{equation}
\si=
\begin{cases}
\sk\\
\mathcal{X}_i\sk\\
\mathcal{X}_j\sk\\
\mathcal{X}_i\mathcal{X}_j\sk
\end{cases}
\end{equation}
Applying the transformations on the two sides of \eqref{eq:purification_composition_1} respectively, we obtain 
\begin{equation}
\left\{
\begin{alignedat}{4}
&\sk&&\stackrel{\scriptscriptstyle{\mathbb{T}(\mathcal{X}_i\o)}}{\longrightarrow}\; \sk&   &&  &\stackrel{\scriptscriptstyle{\mathbb{T}(\mathcal{X}_j\o)}}{\longrightarrow}\; \sk\\
&\mathcal{X}_i\sk&&\stackrel{\scriptscriptstyle{\mathbb{T}(\mathcal{X}_i\o)}}{\longrightarrow}\; \mathcal{X}_i\mathcal{X}_j\mathscr{L}&   &&   &\stackrel{\scriptscriptstyle{\mathbb{T}(\mathcal{X}_j\o)}}{\longrightarrow}\; \mathcal{X}_j\sk\\
&\mathcal{X}_j\sk&&\stackrel{\scriptscriptstyle{\mathbb{T}(\mathcal{X}_i\o)}}{\longrightarrow}\; \mathcal{X}_j\sk&   &&   &\stackrel{\scriptscriptstyle{\mathbb{T}(\mathcal{X}_j\o)}}{\longrightarrow}\; \mathcal{X}_i\mathcal{X}_j\mathscr{L}\\
&\mathcal{X}_i\mathcal{X}_j\sk&&\stackrel{\scriptscriptstyle{\mathbb{T}(\mathcal{X}_i\o)}}{\longrightarrow}\; \mathcal{X}_i\mathscr{L}&   &&   &\stackrel{\scriptscriptstyle{\mathbb{T}(\mathcal{X}_j\o)}}{\longrightarrow}\; \mathcal{X}_i\mathscr{L}
\end{alignedat}\right.
\qquad
\left\{
\begin{alignedat}{4}
&\sk&&\stackrel{\scriptscriptstyle{\mathbb{T}(\mathcal{X}_j \o)}}{\longrightarrow}\;\sk&   &&  &\stackrel{\scriptscriptstyle{(\mathcal{X}_j \mathcal{X}_i)}}{\longrightarrow}\;\sk\\
&\mathcal{X}_i\sk&&\stackrel{\scriptscriptstyle{\mathbb{T}(\mathcal{X}_j \o)}}{\longrightarrow}\;\mathcal{X}_i\sk&   &&   & 
\stackrel{\scriptscriptstyle{(\mathcal{X}_j \mathcal{X}_i)}}{\longrightarrow}\;\mathcal{X}_j\sk\\
&\mathcal{X}_j\sk&&\stackrel{\scriptscriptstyle{\mathbb{T}(\mathcal{X}_j \o)}}{\longrightarrow}\;\mathcal{X}_i\mathcal{X}_j\mathscr{L}&   &&   &\stackrel{\scriptscriptstyle{(\mathcal{X}_j \mathcal{X}_i)}}{\longrightarrow}\;\mathcal{X}_i\mathcal{X}_j\mathscr{L}\\
&\mathcal{X}_i\mathcal{X}_j\sk&&\stackrel{\scriptscriptstyle{\mathbb{T}(\mathcal{X}_j \o)}}{\longrightarrow}\;\mathcal{X}_j\mathscr{L}&   &&   & 
\stackrel{\scriptscriptstyle{(\mathcal{X}_j \mathcal{X}_i)}}{\longrightarrow}\;\mathcal{X}_i\mathscr{L}
\end{alignedat}\right.
\end{equation}
In the equation above, the transformation on the left (right) corresponds to the left (right) hand side of \eqref{eq:purification_composition_1}. Since the outcome is the same for an arbitrary index $\si$, \eqref{eq:purification_composition_1} is proven. 

This relation shows that in order to find all possible information quantities associated to $\widetilde{\bQ}_\r$ which have a different form, we can simply start from an arbitrary isomer of $\widetilde{\bQ}_\r$ and apply $\mathbb{T}(\mathcal{X}_i\o)$ for all possible choices $i\in[\r]$. Furthermore, this demonstrates that the maximum number of such different information quantities obtainable from $\widetilde{\bQ}_\r$ is $\r$.

While it is important to keep in mind that these transformations are essentially permutations (followed by $\mathbb{T}$), it will be convenient to also have a more direct means of transforming a given quantity under these rules. For an abstract subsystem $\mathcal{X}_i$, we define a \textit{purification operator} $\mathbb{P}_i$
\begin{equation}
\mathbb{P}_i\eqdef\mathbb{T}(\mathcal{X}_i\o) \,.
\end{equation}
In practice, the action of this operator can be summarized as follows
\begin{equation}
\mathbb{P}_i: S_\si\mapsto \mathbb{P}_i S_\si=
\begin{cases}
S_{\mathcal{X}_i\cup(\si^c\setminus\mathcal{O)}} &\text{if}\; \mathcal{X}_i\in\si\\
S_\si\quad &\text{otherwise}\\
\end{cases}
\label{eq:purification_operator}
\end{equation}
We imagine the action being linear on the defining expression of an information quantity. When the result of $\mathbb{P}_i\widetilde{\bQ}_\r$ is an information quantity defined by a formally different expression, we will denote it by\footnote{ The upper index $[i]$ labels the subsystem with respect to which one has to purify $\widetilde{\bQ}_\r$ in order to obtain $\widetilde{\bQ}_\r^{[i]}$.} $\widetilde{\bQ}_\r^{[i]}$ and we will say that $\widetilde{\bQ}_\r^{[i]}$ is a \textit{purification of} $\widetilde{\bQ}_\r$.

Finally, let us briefly comment on the mapping of isomers. As discussed in \S\ref{subsec:taxonomy}, the choice of the standard isomer $\widetilde{\bQ}_\r^e$ of an abstract quantity $\widetilde{\bQ}_\r$ is completely arbitrary. The very same freedom is present for all the new quantities $\widetilde{\bQ}_\r^{[i]}$. Furthermore, since the operator $\mathbb{P}_i$ is an involution, we have
\begin{equation}
\mathbb{P}_i\widetilde{\bQ}_\r^{[i]}=\widetilde{\bQ}_\r \,.
\end{equation}
We can in principle redefine $\widetilde{\bQ}_\r^{[i]}\equiv\widetilde{\bQ}'_\r$ and $\widetilde{\bQ}_\r\equiv\widetilde{\bQ}_\r^{'[i]}$. In other words, there is, a-priori, no unique choice of which quantity should be considered `more fundamental' in defining purifications $\widetilde{\bQ}_\r^{[i]}$. While in certain cases like the previous example of mutual information, a particular choice might seem more natural, in general a choice should be made case by case. 

On the other hand, when a choice of a `reference' $\widetilde{\bQ}_\r$ has been made, together with its standard isomer $\widetilde{\bQ}_\r^e$, it is natural to define the standard isomers of the purifications $\widetilde{\bQ}_\r^{[i]}$ according to this choice, i.e.,
\begin{equation}
\widetilde{\bQ}_\r^{[i]}[e]\eqdef\mathbb{P}_i\widetilde{\bQ}_\r^e
\end{equation}
The mapping between the isomers of $\widetilde{\bQ}_\r$ and its purifications $\widetilde{\bQ}_\r^{[i]}$ can then be determined by analyzing the internal symmetries of these quantities, and how they combine with the transformations $\sigma$ and $\mathbb{P}_i$.

\paragraph{Mapping between instances and $(\N+1)$-orbits:} 

We now turn to describing how the description of purifications at the abstract level can be used to understand the relations between instances under such transformations. For an abstract quantity $\widetilde{\bQ}_\r$, suppose that we have classified all isomers of all purifications $\widetilde{\bQ}_\r^{[i]}$.\footnote{ For convenience here we include the reference quantity $\widetilde{\bQ}_\r$ in this list, labeling it by $\widetilde{\bQ}_\r^{[0]}\eqdef\widetilde{\bQ}_\r$.} Since by assumption all the $\widetilde{\bQ}_\r^{[i]}$ are formally different, we can use the construction of \S\ref{subsec:taxonomy} to classify all instances of each $\widetilde{\bQ}_\r^{[i]}$ in an $\N$-party setting. We are guaranteed, by construction, that the description will be free of redundancies. Furthermore, we can organize all these instances into orbits under the action. What remains to be understood is how the more general symmetries which involve the purifier relate, under a more general equivalence relation, instances of seemingly different quantities.

 In an $\N$-party setting, we introduce a \textit{generalized permutation} $\pi^\sharp\in\sym_{\N+1}$ acting on the set of colors \textit{and purifier} as follows
\begin{equation}
\pi^\sharp:[\N+1]\rightarrow [\N+1],\quad   \{\ell_1,\ell_2,\ldots,\ell_\N,\mathcal{O}\}\mapsto\{\pi^\sharp(\ell_1),\pi^\sharp(\ell_2),\ldots,\pi^\sharp(\ell_\N),\pi^\sharp(\mathcal{O})\}\,.
\end{equation}
We then introduce the restriction
\begin{equation}
\pi^\sharp|_{[\n]}:[\n]\rightarrow [\N+1],\qquad   \{\ell_1,\ell_2,\ldots,\ell_\n\}\mapsto\{\pi^\sharp(\ell_1),\pi^\sharp(\ell_2),\ldots,\pi^\sharp(\ell_\n)\}\,,
\end{equation}
which entails an action on a particular instance of an abstract quantity $\widetilde{\bQ}_\r$
\begin{equation}
\pi^\sharp\bQ_\r[\sigma_\bQ](\a_{\si_{n_1}}:\a_{\si_{n_2}}:\cdots:\a_{\si_{n_\r}})\eqdef \bQ_\r[\sigma_\bQ](\pi^\sharp(\a_{\si_{n_1}}):\pi^\sharp(\a_{\si_{n_2}}):\cdots:\pi^\sharp(\a_{\si_{n_\r}}))\,.
\label{eq:action:generalized_oermutation_pi}
\end{equation}

In complete analogy to the discussion at the abstract level, if $\pi^\sharp(\o)=\o$, the generalized permutation $\pi^\sharp$ is one of the permutations $\pi\in\sym_\N$ whose action was described in \S\ref{subsec:taxonomy}. On the other hand, if $\pi^\sharp(\o)\neq\o$, we should again act with $\mathbb{T}$ to remove $\o$, and this transformation can map the instance of $\widetilde{\bQ}_\r$ in \eqref{eq:action:generalized_oermutation_pi} to an instance of one of its purifications. Suppose now that we write a generalized permutation on the colors as $\pi^\sharp=(\a_\ell\o)\pi$. The key point is that in order to determine which purification $\widetilde{\bQ}_\r^{[i]}$ the resulting quantity is an instance of, the color $\ell$ is irrelevant. What matters is instead the slot $\mathcal{X}_i$ in the abstract form of $\widetilde{\bQ}_\r$ to which the color $\ell$ belongs in the specific instance \eqref{eq:action:generalized_oermutation_pi}. In equations, assuming that $\a_\ell\in\a_{\si_{n_i}}$
\begin{equation}
\mathbb{T}(\a_\ell\o)\bQ_\r[\sigma_\bQ](\a_{\si_{n_1}}:\cdots:\a_{\si_{n_i}}:\cdots:\a_{\si_{n_\r}})=\bQ_\r^{[i]}[\sigma'_{\bQ^{[i]}}](\a_{\si_{n_1}}:\cdots:\a_{\widehat{{\si_{n_i}}}}:\cdots:\a_{\si_{n_\r}})
\label{eq:instance_transformation_purification}
\end{equation}
with $\widehat{{\si_{n_i}}}=\ell\cup\si^c_{n_i}\setminus\o$ and the specification of the isomer $\sigma'_{\bQ^{[i]}}$ is also determined by abstract argument (independent of the set-up). 

For certain applications, it might still be convenient to introduce, in an $\N$-party setting, a \textit{purification operator} $\mathbb{P}_\ell$ associated to a specific color. However, it should be emphasized that the action of this operator on an instance is determined in large part by the action of a corresponding operator $\mathbb{P}_i$ in the abstract setting.  

Finally, note that the replacement $\si_{n_i}\rightarrow\widehat{\si_{n_i}}$ in general does not preserve the character $\vec{\n}$ of the initial instance, which changes as follows
\begin{equation}
(n_1,\ldots,n_i,\ldots,n_\r)\longrightarrow(n_1,\ldots,\N-\n,\ldots,n_\r)
\end{equation}
The expression on the right in general does not satisfy the conditions introduced in \S\ref{subsec:taxonomy} for the classification of instances, since the components of the character are not necessarily in decreasing order from left to right. To reorder these components we should act with a permutation in $\sym_\r$ which generically can change the isomer in \eqref{eq:instance_transformation_purification} from $\bQ_\r^{[i]}[\sigma'_{\bQ^{[i]}}]$ to another $\bQ_\r^{[i]}[\sigma''_{\bQ^{[i]}}]$.

Even if we ignore the specific details about the transformation of the isomers, it is clear that the map that we just described relates two instances of different purifications, $\widetilde{\bQ}_\r^{[i]}$ and $\widetilde{\bQ}_\r$. Since this relation is an equivalence relation, all instances in the equivalence classes of the two quantities should be considered equivalent and belong to a larger equivalence class defined by the action of the generalized permutations $\pi^\sharp$. In general we therefore have

\begin{definition}
\emph{\textbf{($(\N+1)$-orbits)}} An \emph{$(\N+1)$-orbit} for an abstract information quantity $\widetilde{\bQ}_\r$ in an $\N$-party setting, is a set of some of its instances, as well as some instances of all its purifications $\widetilde{\bQ}_\r^{[i]}$, related to each other under the action of  \emph{$\sym_{\N+1}$}. Each $(\N+1)$-orbit is in general a union of smaller orbits, defined under the action of \emph{$\sym_\N$}, for the different purifications $\widetilde{\bQ}_\r^{[i]}$. 
\end{definition}

Finally, as we discussed in the previous section for the orbits under $\sym_\N$, the partitioning of the set of instances of $\widetilde{\bQ}_\r$ and all its purifications into $(\N+1)$-orbits respects primitivity. Again, this follows from the fact that for a primitive quantity, the action of $\sym_{\N+1}$ can be understood directly at the level of the generating configuration.

\section{Structural \& physical properties of information quantities}
\label{sec:degenerate}

We now explore how certain structural properties of an information quantity, which are purely algebraic in nature, relate to some of its physical properties. In \S\ref{subsec:N_partite0} we introduce a natural choice of basis for the space of information quantities that will be useful to highlight the structural properties we are after, and derive some useful formulas for the expansion of various instances and purifications into this basis. Next, in \S\ref{subsec:super_balance}, we introduce the notion of \textit{balance} and \textit{superbalance}, and we discuss how these properties relate to the behavior of the various quantities when they are evaluated on configurations in a generic QFT. In particular, we will focus on the relation between  cancellation of divergences and scheme-independence of a given quantity and how it relates to certain topological data characterizing the configuration on which it is evaluated. This discussion will be independent of whether an information quantity is primitive or not. Finally, in \S\ref{subsec:balanced_primitives}, we will focus on primitive quantities and discuss the relation between these aforementioned algebraic properties and certain topological properties of the configurations from which they can be generated. We will prove that primitive quantities generated from configurations with non-adjoining subsystems are balanced and we will argue for the ``typical'' occurrence of superbalance.

\subsection{Basis in the space of information quantities}
\label{subsec:N_partite0}

For a given value of $\N$, the set of information quantities defined in \eqref{eq:info_quantity_abstract} span a vector space over the field $\mathbb{Q}$ with dimension ${\sf D}$ (the same as entropy space).\footnote{ More precisely, an information quantity is a ray in this space, since it is defined up to an overall coefficient.} The standard basis in this space is given by the entropies $S_\si$, indexed by the polychromatic subsystems $\si$, with respect to which we write an information quantity as in \eqref{eq:info_quantity_abstract}. However, for various applications, it will be convenient to introduce an alternative basis obtained from primitive quantities. We have already seen how this can be done (see the proof of Lemma~\ref{lemma:arrangement}), since the space of information quantities is precisely the space of vectors that, geometrically, are orthogonal to hyperplanes in entropy space. Since we will use this basis extensively in the following, we now briefly review the elements, and also clarify the notation according to the terminology  introduced in the previous section. 

The abstract form $\widetilde{\I}_\r$ of the $\r$-partite information is
\begin{equation}
\begin{split}
\widetilde{\I}_\r(\mathcal{X}_1,\mathcal{X}_2,\ldots,\mathcal{X}_\r)=  &\; \sum_\si (-1)^{\#\si+1}S_\si \\
= &\; S_{\mathcal{X}_1}+S_{\mathcal{X}_2}+\cdots+S_{\mathcal{X}_\r} \\
& - S_{\mathcal{X}_1\mathcal{X}_2}-S_{\mathcal{X}_1\mathcal{X}_3}-\cdots-S_{\mathcal{X}_{\r-1}\mathcal{X}_\r} \\
& + S_{\mathcal{X}_1\mathcal{X}_2\mathcal{X}_3}+\cdots+(-1)^{\r+1}S_{\mathcal{X}_1\mathcal{X}_2\ldots\mathcal{X}_\r}
\end{split}
\label{eq:Ir_abstract}
\end{equation}
where $\#\si$ is the cardinality\footnote{ In context of instances, this was called the degree of $\si$ as  noted in footnote \ref{fn:degcard}.} of the index $\si$ (the number of abstract subsystems). Note that we do not need to specify the isomer, there being only a single one for each $\r$ -- the $\r$-partite information is invariant under all the permutations of the abstract subsystems, i.e., $\aut(\widetilde{\I}_\r)=\sym_\r$, so $\per(\widetilde{\I}_\r)$ is trivial. We know from the $\I_\n$-theorem that in an $\N$-partite setting all the trivial upliftings of $\widetilde{\I}_\r$ are primitive, for all $2\leq\r\leq\N$, and therefore belong to the arrangement $\arr_\N$. For any given $\r$, the trivial upliftings of $\widetilde{\I}_\r$ in an $\N$-partite setting form an orbit under the action of $\sym_\N$ which is denoted by $\I_\r(1:1:\cdots:1)$. To simplify the notation, we will denote this orbit simply as $\I_\n$, stressing the fact that the total character is equal to the rank ($\r=\n$)
\begin{equation}
\I_\n\equiv\I_\r(1:1:\cdots:1)\,.
\end{equation}
An element of this set is uniquely determined by the specification of a collection of $\n$ colors drawn from the possible $\N$, i.e., by a polychromatic index $\si_\n$. We will therefore denote such an element by $\I_{\si_\n}$. Occasionally, for small values of $\N$, we will also use a more explicit notation (see for example \S\ref{sec:sieve}). We will write each element $\I_{\si_\n}$ as $\I_\n^{\ell_1\ell_2\ldots\ell_\n}$, specifying in the upper index the list of colors which belong to $\si_\n$. So for example, for $\r=3$ and $\N=4$, we will write 
\begin{equation}
\I_3=\{\I_3^{\a\b\cs},\I_3^{\a\b\d},\I_3^{\a\cs\d},\I_3^{\b\cs\d}\}\,.
\end{equation}

For given $\N$, consider the set of all all trivial upliftings of all $\widetilde{\I}_\r$, for all values of $\r$ in the range $2\leq\r\leq\N$. As we discussed, all these quantities are linearly independent and there are ${\sf D}-\N$ of them. To obtain a basis we can simply supplement this set with the $\N$ trivial upliftings of the ``$1$-partite information''
\begin{equation}
\widetilde{\I}_1=S_{\mathcal{X}_1},
\label{eq:1partite}
\end{equation}
which are the ``monochromatic entropies'' for all the $\N$ colors. We will call this basis the \textit{$\I_\n$-basis} and we will write the expansion of an information quantity $\bQ$ in this basis as
\begin{equation}
\bQ=\sum_{\si}q_{\si}\,\I_{\si}
\label{eq:info_quantity_In_basis}
\end{equation}
In the above expression the sum is intended over all polychromatic indices $\si_\n$, for all $1\leq\n\leq\N$, and we write the coefficients as $q_{\si}$ to distinguish them from the coefficients $Q_\si$ of the same quantity $\bQ$ written in the entropy basis (see \eqref{eq:info_quantity_abstract}).

In the remainder of this section we will discuss some properties of the basis that we just introduced. We will start by studying the map between the $\I_\n$-basis and the usual one based on entropies. We will then derive a set of useful ``reduction formulae'' which allow for converting non-trivial upliftings of the $\r$-partite information into the elements of the $\I_\n$-basis. Finally, we will discuss how the elements of this basis transform under purifications.

\paragraph{Change of basis:} For any value of $\N$, the linear transformation that maps the entropy basis to the $\I_\n$-basis is an involution. In other words, the expression of the elements $S_\si$ of the entropy basis in terms of the elements $\I_\si$ of the $\I_\n$-basis is formally the same that gives the inverse relation. In equations,
\begin{equation}
\I_\si=\sum_{\sk\subseteq\si} (-1)^{\#\sk+1}S_\sk\qquad \longleftrightarrow\qquad   S_\si=\sum_{\sk\subseteq\si} (-1)^{\#\sk+1}\I_\sk \, .
\label{eq:IofS_and_SofI}
\end{equation}
To see that this is the case, we only have to prove the formula on the right. For a fixed index $\mathscr{L}$, with $\mathscr{L}\subseteq\si$, the entropy $S_\mathscr{L}$ only appears in the terms $\I_\sk$ such that $\mathscr{L}\subseteq\sk$. The coefficient of $S_\mathscr{L}$ is the sum of the coefficients of these terms and is given by
\begin{equation}
(-1)^{\#\mathscr{L}+1}\sum_{i\,=\,\#\mathscr{L}}^{\#\si}(-1)^{i+1}{\#\si-\#\mathscr{L}\choose i-\#\mathscr{L}}=
\begin{cases}
1 & \text{if} \;\;\si=\mathscr{L} \\
0 & \text{otherwise}
\end{cases} 
\label{eq:change_basis}
\end{equation}
Therefore the only term that survives in the sum is precisely $S_\si$.

Starting from the expression of an information quantity $\bQ$ in one of the two bases, it will be useful to obtain the general form of the relations $q_\si(Q_\sk)$ and their inverse $Q_\si(q_\sk)$. Due to \eqref{eq:change_basis}, these relations will be formally identical and it is more convenient to derive the latter. Starting from the expression of $\bQ$ in the $\I_\n$-basis \eqref{eq:info_quantity_In_basis} and a choice of index $\si$, the expression of $Q_\si$ is a sum, with appropriate signs, of the coefficients $q_\sk$ such that $\si\subseteq\sk$. From \eqref{eq:info_quantity_In_basis} it is clear that all terms in the sum contributing to $S_\si$ will have the same sign and we obtain
\begin{equation}
Q_\si=(-1)^{\#\si+1}\sum_{\sk\supseteq\si}q_\sk \, .
\label{eq:change_basis_coefficients}
\end{equation}
The overall sign is simply the sign of the term $S_\si$ in $\I_{\sk}$. Notice that the inverse relation, obtained by a swap $q\leftrightarrow Q$, is identical (up to an overall coefficient) to the expression appearing in the canonical constraint associated to the index $\si$.

\paragraph{Reduction formulae:} For various applications, it will be useful to have a set of formulae which allow us to write non-trivial upliftings of the $\r$-partite information, in an $\N$-partite setting, in the $\I_\n$-basis.  We could of course simply write out the given uplifting explicitly in terms of the entropies $S_\si$ and then use the right side of \eqref{eq:IofS_and_SofI} to re-express this in term of the $\I_\n$s, but we will proceed obtain a general expression by working with the $\I_\n$s directly.
Let us begin with the simplest example and consider, in an $\N$-partite setting, an uplifting of $\widetilde{\I}_2$ with character $\vec{\n}=(2:1)$. We can write any such uplifting as a linear combination of certain trivial upliftings of $\widetilde{\I}_2$ and $\widetilde{\I}_3$ as follows
\begin{equation}
\I_2(\a_{\ell_1}\a_{\ell_2}:\a_{\ell_3})=\I_2(\a_{\ell_1}:\a_{\ell_3})+\I_2(\a_{\ell_2}:\a_{\ell_3})-\I_3(\a_{\ell_1}:\a_{\ell_2}:\a_{\ell_3})
\label{eq:I2_reduction}
\end{equation}
More generally, consider an uplifting of $\widetilde{\I}_2$ with character $\vec{\n}=({\sf p}:1)$. By iteration, using \eqref{eq:I2_reduction} we can write
\begin{align}
\I_2(\a_{\ell_1}\ldots\a_{\ell_{\sf p}}:\a_{\ell_{{\sf p}+1}})=&\;\I_2(\a_{\ell_1}:\a_{\ell_{{\sf p}+1}})+\I_2(\a_{\ell_2}:\a_{\ell_{{\sf p}+1}})+\cdots+\I_2(\a_{\ell_{\sf p}}:\a_{\ell_{{\sf p}+1}})\nonumber\\
&-\I_3(\a_{\ell_1}\ldots\a_{\ell_{{\sf p}-1}}:\a_{\ell_{{\sf p}}}:\a_{\ell_{{\sf p}+1}})\nonumber\\
&-\I_3(\a_{\ell_1}\ldots\a_{\ell_{{\sf p}-2}}:\a_{\ell_{{\sf p}-1}}:\a_{\ell_{{\sf p}+1}})\nonumber\\
&\cdots-\I_3(\a_{\ell_1}:\a_{\ell_2}:\a_{\ell_{{\sf p}+1}})
\label{eq:I2_reduction_general}
\end{align}
To further reduce this expression we need to know how to manipulate the upliftings of the $\r$-partite information, for $\r>2$.

In an $\N$-party setting, the generalization of \eqref{eq:I2_reduction} for an uplifting of $\widetilde{\I}_\r$ with character $\vec{\n}=(2:1:\cdots:1)$ is 
\begin{align}
\I_\r(\a_{\ell_1}\a_{\ell_2}:\a_{\ell_3}:\cdots:\a_{\r+1})=&\;\I_\r(\a_{\ell_1}:\a_{\ell_3}:\cdots:\a_{\r+1})+\I_\r(\a_{\ell_2}:\a_{\ell_3}:\cdots:\a_{\r+1})\nonumber\\
&-\I_{\r+1}(\a_{\ell_1}:\a_{\ell_2}:\a_{\ell_3}:\cdots:\a_{\r+1})
\label{eq:Ir_reduction}
\end{align}
To see that this is the case, it is useful to first introduce a convenient rewriting of the $\r$-partite information. Consider the abstract form \eqref{eq:Ir_abstract} of $\widetilde{\I}_\r$ and choose an abstract subsystem $\mathcal{X}_i$. We can write $\widetilde{\I}_\r$ in a form that singles out $\mathcal{X}_i$ as follows
\begin{equation}
\widetilde{\I}_\r(\mathcal{X}_1,\ldots,\mathcal{X}_\r)=S_{\mathcal{X}_i}-\sum_{\si}(-1)^{\#\si+1}S_{\mathcal{X}_i\si}+\sum_\si(-1)^{\#\si+1}S_\si
\label{eq:formal_rewriting}
\end{equation}
where the sums are over all collections of abstract subsystems, labeled by $\si$, which \textit{do not} include $\mathcal{X}_i$. Formally, one can think of the second sum in \eqref{eq:formal_rewriting} as 
\begin{equation}
\widetilde{\I}_{\r-1}(\mathcal{X}_1,\ldots,\mathcal{X}_{i-1},\mathcal{X}_{i+1},\ldots,\mathcal{X}_\r)
\end{equation}
and the first sum as a similar expression but where now each term $S_\si$ has been replaced by $S_{\mathcal{X}_i\si}$. Using \eqref{eq:formal_rewriting} on the $\I_{\r+1}$ term in \eqref{eq:Ir_reduction}, we can then write the right hand side of \eqref{eq:Ir_reduction} as (singling out $\a_{\ell_1}$)
\begin{equation}
\I_\r(\a_{\ell_1}:\a_{\ell_3}:\cdots:\a_{\r+1})-S_{\ell_1}+\sum_{\si}(-1)^{\#\si+1}S_{\ell_1\si}
\end{equation}
In the above expression, the term $S_{\a_{\ell_1}}$ cancels the same term contained in $\I_\r$. Similarly, for each term in the sum that does not include $\a_{\ell_2}$, there is a corresponding term in $\I_\r$ with the opposite coefficient, and the two terms cancel.\footnote{ For example, the term $S_{\ell_1\ell_3}$ appears with a minus sign in $\I_\r$ and with a plus sign in the sum.} Moreover, for each term in the sum which includes $\a_{\ell_2}$ (and necessarily $\a_{\ell_1}$), there was a similar term in $\I_\r$ (which we just canceled)  which included  $\a_{\ell_1}$ but not $\a_{\ell_2}$ and had the same sign.\footnote{ For example the sum contains the term $S_{\ell_1\ell_2\ell_3}$ with the same sign of $S_{\ell_1\ell_3}$ in $\I_\r$.} Finally, all terms in $\I_\r$ which do not include $\a_{\ell_1}$ are unaffected. Therefore, what the sum effectively does, is to replace each term in $\I_\r$ which includes  $\a_{\ell_1}$ with a similar term that contains both $\a_{\ell_1}$ and $\a_{\ell_2}$, effectively ``merging'' the two colors. This proves \eqref{eq:Ir_reduction}. Notice that the formula is valid also in the particular case where $\r=1$, using the definition \eqref{eq:1partite}. 

Going back to \eqref{eq:I2_reduction_general}, we can now use \eqref{eq:Ir_reduction} and rewrite all non-trivial upliftings of $\widetilde{\I}_3$ into expressions that contain instances of $\widetilde{\I}_3$ and $\widetilde{\I}_4$. We proceed in this fashion until we obtain a full decomposition into the elements of the basis. For example, in the particular case where ${\sf p}=3$ \eqref{eq:I2_reduction_general} reduces to
\begin{align}
\begin{split}
\I_2(\a_{\ell_1}\a_{\ell_2}\a_{\ell_3}:\a_{\ell_4})=&\;\I_2(\a_{\ell_1}:\a_{\ell_4})+\I_2(\a_{\ell_2}:\a_{\ell_4})+\I_2(\a_{\ell_3}:\a_{\ell_4})\\
&-\I_3(\a_{\ell_1}\a_{\ell_2}:\a_{\ell_3}:\a_{\ell_4})-\I_3(\a_{\ell_1}:\a_{\ell_2}:\a_{\ell_4})\\
=&\;\I_2(\a_{\ell_1}:\a_{\ell_4})+\I_2(\a_{\ell_2}:\a_{\ell_4})+\I_2(\a_{\ell_3}:\a_{\ell_4})\\
&-\I_3(\a_{\ell_1}:\a_{\ell_2}:\a_{\ell_4})-\I_3(\a_{\ell_1}:\a_{\ell_3}:\a_{\ell_4})\\
&-\I_3(\a_{\ell_2}:\a_{\ell_3}:\a_{\ell_4})+\I_4(\a_{\ell_1}:\a_{\ell_2}:\a_{\ell_3}:\a_{\ell_4})
\end{split}
\end{align}
In the formula above, notice that the expansion contains instances of the $\r$-partite information for $2\leq\r\leq{\sf p}+1=4$. Moreover, all terms with the same rank have the same sign, which is alternating as $\r$ increases, starting from a positive sign for the instances of $\widetilde{\I}_2$. Finally, notice that all the instances of the same rank are obtained as follows. To specify an instance, which is a trivial uplifting, we have to choose $\r$ colors out of the collection $\{\a_{\ell_1},\a_{\ell_2},\a_{\ell_3},\a_{\ell_4}\}$. As clear from the above formula, we always have to include $\a_{\ell_4}$ in our choice, while we should consider all possible subsets of $\{\a_{\ell_1},\a_{\ell_2},\a_{\ell_3}\}$ with $\r-1$ elements. Each choice gives an element of the basis that enters into the final expansion. More generally, for an uplifting of the mutual information with character $\vec{\n}=({\sf p}:1)$ we can therefore write (defining $\si({\sf p})=\{\ell_1,\ell_2,\ldots,\ell_{\sf p}\}$)
\begin{align}
\begin{split}
\I_2(\a_{\ell_1}\ldots\a_{\ell_{\sf p}}:\a_{\ell_{\sf p}+1})=&\;\sum_{\ell_i\in{\si({\sf p})}}\I_2(\a_{\ell_i}:\a_{\ell_{\sf p}+1})\\
&-\sum_{\ell_i,\ell_j\in{\si({\sf p})}}\I_3(\a_{\ell_i}:\a_{\ell_j}:\a_{\ell_{\sf p}+1})+\cdots\\
&-(-1)^{{\sf p}}\I_{{\sf p}+1}(\a_{\ell_1}:\cdots:\a_{\ell_{\sf p}}:\a_{\ell_{{\sf p}+1}})
\end{split}
\label{eq:I2p_expansion}
\end{align}

Proceeding in a similar fashion, the key formula \eqref{eq:Ir_reduction} also allows us to immediately derive a similar expression for an uplifting of $\widetilde{\I}_\r$ with character $\vec{\n}=({\sf p}:1:1:\cdots:1)$, as a sum of trivial upliftings of $\widetilde{\I}_{\r'}$, with $\r\leq\r'\leq{\sf p}+\r-1\leq\N$. By analogy with \eqref{eq:I2p_expansion}, we have
\begin{align}
\begin{split}
\I_\r(\a_{\si({\sf p})}:\a_{\ell_{{\sf p}+1}}:\cdots:\a_{\ell_{{\sf p}+\r-1}})=&\sum_{\ell_i\in{\si({\sf p})}}\I_\r(\a_{\ell_i}:\a_{\ell_{{\sf p}+1}}:\cdots:\a_{\ell_{{\sf p}+\r-1}})\\
&-\!\sum_{\ell_i,\ell_j\in{\si({\sf p})}}\I_{\r+1}(\a_{\ell_i}:\a_{\ell_j}:\a_{\ell_{{\sf p}+1}}:\cdots:\a_{\ell_{{\sf p}+\r-1}})+\cdots\\
&-(-1)^{{\sf p}}\,\I_{{\sf p}+\r-1}(\a_{\ell_1}:\cdots:\a_{\ell_{\sf p}}:\a_{\ell_{{\sf p}+1}}:\cdots:\a_{\ell_{{\sf p}+\r-1}})\\
\end{split}
\label{eq:general_reduction_formula}
\end{align}
Finally, since the $\r$-partite information is completely symmetric, we can use this relation, again by iteration, to obtain the decomposition of any uplifting of $\widetilde{\I}_\r$, with general form
\begin{equation}
\I_\r(\a_{\si_{n_1}}:\a_{\si_{n_2}}:\cdots:\a_{\si_{n_\r}})
\end{equation}
First, in \eqref{eq:general_reduction_formula}, we simply replace the monochromatic subsystems $\{\a_{\ell_{{\sf p}+1}},\ldots,\a_{\ell_{{\sf p}+\r-1}}\}$ with the polychromatic subsystems $\a_{\si_{n_2}},\ldots,\a_{\si_{n_\r}}$ and obtain an expression where all the colors in $\a_{\si_{n_1}}$ are separated, i.e., they are not merged into polychromatic subsystems in any term. Next, we proceed in the same fashion for all terms, treating the subsystems $\{\a_{\ell_{{\sf p}+2}},\ldots,\a_{\ell_{{\sf p}+\r-1}}\}$ as single colors and reducing the polychromatic subsystem $\a_{\si_{n_2}}$. We can then proceed in this fashion until all subsystems in all terms are monochromatic, obtaining the desired expansion.

\paragraph{Transformation under purifications:} For certain applications (see in particular \S\ref{sec:sieve}), it will be convenient to know how an information quantity written in the $\I_ n$-basis transforms under purifications. Since the purification operator acts on an information quantity as a linear operator, we just need to know how the elements of the basis transform under this operation. Following the discussion of the previous section, we should distinguish the effect of the purification on the character of a particular instance from how it changes the abstract form of a given quantity; we begin by discussing the latter. 

First, consider the case where $\r$ is odd and we take the purification with respect to a subsystem $\mathcal{X}_i$. From the formal expression \eqref{eq:Ir_abstract} which defines $\widetilde{\I}_\r$, and the definition \eqref{eq:purification_operator} of the purification operator $\mathbb{P}_i$, one can notice that this transformation maps a term in \eqref{eq:Ir_abstract} to another one which is also present in \eqref{eq:Ir_abstract} and has the same coefficient. Therefore, for odd $\r$, $\widetilde{\I}_\r$ is invariant under purifications (obviously this is true for all $\mathcal{X}_i$ due to the symmetries). 

On the other hand, for even $\r$, $\widetilde{\I}_\r$ is not invariant and is mapped by $\mathbb{P}_i$ to a new quantity. To see what the defining expression of this quantity is, it is convenient to start from the rewriting of $\widetilde{\I}_\r$ given in \eqref{eq:formal_rewriting}. The first term is mapped to $S_{\mathcal{X}_1\mathcal{X}_2\ldots\mathcal{X}_\r}$, which appears in the first sum but with the opposite sign (likewise such a term in the first sum is mapped to $S_{\mathcal{X}_i}$). Similarly, all other terms in the sum come in pairs which are swapped under the action of $\mathbb{P}_i$. Since the second sum is unaffected by $\mathbb{P}_i$ (because the terms do not contain $\mathcal{X}_i$), the result of the transformation is just a change of sign for the first two terms in \eqref{eq:formal_rewriting} and we can write
\begin{equation}
\mathbb{P}_i\,\widetilde{\I}_\r(\mathcal{X}_1,\ldots,\mathcal{X}_\r)=-S_{\mathcal{X}_i}+\sum_{\si}(-1)^{\#\si+1}S_{\mathcal{X}_i\si}+\sum_\si(-1)^{\#\si+1}S_\si
\label{eq:Ir_formal_purification}
\end{equation}
Furthermore, from the obvious identity
\begin{equation}
\begin{split}
\widetilde{\I}_\r(\mathcal{X}_1,\ldots,\mathcal{X}_\r)&=\widetilde{\I}_\r(\mathcal{X}_1,\ldots,\mathcal{X}_\r)-\widetilde{\I}_{\r-1}(\mathcal{X}_1,\ldots,\mathcal{X}_{i-1},\mathcal{X}_{i+1},\ldots,\mathcal{X}_\r) \\
& \quad +\;\widetilde{\I}_{\r-1}(\mathcal{X}_1,\ldots,\mathcal{X}_{i-1},\mathcal{X}_{i+1},\ldots,\mathcal{X}_\r)
\end{split}
\end{equation}
and using \eqref{eq:Ir_formal_purification} we obtain the useful formula
\begin{equation}
\mathbb{P}_i\,\widetilde{\I}_\r(\mathcal{X}_1,\ldots,\mathcal{X}_\r)=-\widetilde{\I}_\r(\mathcal{X}_1,\ldots,\mathcal{X}_\r)+2\,\widetilde{\I}_{\r-1}(\mathcal{X}_1,\ldots,\mathcal{X}_{i-1},\mathcal{X}_{i+1},\ldots,\mathcal{X}_\r)
\end{equation}
Because of the symmetry of $\widetilde{\I}_\r$, purifying with respect to a different subsystem $\mathcal{X}_j$ would simply give the same formal expression, but where the abstract subsystems have been permuted; in other words, it gives a different isomer of the same quantity. 

Having seen how the abstract form of the $\r$-partite information transforms under purifications, we can then write convenient formulas for the transformation of the trivial upliftings, which are the elements of the basis. For odd $\r$ we have
\begin{equation}
\mathbb{P}_{\ell_i}\,\I_\r(\a_{\ell_1}:\cdots:\a_{\ell_i}:\cdots:\a_{\ell_\r})= \I_\r(\a_{\ell_1}:\cdots:\a_{\ell_i}\a_{\ell_{\r+1}}\ldots\a_{\ell_\N}:\cdots\a_{\ell_\r})
\label{eq:Ir_odd_uplifting_purification}
\end{equation}
and when $\r$ is even
\begin{align}
\begin{split}
\mathbb{P}_{\ell_i}\I_\r(\a_{\ell_1}:\cdots:\a_{\ell_i}:\cdots:\a_{\ell_\r})=\; &-\I_\r(\a_{\ell_1}:\cdots:\a_{\ell_i}\a_{\ell_{\r+1}}\ldots\a_{\ell_\N}:\cdots\a_{\ell_\r})\\
&+2\,\I_{\r-1}(\a_{\ell_1}:\cdots:\a_{\ell_{i-1}}:\a_{\ell_{i+1}}:\cdots\a_{\ell_\r})
\end{split}
\label{eq:Ir_even_uplifting_purification}
\end{align}
Both expressions \eqref{eq:Ir_odd_uplifting_purification} and \eqref{eq:Ir_even_uplifting_purification} can then be rewritten in terms of the basis elements using \eqref{eq:general_reduction_formula}.

\subsection{Balanced and superbalanced measures of correlations in QFT}
\label{subsec:super_balance}

Having introduced a convenient basis in the space of information quantities, we will now discuss some of their algebraic properties, focusing on 
how they relate to cancellation of divergences when these quantities are evaluated on configurations in an arbitrary QFT. The discussion here is independent of whether an information quantity is primitive or not.

In an $\N$-party setting, consider an information quantity $\bQ$ (we drop the rank since it is irrelevant for the purpose of this discussion). We will say that $\bQ$ is \textit{balanced with respect to the color $\ell\in [\N]$}, if the coefficients $Q_\si$ of $\bQ$ in the entropy basis satisfy the following constraint 
\begin{equation}
\sum_{\si,\; \ell\in\si} Q_\si=0 \, .
\label{eq:balance_condition}
\end{equation}
Notice that this is precisely the canonical constraint $\f^\text{can}_\ell$ and that we can think of the space of solutions to this equation, which is a $({\sf D}-1)$-dimensional linear subspace in the space of information quantities, as the space of quantities which are balanced with respect to the color $\ell$. We will be interested in particular information quantities which are balanced with respect to all colors and we define

\begin{definition}
\emph{\textbf{(Balance v1)}} In an $\N$-party setting, an information quantity $\bQ$ is \emph{balanced} if it is balanced with respect to all colors $\ell\in [\N]$, i.e., if its coefficients $Q_\si$ satisfy the set of constraints $\mathfrak{F}_{[\N]}^\text{\emph{can}}$.
\label{definition:balance_v1}
\end{definition}

The space of balanced quantities in an $\N$-party setting is then a $({\sf D}-\N)$-dimensional linear subspace of the space of information quantities that we will call the \textit{balance subspace}. 

The reason for considering this property is that balanced quantities have a particularly nice behavior as measures of correlations in a QFT. To see this, we will first show that a balanced quantity can always be written as a linear combination of instances of the mutual information. For a given $\N$, consider the collection $\widetilde{\I}_2(\domain_2)$ of all instances of the mutual information $\widetilde{\I}_2$. Since each element of the set is a balanced quantity, the span of all the element of $\widetilde{\I}_2(\domain_2)$ is a subspace of the balance subspace. Moreover, the dimension of this subspace is $({\sf D}-\N)$, and it therefore follows that the instances of the mutual information span the whole balanced subspace. 

To show that this is the case, we just need to show that for any $\N$, $\widetilde{\I}_2(\domain_2)$ contains $({\sf D}-\N)$ linearly independent quantities. For clarity, let us  first take the simple example of $\N=3$, and consider the following collection of instances of $\widetilde{\I}_2$
\begin{equation}
\{\I_2(\a:\b),\I_2(\a:\cs),\I_2(\b:\cs),\I_2(\b\cs:\a)\}\subset\widetilde{\I}_2(\domain_2)
\label{eq:I2_instances_collection_N3}
\end{equation}
The first element is the only one in this collection which contains the term $S_{\a\b}$ in its defining expression in the entropy basis. Similarly, the other three terms $S_{\a\cs},S_{\b\cs},S_{\a\b\cs}$ are contained only by exactly one of the other three elements of the collection. Therefore, the four information quantities in \eqref{eq:I2_instances_collection_N3} are linearly independent, forming a basis in the $4$-dimensional balance subspace for $\N=3$. 

It is straightforward to generalize this construction to arbitrary $\N$ using Young tableaux. We have seen on multiple occasions that for fixed $\N$, the set of all polychromatic indices with degree $2\leq\n\leq\N$ is $({\sf D}-\N)$. We want to construct a collection of instances of $\widetilde{\I}_2$ such that a term $S_{\si}$, where $\si$ has degree in the aforementioned range, appears in exactly one element of the collection. The instances of $\widetilde{\I}_2$ are obtained by decorating, according to the rules described in \S\ref{sec:arrangement}, the following Young tableaux 
\begin{equation}
\begin{split}
\ytableausetup
 {mathmode, boxsize=1.3em}
\begin{ytableau}
{} \\
{}  
\end{ytableau}
\qquad
\begin{ytableau}
{} &  {}\\
{}  
\end{ytableau}
\qquad
\begin{ytableau}
{} &  {} &  {}\\
{}  
\end{ytableau}
\qquad
\cdots
\qquad
\begin{ytableau}
{} &  {} &  {} & \cdots &  {}\\
{}  
\end{ytableau}
\end{split}
\label{eq:tableaux_I2_collection}
\end{equation}
Each tableau is associated to an instance of the mutual information which contains a term $S_\si$, where the degree of $\si$ is equal to the total number of boxes in the tableau. To construct the desired collection, we then simply need to decorate all the tableaux with colors assigned in increasing order, both from left to right and top to bottom, in all possible ways.

Having shown that, for arbitrary $\N$, the balance subspace is spanned by the instances of the mutual information, we can then prove the following important fact

\begin{lemma}
All balanced quantities are finite and scheme-independent in QFT when evaluated on a disjoint configuration. 
\label{lemma:balance}
\end{lemma}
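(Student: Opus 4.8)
The plan is to reduce the statement to the corresponding fact for the mutual information, which has already been set up in the immediately preceding paragraphs. The key structural input is that the balance subspace (the space of quantities balanced with respect to all colours) is spanned by instances $\widetilde{\I}_2(\mathcal{X}_1,\mathcal{X}_2)$ of the mutual information, as just shown via the Young-tableaux construction. Thus any balanced $\bQ$ can be written as a finite $\mathbb{Q}$-linear combination $\bQ=\sum_a c_a\,{\bf I}_2(\a_{\si_a}:\a_{\sj_a})$, where for each term the two polychromatic arguments $\a_{\si_a}$ and $\a_{\sj_a}$ are disjoint as subsets of $[\N]$.

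First I would recall the standard fact (already invoked in \S\ref{subsec:review0} around Eq.~\eqref{eq:relation}) that the mutual information ${\bf I}_2(\mathcal{X}:\mathcal{Y})$ between two spatial regions with $\overline{\mathcal{X}}\cap\overline{\mathcal{Y}}=\emptyset$ (no shared boundary) is UV-finite and regulator-independent: the leading area-law divergence of $S(\rho_\mathcal{X})$, $S(\rho_\mathcal{Y})$ and $S(\rho_{\mathcal{X}\mathcal{Y}})$ is localized on the entangling surfaces $\partial\mathcal{X}$, $\partial\mathcal{Y}$, and in the combination $S_\mathcal{X}+S_\mathcal{Y}-S_{\mathcal{XY}}$ one has $\partial(\mathcal{XY})=\partial\mathcal{X}\sqcup\partial\mathcal{Y}$ when the regions do not touch, so all local contributions supported near the entangling surfaces cancel pointwise, leaving a finite scheme-independent remainder. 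The second step is to note that in a disjoint configuration $\c_\N$ (every region $\a_\ell^i$ disjoint from every other, in the strong sense of not sharing boundary portions), any two \emph{polychromatic} subsystems $\a_\si$ and $\a_\sj$ with $\si\cap\sj=\emptyset$ are themselves non-adjoining unions of regions, so ${\bf I}_2(\a_\si:\a_\sj)(\c_\N,\psi_\Sigma)$ is finite and scheme-independent by the same argument applied to the union of the relevant entangling surfaces. Finally, a finite $\mathbb{Q}$-linear combination of finite, scheme-independent quantities is finite and scheme-independent, so $\bQ(\c_\N,\psi_\Sigma)=\sum_a c_a\,{\bf I}_2(\a_{\si_a}:\a_{\sj_a})(\c_\N,\psi_\Sigma)$ inherits these properties. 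This completes the proof.

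I should be slightly careful about one point: the decomposition $\bQ=\sum_a c_a\,{\bf I}_2(\a_{\si_a}:\a_{\sj_a})$ is an identity of abstract information quantities (equality of coefficient vectors in the entropy basis), and I must argue it descends to an identity of the evaluated (proto-)entropy combinations for the given $(\c_\N,\psi_\Sigma)$. This is immediate because evaluation $\bQ\mapsto\bQ({\bf S}(\c_\N,\psi_\Sigma))$ is $\mathbb{Q}$-linear in the coefficients $Q_\si$, and likewise the divergence structure of $S_\epsilon(\rho_{\a_\si})$ depends only on $\rho_{\a_\si}$, hence only on $(\c_\N,\psi_\Sigma)$, not on which linear combination we embed it in.

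The main obstacle is not the algebra but pinning down precisely what ``finite and scheme-independent'' means for the mutual information of non-adjoining regions in a \emph{generic} QFT, and in particular which class of regulators is being quantified over; the cleanest route is to treat this as the known input (it is the textbook statement that mutual information is a UV-finite correlation measure, tied to its relative-entropy definition, as the paper itself notes) and cite it rather than re-derive it. A secondary subtlety worth a sentence is that disjointness of the monochromatic regions $\a_\ell^i$ does \emph{not} by itself guarantee that distinct polychromatic subsystems $\a_\si,\a_\sj$ are non-adjoining unless $\si\cap\sj=\emptyset$; but in the mutual-information decomposition every pair of arguments is disjoint as a set of colours by construction, so this is automatic, and I would make that observation explicit to close the argument.
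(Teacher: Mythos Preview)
Your proposal is correct and follows essentially the same approach as the paper's own proof: decompose an arbitrary balanced quantity as a finite linear combination of instances of the mutual information (using the spanning result established just before the lemma), invoke the finiteness and scheme-independence of each $\I_2$ instance on a disjoint configuration, and conclude by linearity. Your additional remarks about why the polychromatic arguments in the decomposition are automatically non-adjoining and about linearity of evaluation are useful clarifications, but the core argument is identical.
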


\begin{proof}
Any instance of the mutual information is finite and scheme-independent when evaluated on a disjoint configuration. If a quantity is balanced we showed above that it can be written as a linear combination of a finite number of instances of the mutual information. Therefore such quantity is finite and scheme-independent.
\end{proof}

It is interesting to notice that, for any $\N\geq 3$, the collection of instances of $\widetilde{\I}_2$ that we constructed above, using the tableaux in \eqref{eq:tableaux_I2_collection}, is a proper subset of the full set of instances $\widetilde{\I}_2(\domain_2)$. This follows from the fact that the rules we used to decorate the tableaux were more restrictive than the ones introduced in \S\ref{sec:arrangement} to construct all instances. Therefore, the set $\widetilde{\I}_2(\domain_2)$ not only contains a basis of the balance subspace, but is overcomplete. This fact has interesting consequences. 

Consider again the case $\N=3$. The set of instances of $\widetilde{\I}_2$ contains, besides the quantities listed in \eqref{eq:I2_instances_collection_N3}, also the quantities $\I_2(\a\b:\cs)$ and $\I_2(\a\cs:\b)$. Using these two quantities, together with the ones in \eqref{eq:I2_instances_collection_N3}, we can write the tripartite information in three alternative ways
\begin{equation}
\I_3(\a:\b:\cs)=
\begin{cases}
\I_2(\a:\cs)+\I_2(\b:\cs)-\I_2(\a\b:\cs)\\
\I_2(\a:\b)+\I_2(\b:\cs)-\I_2(\a\cs:\b)\\
\I_2(\a:\b)+\I_2(\a:\cs)-\I_2(\b\cs:\a)
\end{cases}
\label{eq:I3_alternatives}
\end{equation}
Consider now a simple configuration $\c_3$ made of just three regions, one per color,  such that the regions $\a$ and $\cs$ are adjoining while $\b$ is disjoint from both $\a$ and $\cs$. We now want to evaluate $\I_3(\a:\b:\cs)$ on this configuration. If we use the first expression in \eqref{eq:I3_alternatives}, the first and last terms are divergent and it is unclear if the value of $\I_3(\c_3)$ would be finite. However, if we use the second expression in \eqref{eq:I3_alternatives}, all terms are finite and make manifestly clear that $\I_3(\c_3)$ is finite and scheme independent. 

This simple example shows two important facts. First, a balanced quantity can have a domain of applicability, as a useful measure of correlations in QFT, which extends beyond the set of non-adjoining configurations. Second, given an information quantity $\bQ$ and a configuration $\c_\N$ in an $\N$-party setting, one can use the fact that the set of instances of the mutual information is overcomplete to explore the relation between the cancellation of divergences in $\bQ(\c_\N)$ and the pattern of adjacency relations among the regions in $\c_\N$. We leave the systematic investigation of these interesting properties for future work (see \S\ref{sec:discuss} for further comments).

A balanced quantity does not necessarily remain balanced when it transforms under purifications. A simple example is the mutual information, which as we have seen is mapped under purifications to the quantity associated to the Araki-Lieb inequality, and the latter is not balanced. In general, to see how purifications affect balance, it is useful to work in the $\I_\n$-basis and introduce an alternative (but equivalent) definition of balance.

\begin{definition}
\emph{\textbf{(Balance v2)}} In an $\N$-party setting, an information quantity $\bQ$ is \emph{balanced} if its expansion in the $\I_\n$-basis \emph{does not} contain any instance of the $1$-partite information.
\label{definition:balance_v2}
\end{definition}

To see that this definition is equivalent to the one given above, first notice that, for any $\N$, all the elements of the $\I_\n$-basis with $\n\geq 2$ are balanced. The space generated by these information quantities is then a subspace of the balance subspace, and since the total number of these quantities is again $({\sf D}-\N)$, this subspace coincides with the full balance subspace.

In an $\N$-party setting, consider a balanced information quantity $\bQ$ written in the $\I_\n$-basis. As we discussed, the purification operator $\mathbb{P}_\ell$ acts linearly on $\bQ$ and its action on the elements of the basis is given by \eqref{eq:Ir_odd_uplifting_purification} or \eqref{eq:Ir_even_uplifting_purification}. Let us denote by $\r^\text{min}$ the minimal rank which appears in the expansion of $\bQ$. If $\r^\text{min}$ is odd, it will remain unchanged after the transformation $\mathbb{P}_\ell$. The reason is that the terms of the basis with rank $\r^\text{min}$ are mapped to instances of the same rank according to \eqref{eq:Ir_odd_uplifting_purification}, while terms of higher rank are mapped, according to \eqref{eq:Ir_even_uplifting_purification}, to linear combinations of new quantities which have a minimal rank which is at most one unit smaller than $\r^\text{min}$. For the same reason, if $\r^\text{min}$ is even, it can decrease at most by one under the action of $\mathbb{P}_\ell$. Finally, suppose that after acting with $\mathbb{P}_\ell$ on $\bQ$, we purify with respect to another color, to obtain a new quantity $\mathbb{P}_{\ell'}\mathbb{P}_\ell\bQ$. If $\r^\text{min}$ was odd for $\bQ$, it did not change under the action of $\mathbb{P}_{\ell}$ and it will not change under the action of $\mathbb{P}_{\ell'}$ either. If it was even, it is $\r^\text{min}-1$ for $\mathbb{P}_{\ell}\bQ$, which is odd, and it will not change under $\mathbb{P}_{\ell'}$. 

From this argument, it follows that the minimal rank $\r^\text{min}$ associated to the expansion of an information quantity $\bQ$ in the $\I_\n$-basis can decrease at most by one under the action of an arbitrary combination of purification transformations. In particular, this means that a balanced quantity $\bQ$ will remain balanced if and only if its expansion in the $\I_\n$-basis does not contain any instance of the mutual information.  We therefore define

\begin{definition}
\emph{\textbf{(Superbalance)}} In an $\N$-party setting, an information quantity $\bQ$ is \emph{superbalanced} if its expansion in the $\I_\n$-basis \emph{does not} contain any instance of the $1$-partite and $2$-partite information.
\label{definition:superbalance}
\end{definition}

Heuristically, it seems reasonable to expect that information quantities which are superbalanced are particularly well behaved with respect to divergences in QFT, when evaluated on configurations which have several adjoining regions. This observation suggests the following generalization

\begin{definition}
\emph{\textbf{($\r$-balance)}} In an $\N$-party setting, an information quantity $\bQ$ is $\r$-\emph{balanced} if its expansion in the $\I_\n$-basis \emph{does not} contain terms of rank $\r$ or lower.
\label{definition:r_balance}
\end{definition}

According to this definition, an information quantity which is balanced is $1$-balanced, while a quantity which is superbalanced is $2$-balanced. Notice that we have defined superbalance, and more generally $\r$-balance, as a generalization of Definition~\ref{definition:balance_v2}. We will see in the next section how these definitions can be translated into certain constraints for the coefficients of an information quantity in the original entropy basis, in analogy to the Definition~\ref{definition:balance_v1} that we gave for balance.

\subsection{Relation between (super)balance and generating configurations}
\label{subsec:balanced_primitives}

In this section we will focus on primitive quantities and comment on the relation between their $\r$-balance and the configurations from which they are generated. We start with the following result about primitive quantities which are balanced.

\begin{lemma}
Primitive quantities generated by disjoint configurations are balanced.
\label{lemma:balance_from_configurations}
\end{lemma}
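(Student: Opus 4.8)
The plan is to connect the constraint structure associated to a disjoint configuration with the balance condition \eqref{eq:balance_condition}, exploiting Lemma~\ref{lemma:canonical_form} together with the notion of \textit{enveloping} introduced in \S\ref{subsec:review3}. Recall that a primitive quantity $\bQ$ is generated by a configuration $\c_\N$ precisely when the set of constraints $\{\f(\c_\N)\}$ has a one-dimensional solution space, and $\bQ$ spans that space. So the first step is to understand the structure of $\{\f(\c_\N)\}$ for a \emph{disjoint} configuration, i.e.\ one in which no two regions $\a_{\ell_1}^{i_1}, \a_{\ell_2}^{i_2}$ share any portion of their boundary. The claim to establish is that, for such configurations, the constraints $\{\f(\c_\N)\}$ always include (are equivalent, up to linear combinations, to a set containing) the $\N$ canonical monochromatic constraints $\mathfrak{F}_{[\N]} = \{\f^{\rm can}_\ell,\ \ell\in[\N]\}$. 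Once that is shown, balance of $\bQ$ is immediate: each $\f^{\rm can}_\ell$ is exactly the equation $\sum_{\si,\,\ell\in\si} Q_\si = 0$, which is the statement that $\bQ$ is balanced with respect to $\ell$; since $\bQ$ must satisfy every constraint in $\{\f(\c_\N)\}$, it satisfies all $\N$ of these, hence is balanced in the sense of Definition~\ref{definition:balance_v1}.

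The key geometric input is the following: for a disjoint configuration, the extremal surface $\extr{\a_\ell}$ homologous to a single monochromatic subsystem $\a_\ell$ contains at least one connected component $\esf_0$ that appears \emph{only} in $S_\ell$ (or more carefully: the homology region forces a component anchored to $\partial\a_\ell$ that is not shared with the entangling surface of any other color). The cleanest route is to argue at the level of the near-boundary (UV) structure: near the entangling surface $\partial\a_\ell^i$ of any region, the RT surface has a component hugging that boundary, and because the regions of different colors do not adjoin, that component's near-boundary piece is distinct from the analogous pieces for all other colors and all polychromatic combinations. Tracking which $S_\si$ this component enters — namely exactly those $\si$ with $\ell\in\si$ — gives a constraint of the form $\sum_{\ell\in\si} Q_\si = 0$ upon imposing $\bQ(\mathbf{S}(\c_\N))=0_{\boldsymbol{\mathscr E}}$, which is $\f^{\rm can}_\ell$. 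Doing this for every $\ell$ yields $\mathfrak{F}_{[\N]} \subseteq \{\f(\c_\N)\}$ (up to the usual equivalence by linear combinations, \eqref{eq:equivalence3}). Alternatively, and perhaps more in the spirit of the review, one can invoke Lemma~\ref{lemma:canonical_form} — which already tells us $\{\f(\c_\N)\}$ is equivalent to a subset of $\mathfrak{F}^{\rm can}$ (the excerpt notes this generalizes from $\mathfrak{C}_\N$ to the full disjoint scenario, \S\ref{sec:degenerate}) — and then observe that a \emph{generating} configuration (one-dimensional solution space) must have $\N-1$ further constraints beyond the solution ray, forcing, by Lemma~\ref{lemma:building_blocks}'s disjoint-scenario analogue, the inclusion $\mathfrak{F}_{[\N]}\subseteq\mathfrak{F}$.

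Concretely I would organize the proof as: (i) recall that by the (disjoint-scenario generalization of) Lemma~\ref{lemma:canonical_form}, the constraints of any disjoint $\c_\N$ are equivalent to a subset $\mathfrak{F}\subseteq\mathfrak{F}^{\rm can}$; (ii) recall the (disjoint-scenario generalization of) Lemma~\ref{lemma:building_blocks}, asserting that a realizable $\mathfrak{F}$ must satisfy $\mathfrak{F}\supseteq\mathfrak{F}_{[\N]}$; (iii) since $\bQ$ is generated by $\c_\N$, $\bQ$ satisfies all of $\mathfrak{F}$, in particular all of $\mathfrak{F}_{[\N]}$, i.e.\ $\f^{\rm can}_\ell(\bQ)=0$ for every $\ell\in[\N]$; (iv) by Definition~\ref{definition:balance_v1}, this is exactly the statement that $\bQ$ is balanced. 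The main obstacle is step (ii): the excerpt explicitly flags that Lemma~\ref{lemma:building_blocks} was proven in \citep{Hubeny:2018trv} only for the restricted topological class $\mathfrak{C}_\N$ (non-enveloping, disjoint), and promises the generalization to the full disjoint scenario — including enveloping configurations — "in \S\ref{sec:degenerate}". So the real content here is the enveloping case: one must show that even when some region envelops another, the monochromatic constraints $\f^{\rm can}_\ell$ still appear. I expect this is handled by the purely UV/near-boundary argument sketched above (enveloping is a global, IR feature and does not alter the short-distance structure of $\extr{\a_\ell}$ near each $\partial\a_\ell^i$), which is why the near-boundary version of the argument is the more robust one to present. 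Modulo that generalization lemma, the proof of Lemma~\ref{lemma:balance_from_configurations} is a one-line consequence: a generating disjoint configuration forces $\mathfrak{F}_{[\N]}\subseteq\{\f(\c_\N)\}$, and $\mathfrak{F}_{[\N]}$ \emph{is} the system of balance conditions.
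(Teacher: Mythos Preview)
Your overall strategy---reduce the lemma to showing $\mathfrak{F}_{[\N]}\subseteq\{\f(\c_\N)\}$ for any disjoint $\c_\N$, then read off balance from Definition~\ref{definition:balance_v1}---is exactly the paper's. But both of your proposed routes to that inclusion have genuine problems. In your ``concrete organization'' (i)--(iv), step (i) invokes a disjoint-scenario generalization of Lemma~\ref{lemma:canonical_form}; no such generalization exists, and in fact \S\ref{sec:four} is devoted to showing that enveloping configurations produce constraints that are \emph{not} linear combinations of canonical ones (e.g.\ $\f^\circledast_{\ell(\si_\n)}$ in \eqref{eq:non_canonical_constraint}). Step (ii) invokes the disjoint-scenario generalization of Lemma~\ref{lemma:building_blocks}, but the ``only if'' direction of that generalization \emph{is} the present lemma---that is circular.

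Your direct geometric sketch is closer, but the mechanism you describe does not work as stated. You try to isolate a single connected component $\esf_0$ anchored to $\partial\a_\ell$ and claim it enters ``exactly those $\si$ with $\ell\in\si$''. This is false already for a canonical building block $\c_3^\circ[\a\b\cs]$: the dome $a$ over $\a$ is anchored only to $\partial\a$, yet it enters $S_\a,S_{\a\b},S_{\a\cs}$ but \emph{not} $S_{\a\b\cs}$ (computed by the octopus). The paper's argument is instead a \emph{sum over surfaces}: fix a region $\a_\ell^i$; for each $\si$ with $\ell\in\si$, the proto-entropy $S_\si$ contains exactly one connected surface anchored to $\partial\a_\ell^i$ (this is where disjointness enters---the boundary component is not shared). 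These surfaces may differ as $\si$ varies, but each gives a constraint in $\{\f(\c_\N)\}$, and summing those constraints yields precisely $\sum_{\si\ni\ell}Q_\si=0$, i.e.\ $\f^{\rm can}_\ell$. The summing step is the missing idea in your sketch, and it is what makes the argument go through uniformly for enveloping and non-enveloping configurations alike.
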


\begin{proof}
To prove the statement, we only need to prove that the set of constraints associated to an arbitrary disjoint configuration always includes the set of  constraints $\mathfrak{F}^{\text{can}}_{[\N]}$, since these are precisely the conditions required for an information quantity to be balanced. In an $\N$-party setting, consider a disjoint configuration $\c_\N$. We do not impose any restriction on the configuration other than the fact that it is disjoint. In particular, we allow an arbitrary level of enveloping, and an arbitrary number of regions for each color. For a color $\ell$, consider an arbitrary region $\a_\ell^i\in\c_\N$ and let us call its boundary $\partial\a_\ell^i$. For each term $S_\si$ of the entropy vector, with $\ell\in\si$, the formal sum which computes the proto-entropy contains exactly one connected surface $\omega\in{\bm \Omega}(\c_\N)$ anchored to $\partial\a_\ell^i$. The sum of the constraints associated to all these surfaces is therefore precisely $\f^\text{can}_\ell$. Repeating the argument for all other colors we obtain the set of constraints $\mathfrak{F}^{\text{can}}_{[\N]}$.
\end{proof}

More generally, to see how superbalanced, or $\r$-balanced, primitive quantities can emerge from configurations, we should first understand how to translate Definition~\ref{definition:superbalance} and Definition~\ref{definition:r_balance} into relations for the coefficients of an information quantity $\bQ$ in the entropy basis. 

For clarity, let us consider the simple example of a superbalanced quantity $\bQ$ for $\N=3$ (i.e., the tripartite information), the generalization will be obvious. According to Definition~\ref{definition:superbalance}, the requirement that $\bQ$ is superbalanced is equivalent to the following set of constraints for the coefficients $q_\si$ of $\bQ$ in the $\I_\n$-basis \eqref{eq:info_quantity_In_basis}
\begin{align}
\begin{split}
&q_\a=q_\b=q_\cs=0\\
&q_{\a\b}=q_{\a\cs}=q_{\b\cs}=0
\end{split}
\end{align}
The first row of constraints enforces balance, while the second one is required for superbalance. Using the inverse of \eqref{eq:change_basis_coefficients} (which is formally identical), we can rewrite these constraints in terms of the coefficients of $\bQ$ in the entropy basis. Up to an irrelevant overall coefficient, these are the canonical form constraints
\begin{equation}
\mathfrak{F}_{[3]}\cup\{\f^\text{can}_{\a\b},\f^\text{can}_{\a\cs},\f^\text{can}_{\b\cs}\}
\end{equation}
which are the ones that generate $\I_3(\a:\b:\cs)$.

In general, for a configuration $\c_\N$ to generate a balanced information quantity, the corresponding set of constraints must include the canonical constraints of degree one, which are the ones in $\mathfrak{F}_{[\N]}$, as already discussed in the proof of Lemma~\eqref{lemma:balance_from_configurations}. Likewise, to generate an information quantity which is superbalanced, the set of constraints should also include all the canonical constraints of degree two. In general, a primitive quantity will be $\r$-balanced if the set of constraints associated to the generating configuration includes all the canonical constraints of degree $\r'\leq\r$.

\section{Relations between arrangements with different numbers of colors}
\label{sec:relations}

Having understood how to classify the various quantities in the arrangement, we now turn to exploring how the construction reviewed in \S\ref{sec:review}, used to derive primitive quantities, can also be used to extract general lessons regarding the relation between arrangements associated to a different number of colors. We will rely extensively on the notions of \textit{canonical constraints}, \textit{canonical building blocks} (or building blocks more generally), and \textit{uncorrelated union} and refer the reader to \S\ref{sec:review} for the definitions.

There are two main reasons to understand how arrangements for different values of $\N$ are related. For one, as explained before, it can 
reveal crucial hints about the existence of holographic entropy inequalities. For another, it can simplify the actual construction of the arrangement itself. Suppose that we have at hand, for some $\N$, a set of building blocks $\mathfrak{B}_\N$\footnote{ In this section we remain agnostic as to how such a set is chosen or derived, see \S\ref{sec:four} for further discussions about this point.}  that can be used to construct, via the uncorrelated union, certain equivalence classes of configurations, like in the case of the ${\bf I}_\n$-theorem. To obtain all these classes one has to scan over all inequivalent combinations of these building blocks; this becomes quite complicated as $\N$ gets large. Knowing how the lower rank quantities ($\r<\N$) uplift allows us to isolate new quantities that are genuinely associated to $\N$ parties (i.e., with rank $\r=\N$).

We therefore focus on addressing the following: given a set of building blocks   in an $\N$-party setting, which combinations generate primitive quantities that are upliftings of abstract quantities with rank $\r<\N$? To answer this question we need to understand how a configuration $\c_\N$, which by construction is an uncorrelated union of a collection of building blocks, can generate a primitive quantity of rank $\r<\N$. 

The first step in this direction is the classification, for fixed $\N$, of all the possible manifestations of a reduced rank in the final expression of the generated quantity. Clearly, this is closely related to the classification of the upliftings (which is in some sense the inverse problem), and the formalism will in fact be analogous. While we could still use the language of Young tableaux that we employed in \S\ref{sec:arrangement}, here we are not interested in distinguishing between isomers, and we can dispense with issues related to orderings of the various subsystems. It will therefore be convenient to introduce a more compact notation.

We first construct the set of subsets of $[\N]$ of cardinality $\n$; these can simply be labeled by the usual polychromatic indices of degree $\n$
\begin{equation}
{\bf \Delta}_\n([\N])=\{\si_\n\subseteq [\N]\}
\end{equation}
Then we consider all possible partitions\footnote{ Here a partition is defined in the standard way: it is a collection of subsets of ${\bf \Delta}_\n([\N])$ such that it does not include the empty set, the union of all subsets is the full set, and the intersection of any two subsets is empty.} of each element $\si_\n\in{\bf \Delta}_\n([\N])$ into $\r$ parts and denote it by $\mathfrak{P}_\r(\si_\n)$. Finally, we denote by $\mathfrak{P}_\r({\bf \Delta}_\n([\N]))$ the set of all such partitions for all $\si_\n$, i.e.,
\begin{equation}
\mathfrak{P}_\r({\bf \Delta}_\n([\N]))=\bigcup_{\si_\n\in{\bf \Delta}_\n([\N])}\mathfrak{P}_\r(\si_\n)
\label{eq:set_partitions}
\end{equation}
Elements of $\mathfrak{P}_\r({\bf \Delta}_\n([\N]))$ are precisely \eqref{eq:partition}, though the ordering of parts and colors is now irrelevant. Armed with this we define:

\begin{definition}
\emph{\textbf{(Color-reducing scheme)}} In an $\N$-party set-up, a \emph{color-reducing scheme} is an element $\mathfrak{R}$ of the set $\mathfrak{P}_\r({\bm \Delta}_\n([\N]))$ 
\begin{equation}
\mathfrak{R}=\{\{\ell_1^1,\ell_1^2,\ldots,\ell_1^{n_1}\},\{\ell_2^1,\ldots,\ell_2^{n_2}\},\ldots,\{\ell_\r^1,\ldots,\ell_\r^{n_\r}\}\}
\end{equation}
such that either
\begin{itemize}
\item $\n<\N$, in which case the scheme is said to be \emph{color-deleting}, 
\item or $n_i>1$ for some $i$, in which case it is said to be \emph{color-merging},
\item or both.
\end{itemize}
A color-reducing scheme is said to be \emph{purely deleting (merging)}, if it is not merging (deleting).
\end{definition}

For an arbitrary color-reducing scheme, it is convenient to introduce a notation which makes manifest which colors are being deleted and/or merged, ignoring all the other colors in $[\N]$. We will write
\begin{equation}
\mathfrak{R}[\ell_1,\ell_2,\ldots,\ell_p|\si_1,\si_2,\ldots,\si_q]
\end{equation}
where $\{\ell_1,\ell_2,\ldots,\ell_p\}$ is the list of colors being deleted and $\{\si_1,\si_2,\ldots,\si_q\}$ is the list of polychromatic indices which specify which collections of colors are merged. For color-reducing schemes which are purely deleting or merging we will write, respectively
\begin{equation}
\mathfrak{R}[\ell_1,\ell_2,\ldots,\ell_p|\emptyset],\qquad  \mathfrak{R}[\emptyset|\si_1,\si_2,\ldots,\si_q] \, .
\end{equation}
Finally, when we only focus on the color-deleting or merging aspect of a color-reducing scheme $\mathfrak{R}$ we write, respectively
\begin{equation}
\mathfrak{R}[\ell_1,\ell_2,\ldots,\ell_p|{\bm \cdot}],\qquad  \mathfrak{R}[{\bm \cdot}|\si_1,\si_2,\ldots,\si_q] \, ,
\end{equation}
ignoring the fact that the scheme can also be merging or deleting.

To see how a color-reducing scheme can be implemented by a configuration $\c_\N$, consider the case $\N=3$, where the set of building blocks are just the canonical ones, i.e.,
\begin{equation}
\mathfrak{B}_3=\{\c_3^{\circ}[\mathcal{AB}],\c_3^{\circ}[\mathcal{AC}],\c_3^{\circ}[\mathcal{BC}],\c_3^{\circ}[\mathcal{ABC}]\}\,.
\end{equation}
Let us focus on the following configuration (which, in a slightly more compressed form, can be obtained from Fig.~\ref{fig:MI_permutation1})
\begin{equation}
\mathscr{A}_3=\c_3^{\circ}[\mathcal{AC}]\,\sqcup \c_3^{\circ}[\mathcal{BC}]\,\sqcup \c_3^{\circ}[\mathcal{ABC}]\,,
\label{eq:architecture1}
\end{equation}
which corresponds to the set of constraints 
\begin{equation}
\{\f(\mathscr{A}_3)\}=\mathfrak{F}_{[3]}^\text{can}\cup\{\f_{\a\cs}^\text{can},\f_{\b\cs}^\text{can},\f_{\a\b\cs}^\text{can}\}\,.
\end{equation}
By taking suitable linear combinations, it is immediate to check that $\{\f(\mathscr{A})\}$ is equivalent to the following set of constraints
\begin{equation}
\f_{\a}^\text{can},\;\; \f_{\b}^\text{can},\;\; Q_\mathcal{C}=0,\;\; Q_\mathcal{AC}=0,\;\; Q_\mathcal{BC}=0,\;\; Q_\mathcal{ABC}=0\,.
\end{equation}
Consequently, if a primitive quantity  $\bQ$ is generated by a configuration that is the disjoint union of \eqref{eq:architecture1} and other building blocks, no polychromatic subsystem $\si$ containing $\mathcal{C}$ will appear in  $\bQ$. This is an implementation of the purely color-deleting scheme $\mathfrak{R}[\cs|\emptyset]$, since we have effectively removed the color $\cs$, but we have not merged any collection of colors. In fact, as we have seen before, the configuration $\mathscr{A}_3$ generates the instance $\I_2(\a:\b)$ of the mutual information.

This construction can be easily generalized. Suppose that for a given $\N$ we have a set of building blocks $\mathfrak{B}_\N$ that we use to generate primitive quantities. By definition we will always assume that the set $\mathfrak{B}_\N$ contains all the building blocks obtained from each other by permuting the color labels (but not the purifier). We then introduce:

\begin{definition}
\emph{\textbf{(Color-deleting architecture)}} For a set of building blocks $\mathfrak{B}_\N$, and a color-deleting scheme $\mathfrak{R}[\ell_1,\ell_2,\ldots,\ell_p|{\bm \cdot}]$, a \emph{color-deleting architecture} is a configuration
\begin{equation}
\mathscr{A}_\N=\bigsqcup_i\mathscr{B}_i,\qquad \mathscr{B}_i\in\mathfrak{B}_\N
\end{equation}
implementing the set of constraints
\begin{equation}
\{\f(\mathscr{A}_\N)\}=\{Q_\si=0,\;\forall\,\si\ \text{\emph{s.t.}}\ \si\cap\{\ell_1,\ell_2,\ldots,\ell_p\}\neq\emptyset \}\cup\{\f(\mathscr{A}_\N)\}_\text{\emph{res}} 
\label{eq:deleting_architecture_cons}
\end{equation}
where the \emph{residual constraints} $\{\f(\mathscr{A}_\N)\}_\text{\emph{res}}$ depend on the structure of the building blocks. In the particular case where the constraints $\{\f(\mathscr{A}_\N)\}$ do not implement any color-merging scheme (see later), the architecture is said to be \emph{purely color-deleting}.
\end{definition}

The set of canonical building blocks allows for the construction of  a purely color-deleting architecture for any purely color-deleting scheme. In an $\N$-party setting, a color-deleting architecture for the color-deleting scheme $\mathfrak{R}[\ell_1,\ell_2,\ldots,\ell_p|{\bm \cdot}]$ is simply
\begin{equation}
\mathscr{A}_\N=\bigsqcup_{\si,\,\{\ell_1,\ell_2,\ldots,\ell_p\}\cap\si\neq\emptyset}\c_\N^\circ[\si] \,,
\label{eq:canonical_deleting_architecture}
\end{equation}
generalizing \eqref{eq:architecture1}. The residual constraints are
\begin{equation}
\{\f(\mathscr{A}_\N)\}_\text{res}=\{\f_{\ell_i}^{\text{can}},\forall \ell_i\in[\N],\;\ell_i\notin\{\ell_1,\ell_2,\ldots,\ell_p\}\}
\label{eq:canonical_deleting_architecture_res_cons}
\end{equation}
making clear that the architecture is purely color-deleting, since these constraints cannot merge any collection of colors. Using this construction we can then prove the following result:

\begin{lemma}
If a natural instance of a quantity $\widetilde{\bQ}_\r$ can be generated, in an $\r$-partite setting, by a non-adjoining configuration, all the trivial upliftings $\bQ_\r(1:\cdots:1)$ of $\widetilde{\bQ}_\r$ to an $\N$-party setting are primitive, for any $\N\geq\r$.
\label{lemma:primitivity}
\end{lemma}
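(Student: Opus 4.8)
The plan is to show that, for \emph{any} choice of $\r$ distinct colours $\ell_1,\dots,\ell_\r\in[\N]$, there is an $\N$-party configuration $\c_\N$ whose set of constraints $\{\f(\c_\N)\}$ has a one-parameter family of solutions, spanned by the trivial uplifting $\bQ_\r(\a_{\ell_1}:\cdots:\a_{\ell_\r})$. By the criterion recalled in \S\ref{subsec:review2} this makes $\bQ_\r(\a_{\ell_1}:\cdots:\a_{\ell_\r})$ primitive, and since the colours were arbitrary we obtain all the trivial upliftings $\bQ_\r(1:\cdots:1)$.

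The configuration is
\begin{equation}
\c_\N\;=\;\c_\r[\ell_1,\dots,\ell_\r]\;\sqcup\;\mathscr{A}_\N\,,\qquad \mathscr{A}_\N\;=\;\bigsqcup_{\si\,\cap\,([\N]\setminus\{\ell_1,\dots,\ell_\r\})\neq\emptyset}\c_\N^\circ[\si]\,,
\end{equation}
where $\c_\r$ is the hypothesised non-adjoining $\r$-party configuration generating the natural instance of $\widetilde{\bQ}_\r$, $\c_\r[\ell_1,\dots,\ell_\r]$ denotes it with its colours relabelled $\ell_1,\dots,\ell_\r$ and regarded inside the $\N$-party set-up, and $\mathscr{A}_\N$ is the purely color-deleting architecture \eqref{eq:canonical_deleting_architecture} deleting the colours $[\N]\setminus\{\ell_1,\dots,\ell_\r\}$ (for $\N=\r$ it is empty and $\c_\N=\c_\r$). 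By Lemma~\ref{lemma:uncorrelated_union}, $\{\f(\c_\N)\}=\{\f(\c_\r[\ell_1,\dots,\ell_\r])\}\cup\{\f(\mathscr{A}_\N)\}$. By \eqref{eq:canonical_deleting_architecture_res_cons}, the second set is equivalent (same solution space, cf.\ \eqref{eq:equivalence3}) to the constraints $Q_\si=0$ for every $\si$ meeting a deleted colour, together with the monochromatic canonical constraints $\f^\text{can}_{\ell_1},\dots,\f^\text{can}_{\ell_\r}$.

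It remains to analyse the first set. The deleted colours carry no regions in the $\c_\r[\ell_1,\dots,\ell_\r]$ piece, so for any polychromatic index $\si\subseteq[\N]$ a connected surface $\omega$ of that piece enters the proto-entropy $S_\si$ if and only if it entered $S_{\si\cap\{\ell_1,\dots,\ell_\r\}}$ in the original $\r$-party setting. Hence, once the $\mathscr{A}_\N$-constraints $Q_\si=0$ ($\si\not\subseteq\{\ell_1,\dots,\ell_\r\}$) are imposed, the constraints from $\c_\r[\ell_1,\dots,\ell_\r]$ reduce, on the surviving variables $\{Q_\si:\si\subseteq\{\ell_1,\dots,\ell_\r\}\}$, to the image of $\{\f(\c_\r)\}$ under $\ell_i\leftrightarrow\mathcal{X}_i$. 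By hypothesis $\c_\r$ generates the natural instance of $\widetilde{\bQ}_\r$, so the solution space of $\{\f(\c_\r)\}$ is the ray through $\widetilde{\bQ}_\r$; and by the definition of the instances of an abstract quantity, the image of that ray under $\ell_i\leftrightarrow\mathcal{X}_i$ is exactly the ray through the trivial uplifting $\bQ_\r(\a_{\ell_1}:\cdots:\a_{\ell_\r})$. Combining, $\{\f(\c_\N)\}$ has a one-dimensional solution space spanned by $\bQ_\r(\a_{\ell_1}:\cdots:\a_{\ell_\r})$; thus $\c_\N$ generates it and it is primitive.

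The step I expect to require the most care is the surface-level reduction of the previous paragraph: one must verify cleanly that embedding $\c_\r$ in the larger set-up and then ``deleting'' the extra colours with $\mathscr{A}_\N$ returns precisely the lifted constraint system, which relies on the elementary but important observation that colours with no regions do not affect which connected surfaces compute a given proto-entropy, together with the cleanness of the uncorrelated union (Lemma~\ref{lemma:uncorrelated_union}) and the identity \eqref{eq:canonical_deleting_architecture_res_cons} for the color-deleting architecture. An alternative that makes the bookkeeping fully explicit and bypasses $\c_\r$ is to first use Lemma~\ref{lemma:balance_from_configurations} together with the extension of Lemmas~\ref{lemma:canonical_form}--\ref{lemma:building_blocks} to the disjoint scenario (\S\ref{sec:degenerate}) to conclude $\{\f(\c_\r)\}\simeq\mathfrak{F}^\text{can}\setminus\{\f^\text{can}_{\si_0}\}$ with $\#\si_0\geq2$, and then to replace $\c_\r[\ell_1,\dots,\ell_\r]$ by $\bigsqcup_{\si\subseteq\{\ell_1,\dots,\ell_\r\},\,\#\si\geq2,\,\si\neq\si_0'}\c_\N^\circ[\si]$ (with $\si_0'$ the relabelled $\si_0$), whose constraints are manifestly $\mathfrak{F}^\text{can}_{[\N]}$ together with the desired canonical constraints.
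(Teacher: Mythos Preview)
Your main argument is correct and follows essentially the same route as the paper: build $\c_\N=\c_\r\sqcup\mathscr{A}_\N$ with $\mathscr{A}_\N$ the purely color-deleting architecture \eqref{eq:canonical_deleting_architecture}, use Lemma~\ref{lemma:uncorrelated_union} to combine constraints, and observe that the deleting constraints project onto the surviving $\r$-color variables where $\{\f(\c_\r)\}$ has the one-dimensional solution $\widetilde{\bQ}_\r$. The paper makes one step slightly more explicit than you do: it invokes Lemma~\ref{lemma:balance_from_configurations} (non-adjoining $\Rightarrow$ $\mathfrak{F}_{[\r]}^{\text{can}}\subseteq\{\f(\c_\r)\}$) to ensure that the residual constraints $\f^{\text{can}}_{\ell_1},\dots,\f^{\text{can}}_{\ell_\r}$ coming from $\mathscr{A}_\N$ are already redundant with $\{\f(\c_\r)\}$, so the solution space does not collapse to zero---you should state this explicitly, since it is precisely where the non-adjoining hypothesis is used.

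Your proposed alternative at the end, however, does not work in general. You invoke ``the extension of Lemmas~\ref{lemma:canonical_form}--\ref{lemma:building_blocks} to the disjoint scenario'' to conclude $\{\f(\c_\r)\}\simeq\mathfrak{F}^{\text{can}}\setminus\{\f^{\text{can}}_{\si_0}\}$, but the paper only extends the inclusion $\mathfrak{F}_{[\N]}\subseteq\{\f(\c_\N)\}$ (Lemma~\ref{lemma:balance_from_configurations}); it does \emph{not} extend Lemma~\ref{lemma:canonical_form} (all constraints reducible to canonical form) beyond $\mathfrak{C}_\N$. Indeed the enveloping building blocks of \S\ref{sec:four} produce non-canonical constraints such as \eqref{eq:non_canonical_constraint}, and the new rank-$4$ primitives $\bQ_4^{(1)},\dots,\bQ_4^{(7)}$ are generated by non-adjoining configurations whose constraint sets are \emph{not} of the form $\mathfrak{F}^{\text{can}}\setminus\{\f^{\text{can}}_{\si_0}\}$. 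So the alternative would only cover the $\I_\n$ quantities and should be dropped.
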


\begin{proof}
Suppose that an instance $\bQ_\r[\sigma_\bQ](\ell_1,\ldots,\ell_\r)$ can be generated by a configuration $\c_\r$ such that all the monochromatic subsystems are non-adjoining. As we discussed in the proof of Lemma~\ref{lemma:balance}, the set of constraints $\{\f(\c_\r)\}$ associated to this configuration includes the set $\mathfrak{F}_{[\r]}^{\text{can}}$, and for convenience we write
\begin{equation}
\{\f(\c_\r)\}=\mathfrak{F}_{[\r]}^{\text{can}}\cup\{\f(\c_\r)\}',
\end{equation}
where $\{\f(\c_\N)\}'$ are the constraints associated to $\c_\r$ which are not in $\mathfrak{F}_{[\r]}^{\text{can}}$. In an $\N$-party setting, consider then the following color deleting scheme
\begin{equation}
\mathfrak{R}[\ell_{\r+1},\ldots,\ell_\N|{\bm \cdot}],
\end{equation}
where the colors $\{\ell_{\r+1},\ldots,\ell_\N\}$ are the ones that do not appear in $\bQ_\r[\sigma_\bQ](\ell_1,\ldots,\ell_\r)$, and consider the corresponding architecture $\mathscr{A}_\N$ given by \eqref{eq:canonical_deleting_architecture}. The constraints associated to $\mathscr{A}_\N$ are given by \eqref{eq:deleting_architecture_cons}, and the residual constraints, given by \eqref{eq:canonical_deleting_architecture_res_cons}, are precisely the ones in the set $\mathfrak{F}_{[\r]}^{\text{can}}$ associated to $\c_\r$. If we construct the new ($\N$-color) configuration
\begin{equation}
\c_\N=\c_\r\sqcup\mathscr{A}_\N
\end{equation}
the corresponding constraints are then
\begin{equation}
\{\f(\c_\N)\}=\{Q_\si=0,\;\forall\,\si\ \text{s.t.}\ \si\cap\{\ell_{\r+1},\ldots,\ell_\N\}\neq\emptyset \}\cup\{\f(\c_\r)\}
\end{equation}
The new configuration $\c_\N$ then generates exactly the same instance $\bQ_\r[\sigma_\bQ](\ell_1,\ldots,\ell_\r)$ initially generated by $\c_\r$, but now in a set-up with $\N$ colors. The $\N-\r$ colors which do not appear are `projected out' by the architecture. Finally, to obtain all other instances in the orbit $\bQ_\r(1:\cdots:1)$, we simply permute the $\N$ colors in $\c_\N$ in all possible ways.
\end{proof}

So far we have seen how to effectively remove colors, i.e., how to implement a color-deleting scheme using a color-deleting architecture. Following the same logic, we can define an architecture to implement a color-merging scheme:

\begin{definition}
\emph{\textbf{(Color-merging architecture)}} For a set of building blocks $\mathfrak{B}_\N$ and a color-merging scheme $\mathfrak{R}[{\bm \cdot}|\si_1,\si_2,\ldots,\si_q]$, a \emph{color-merging architecture} is  a configuration
\begin{equation}
\mathscr{A}=\bigsqcup_i\mathscr{B}_i,\qquad \mathscr{B}_i\in\mathfrak{B}_\N
\end{equation}
implementing the set of constraints
\begin{equation}
\{\f(\mathscr{A})\}=\{Q_\sk=0,\;\forall\,\sk\ \text{\emph{s.t.}}\ \exists\, \si_i\in\mathfrak{R},\; \sk\cap\si_i\neq\emptyset\; \text{\emph{and}}\; \sk\not\supseteq\si_i \}\cup\{\f(\mathscr{A})\}_\text{\emph{res}}
\end{equation}
where $\{\f(\mathscr{A})\}_\text{\emph{res}}$ depends on the structure of the building blocks. In the particular case where the constraints $\{\f(\mathscr{A})\}$ do not implement any color-deleting scheme, the architecture is said to be \emph{purely color-merging}.
\end{definition}

Using the canonical building blocks we can also construct an example of a color-merging architecture. For example, in an $\N=4$ set-up, one can check that the following architecture effectively merges $\a$ and $\b$
\begin{align}
\mathscr{A}=&\c_4^\circ[\a\cs]\sqcup\c_4^\circ[\a\d]\sqcup\c_4^\circ[\b\cs]\sqcup\c_4^\circ[\b\d]\nonumber\\
&\sqcup\c_4^\circ[\a\b\cs]\sqcup\c_4^\circ[\a\b\d]\sqcup\c_4^\circ[\a\cs\d]\sqcup\c_4^\circ[\b\cs\d]\sqcup\c_4^\circ[\a\b\cs\d]\,.
\end{align}
This architecture however is not a configuration that generates any primitive quantity, because it is associated to only $13$ independent constraints, while we need $14$. We could try to add one more canonical building block, say either $\c_4^\circ[\a\b]$ or $\c_4^\circ[\cs\d]$. However, in either case, in the final expression of $\bQ$, $\a$ and $\b$ are not merged, as can be verified. The architecture has in some sense been spoiled. What happens is that the additional building block effectively converts this color-merging architecture into a color-deleting one! 

This example shows an important subtlety related to the combinations of architectures and building blocks to form generating configurations for primitive quantities -- the issue of \textit{pattern avoidance}. Suppose that we are working in an $\N$-party setting, with some set of building blocks $\mathfrak{B}_\N$, and we are interested in finding primitive quantities of rank $\r$. If $\r=\N$ we need to consider only combinations of building blocks which \textit{do not} implement any color-reducing architecture. Similarly, for $\r<\N$, we need to construct a suitable `unspoiled' color-reducing architecture within a generating configuration. In the special case of trivial upliftings, the problem is simple (Lemma~\ref{lemma:primitivity}). This is because color-deleting architectures are in some sense robust: once they are realized, they cannot be spoiled by the addition of other constraints. On the other hand, color-merging architectures are much more fragile, as illustrated above. To realize a configuration that generates a primitive quantity, and contains a color-merging architecture, such that the corresponding primitive is not a trivial uplifting, we need new building blocks besides the canonical ones. We will see an example of this in the next section.

\section{New information quantities beyond the $\I_\n$-theorem}
\label{sec:four}

The $\I_\n$-theorem of \citep{Hubeny:2018trv} showed that the trivial upliftings of the $\r$-partite information are the only primitive quantities that can be generated, in an $\N$-party setting, by configurations where none of the colors envelops  another, and regions do not share any portion of their boundaries. We now generate new primitive quantities by lifting the former restriction, while still continuing to enforce the latter. 

As explained in detail in \citep{Hubeny:2018trv}, and reviewed in \S\ref{subsec:review1}, to derive all primitive information quantities for any given $\N$, one has to perform a full scan over all equivalence classes of configurations. This is a hard problem which we shall not fully address presently (cf., \S\ref{sec:discuss}). However, a particularly useful feature of the approach based on the building blocks is that one does not need to 
have a complete solution to the classification problem of equivalence classes of configurations to derive new primitive quantities of the arrangement. 

One way to proceed is to restrict attention to a subset of the space of all possible configurations (picked out by some criterion, eg., topology). With a judicious choice, the classification of the configurations into equivalence classes might simplify dramatically (like for the $\I_\n$-theorem). However, a nice feature of using building blocks to generate configurations is that one can follow an even simpler approach. Instead of defining restricted configurations and trying to solve a partial classification for equivalence classes, one can simply introduce a \textit{working set} of building blocks and combine them in all possible ways, attempting to  generate new quantities. Any new primitive quantity obtained thus will be an element of the arrangement. In other words, we do not risk generating false primitive quantities. This holds irrespective of  the chosen set of building blocks being complete, or there being more fundamental building blocks that are overlooked. With sufficient luck, the subarrangement generated by our choice may already reveal interesting properties of the entire arrangement and even provide a good approximation of the full solution. 

We will follow this simpler approach below. In \S\ref{subsec:new_blocks} we introduce  our working set of building blocks for an arbitrary number of colors. We will then use these building blocks in \S\ref{subsec:4arrangement} to derive a set of new primitive quantities for $\N=4$. In \S\ref{subsec:new_family} we generalize this to obtain a new infinite family of primitive quantities, one for each value of $\N\geq 4$.

\subsection{Construction of new building blocks}
\label{subsec:new_blocks}

%
\begin{figure}[t]
\centering
\begin{subfigure}{0.49\textwidth}
\centering
\begin{tikzpicture}
\draw[fill=color2] (0,1.5) circle (0.7cm);
\draw[fill=color3] (1.299,-0.75) circle (0.7cm);
\draw[fill=color4] (-1.299,-0.75) circle (0.7cm);
\draw[dashed] (0,0) -- (0,0.8);{}
\draw[dashed] (0,0) -- (-0.693,-0.4);
\draw[dashed] (0,0) -- (0.693,-0.4);
\draw[fill=black] (0,0) circle (0.08cm);
\node at (0,1.5) {\small{$\mathcal{B}$}};
\node at (1.299,-0.75) {\small{$\mathcal{C}$}};
\node at (-1.299,-0.75) {\small{$\mathcal{D}$}};
\end{tikzpicture}
\caption{}
\label{fig:new_building_block_1}
\end{subfigure}
\hfill
\begin{subfigure}{0.49\textwidth}
\centering
\begin{tikzpicture}
\draw[fill=color1] (0,0) circle (3cm);
\draw[fill=white!15] (0,1.5) circle (0.7cm);
\draw[fill=white!15] (1.299,-0.75) circle (0.7cm);
\draw[fill=white!15] (-1.299,-0.75) circle (0.7cm);
\draw[fill=color2] (0,1.5) circle (0.7cm);
\draw[fill=color3] (1.299,-0.75) circle (0.7cm);
\draw[fill=color4] (-1.299,-0.75) circle (0.7cm);
\draw[dashed] (0,0) -- (0,0.8);{}
\draw[dashed] (0,0) -- (-0.693,-0.4);
\draw[dashed] (0,0) -- (0.693,-0.4);
\draw[fill=black] (0,0) circle (0.08cm);
\node at (0,1.5) {\small{$\mathcal{B}$}};
\node at (1.299,-0.75) {\small{$\mathcal{C}$}};
\node at (-1.299,-0.75) {\small{$\mathcal{D}$}};
\node at (-1.4,0.8) {\small{$\mathcal{A}$}};
\end{tikzpicture}
\caption{}
\label{fig:new_building_block_2}
\end{subfigure}

\vspace{1cm}

\begin{subfigure}{0.49\textwidth}
\centering
\begin{tikzpicture}
\draw[fill=color1] (0,0) circle (3cm);
\draw[fill=white!15] (0,1.5) circle (0.7cm);
\draw[fill=white!15] (1.299,-0.75) circle (0.7cm);
\draw[fill=white!15] (-1.299,-0.75) circle (0.7cm);
\draw[fill=color2] (0,1.5) circle (0.55cm);
\draw[fill=color3] (1.299,-0.75) circle (0.55cm);
\draw[fill=color4] (-1.299,-0.75) circle (0.55cm);
\draw[dashed] (0,0) -- (0,0.8);{}
\draw[dashed] (0,0) -- (-0.693,-0.4);
\draw[dashed] (0,0) -- (0.693,-0.4);
\draw[fill=black] (0,0) circle (0.08cm);
\node at (0,1.5) {\small{$\mathcal{B}$}};
\node at (1.299,-0.75) {\small{$\mathcal{C}$}};
\node at (-1.299,-0.75) {\small{$\mathcal{D}$}};
\node at (-1.4,0.8) {\small{$\mathcal{A}$}};
\node at (-1.6,2.2) {\small{$a_1$}};
\node at (1,1.8) {\small{$a_2$}};
\node at (1.8,-1.6) {\small{$a_3$}};
\node at (-1.8,-1.6) {\small{$a_4$}};
\end{tikzpicture}
\caption{}
\label{fig:new_building_block_3}
\end{subfigure}
\hfill
\begin{subfigure}{0.49\textwidth}
\centering
\begin{tikzpicture}
\draw[fill=color1] (0,0) circle (3cm);
\draw[fill=white!15] (0,1.5) circle (0.7cm);
\draw[fill=white!15] (1.299,-0.75) circle (0.7cm);
\draw[fill=white!15] (-1.299,-0.75) circle (0.7cm);
\draw[fill=color2] (0,1.9) circle (0.2cm);
\draw[fill=color3] (1.645,-0.95) circle (0.2cm);
\draw[fill=color4] (-1.645,-0.95) circle (0.2cm);
\draw[dashed] (0,0) -- (0,0.8);{}
\draw[dashed] (0,0) -- (-0.693,-0.4);
\draw[dashed] (0,0) -- (0.693,-0.4);
\draw[fill=black] (0,0) circle (0.08cm);
\node at (0,1.9) {\tiny{$\mathcal{B}$}};
\node at (1.645,-0.95) {\tiny{$\mathcal{C}$}};
\node at (-1.645,-0.95) {\tiny{$\mathcal{D}$}};
\node at (-1.4,0.8) {\small{$\mathcal{A}$}};
\node at (-1.6,2.2) {\small{$a_1$}};
\node at (1,1.8) {\small{$a_2$}};
\node at (1.8,-1.6) {\small{$a_3$}};
\node at (-1.8,-1.6) {\small{$a_4$}};
\end{tikzpicture}
\caption{}
\label{fig:new_building_block_4}
\end{subfigure}
\caption{Construction of building blocks associated to non-canonical constraints. The details of the construction are explained in the main text. The resulting building block (d) is used throughout this section to generate new primitive information quantities for four parties.}
\label{fig:new_building_block}
\end{figure}
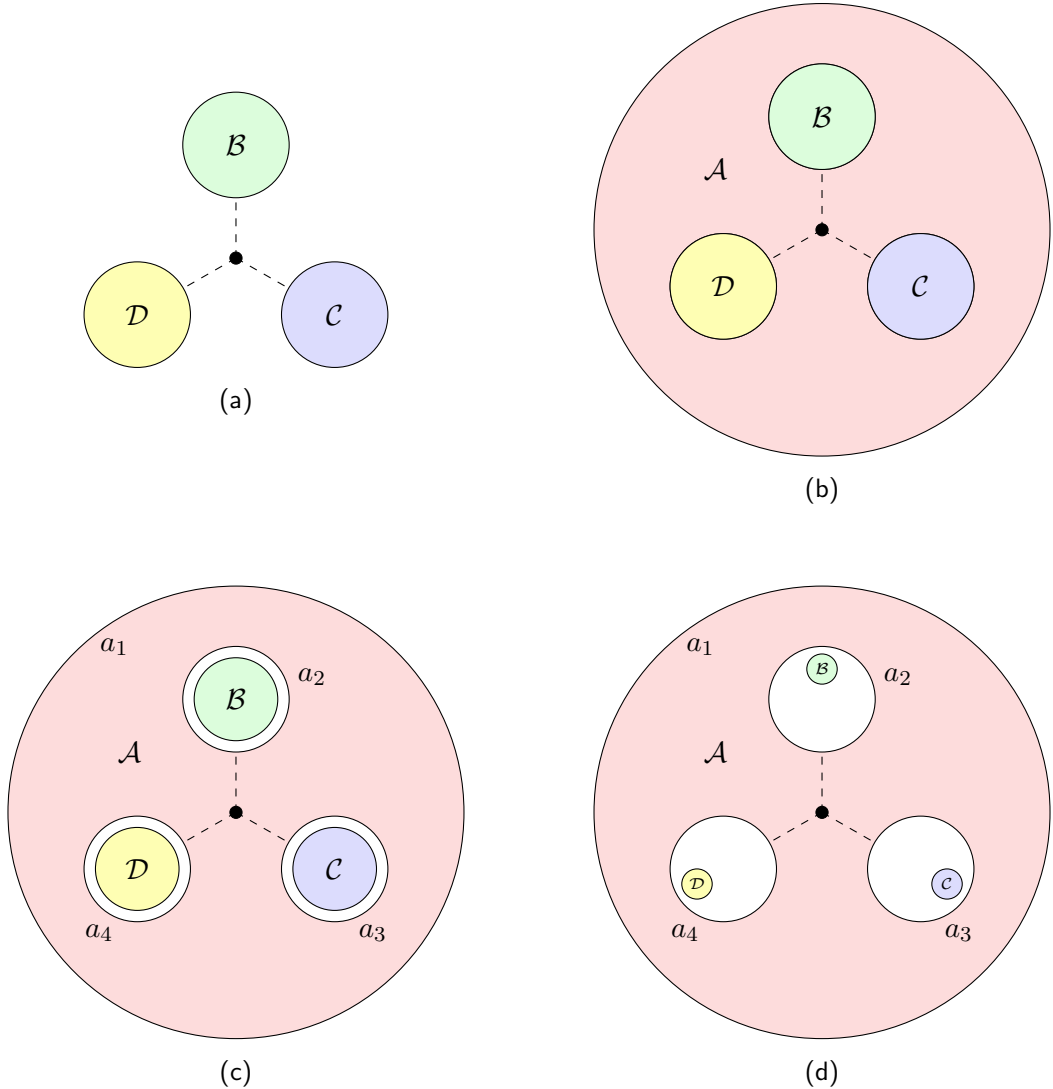

As usual it is useful to begin with an example. For $\N=4$, consider the canonical building block $\c_4^\circ[\b\cs\d]$ where we erase the uncorrelated disk $\a$ (see Fig.~\ref{fig:new_building_block_1}). We then envelop the three disks $\b,\cs,\d$ with a sufficiently large adjoining region $\a$, such that $\I_2(\b\cs\d:\o)=0$. The resulting configuration (see Fig.~\ref{fig:new_building_block_2}) will play a central role in \S\ref{sec:sieve}. Finally, we shrink the three disks by an infinitesimal amount, obtaining the configuration $\c_4$ shown in Fig.~\ref{fig:new_building_block_3}. We leave it as an exercise for the reader to check that, by taking appropriate linear combinations, the set of constraints associated to this configuration can be written as follows\footnote{ Some constraints appear more then once; redundant copies have been removed from the list.}
\begin{equation}
\{\f(\c_4)\}=\mathfrak{F}_{[4]}^\text{can}\cup\{\f^\text{can}_{\a\b},\f^\text{can}_{\a\cs},\f^\text{can}_{\a\d}\}\cup\mathfrak{F}'\,,
\label{eq:new_constraints}
\end{equation}
where 
\begin{equation}
\mathfrak{F}'=\{Q_\a=0,Q_{\b\cs\d}=0\}\,.
\label{eq:new_constraints2}
\end{equation}
These last two constraints are new and cannot be converted to canonical form by taking linear combinations with the other constraints in \eqref{eq:new_constraints}. The first of these constraints is associated to the $3$-legged octopus surface $a_2a_3a_4$ anchored to the `internal' boundaries of the region $\a$. The second is associated to the other $3$-legged octopus surface $bcd$ anchored to the disks $\b,\cs,\d$. 

The configuration described above can itself be considered a new building block. However,  in the following we mostly focus on a slightly different configuration, which leads to a slightly simpler set of constraints. Starting from the configuration just constructed, imagine further shrinking the disks $\b,\cs,\d$. We can reduce their size up to a point where each of them is completely uncorrelated with the rest of the configuration, so that we are guaranteed that the entropy $S_{\b\cs\d}$ is computed by a sum of domes, and not  by the $3$-legged octopus $bcd$ any more. We then move these three disks sufficiently close to the region $\a$ 
such that we are guaranteed that each of them is individually correlated with $\a$, i.e.,
\begin{equation}
\I_2(\a:\b)\neq 0,\quad \I_2(\a:\cs)\neq 0,\quad \I_2(\a:\d)\neq 0
\end{equation}
We do so by moving the three disks in the `outward' directions, such that by increasing the separation between them we are guaranteed not to develop the $bcd$ surface again. The result of the construction is illustrated in Fig.~\ref{fig:new_building_block_4}. We denote this new configuration by
\begin{equation}
\c_4^\circledast[\a(\b\cs\d)]
\end{equation}
It is straightforward to check that the constraints associated to this new configurations are the same as in \eqref{eq:new_constraints}, but without the constraint $Q_{\b\cs\d}$, which was associated to the $bcd$ surface, i.e., \eqref{eq:new_constraints2} reduces to $\f' \mapsto \{Q_\a = 0\}$.

This construction can naturally be generalized to an arbitrary number of colors $\N$. Starting from the canonical building block $\c_\N^\circ[\si_\n]$, we choose one color $\ell\notin\si_\n$ and replace the corresponding disk with a large region which envelops the disks associated to the colors in $\si_\n$. For all these disks we then proceed as described above. The resulting configuration $\c_\N^\circledast[\ell(\si_\n)]$ is then associated to the following set of constraints (cf., \S\ref{subsec:review3} for the canonical constraints)
\begin{equation}
\{\f(\c_\N^\circledast[\ell(\si_\n)])\}=\mathfrak{F}_{[\N]}\cup\{\f^\text{can}_{\ell\ell'},\forall\ell'\in\si_\n\}\cup\{\f^\circledast_{\ell\si_\n}\}
\end{equation}
where $\f^\circledast_{\ell\si_\n}$ is the following single, non-canonical constraint
\begin{equation}
\f^\circledast_{\ell(\si_\n)}:\quad \sum_{\substack{\ell\in\sk\\ \sk\cap\si_\n=\emptyset}}Q_\sk=0
\label{eq:non_canonical_constraint}
\end{equation}
We leave it as an exercise for the reader to verify that these constraints are independent from the canonical ones only if $\n\geq 3$ (which also requires $\N\geq 4$).\footnote{ For example,  by constructing $\c_4^\circledast[\a(\b\cs)])$ one can  check that all the corresponding constraints are linear combinations of a subset of the canonical ones.} 

Before proceeding, let us take note of other potentially useful building blocks constructed using a similar procedure, although we will not use them in the following. In general, for any $\N$, one can consider situations where the disks labeled by the colors in $\si_\n$, which are enveloped by the region $\ell$, have been moved and deformed to change the pattern of correlations among the various regions in the configuration. Performing a scan over all possible patterns of mutual information\footnote{ As usual we are only interested in whether the mutual information between various component is vanishing or not.}, even at fixed topology, would potentially produce a list of new building blocks. At this stage, it is unclear whether this set would be sufficient to generate all the equivalent classes of configurations for a given number of colors. Furthermore, even if this were the case, it is not apparent if such a set would be free from redundancies, or, if instead, some building blocks would be redundant and therefore could be removed from the list of generators. We will briefly revisit this possibility in \S\ref{sec:discuss}, but by and large we leave these questions for future investigation. Another possibility is that other building blocks could be obtained starting from configurations of different topologies; we  however have not been able to find any such example.

\subsection{New information quantities for $\N=4$}
\label{subsec:4arrangement}

We now explore which new information quantities can be generated for $\N=4$ using only the canonical building blocks along with the new ones introduced above. Including all permutations of colors, the latter are:
\begin{equation}
\c_4^\circledast[\a(\b\cs\d)],\quad \c_4^\circledast[\b(\a\cs\d)],\quad \c_4^\circledast[\cs(\a\b\d)],\quad \c_4^\circledast[\d(\a\b\cs)] \,.
\label{eq:new_blocks_N4}
\end{equation}

\paragraph{Generating configurations from a single $\c_4^\circledast$ block:} We begin by looking at the new information quantities that can be generated by combining (via the uncorrelated union) a single building block from the ones listed in \eqref{eq:new_blocks_N4} with an appropriate choice of canonical building blocks. Since each of the building blocks above carries a set of constraints among which only one is non-canonical, specifically the one given by \eqref{eq:non_canonical_constraint}, it is convenient to rewrite this constraint in the basis of the canonical ones.  For example, for the block $\c_4^\circledast[\a(\b\cs\d)]$, the new constraint $\f^\circledast_{\a(\b\cs\d)}$ can be written as 
\begin{equation}
\f^\circledast_{\a(\b\cs\d)}=\f^\text{can}_{\a}-\f^\text{can}_{\a\b}-\f^\text{can}_{\a\cs}-\f^\text{can}_{\a\d}+\f^\text{can}_{\a\b\cs}+\f^\text{can}_{\a\b\d}+\f^\text{can}_{\a\cs\d}-\f^\text{can}_{\a\b\cs\d}
\label{eq:new_constraint}
\end{equation}
The first four terms in the above expression are canonical constraints already present in the list $\{\f(\c_4^\circledast[\a(\b\cs\d)])\}$. Therefore, if we combine $\c_4^\circledast[\a(\b\cs\d)]$ with any three (or more) canonical building blocks chosen from the following
\begin{equation}
\c_4^\circ[\a\b\cs],\quad   \c_4^\circ[\a\b\d],\quad     \c_4^\circ[\a\cs\d],\quad      \c_4^\circ[\a\b\cs\d] ,
\label{eq:dangerous_blocks}
\end{equation}
the constraints of the resulting configuration will be equivalent to subset of the canonical ones, and we would be back to the situation that leads to the $\I_\n$-theorem. To prevent this from happening we need to be careful about which canonical building blocks are chosen to construct the new configuration. We can in particular add at most two of the building blocks listed in 
\eqref{eq:dangerous_blocks} to  $\f^\circledast_{\a(\b\cs\d)}$.

Since we are working with $\N=4$, the dimension of entropy space is ${\sf D}=15$. To generate a primitive quantity we need to construct a configuration associated to $14$ linearly independent constraints. The list $\{\f(\c_4^\circledast[\a(\b\cs\d)])\}$ is composed of $8$ independent constraints, among which $7$ are canonical, and we need to add\footnote{ When we say that we `add a constraint to a configuration' we mean that we add the canonical building block associated to that constraint. We remind the reader that a canonical building block $\c_\N^\circ[\si_\n]$ carries a set of constraints made of $\mathfrak{F}_{[\N]}$ and a single additional constraint $\f^\text{can}_{\si_\n}$. Since the constraints $\mathfrak{F}_{[\N]}$ are also associated to the new building blocks $\c_\N^\circledast$, we can ignore them and simply consider the net effect of adding $\c_\N^\circ[\si_\n]$ to the configuration, which is the addition of the single constraint $\f^\text{can}_{\si_\n}$.} another $6$. The other canonical constraints from which we can choose are
\begin{equation}
\f^\text{can}_{\b\cs},\quad    \f^\text{can}_{\b\d},\quad    \f^\text{can}_{\cs\d},\quad    \f^\text{can}_{\a\b\cs},\quad    \f^\text{can}_{\a\b\d},\quad    \f^\text{can}_{\a\cs\d},\quad    \f^\text{can}_{\b\cs\d},\quad    \f^\text{can}_{\a\b\cs\d}\,.
\end{equation}

We can proceed by deciding which two of these $8$ constraints we decide to exclude to pick the desired $6$. The choice cannot be arbitrary if, as explained above, we wish to avoid the $\I_\n$-theorem. As we can add at most two of the building blocks listed in \eqref{eq:dangerous_blocks}, there are six possible choices for the pair of constraints that we can exclude:
\begin{equation}
\begin{cases}
\{\f^\text{can}_{\a\b\cs},\f^\text{can}_{\a\b\d}\}\\
\{\f^\text{can}_{\a\b\cs},\f^\text{can}_{\a\cs\d}\}\\
\{\f^\text{can}_{\a\b\d},\f^\text{can}_{\a\cs\d}\}\\
\end{cases}
,\qquad
\begin{cases}
\{\f^\text{can}_{\a\b\cs},\f^\text{can}_{\a\b\cs\d}\}\\
\{\f^\text{can}_{\a\b\d},\f^\text{can}_{\a\b\cs\d}\}\\
\{\f^\text{can}_{\a\cs\d},\f^\text{can}_{\a\b\cs\d}\}\\
\end{cases}
\end{equation}
The brackets group the possible choices into equivalence classes related by symmetries. Different choices within the same class simply give different instances of the same quantity. The first choice from the family on the left leads to
\begin{equation}
\bQ_4^{(1)}=S_\cs-S_{\d}-S_{\a\cs}+S_{\a\d}-S_{\b\cs}+S_{\b\d}+S_{\a\b\cs}-S_{\a\b\d}\,,
\end{equation}
while the first choice from the family on the right ends up giving
\begin{equation}
\bQ_4^{(2)}=S_\d-S_{\a\d}-S_{\b\d}-S_{\cs\d}+S_{\a\b\d}+S_{\a\cs\d}+S_{\b\cs\d}-S_{\a\b\cs\d}\,.
\end{equation}

Applying the reduction described in \S\ref{sec:arrangement}, we rewrite these quantities conventionally as
\begin{align}
\begin{split}
&\bQ_4^{(1)}=S_\a-S_\b-S_{\a\cs}-S_{\a\d}+S_{\b\cs}+S_{\b\d}+S_{\a\cs\d}-S_{\b\cs\d}\\
&\bQ_4^{(2)}=S_\a-S_{\a\b}-S_{\a\cs}-S_{\a\d}+S_{\a\b\cs}+S_{\a\b\d}+S_{\a\cs\d}-S_{\a\b\cs\d}
\end{split}
\end{align}
In the $\I_\n$-basis they take the form
\begin{align}
\begin{split}
&\bQ_4^{(1)}=\I_3^{\a\cs\d}-\I_3^{\b\cs\d}\\
&\bQ_4^{(2)}=-\I_3^{\b\cs\d}+\I_4^{\a\b\cs\d}
\end{split}
\label{eq:newQ1Q2}
\end{align}
Repeating the same construction, but starting from a different building block $\c_4^\circledast[\ell(\si_3)]$, would simply generate other instances of the same underlying abstract quantities. 

\paragraph{Generating configurations with two $\c_4^\circledast$ blocks:} Let us now consider the case where two building blocks $\c_4^\circledast[\ell(\si_3)]$ are combined together in a configuration. Like before we only work up to permutations and consider as a starting point the configuration
\begin{equation}
\c_4=\c_4^\circledast[\a(\b\cs\d)]\sqcup\c_4^\circledast[\b(\a\cs\d)]
\end{equation}
to which we want to add canonical building blocks. The set of constraints associated to this configuration is
\begin{equation}
\{\f(\c_4)\}=\mathfrak{F}^\text{can}_{[4]}\cup\{\f^\text{can}_{\a\b},\f^\text{can}_{\a\cs},\f^\text{can}_{\a\d},\f^\text{can}_{\b\cs},\f^\text{can}_{\b\d},\f^\circledast_{\a(\b\cs\d)},\f^\circledast_{\b(\a\cs\d)}\} \, .
\end{equation}
Starting from their expression in the basis of canonical constraints, like in \eqref{eq:new_constraint}, and taking appropriate linear combinations, we can rewrite the last two constraints as
\begin{align}
\begin{split}
&\f^\circledast_{\a(\b\cs\d)}=\f^\text{can}_{\a\b\cs}+\f^\text{can}_{\a\b\d}+\f^\text{can}_{\a\cs\d}-\f^\text{can}_{\a\b\cs\d}\\
&\f^\circledast_{\b(\a\cs\d)}=\f^\text{can}_{\b\cs\d}-\f^\text{can}_{\a\cs\d}
\end{split}
\label{eq:2blocks_constraints}
\end{align}

Since $\c_4$ is associated to $11$ linearly independent constraints, we need to add $3$ more to construct a generating configuration. Notice that if we add either $\f^\text{can}_{\b\cs\d}$ or $\f^\text{can}_{\a\cs\d}$ to the list of constraints associated to the configuration, the second constraint in \eqref{eq:2blocks_constraints} can be replaced by a canonical one, and the resulting configuration would be equivalent to one which only contains a single $\c_4^\circledast$ block, which we already discussed above. Therefore we are left with a choice of $3$ constraints to be drawn from the following list:
\begin{equation}
\f^\text{can}_{\cs\d},\quad    \f^\text{can}_{\a\b\cs},\quad    \f^\text{can}_{\a\b\d},\quad   \f^\text{can}_{\a\b\cs\d}\,.
\end{equation}
Furthermore, if we chose the last three, we would again end up in a situation which is equivalent to the one discussed before, since the first constraint in \eqref{eq:2blocks_constraints} could now be replaced by a canonical one. Therefore the only possible choices that can generate new quantities are
\begin{equation}
\{\f^\text{can}_{\cs\d},\f^\text{can}_{\a\b\cs},\f^\text{can}_{\a\b\d}\}
,\qquad
\begin{cases}
\{\f^\text{can}_{\cs\d},\f^\text{can}_{\a\b\cs},\f^\text{can}_{\a\b\cs\d}\}\\
\{\f^\text{can}_{\cs\d},\f^\text{can}_{\a\b\d},\f^\text{can}_{\a\b\cs\d}\}\\
\end{cases}
\end{equation}
The first option gives
\begin{equation}
\bQ_4^{(3)}=S_\cs+S_{\d}+S_{\a\b}-S_{\cs\d}-S_{\a\b\cs}-S_{\a\b\d}+S_{\a\b\cs\d}\,,
\label{eq:I3_uplifting_N4}
\end{equation}
while the first choice from the family on the right gives
\begin{equation}
\bQ_4^{(4)}=2S_\cs+S_{\d}+S_{\a\b}-S_{\a\cs}-S_{\b\cs}-2S_{\cs\d}-S_{\a\b\d}+S_{\a\cs\d}+S_{\b\cs\d}\,.
\end{equation}

The quantity $\bQ_4^{(3)}$ can easily be recognized as an uplifting of the tripartite information (see below for further comments). On the other hand, rewriting $\bQ_4^{(4)}$ in the canonical way we get
\begin{align}
\begin{split}
\bQ_4^{(4)}&=2S_\a+S_{\b}-2S_{\a\b}-S_{\a\cs}-S_{\a\d}+S_{\cs\d}+S_{\a\b\cs}+S_{\a\b\d}-S_{\b\cs\d}\\
&=\I_3^{\a\b\cs}+\I_3^{\a\b\d}-\I_3^{\b\cs\d}
\end{split}
\label{eq:newQ4}
\end{align}

\paragraph{Generating configurations with three $\c_4^\circledast$ blocks:} Working again only up to permutations we now start from the configuration
\begin{equation}
\c_4=\c_4^\circledast[\a(\b\cs\d)]\sqcup\c_4^\circledast[\b(\a\cs\d)]\sqcup\c_4^\circledast[\cs(\a\b\d)]
\end{equation}
which is associated to the constraints
\begin{equation}
\{\f(\c_4)\}=\mathfrak{F}^\text{can}_{[4]}\cup\{\f^\text{can}_{\a\b},\f^\text{can}_{\a\cs},\f^\text{can}_{\a\d},\f^\text{can}_{\b\cs},\f^\text{can}_{\b\d},\f^\text{can}_{\c\d},\f^\circledast_{\a(\b\cs\d)},\f^\circledast_{\b(\a\cs\d)},\f^\circledast_{\c(\a\b\d)}\}\,.
\end{equation}
We can rewrite the last three as
\begin{align}
\begin{split}
&\f^\circledast_{\a(\b\cs\d)}=\f^\text{can}_{\a\b\cs}+\f^\text{can}_{\a\b\d}+\f^\text{can}_{\a\cs\d}-\f^\text{can}_{\a\b\cs\d}\\
&\f^\circledast_{\b(\a\cs\d)}=\f^\text{can}_{\b\cs\d}-\f^\text{can}_{\a\cs\d}\\
&\f^\circledast_{\cs(\a\b\d)}=\f^\text{can}_{\b\cs\d}-\f^\text{can}_{\a\b\d}\,.
\end{split}
\end{align}
We now have $13$ independent constraints and we need to add one more chosen among $5$ possibilities. Similarly to the earlier discussion, to avoid reducing the configuration to one which is equivalent to the previous cases, we cannot add $\f^\text{can}_{\b\cs\d}$, $\f^\text{can}_{\a\cs\d}$ or $\f^\text{can}_{\a\b\d}$. The only options at our disposal  are then $\f^\text{can}_{\a\b\cs}$ and $\f^\text{can}_{\a\b\cs\d}$. In the first case we get
\begin{equation}
\bQ_4^{(5)}=S_\d+S_{\a\b}+S_{\a\cs}+S_{\b\cs}-2S_{\a\b\cs}-S_{\a\b\d}-S_{\a\cs\d}-S_{\b\cs\d}+2S_{\a\b\cs\d} \,,
\label{eq:newQ5}
\end{equation}
while the second gives
\begin{equation}
\bQ_4^{(6)}=3S_\d+S_{\a\b}+S_{\a\cs}-2S_{\a\d}+S_{\b\cs}-2S_{\b\d}-2S_{\cs\d}-2S_{\a\b\cs}+S_{\a\b\d}+S_{\a\cs\d}+S_{\b\cs\d}\,.
\label{eq:newQ6}
\end{equation}

Converting them to their standard form and writing them in the $\I_\n$-basis we obtain:
\begin{align}
\begin{split}
\bQ_4^{(5)}&=S_\a+S_{\b\cs}+S_{\b\d}+S_{\cs\d}-S_{\a\b\cs}-S_{\a\b\d}-S_{\a\cs\d}-2S_{\b\cs\d}+2S_{\a\b\cs\d}\\
&=\I_3^{\a\b\cs}+\I_3^{\a\b\d}+\I_3^{\a\cs\d}-\I_4^{\a\b\cs\d}\\
\bQ_4^{(6)}&=3S_\a-2S_{\a\b}-2S_{\a\cs}-2S_{\a\d}+S_{\b\cs}+S_{\b\d}+S_{\cs\d} \\
 & \quad  +S_{\a\b\cs}+S_{\a\b\d} 
 +S_{\a\cs\d}-2S_{\b\cs\d}\\
&=\I_3^{\a\b\cs}+\I_3^{\a\b\d}+\I_3^{\a\cs\d}-2\I_3^{\b\cs\d}
\end{split}
\end{align}

\paragraph{Generating configurations with all four $\c_4^\circledast$ blocks:} Finally, we consider the case where we combine all the four new building blocks
\begin{equation}
\c_4=\c_4^\circledast[\a(\b\cs\d)]\sqcup\c_4^\circledast[\b(\a\cs\d)]\sqcup\c_4^\circledast[\cs(\a\b\d)]\sqcup\c_4^\circledast[\d(\a\cs\d)]
\label{eq:N4_config_new_family}
\end{equation}
In this case, $\c_4$ is automatically associated to $14$ linearly independent constraints and we obtain
\begin{align}
\begin{split}
\bQ_4^{(7)}=\,&S_{\a\b}+S_{\a\cs}+S_{\a\d}+S_{\b\cs}+S_{\b\d}+S_{\cs\d}\\
&-2S_{\a\b\cs}-2S_{\a\b\d}-2S_{\a\cs\d}-2S_{\b\cs\d}+3S_{\a\b\cs\d}
\end{split}
\label{eq:N4_instance_new_family}
\end{align}
which in the $\I_\n$-basis becomes
\begin{equation}
\bQ_4^{(7)}=\I_3^{\a\b\cs}+\I_3^{\a\b\d}+\I_3^{\a\cs\d}+\I_3^{\b\cs\d}-3\I_4^{\a\b\cs\d} \, .
\end{equation}

Let us now take stock of the results obtained for $\N =4$. By scanning over all possible combinations of the building blocks \eqref{eq:new_blocks_N4}, we have found a total of seven different primitive quantities. One of them, \eqref{eq:I3_uplifting_N4}, is an uplifting of a quantity of lower rank, specifically the tripartite information. This uplifting is also the purification of the trivial uplifting of the tripartite information for $\N=4$. Therefore, this quantity can also be obtained from the same configuration that generates the tripartite information, but where we swap one color with the purifier. However, the resulting configuration would have adjoining regions. It is interesting that the same quantity can also be derived by a configuration where the regions are non-adjoining. Furthermore, since \eqref{eq:I3_uplifting_N4} is a (non-trivial) uplifting of the tripartite information, the configuration that generates it is an example of a realization of a purely color-merging architecture (see \S\ref{sec:relations}).

The other six information quantities that we have found, although seemingly different, are actually related by purifications pairwise. To see that this is the case, it is convenient to work in the $\I_\n$-basis and use the relations of \S\ref{subsec:N_partite0} obtaining\footnote{ As we explained in \S\ref{sec:arrangement}, an eventual mismatch by an overall sign is just a consequence of the ambiguity in the definition of two information quantities and can be fixed by an appropriate redefinition of one of them.}
\begin{align}
\begin{split}
\mathbb{P}_\a\bQ_4^{(1)}&=\I_3^{(\a\b)\cs\d}-\I_3^{\b\cs\d}\\
&=\I_3^{\a\cs\d}-\I_4^{\a\b\cs\d}\simeq\bQ_4^{(2)}\\
\mathbb{P}_\a\bQ_4^{(4)}&=\I_3^{(\a\d)\b\cs}+\I_3^{(\a\cs)\b\d}-\I_3^{\b\cs\d}\\
&=\I_3^{\a\b\cs}+\I_3^{\a\b\d}+\I_3^{\b\cs\d}-2\I_4^{\a\b\cs\d}\simeq\bQ_4^{(5)}\\
\mathbb{P}_\a\bQ_4^{(6)}&=\I_3^{(\a\d)\b\cs}+\I_3^{(\a\cs)\b\d}+\I_3^{(\a\b)\cs\d}-2\I_3^{\b\cs\d}\\
&=\I_3^{\a\b\cs}+\I_3^{\a\b\d}+\I_3^{\a\cs\d}+\I_3^{\b\cs\d}-3\I_4^{\a\b\cs\d}\simeq\bQ_4^{(7)}
\end{split}
\end{align}
Notice in particular that all these quantities are superbalanced. As explained in \S\ref{sec:degenerate}, this is a consequence of the fact that the generating configuration for each of these quantities contains all the canonical form constraints of degree less than or equal to two. 

It is straightforward to verify that although the quantities that we have found have rank $\r=4$ (except for the uplifting of $\widetilde{\I}_3$), and therefore in principle can have four purifications of different form, their structural symmetries prevent this from happening. In fact, all the different forms obtainable from purifications are among the six quantities listed above. It is interesting that, similarly to the case of the tripartite information discussed above, all these quantities, even if related by purifications, can be obtained from non-adjoining configurations.

Finally, let us briefly comment on the derivation of other information quantities for $\N=4$ using other building blocks. One possibility would be to consider an infinitesimal deformation of the locally purified canonical building block $\c_4^\circledcirc[\a(\b\cs\d)]$, as we described above, which is associated to the constraints \eqref{eq:new_constraints}. We can try to construct a new generating configuration by combining this building block with the canonical ones, as we did for the building blocks $\c_4^\circledast$. The two constraints \eqref{eq:new_constraints2} can be written as (in the basis of the canonical constraints and after taking appropriate linear combinations with the other canonical constraints included in \eqref{eq:new_constraints})
\begin{align}
\begin{split}
&\f'=\f^\text{can}_{\a\b\cs}+\f^\text{can}_{\a\b\d}+\f^\text{can}_{\a\cs\d}-\f^\text{can}_{\a\b\cs\d}\\
&\f''=\f^\text{can}_{\b\cs\d}-\f^\text{can}_{\a\b\cs\d}
\end{split}
\end{align}
This new building block, by itself, is associated to $9$ constraints and we need to add $5$ more, which we can choose among a set of $8$. However, we cannot add $\f^\text{can}_{\b\cs\d}$ or $\f^\text{can}_{\a\b\cs\d}$, since otherwise the second equation above, which is the only genuinely new element associated to the building block we are considering, could be replaced by a canonical constraint. Therefore we should add $5$ constraints chosen among the following
\begin{equation}
\f^\text{can}_{\b\cs},\quad  \f^\text{can}_{\b\d},\quad  \f^\text{can}_{\cs\d},\quad  \f^\text{can}_{\a\b\cs},\quad  \f^\text{can}_{\a\b\d},\quad  \f^\text{can}_{\a\cs\d},
\end{equation}
or equivalently exclude one from the above list.  Taking into account the symmetries, we can organize these six possibilities into two families. We can either remove a constraint of degree $\n=2$ or one of degree $\n=3$. In the first case, removing $\f^\text{can}_{\b\cs}$, we simply get the mutual information $\I_2^{\cs\d}$. The reason is that by the inclusion of all degree $\n=3$ constraints from the above list, we have now spoiled the first constraint, and this in turn  also spoils the second. The net effect is that we are back to the result of the $\I_\n$-theorem. In the second case,  removing instead $\f^\text{can}_{\a\b\cs}$, we get again the uplifting of the tripartite information.

While this was just a single example, it instructively reveals the `fragility' of certain constraints. What is clear is that while certain building blocks appear to be new, as they carry new types of constraints, the combinations we use to generate information quantities from them, may equivalently be generated by combining building blocks drawn from a more restricted set. A full classification scheme would therefore have to categorize building block configurations more usefully, perhaps by formally quantifying the intuitive notion of fragility.

\subsection{A new infinite family of primitive information quantities}
\label{subsec:new_family}

The combination of building blocks \eqref{eq:N4_config_new_family} which for $\N=4$ generates the information quantity \eqref{eq:N4_instance_new_family}, can easily be generalized to an arbitrary number of parties, producing a new infinite family of information quantities.\footnote{ It seems natural to focus on this generalization, given the particularly nice structure of the resulting quantities and the regularity of the pattern of building blocks which compose the generating configuration for arbitrary $\N$.} For a given value of $\N$, consider the following configuration
\begin{equation}
\c_\N=\;\bigsqcup_{\ell}\;\c_\N^\circledast[\ell(\si_{\N-1})]\;\bigsqcup_{\si_\n,\; \n\leq \N-2}\!\c^\circ_{\si_\n}
\end{equation}
which is the uncorrelated union of all the color-permutations of the new building blocks with a maximal number of internal disks, and all the canonical building blocks with degree up to $(\N-2)$. The corresponding set of constraints is
\begin{equation}
\{\f(\c_\N)\}=\mathfrak{F}_{[\N]}^\text{can}\cup\left\{\bigcup_{\si_\n,\,\n\leq \N-2}\f^\text{can}_{\si_\n}\right\}\cup\left\{\bigcup_\ell\f^\circledast_{\ell(\si_{\N-1})}\right\} \, .
\label{eq:family_constraints}
\end{equation}

This configuration generates the quantity
\begin{equation}
{\bf J}_\N(\a_1:\a_2:...:\a_\N)=\sum_{\n,\,\si_\n}(-1)^{\n}(\n-1) \, S_{\si_\n} \, ,
\label{eq:new_family}
\end{equation}
which is the only instance in this set-up, since it is completely symmetric. To prove that this quantity is the solution to the set of constraints \eqref{eq:family_constraints}, we can simply verify that this is the case by checking. Explicitly, the  constraints in $\mathfrak{F}_{[\N]}^\text{can}$ have the form (for each $\ell$)
\begin{equation}
\f^\text{can}_\ell:\sum_{\si, \;\ell\in\si}Q_\si=0
\end{equation}
Substituting in this expression the coefficients from \eqref{eq:new_family}, we obtain\footnote{ This identity and the one below follow by considering the binomial expansion for the polynomial $(-1)^{\sf p} \, x^{\sf p} \left({\sf p}-1 - x\, (\N-1)\right) \, (1-x)^{\N-1-{\sf p}} \, $ and setting $x=1$.}
\begin{equation}
\sum_{\n=1}^\N(-1)^{\n}(\n-1){\N-1\choose \n-1}=0
\end{equation}
More generally, a canonical constraint $\f_{\si_{\sf p}}^\text{can}$ is
\begin{equation}
\f_{\si_{\sf p}}^\text{can}:\sum_{\sk, \;\si_{\sf p}\subseteq\sk}Q_\sk=0
\end{equation}
and substituting again the coefficients from \eqref{eq:new_family} we obtain 
\begin{equation}
\sum_{\n={\sf p}}^\N(-1)^{\n}(\n-1){\N-{\sf p}\choose \n-{\sf p}}=0\,.
\end{equation}
Finally, the constraints $\f^\circledast_{\ell(\si_{\N-1})}$ are trivially satisfied, since they are simply
\begin{equation}
\f^\circledast_{\ell(\si_{\N-1})}:Q_\ell=0 \, .
\end{equation}

Notice that formally one can imagine to extend the family to the cases $\N=2,3$. For $\N=2$ this would simply be
\begin{equation}
{\bf J}_2(\a:\b)=S_{\a\b}
\end{equation}
which cannot be a primitive quantity because it can vanish only if both subadditivity and the Araki-Lieb inequality are saturated. Likewise, for $\N=3$ one would obtain
\begin{align}
\begin{split}
{\bf J}_3(\a:\b:\cs)=&\,S_{\a\b}+S_{\a\cs}+S_{\b\cs}-2S_{\a\b\cs}\\
=&\,S_{\a\b}+S_{\a\cs}-S_\a-S_{\a\b\cs}+S_\a+S_{\b\cs}-S_{\a\b\cs}\\
=&\,\I_2^{\a\b}+\I_2^{\a\cs}+\I_2^{\b\cs}-2\I_3^{\a\b\cs}
\end{split}
\end{align}
which similarly show that ${\bf J}_3(\a:\b:\cs)$ cannot be primitive. Notice that by the second line of the above expression, this quantity is always non-negative in general in quantum mechanics, as implied by strong subadditivity. 

In general, the new quantity ${\bf J}_\N$ can be written in the $\I_\n$-basis as follows:
\begin{equation}
{\bf J}_\N=\sum_{\si_{\N-1}}\I_{\si_{\N-1}}-(\N-1)\I_{\N} \, .
\end{equation}
To see that this is the case, one can simply check that for every $S_{\si_\n}$, the above expression gives the correct coefficient, i.e., $(-1)^\n(\n-1)$. The coefficient of $S_{\si_\n}$ in each $\I_{\si_{\N-1}}$ is $(-1)^{\n+1}$ if the combination of colors $\si_\n$ appears in $\I_{\si_{\N-1}}$ (i.e., if $\si_\n\subseteq\si_{\N-1}$), and zero otherwise. For each $\si_\n$ with $\n\leq\N-1$, the number of instances $\I_{\si_{\N-1}}$ which contain $\si_\n$ in their expansion is the number of collections of $(\N-1)$ colors $\si_{\N-1}$ that include the $\si_\n$ colors, which is $(\N-\n)$. Furthermore, the coefficient of $S_{\si_\n}$ in $\I_{\si_\N}$ is necessarily $(-1)^{\n+1}$, since $\si_\N$ contains all colors. Therefore for the coefficient of $S_{\si_\n}$ in the final expression, we have
\begin{equation}
(-1)^{\n+1}(\N-\n)-(-1)^{\n+1}(\N-1)=(-1)^{\n}(\n-1) \, ,
\end{equation}
as required by \eqref{eq:new_family}.\\

\section{Sign-definiteness of primitive quantities at large $N$}
\label{sec:sieve}

We now establish a connection between the holographic entropy arrangement introduced in \S\ref{sec:arrangement}, and the holographic entropy cone of  
\citep{Bao:2015bfa}. First, in \S\ref{subsec:polyhedron} we introduce a new object, the \textit{holographic entropy polyhedron} -- this is a convex polyhedron by construction, and is a cone owing to the properties of the arrangement. Then in \S\ref{subsec:R_partite} we will study certain properties of the $\r$-partite information ${\bf I}_\r$, showing in particular that it cannot have a definite sign when $\r$ is even. Finally, in \S\ref{subsec:sieve_general} we use these properties to develop an algorithm, that we refer to as the \textit{sieve}, which can efficiently be used to test whether a primitive quantity in the arrangement has a definite sign. 

It is important to clarify a-priori that this procedure will not provide a method for \textit{proving} holographic inequalities. Rather, it serves as a test which can be used to filter through a list of primitive quantities in the arrangement and generate `good candidates' for new inequalities. Furthermore, we will show how one can use this procedure to construct a \textit{candidate} holographic entropy polyhedron, in principle for an arbitrary number of colors, by looking at certain extremal points of the space of solutions to the set of constraints produced by the sieve. Remarkably, we will see that for $\N=4$ the outcome of the construction is precisely the holographic entropy cone, and that for $\N=5$ the procedure leads with surprising simplicity to all the new holographic inequalities found in \citep{Bao:2015bfa}.

\subsection{The holographic entropy polyhedron}
\label{subsec:polyhedron}

Let us begin by discussing two simple examples which motivate our definition below. For $\N=2$, the holographic entropy arrangement $\arr_2$ contains only three hyperplanes, 
corresponding to the primitive quantities in the generalized $3$-orbit of $\I_2(\a:\b)$ (see \eqref{eq:I2_AL_larger_orbit_2}).
For arbitrary bipartite density matrices these quantities are non-negative in quantum mechanics and the corresponding inequalities are the usual subadditivity and Araki-Lieb inequality. Geometrically, we can think of each such inequality $\hyper_\bQ\geq 0$ as specifying a half-space in extended entropy space (see footnote \ref{fn:extent}). The intersection of the three half-spaces corresponding to the solutions to these inequalities is a \textit{convex polyhedron}\footnote{ In general, a convex polyhedron is defined as the intersection of a finite number of half-spaces.} in extended entropy space $\mathbb{R}^3$. In fact, it is easy to see that the resulting polyhedron is entirely contained in the positive orthant of this space, i.e., the usual entropy space $\mathbb{R}^3_+$.\footnote{ This means that the additional inequalities which simply specify non-negativity of the von Neumann entropy are redundant for the specification of the region of entropy space where entropy vectors live.} Furthermore, this polyhedron is a \textit{cone}, it is the topological closure of the set of $2$-party quantum entropy vectors \citep{Pippenger:2003aa} and it coincides with the $2$-party holographic entropy cone \citep{Bao:2015bfa}.

Moving on to the $\N=3$ case, it was argued in \citep{Hubeny:2018trv} that the arrangement $\arr_3$ only contains the hyperplanes associated to the $4$-orbit of the primitive upliftings of $\I_2(\a:\b)$ (see the first line of \eqref{eq:I2_AL_larger_orbit_3}) and the hyperplane associated to $\I_3(\a:\b:\cs)$.
The quantities in the former set obviously have a definite sign, since they are just upliftings of the aforementioned inequalities; however, the tripartite information does not, for general quantum states. If we were to consider the polyhedron, as done above, specified only by inequalities which hold universally in quantum mechanics, we would obtain an object that extends to regions of entropy space where entropy vectors are incompatible with quantum mechanics. The quantum entropy cone for three parties mitigates this, for it is specified not just by subadditivity and the Araki-Lieb inequality, but additionally by the non-negativity of the conditional mutual information, i.e., \textit{strong subadditivity} (SSA). This simple example clarifies an important point: except for the (trivial) bipartite case, the hyperplanes of the arrangement do not in general correspond to \textit{universal quantum inequalities}. 

As first realized in \citep{Hayden:2011ag}, when the entropies of regions in holographic field theories are computed by the RT formula, the tripartite information is always non-positive. This fact is commonly known as \textit{monogamy of mutual information} (MMI). This also holds more generally for dynamical spacetimes \cite{Wall:2012uf}, assuming bulk energy conditions. It is important  to clarify the assumptions underlying MMI, since the tripartite information does not in general have a definite sign  in quantum mechanics. 

First of all, the proofs of \citep{Hayden:2011ag} and \cite{Wall:2012uf} rely on purely geometric constructs, and as such they are performed on the bulk side of the duality. Nevertheless, it is clear that MMI should be understood as a statement about states of the boundary theory. Specifically, given a field theory state of a holographic CFT\footnote{ Or more generally a tensor product of multiple CFTs.} that is dual to a certain bulk geometry (i.e., in the code subspace), one can ask if it is the case that for each choice of a configuration $\c_3$, the tripartite information has a definite sign. For example, it is obviously not the case that MMI would hold for arbitrary states, even if the theory itself is holographic. Second, even under these assumptions, the RT/HRT prescription only computes the \textit{leading contribution in the large $N$ limit}, whereas the inclusion of subleading corrections \cite{Faulkner:2013ana} could violate MMI.

 Given these examples, it is logical to define 
\begin{definition} 
\emph{\textbf{(Universal holographic inequality)}} We define a \emph{universal holographic inequality} as an expression $\bQ\geq 0$ which holds, at leading order in the $\frac{1}{N}$ expansion, for  any choice of configuration $\c$, and any field theory state $\ket{\psi_\Sigma}$, dual to a classical bulk geometry.
\end{definition}

This definition encompasses situations where the bulk is geometric, consistent with the set-up introduced in \citep{Hubeny:2018trv} and reviewed in \S\ref{sec:review}. In particular, the bulk geometry can be dual to the tensor product of an arbitrary number of holographic field theories and it can be dynamical. We do not impose any restriction on the choice of subsystems in the field theories and include the static multiboundary states considered in  \citep{Bao:2015bfa}. 

In general, it is not clear what the structure of these universal holographic inequalities is. It has been shown in \citep{Bao:2015bfa} that the collection of entropy rays associated to multiboundary wormhole geometries, for a restricted set of regions, is a polyhedral cone for any number of colors. Consequently, it is specified by a finite number of linear inequalities in entropy space. However, a-priori this result does not exclude the possibility that in more general situations, for example for choices of subregions in field theory and/or for dynamical spacetimes, the region in entropy space where entropy vectors are located, might have a  more complicated structure. For example, this region might be delimited by an infinite number of linear inequalities, or even by non-linear ones. It could also be the case that, like in quantum mechanics (see below), the structure of this region is so complicated that there exist \textit{constrained inequalities} (i.e., not universal ones), which hold under certain restrictions (see \S\ref{sec:discuss}).

Given the difficulty of the problem, we propose to take a simpler route motivated by the arrangement. Since the arrangement is finite and fixed (at given $\N$) for all geometries, we can ask which primitive quantities, if any, satisfy a universal holographic inequality. For a primitive quantity $\bQ\in\arr_\N$ let us suppose, for the moment, that we have some machinery to prove whether or not it satisfies a universal holographic inequality. 
The hyperplane $\hyper_\bQ$ associated to $\bQ$ divides extended entropy space into two half-spaces
\begin{equation}
\hyper_\bQ^+:\;\; \bQ({\bf S})\geq 0,\qquad  \hyper_\bQ^-:\;\; \bQ({\bf S})\leq 0
\label{eq:half-space}
\end{equation}
When $\bQ$ satisfies a universal holographic inequality, we label the corresponding half-space of solutions by $\hyper_\bQ^\pm$, depending on the directionality.  We then consider the intersection of the half-spaces specified by all the quantities in the arrangement which satisfy a universal holographic inequality.  Since the arrangement is finite, the intersection of all such half-spaces is a \textit{convex polyhedron}. We therefore define 

\begin{definition}
\emph{\textbf{(Holographic entropy polyhedron)}} The intersection of all the half-spaces associated to all primitive quantities which satisfy a universal holographic inequality is a polyhedron in extended entropy space called the \emph{holographic entropy polyhedron}.
\end{definition}

Since all the inequalities are homogeneous, the holographic entropy polyhedron is a \textit{convex cone}. However, it is a-priori not clear that this cone is pointed\footnote{ We adopt the definition according to which a cone is \textit{pointed} if it contains the origin but it does not contain any non-trivial subspace of the ambient space.} and contained within entropy space $\mathbb{R}^{\sf{D}}_+$ (when restricted to primitive information quantities).
 This depends on the number of primitives in the arrangement, how many are associated to universal inequalities, and their linear dependence. In fact, it is already clear that not all primitive quantities in $\arr_\N$ have a definite sign holographically. An example is given by $\I_4$, which, as already observed in \citep{Hayden:2011ag}, can have both signs. If the polyhedron extends beyond entropy space, this would signal the fact that other inequalities, not associated to primitive quantities, exist in holography, in order to enforce compatibility with quantum mechanics. As mentioned above, these could be non-linear inequalities, or other inequalities that are saturated only by very special configurations, like the ones of \citep{Bao:2015bfa}, which are associated to finite entropy vectors (see \S\ref{sec:discuss} for further comments).

In general, given a quantity $\bQ\in\arr_\N$, it is a very hard problem to determine whether it satisfies a universal holographic inequality. We will not attempt to tackle this question here.  Instead, we want to develop a technique which can be employed to efficiently exclude primitive quantities in the arrangement as possible inequalities, and therefore to construct a \textit{candidate} polyhedron. We will explain how this can be done in \S\ref{subsec:sieve_general}.

\subsection{Evaluation of the $\r$-partite information on special configurations}
\label{subsec:R_partite}

To develop the sieve that we introduce in \S\ref{subsec:sieve_general}, it will be crucial to know the value (as a formal linear combination of surfaces) of the $\r$-partite information, when evaluated on certain classes of $\N$-partite configurations. Specifically, we will consider three types of configurations: \textit{decoupled configurations} which we introduce momentarily, the \textit{canonical building blocks} introduced in \S\ref{sec:review}, and its variant \textit{locally purified canonical building blocks} to be defined below. Since in \S\ref{subsec:sieve_general} we use the expansion of an information quantity decomposed in the $\I_{\n}$-basis (see \S\ref{subsec:N_partite0}), we limit ourselves to the evaluation of the trivial upliftings of the $\r$-partite information for all $\si_\n$, with $\r\leq \n\leq\N$, for fixed $\N$.

\paragraph{(1). Decoupled configurations:}  In the $3$-party setting, consider the canonical building block $\c_3^\circ[\a\b]$. This is an example of a configuration where one of the colors is ``decoupled'' from the others, in the sense that 
\begin{equation}
\I_2(\a\b:\mathcal{C})(\c_3^\circ[\a\b])=0
\end{equation}
Heuristically, one can think of the configuration $\c_3^\circ[\a\b]$ as being associated to a reduced density matrix that factorizes as $\rho_{\a\b\cs}=\rho_{\a\b}\otimes\rho_\cs$ (modulo usual caveats). Geometrically, all we are doing is move the subsystem $\mathcal{C}$ far enough away from $\a$  and $\b$. It is a trivial exercise to check that the tripartite information $\I_3(\a:\b:\cs)$ vanishes if evaluated on this configuration. More generally, we define

\begin{definition}
\emph{\textbf{(Decoupled configuration)}} For a configuration $\c_\N$ in an $\N$-party setting, we say that a subsystem $\a_\si$ is \emph{decoupled}, if 
\begin{equation}
\I_2(\a_\si:\a_{[\N]}\setminus\a_\si)=0
\end{equation}
\end{definition}

We then have 

\begin{lemma}
\emph{\textbf{(Decoupling)}} In an $\N$-party setting, all trivial upliftings $\I_{\si_\n}$ of the $\r$-partite information, for $2\leq\r\leq\N$, vanish when evaluated on a configuration $\c_\N$ where 
(at least) one of the colors in $\si_\n$ is decoupled.
\end{lemma}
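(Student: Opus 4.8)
The plan is to reduce the statement to the behaviour of a single decoupled color and then invoke the reduction formulae of \S\ref{subsec:N_partite0}. First I would observe that it suffices to treat the case where $\r=\n$ and the trivial uplifting $\I_{\si_\n}=\I_\n(\a_{\ell_1}:\cdots:\a_{\ell_\n})$ has one of its arguments, say $\a_{\ell_1}$, decoupled; the general trivial uplifting $\I_{\si_\n}$ is already of this form by definition, since trivial upliftings are precisely the $\I_\n$ evaluated on monochromatic slots. So the real content is: if $\a_{\ell_1}$ satisfies $\I_2(\a_{\ell_1}:\a_{[\N]}\setminus\a_{\ell_1})=0$ on $\c_\N$, then $\I_\n(\a_{\ell_1}:\a_{\ell_2}:\cdots:\a_{\ell_\n})(\c_\N)=0$ as a formal sum of surfaces.

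The key step is a surface-level (proto-entropy) computation. Decoupling of $\a_{\ell_1}$ means that the extremal surface computing $S_{\a_{\ell_1}\a_\sk}$, for any polychromatic index $\sk$ not containing $\ell_1$, splits as $\extr{\a_{\ell_1}}\cup\extr{\a_\sk}$, i.e.\ at the level of proto-entropy $S_{\ell_1\sk}=S_{\ell_1}+S_\sk$ for all such $\sk$ (including $\sk=\emptyset$ trivially). This is exactly the statement that the mutual information $\I_2(\a_{\ell_1}:\a_\sk)$ vanishes for every $\sk\subseteq[\N]\setminus\ell_1$, which follows from monotonicity of mutual information and $\I_2(\a_{\ell_1}:\a_{[\N]\setminus\ell_1})=0$ once one passes to proto-entropies (there the vanishing is exact via cancellation of surfaces, as reviewed around \eqref{eq:MIvanishes2}). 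Then I would use the form \eqref{eq:formal_rewriting} of the $\r$-partite information that singles out $\mathcal{X}_i=\a_{\ell_1}$:
\begin{equation}
\I_\n(\a_{\ell_1}:\cdots:\a_{\ell_\n})=S_{\ell_1}-\sum_{\sk}(-1)^{\#\sk+1}S_{\ell_1\sk}+\sum_\sk(-1)^{\#\sk+1}S_\sk,
\end{equation}
where the sums run over nonempty $\sk\subseteq\{\ell_2,\ldots,\ell_\n\}$. Substituting $S_{\ell_1\sk}=S_{\ell_1}+S_\sk$ (valid on $\c_\N$ by decoupling) collapses the middle sum: it becomes $S_{\ell_1}\sum_{\sk}(-1)^{\#\sk} + \sum_\sk(-1)^{\#\sk+1}S_\sk$. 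The first piece is $S_{\ell_1}$ times $\sum_{\sk\neq\emptyset}(-1)^{\#\sk}=-1+\sum_{\sk}(-1)^{\#\sk}=-1+(1-1)^{\n-1}=-1$ for $\n\geq2$, so it contributes $-S_{\ell_1}$, cancelling the leading $S_{\ell_1}$; and the second piece exactly cancels $\sum_\sk(-1)^{\#\sk+1}S_\sk$. Hence $\I_\n(\cdots)=0$ as a formal linear combination of surfaces, which is the assertion.

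The main obstacle, such as it is, is making the step $S_{\ell_1\sk}=S_{\ell_1}+S_\sk$ rigorous for \emph{all} $\sk$ rather than just $\sk=[\N]\setminus\ell_1$: one must argue that decoupling of the full complement forces decoupling against every sub-collection. The cleanest route is the one already used implicitly in the paper—monotonicity of the mutual information, $\I_2(\a_{\ell_1}:\a_\sk)\leq\I_2(\a_{\ell_1}:\a_{\sk'})$ for $\sk\subseteq\sk'$, combined with non-negativity (subadditivity)—so that $\I_2(\a_{\ell_1}:\a_{[\N]\setminus\ell_1})=0$ pins $\I_2(\a_{\ell_1}:\a_\sk)=0$ for all $\sk$. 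At the proto-entropy level the vanishing mutual information is equivalent to the cancellation of the connected surfaces anchored to $\partial\a_{\ell_1}$ in $\extr{\a_{\ell_1}\a_\sk}$ against those in $\extr{\a_{\ell_1}}$ and $\extr{\a_\sk}$, which is precisely $S_{\ell_1\sk}=S_{\ell_1}+S_\sk$ in $\boldsymbol{\mathscr{E}}(\c_\N)$. Alternatively, one can give a purely geometric argument: since $\a_{\ell_1}$ is moved far from \emph{everything}, its homology surface never participates in any other extremal surface, which gives the splitting directly. Either way the computation above then finishes the proof, and the same substitution applied to the non-trivial upliftings via the reduction formula \eqref{eq:general_reduction_formula} extends the conclusion to all $\I_{\si_\n}$ with a decoupled color in $\si_\n$ if one wishes to state it in that generality.
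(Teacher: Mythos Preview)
Your proof is correct and follows essentially the same approach as the paper: rewrite $\I_\n$ via \eqref{eq:formal_rewriting} to single out the decoupled color, substitute the additivity $S_{\ell_1\sk}=S_{\ell_1}+S_\sk$, and observe the cancellations (the paper simply invokes balance of $\I_{\si_\n}$ for the $S_{\ell_1}$ cancellation where you compute the binomial sum explicitly, and it takes the additivity for all $\sk$ for granted where you justify it via monotonicity of mutual information). There is a sign slip in your displayed intermediate expression for the middle sum---the second piece should be $-\sum_\sk(-1)^{\#\sk+1}S_\sk$, as your own subsequent cancellation claim requires---but this is a typo that does not affect the argument.
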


\begin{proof}
Consider a configuration $\c_\N$ where a monochromatic subsystem $\a_{\ell_i}$ is decoupled, and a trivial uplifting $\I_{\si_\n}$ of the $\r$-partite information such that $\ell_i\in\si_\n$. Using \eqref{eq:formal_rewriting} we can rewrite $\I_{\si_\n}$ in a form that singles out the subsystem $\a_{\ell_i}$ as follows
\begin{equation}
\I_{\si_\n}(\a_{\ell_1}:\cdots:\a_{\ell_i}:\cdots:\a_{\ell_\r})=S_{\ell_i}-\sum_{\sk\subseteq(\si_\n\setminus\ell_i)}(-1)^{\#\sk+1}S_{\ell_i\sk}+\sum_{\sk\subseteq(\si_\n\setminus\ell_i)}(-1)^{\#\sk+1}S_\sk
\end{equation}
Since $\a_{\ell_i}$ is decoupled from the rest of the configuration, the entropy is additive and we have 
\begin{equation}
S_{\ell_i\sk}=S_{\ell_i}+S_{\sk},\qquad \forall\,\sk
\end{equation}
In the expression above, the two sums cancel, since they are equal and have an opposite overall sign. The entropies $S_{\ell_i}$ also cancel because $\I_{\si_\n}$ is balanced and we get $\I_{\si_\n}(\c_\N)=0$.
\end{proof}

\paragraph{(2). Canonical building blocks:} we now want to evaluate $\I_{\si_\n}$ on the canonical building blocks. Let us begin with a simple example, the case $\n=\N=3$.\footnote{ As it will become clear momentarily, the case $\n=2$ is trivial, since one always has $\I_{\si_2}(\c_\N^\circ)=0$. } In this simple set-up there are only two different canonical building blocks (up to permutations), viz.,  $\c_3^\circ[\a\b]$ and $\c_3^\circ[\a\b\c]$, and one instance of the tripartite information, $\I_3$. By decoupling, $\I_3$ vanishes when evaluated on the first building block. On the other hand, evaluating $\I_3$ on the second building block gives
\begin{equation}
\I_3(\c_3^\circ[\a\b\c])=-a-b-c+abc\prec 0
\label{eq:I3_on_lpcbb}
\end{equation}
where $a,b,c$ are the extremal surfaces which compute the proto-entropy of $\a,\b,\cs$ respectively, and $abc$ is the `$3$-legged octopus' surface which computes the proto-entropy of $\a\b\cs$. Notice that since we are using the proto-entropy, the result of $\I_3(\c_3^\circ[\a\b\c])$ is not a real number, but a formal linear combination of entropies. However, the building block is defined by specifying a particular pattern of mutual information among its component regions (disks). The choice of pattern of mutual information that defines the above building block precisely corresponds to the assumption that the area of the octopus surface is less than the sum of the areas of the three domes. This specification determines a partial ordering ($\prec$) in the space of formal combination of surfaces which allows to formally attribute a sign to the above combination of surfaces, even if the area functional is not evaluated. 

More generally, it will be convenient to introduce, for an arbitrary canonical building block, a standard notation for a particular formal combination of surfaces analogous to the one in \eqref{eq:I3_on_lpcbb}. Specifically, for an arbitrary canonical building block $\c_\N^\circ[\si_\n]$ of degree $\n$, we denote by $a_\ell$ the `dome' homologous to $\a_\ell$, with $\ell\in\si_\n$, and by $a_{\ell_1}a_{\ell_2}...a_{\ell_\n}$ the `$\n$-legged octopus' homologous to $\a_{\ell_1}\a_{\ell_2}...\a_{\ell_\n}$. We then introduce the following object\footnote{ Notice that this object is non-trivial only for $\n\geq 3$, for $\n=2$ it is simply the null element in the abstract space of linear combinations of surfaces.}
\begin{equation}
\comb_{\si_\n}\eqdef -a_{\ell_1}-a_{\ell_2}...-a_{\ell_\n}+a_{\ell_1}a_{\ell_2}...a_{\ell_\n}\prec 0
\label{eq:flower}
\end{equation}
and prove the following general result

\begin{lemma}
\emph{\textbf{(Evaluation on canonical building blocks)}} In an $\N$-party setting, the evaluation of the trivial upliftings $\I_{\si_\n}$ of the $\r$-partite information, for $2\leq\r\leq\N$, on the canonical building blocks gives
\begin{equation}
\I_{\si_\n}(\c_\N^\circ[\sk_{\sf m}]) = 
\begin{cases}
(-1)^{\n+1}\;\comb_{\sk_{\sf m}}  & \text{\emph{if}}\;\;  \si_\n=\sk_{\sf m} \\
0    & \text{\emph{otherwise}}
\end{cases}
\end{equation}
\label{lemma:canonical}
\end{lemma}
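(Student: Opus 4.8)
The plan is to evaluate $\I_{\si_\n}$ on a canonical building block $\c_\N^\circ[\sk_{\sf m}]$ by first recalling the explicit surface content of such a configuration. As described in \S\ref{subsec:review3}, the RT surfaces appearing in $\c_\N^\circ[\sk_{\sf m}]$ are precisely the $\N$ ``domes'' $a_\ell$ homologous to the individual disks $\a_\ell$, plus one single ``$\sf m$-legged octopus'' surface $a_{\ell_1}a_{\ell_2}\cdots a_{\ell_{\sf m}}$ connecting the colors in $\sk_{\sf m}$. Consequently the proto-entropy of a polychromatic subsystem $\a_\si$ is the sum of the domes $a_\ell$ for $\ell\in\si$ whenever $\sk_{\sf m}\not\subseteq\si$, and it is the octopus surface $a_{\ell_1}\cdots a_{\ell_{\sf m}}$ together with the domes $a_\ell$ for $\ell\in\si\setminus\sk_{\sf m}$ whenever $\sk_{\sf m}\subseteq\si$. (The octopus only enters the computation of $S_\si$ when $\si$ envelops all of $\sk_{\sf m}$, because it is homologous to $\a_{\sk_{\sf m}}$ and is cheaper than the corresponding union of $\sf m$ domes by the defining mutual-information pattern.)

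Given this, the first step is the easy case $\si_\n\neq\sk_{\sf m}$. If $\si_\n$ contains a color $\ell_i$ not in $\sk_{\sf m}$, I would argue $\a_{\ell_i}$ is decoupled in $\c_\N^\circ[\sk_{\sf m}]$ (its only correlations, if any, are internal to $\sk_{\sf m}$, of which it is not a member) and invoke the Decoupling Lemma to get $\I_{\si_\n}(\c_\N^\circ[\sk_{\sf m}])=0$. The remaining subcase is $\si_\n\subsetneq\sk_{\sf m}$, i.e.\ $\si_\n$ is a proper subset of $\sk_{\sf m}$: then no subsystem $\a_\si$ with $\si\subseteq\si_\n$ ever envelops all of $\sk_{\sf m}$, so the octopus never appears, every proto-entropy $S_\si$ entering $\I_{\si_\n}$ is purely additive in the domes, $S_\si=\sum_{\ell\in\si}a_\ell$, and balance of $\I_{\si_\n}$ (Definition~\ref{definition:balance_v1}) forces the total coefficient of each $a_\ell$ to vanish, giving $0$ again. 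This also covers the trivial remark in the footnote that the $\n=2$ case is null.

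The second step is the case $\si_\n=\sk_{\sf m}$ (so in particular $\n={\sf m}$). Using the rewriting \eqref{eq:formal_rewriting} of $\widetilde{\I}_\n$ singling out $\mathcal{X}_i=\a_{\ell_i}$ for some $\ell_i\in\sk_{\sf m}$, or more directly substituting the proto-entropies above into $\I_{\si_\n}=\sum_{\si\subseteq\si_\n}(-1)^{\#\si+1}S_\si$, I would split the sum into the terms with $\si\subsetneq\si_\n$ (which contribute only domes $a_\ell$) and the single top term $\si=\si_\n=\sk_{\sf m}$ (which contributes the octopus $a_{\ell_1}\cdots a_{\ell_{\sf m}}$). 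By balance the dome contributions cancel among the proper-subset terms except for the pieces that would have matched against the top term; more carefully, the coefficient of a given dome $a_{\ell_j}$ in $\I_{\si_\n}$ is $\sum_{\si\subseteq\si_\n,\ \ell_j\in\si}(-1)^{\#\si+1}=0$ over all $\si$, but the top term contributes the octopus rather than $\sum_{\ell\in\sk_{\sf m}}a_\ell$, so relative to the naive additive count we have subtracted $(-1)^{\n+1}\sum_{\ell\in\sk_{\sf m}}a_\ell$ and added $(-1)^{\n+1}a_{\ell_1}\cdots a_{\ell_{\sf m}}$. This yields exactly $(-1)^{\n+1}\comb_{\sk_{\sf m}}$ with $\comb_{\sk_{\sf m}}$ as in \eqref{eq:flower}, and the partial ordering $\comb_{\sk_{\sf m}}\prec 0$ is the defining property of the building block.

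The main obstacle, such as it is, is purely bookkeeping: one must track carefully that the octopus surface enters $S_\si$ \emph{only} when $\sk_{\sf m}\subseteq\si$ and not, say, partially, and that in the case $\si_\n\subsetneq\sk_{\sf m}$ no proto-entropy in the sum ever triggers the octopus. Once the surface content of $\c_\N^\circ[\sk_{\sf m}]$ is pinned down, the rest is the same alternating-sum cancellation that already appears in \eqref{eq:change_basis} and in the proof of the Decoupling Lemma, so I expect no genuine difficulty beyond organizing the three cases ($\si_\n\not\subseteq\sk_{\sf m}$, $\si_\n\subsetneq\sk_{\sf m}$, $\si_\n=\sk_{\sf m}$) cleanly.
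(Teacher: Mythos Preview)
Your proposal is correct and follows essentially the same approach as the paper: both use the Decoupling Lemma for the case where $\si_\n$ contains a color outside $\sk_{\sf m}$, balance to kill the pure-dome contributions when $\si_\n\subsetneq\sk_{\sf m}$, and balance again to read off the $(-1)^{\n+1}\comb_{\sk_{\sf m}}$ when $\si_\n=\sk_{\sf m}$. Your case split is in fact slightly cleaner than the paper's (which writes ``either $\si_\n\supset\sk_{\sf m}$ or $\si_\n\subset\sk_{\sf m}$'' while really meaning $\si_\n\not\subseteq\sk_{\sf m}$ versus $\si_\n\subsetneq\sk_{\sf m}$).
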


\begin{proof} $ $\newline
\vspace{-1em}
\begin{itemize}
\item If $\si_\n=\sk_{\sf m}$, the term $S_{\si_\n}$ in $\I_{\si_\n}$, which has coefficient $(-1)^{\n+1}$, is computed by the $\n$-legged octopus surface homologous to the disks in $\c_\N^\circ[\sk_{\sf m}]$ corresponding to the colors in $\si_\n$. Notice that since $\si_\n=\sk_{\sf m}$, there is no other term in $\I_{\si_\n}$ for which this surface contributes to the entropy. All other terms $S_\sj$ are instead given by sums of certain domes, specifically the ones which are homologous to the disks corresponding to the colors in $\sj$. Since $\I_{\si_\n}$ is invariant under any permutations of the colors in $\si_\n$, we just need to compute how many times a disk for one of the colors in $\si_\n$ will appear in the finial expression. All other colors will give an analogous contribution. Since for any color $\ell \in\si_\n$, $\I_{\si_\n}$ is balanced, a dome $a_\ell$ would cancel in the final expression if it was not for the presence of the octopus surface. Therefore $a_\ell$ appears in the final expression with the opposite sign to that of the octopus surface, and the same holds for all other colors in $\si_\n$.
\item If $\si_\n\neq\sk_{\sf m}$, either $\si_\n\supset\sk_{\sf m}$ or $\si_\n\subset\sk_{\sf m}$. In the first case, there exists a color $\ell$ which belongs to $\si_\n$ but not to $\sk_{\sf m}$, and we can apply the decoupling Lemma. In the second case, the ${\sf m}$-legged octopus surface homologous to the disks with colors $\sk_{\sf m}$ cannot contribute to any term $\sj$ in $\I_{\si_\n}$, which are therefore computed by sums of domes. But since $\I_{\si_\n}$ is balanced all the domes will cancel in the final expression.
\end{itemize}
\end{proof}

\begin{figure}[t]
\centering
\begin{subfigure}{0.49\textwidth}
\centering
\begin{tikzpicture}
\draw[fill=black] (0,0) circle (0.08cm);
\draw[dashed] (0,0) -- (0,1);
\draw[dashed] (0,0) -- (0.866,-0.5);
\draw[dashed] (0,0) -- (-0.866,-0.5);
\draw[fill=color1] (0,1) circle (0.5cm);
\draw[fill=color3] (-0.866,-0.5) circle (0.5cm);
\draw[fill=color2] (0.866,-0.5) circle (0.5cm);  
\draw[fill=color4] (2.4,1.5) circle (0.2cm);
\draw[fill=color5] (3,1.5) circle (0.2cm);
\draw[fill=color6] (3.6,1.5) circle (0.2cm);
\node at (0,1) {\footnotesize{$\regA_1$}};
\node at (-0.866,-0.5) {\footnotesize{$\regA_3$}};
\node at (0.866,-0.5) {\footnotesize{$\regA_2$}};
\node at (2.4,2) {\footnotesize{$\regA_4$}};
\node at (3,2) {\footnotesize{$\regA_5$}};
\node at (3.6,2) {\footnotesize{$\regA_6$}};
\draw (2,1.2) rectangle (4,2.3);
\end{tikzpicture}
\caption{}
\end{subfigure}
\hfill
\begin{subfigure}{0.49\textwidth}
\centering
\begin{tikzpicture}
\draw[fill=color4] (0,0) circle (2cm);
\draw[fill=black] (0,0) circle (0.08cm);
\draw[dashed] (0,0) -- (0,1);
\draw[dashed] (0,0) -- (0.866,-0.5);
\draw[dashed] (0,0) -- (-0.866,-0.5);
\draw[fill=color1] (0,1) circle (0.5cm);
\draw[fill=color3] (-0.866,-0.5) circle (0.5cm);
\draw[fill=color2] (0.866,-0.5) circle (0.5cm);  
\draw[fill=color5] (2.4,1.5) circle (0.2cm);
\draw[fill=color6] (3,1.5) circle (0.2cm);
\node at (0,1) {\footnotesize{$\regA_1$}};
\node at (-0.866,-0.5) {\footnotesize{$\regA_3$}};
\node at (0.866,-0.5) {\footnotesize{$\regA_2$}};
\node at (2.4,2) {\footnotesize{$\regA_5$}};
\node at (3,2) {\footnotesize{$\regA_6$}};
\draw (2,1.2) rectangle (4,2.3);
\end{tikzpicture}
\caption{}
\end{subfigure}
\caption{(a) The canonical building block $\c_6^\circ[\a_1\a_2\a_3]$. (b) The locally purified canonical building block $\c_6^\circledcirc[\a_4(\a_1\a_2\a_3)]$ obtained by locally purifying $\c_6^\circ[\a_1\a_2\a_3]$ with $\a_4$. Notice that the disks $\a_5,\a_6$ remain completely uncorrelated (box).}
\label{fig:locally_purified_canonical_bb}
\end{figure}

\paragraph{(3). Locally purified canonical building blocks:} A crucial element of the sieve will be the evaluation of the $\r$-partite information on a particular new class of configurations which we now introduce. We will call these configurations \textit{locally purified canonical building blocks}.  These are simply the configurations obtained after the first step of the construction presented in \S\ref{sec:four}  (cf., Fig.~\ref{fig:new_building_block_2}) to obtain the new non-canonical building blocks for $\N\geq 4$.

 In an $\N$-party setting, consider one of the usual canonical building blocks $\c_\N^\circ[\si_\n]$, where $2\leq\n<\N$ (nb: strict upper inequality). We now consider the disk $\a_\ell$ in $\c_\N^\circ[\si_\n]$ associated to one of the $\N$ colors which is not in $\si_\n$, and replace it by a region which is enveloping all the colors in $\c_\N^\circ[\si_\n]$ (and thus is adjacent to all of them). We still call this region $\a_\ell$ and we assume that it is sufficiently large such that 
\begin{equation}
\I_2(\a_\ell\a_{\si_\n}:\o\a_{\ell_{\n+2}}\a_{\ell_{\n+3}}...\a_{\ell_\N})=0
\end{equation}
where $\a_{\ell_{\n+2}},\a_{\ell_{\n+3}},...,\a_{\ell_\N}$ are all the remaining colors in the building block. (If necessary, we move these remaining disks further away such that they remain completely uncorrelated.)  We will then say that the canonical building block $\c_\N^\circ[\si_\n]$ is \textit{locally purified$\,$\footnote{ The reason for the terminology is that for the resulting configuration, one can imagine, at least heuristically, that there is a (non-spatial) decomposition of $\a_\ell$ into two component one of which purifies $\a_{\si_\n}$.} by $\a_\ell$}, and write the resulting configuration as 
\begin{equation}
\c_\N^\circledcirc[\ell(\si_{\n-1})],\qquad  \si_{\n-1}=\si_{\n}\setminus \{\ell\}
\end{equation}
The construction of $\c_4^\circledcirc[\a(\b\cs\d)]$ is shown in Fig.~\ref{fig:new_building_block_2}. Notice however that this example is special since $\n=\N$. More generally there will be additional uncorrelated disks in the configuration (see for example Fig.~\ref{fig:locally_purified_canonical_bb}).

The evaluation of the trivial upliftings of the $\r$-partite information on these particular configurations is then given by the following result:

\begin{lemma}
\emph{\textbf{(Evaluation on locally purified canonical building blocks)}} In an $\N$-party setting, the evaluation of the trivial upliftings $\I_{\si_\n}$ of the $\r$-partite information, for $2\leq\r\leq\N$, on the the locally purified canonical building blocks, gives
\begin{equation}
\I_{\si_\n}(\c_\N^\circledcirc[\ell(\sk_{\sf m})])=
\begin{cases}
(1+(-1)^\n)\;\comb_{\sk_{\sf m}}  & \text{\emph{if}}\;\;   \si_\n=\ell\sk_{\sf m}  \\
\comb_{\sk_{\sf m}} & \text{\emph{if}}\;\;   \si_\n\subset\ell\sk_{\sf m}\; \text{\emph{and}}\; \ell\in\si_\n \\
(-1)^{\n+1}\;\comb_{\sk_{\sf m}}      & \text{\emph{if}}\;\;   \si_\n=\sk_{\sf m}  \\
0    & \text{\emph{otherwise}}
\end{cases}
\end{equation}
\label{lemma:lpcbb}
\end{lemma}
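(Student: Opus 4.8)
The plan is to mimic the proof of Lemma~\ref{lemma:canonical}, treating the locally purified canonical building block $\c_\N^\circledcirc[\ell(\sk_{\sf m})]$ as a small deformation of the ordinary canonical building block $\c_\N^\circ[\ell\sk_{\sf m}]$ in which the disk $\a_\ell$ has been enlarged into an enveloping (adjoining) region. The only surfaces appearing in the proto-entropies are then: the domes $a_{\ell'}$ for each $\ell'\in\sk_{\sf m}$ (homologous to the small disks), the ``internal'' octopus surface — which I will denote by the same string $a_{\ell_1}a_{\ell_2}\cdots a_{\ell_{\sf m}}$ as in \eqref{eq:flower} — now anchored on the \emph{internal} boundaries of $\a_\ell$, and, for any polychromatic index that contains $\ell$, the homology surface for $\a_\ell$ together with the colors in that index, which must wrap around the cluster and hence equals either the enveloping piece anchored on the outer boundary of $\a_\ell$ (call it $a_\ell^{\rm out}$) or that outer piece together with the internal octopus, depending on whether the index contains all of $\sk_{\sf m}$ or only part of it. The key combinatorial input is: $S_{\ell\sj}$ for $\sj\subseteq\sk_{\sf m}$ equals $a_\ell^{\rm out}$ if $\sj=\sk_{\sf m}$ (purification: the surface for $\a_\ell\a_{\sk_{\sf m}}$ is the same as for its complement, i.e.\ just the outer dome), and equals $a_\ell^{\rm out}+\sum_{\ell'\in\sk_{\sf m}\setminus\sj}a_{\ell'}$ when $\sj\subsetneq\sk_{\sf m}$; while $S_{\sj}=\sum_{\ell'\in\sj}a_{\ell'}$ for $\sj\subseteq\sk_{\sf m}$ not containing $\ell$, and $S_{\sk_{\sf m}}$ is the internal octopus $a_{\ell_1}\cdots a_{\ell_{\sf m}}$.

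First I would handle the four cases. Case $\si_\n=\sk_{\sf m}$ (so $\ell\notin\si_\n$): then none of the terms of $\I_{\si_\n}$ involve $\a_\ell$ at all, the configuration restricted to these colors is exactly the canonical building block, and Lemma~\ref{lemma:canonical} gives $(-1)^{\n+1}\comb_{\sk_{\sf m}}$ directly. Case $\si_\n$ disjoint from $\ell\sk_{\sf m}$ or properly containing $\ell\sk_{\sf m}$, or $\si_\n\subset\sk_{\sf m}$ with $\ell\notin\si_\n$: in each of these the decoupling Lemma applies — some color in $\si_\n$ is decoupled from the rest of the configuration — so the evaluation vanishes. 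This disposes of the ``otherwise'' line. Case $\si_\n\subset\ell\sk_{\sf m}$ with $\ell\in\si_\n$ (strict inclusion, so $\si_\n=\ell\sj$ with $\sj\subsetneq\sk_{\sf m}$): here I single out $\a_\ell$ using the rewriting \eqref{eq:formal_rewriting}, substitute the surface expressions above, and compute. The domes $a_{\ell'}$ with $\ell'\in\sj$ cancel as in the canonical case; the outer dome $a_\ell^{\rm out}$ and the monochromatic $S_\ell$ term cancel because $\I_{\si_\n}$ is balanced; and the residual terms are exactly the domes $a_{\ell'}$ with $\ell'\in\sk_{\sf m}\setminus\sj$ together with — from the single term $S_{\ell\sj'}$ with $\sj'$ the largest subset, which actually never reaches $\sk_{\sf m}$ because $\sj\subsetneq\sk_{\sf m}$ — hmm, one must be careful: the internal octopus $a_{\ell_1}\cdots a_{\ell_{\sf m}}$ never appears when $\sj\subsetneq\sk_{\sf m}$, since no argument of $\I_{\si_\n}$ contains all of $\sk_{\sf m}$. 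So the claimed answer $\comb_{\sk_{\sf m}}$ would require the octopus to appear — this tells me the bookkeeping has to be redone via the homology structure: with $\a_\ell$ enveloping, $S_{\ell\sj}$ for $\sj\subsetneq\sk_{\sf m}$ is the outer dome plus the \emph{internal octopus minus the domes in $\sj$}, i.e.\ $a_\ell^{\rm out}+(a_{\ell_1}\cdots a_{\ell_{\sf m}})-\sum_{\ell'\in\sj}a_{\ell'}$, because the minimal surface homologous to $\a_\ell\cup\bigcup_{\ell'\in\sj}\a_{\ell'}$ wraps the whole cluster from outside and then must cap off the disks \emph{not} in $\sj$ from inside — and those caps, combined, form the internal octopus configuration. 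Re-running the cancellation with this corrected rule yields precisely $-\sum_{\ell'\in\sk_{\sf m}}a_{\ell'}+a_{\ell_1}\cdots a_{\ell_{\sf m}}=\comb_{\sk_{\sf m}}$.

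Finally, the case $\si_\n=\ell\sk_{\sf m}$ (so $\n={\sf m}+1$): again single out $\a_\ell$ via \eqref{eq:formal_rewriting}. The new feature is that now the argument set \emph{does} include all of $\sk_{\sf m}$, so the term $S_{\ell\sk_{\sf m}}=a_\ell^{\rm out}$ (pure outer dome, by purification) and the term $S_{\sk_{\sf m}}=a_{\ell_1}\cdots a_{\ell_{\sf m}}$ (internal octopus) both occur, with signs $(-1)^{{\sf m}+1}$ and $(-1)^{{\sf m}+1}$ respectively in the single-out form. Carrying out the cancellations: the internal octopus is now contributed twice — once from $S_{\sk_{\sf m}}$ directly, and once from the residual of the $S_{\ell\sj}$ sum via the corrected homology rule — giving a factor that assembles into $(1+(-1)^\n)\comb_{\sk_{\sf m}}$ (which of course vanishes for odd $\n$, consistently with the building block having no octopus contribution to the fully-wrapping index in that parity). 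The main obstacle, and the step I expect to require the most care, is exactly this homology bookkeeping for $S_{\ell\sj}$ with $\ell$ the enveloping region: one must argue cleanly that the RT surface for $\a_\ell$ together with a \emph{proper} subset of the internal disks is forced to be the outer piece plus the internal octopus (minus the chosen disks' domes), rather than some other competitor — this follows from the defining mutual-information pattern of the canonical building block (the octopus being cheaper than the sum of its domes) carried over to the locally purified version, exactly as the sign $\comb_{\sk_{\sf m}}\prec 0$ is defined in \eqref{eq:flower}, but it should be spelled out. Once that homology dictionary is established, the four cases are routine balanced-sum cancellations identical in spirit to Lemma~\ref{lemma:canonical}.
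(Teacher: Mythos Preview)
Your overall strategy matches the paper's: split into the four cases, use balancedness to kill the outer dome, and track the octopus and the individual domes. The genuine gap is in your surface dictionary. Your first rule, $S_{\ell\sj}=a_\ell^{\rm out}+\sum_{\ell'\in\sk_{\sf m}\setminus\sj}a_{\ell'}$, is \emph{correct} for every nonempty $\sj\subseteq\sk_{\sf m}$; what you never wrote down is $S_\ell$ itself. In fact $S_\ell=a_\ell^{\rm out}+\omega_{\sk_{\sf m}}$: the enveloping region $\a_\ell$ has ${\sf m}$ internal boundary circles (shared with the adjoined disks), and by the very definition of the underlying canonical building block the ${\sf m}$-legged octopus is cheaper than the sum of all ${\sf m}$ domes, so the minimal surface capping \emph{all} inner circles is the octopus. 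This single term is precisely what supplies $\omega_{\sk_{\sf m}}$ in case~2. Your ``corrected'' rule $S_{\ell\sj}=a_\ell^{\rm out}+\omega_{\sk_{\sf m}}-\sum_{\ell'\in\sj}a_{\ell'}$ is not a valid proto-entropy (these are formal sums of connected surfaces with unit coefficients, never differences) and is homologically impossible: the octopus is anchored on all ${\sf m}$ inner circles, so it cannot be a component of the RT surface for a region whose boundary contains only ${\sf m}-|\sj|<{\sf m}$ of them.

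With the correct dictionary --- the octopus appears in exactly $S_\ell$ and $S_{\sk_{\sf m}}$, nowhere else --- cases~1 and~2 become routine coefficient counts, just as in the paper. One minor secondary point: for $\si_\n\subsetneq\sk_{\sf m}$ with $\ell\notin\si_\n$, the decoupling Lemma does \emph{not} apply (each disk in $\si_\n$ is adjoining $\a_\ell$ and hence not decoupled from the rest of the configuration); the vanishing there instead follows from Lemma~\ref{lemma:canonical}, since the restriction to the colors in $\si_\n$ sees only a proper subset of the disks of the original canonical building block.
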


\begin{proof} $ $\newline
\vspace{-1em}
\begin{itemize}
\item 
If $\si_\n=\ell\sk_{\sf m}$, let us denote  the `big dome' homologous to $\a_\ell\a_{\sk_{\sf m}}$ by $a_\ell$,  the `${\sf m}$-legged octopus' homologous to $\a_{\sk_{\sf m}}$ by $\omega_{\sk_{\sf m}}$, and  the domes homologous to the disks that belong to $\a_{\sk_{\sf m}}$ (i.e., with color $\ell_i\in\sk_{\sf m}$) by $a_i$. 
The octopus surface appears only in the following entropies:
\begin{equation}
S_\ell=a_\ell+\omega_{\sk_{\sf m}},\qquad   S_{\sk_{\sf m}}=\omega_{\sk_{\sf m}} \, .
\end{equation}
If $\n$ is even, these two terms have the same sign and therefore the two copies of the octopus surface add up to $2\omega_{\sk_{\sf m}}$. If  $\n$ is odd, they cancel. Furthermore, the big dome $a_\ell$ appears in all entropies $S_\sj$, with $\ell\in\sj$, and it cancels because $\I_{\si_\n}$ is balanced, independently from the parity of $\n$. Finally, consider a dome $a_i$. This surface appears in two different types of terms $S_\sj$. 

First, $a_i$ appears in all terms such that $\ell_i\in\sj$ and $\ell\notin\sj$, except for $S_{\sk_{\sf m}}$. From \eqref{eq:formal_rewriting}, the sum of terms in $\I_{\si_\n}$ where the color $\ell$ does not appear is formally equal to the expression of $\I_{\si_{\n-1}}$, with $\si_{\n-1}=\si_{\n}\setminus\{\ell\}$. Since this quantity is balanced with respect to all colors, the sum of the coefficients of all terms which include $\ell_i$ vanishes. Subtracting the coefficient of $S_{\sk_{\sf m}}$, which is $(-1)^{{\sf m}+1}$, the contribution to the coefficient of $a_i$ given by these terms is $(-1)^{\sf m}$.

Second, $a_i$ appears in all terms such that $\ell\in\sj$ and $\ell_i\notin\sj$, except for $S_\ell$. Using again \eqref{eq:formal_rewriting}, it follows that the sum of coefficients of all these terms is equal to the sum of terms in $\I_{\si_{\n-1}}$, with $\si_{\n-1}=\si_{\n}\setminus\{\ell\}$, which do not include $\ell_i$ (with an additional overall minus sign). This is
\begin{equation}
-\sum_{p=1}^{\n-1}(-1)^{p+1}{\n-1 \choose p}=-1
\end{equation}
Therefore, the total coefficient of $a_i$ is $(-1+(-1)^{\sf m})=(1+(-1)^\n)$, which completes the proof for $\si_\n=\ell\sk_{\sf m}$.
\item If $\si_\n\subset\ell\sk_{\sf m}$ and $\ell\in\si_\n$, the octopus surface $\omega_{\sk_{\sf m}}$ always appears with a coefficient $+1$ in the final expression, since it only appears in the term $S_\ell$. Furthermore, as discussed in the previous point, the dome $a_\ell$ always disappears because of balance. We only have to compute the coefficient of the domes $a_i$. In the terms $S_\sj$ of $\I_{\si_\n}$, with $\sj\subseteq{\si_\n}$, $a_i$ appears when $\ell_i\in\sj$ and $\ell\notin\sj$, and in the opposite situation, with the exception of $S_\ell$. The total coefficient is
\begin{equation}
2\sum_{p=1}^{\n-2}(-1)^{p+1}{\n-2 \choose p-1}-1=-1
\end{equation}
completing the proof for $\si_\n\subset\ell\sk_{\sf m}$.
\item If $\si_\n=\sk_{\sf m}$, we can use the result for the non-locally-purified case, since the presence of the additional enveloping color $\ell$ is irrelevant.
\item Similarly, if $\si_\n\subset\sk_{\sf m}$, we can again use the result for the non-locally-purified case. If $\si_\n\not\subseteq\sk_{\sf m}$, then there exists a color in $\si_\n$ which is not in $\sk_{\sf m}$, and $\I_{\si_\n}$ vanishes as a consequence of the decoupling Lemma. 
\end{itemize}
\end{proof}

\paragraph{(4). Sign-definiteness of the $\r$-partite information:} Using the two Lemmas above, it is immediate to prove the following general result

\begin{lemma}
The $\r$-partite information is not associated to a universal holographic inequality if $\r$ is even.
\end{lemma}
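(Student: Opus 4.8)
The plan is to exhibit, for any even $\r$, a pair of $\N$-party configurations on which $\I_\r$ takes values of opposite sign (as formal combinations of surfaces, with the partial ordering $\prec$ inherited from the defining mutual-information pattern). By Lemma~\ref{lemma:canonical} and Lemma~\ref{lemma:lpcbb} we already have closed-form expressions for the evaluation of the trivial upliftings $\I_{\si_\n}$ on canonical building blocks and on locally purified canonical building blocks, so the argument reduces to reading off signs from these formulas. Since the statement is about \emph{universal} holographic inequalities (Definition above), it suffices to find even a single geometry on which $\bQ=\I_\r$ is negative and a single one on which it is positive; the existence of such geometries immediately rules out both $\I_\r\geq 0$ and $\I_\r\leq 0$ holding universally.

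\textbf{Key steps.} First I would take the natural instance, working in an $\N=\r$-party setting, and evaluate $\I_\r$ on the canonical building block $\c_\r^\circ[\sk_\r]$ with $\sk_\r=[\r]$ the full set of colors. By Lemma~\ref{lemma:canonical}, since $\si_\r=\sk_\r$, we get $\I_\r(\c_\r^\circ[[\r]])=(-1)^{\r+1}\,\comb_{[\r]}$. Recalling $\comb_{\sk_\n}\prec 0$ from \eqref{eq:flower} and that $\r$ is even, so $(-1)^{\r+1}=-1$, this gives $\I_\r(\c_\r^\circ[[\r]])=-\comb_{[\r]}\succ 0$, i.e.\ strictly positive. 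Second, I would pass to an $\N=\r+1$-party setting and evaluate the (trivial uplifting of the) $\r$-partite information on a locally purified canonical building block $\c_{\r+1}^\circledcirc[\ell(\sk_{\r-1})]$ with $\si_\r = \ell\sk_{\r-1}$ the full set of $\r$ colors involved. By the first case of Lemma~\ref{lemma:lpcbb}, $\I_{\si_\r}(\c_{\r+1}^\circledcirc[\ell(\sk_{\r-1})])=(1+(-1)^\r)\,\comb_{\sk_{\r-1}}=2\,\comb_{\sk_{\r-1}}\prec 0$, strictly negative. Having one geometry with $\I_\r\succ 0$ and one with $\I_\r\prec 0$, both of which are (or can be embedded into) legitimate classical bulk configurations in the sense of \S\ref{sec:review}, we conclude that no universal holographic inequality $\I_\r\geq 0$ or $\I_\r\leq 0$ can hold, which is precisely the claim. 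For the uplifting case with general $\N>\r$, one simply takes the uncorrelated union of the above configurations with an appropriate color-deleting architecture (as in \S\ref{sec:relations}) to project out the extra colors while preserving the value of $\I_{\si_\r}$, or equivalently invokes the decoupling Lemma to see that the extra colors contribute nothing.

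\textbf{Main obstacle.} The only subtlety — and the part that warrants care rather than being purely mechanical — is making precise that the partial ordering $\prec$ legitimately licenses the phrase ``does not have a definite sign'': one must note that $\comb$ is a genuinely nonzero element of the abstract space of formal surface combinations (which requires $\n\geq 3$, hence $\r\geq 3$; for $\r=2$ there is nothing to prove since $\I_2\geq 0$ always), and that the defining mutual-information patterns of the two building blocks are mutually consistent with being realized in actual bulk geometries, so that evaluating the area functional on the two configurations yields a strictly positive and a strictly negative real number respectively. Once this bookkeeping is in place the proof is a two-line appeal to the two preceding Lemmas; I do not anticipate any computational difficulty beyond tracking the sign $(-1)^{\r+1}$ versus $(1+(-1)^\r)$ and observing they have opposite effect on $\comb\prec 0$ when $\r$ is even.
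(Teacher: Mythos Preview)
Your approach is correct and essentially identical to the paper's: evaluate $\I_\r$ on a canonical building block (obtaining $-\comb\succ 0$) and on a locally purified canonical building block (obtaining $2\,\comb\prec 0$) via Lemmas~\ref{lemma:canonical} and~\ref{lemma:lpcbb}. The paper does the second evaluation already in the $\N=\r$ set-up using $\c_\r^\circledcirc[\ell(\si_{\r-1})]$, so your detour to $\N=\r+1$ is unnecessary; also, your aside on $\r=2$ is backwards---$\I_2\geq 0$ \emph{is} a universal inequality, so the lemma is implicitly for $\r\geq 4$, not ``nothing to prove''.
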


\begin{proof}
It is sufficient to consider the $\r$-partite information in its natural (and unique) instance in an $\N=\r$ set-up. Evaluating $\I_\N$ on the canonical building block $\c_\N^\circ[\si_\N]$, we have (from Lemma~\ref{lemma:canonical})
\begin{equation}
\I_\N(\c_\N^\circ[\si_\N])=-\;\comb_{\si_\N}\succ 0 \, .
\end{equation}
On the other hand, evaluating the same quantity on the locally purified canonical building block  $\c_\N^\circledcirc[\ell(\si_{\N-1})]$ (or any permutation) gives (using Lemma~\ref{lemma:lpcbb})
\begin{equation}
\I_\N(\c_\N^\circledcirc[\ell(\si_{\N-1})])=2\;\comb_{\si_{\N-1}}\prec 0 \, .
\end{equation}
Hence $\I_\N$ can attain either sign depending on the configuration, and therefore cannot correspond to a universal holographic inequality.
\end{proof}

When $\r$ is odd, the $\r$-partite information always gives a non-positive sign when evaluated on the canonical building blocks and their locally purified version.  However, it is known that there are counterexamples to the sign-definiteness of the $5$-partite information.\footnote{ We thank Matthew Headrick for sharing  with us examples where $\I_5$ can take either sign.} This will play a central role in the discussion about the result given by the sieve for the $5$-party polyhedron (see \S\ref{subsubsec:sieve2}).

\subsection{The sieve}
\label{subsec:sieve_general}

We are now in a position to introduce the general procedure that constitutes the \textit{sieve}, i.e., the algorithm that can be used to derive a \textit{candidate} holographic entropy polyhedron from the arrangement. The logic is quite simple. For a given $\N$, assuming knowledge of the full arrangement $\arr_\N$,  we wish to determine which quantities could potentially satisfy a universal holographic inequality. For a primitive quantity $\bQ\in\arr_\N$, one can test for sign definiteness simply by evaluating the quantity on a collection of configurations and winnowing out those where one finds opposite signs for the quantity on different configurations. Should we fail to find such, we can retain  $\bQ$ as a quantity associated to a candidate inequality. Clearly, the trustworthiness of this procedure depends on the choice made for the configurations used for the test.  For us these will be the canonical building blocks and their locally purified versions introduced in \S\ref{subsec:R_partite}, which turn out to lead to interesting results.

As explained in \S\ref{sec:arrangement}, any arrangement $\arr_\N$ can be decomposed into a union of subarrangments of different ranks $\arr_\N^\r$. The primitive quantities in $\arr_\N^\r$, with $\r<\N$, are upliftings of abstract quantities with rank $\r$ to the $\N$-partite set-up. It is clear that such an uplifting $\bQ_\r\in\arr_\N^\r$ can satisfy a universal holographic inequality only if the natural instances of the same abstract quantity $\widetilde{\bQ}_\r$ also do, in a set-up with $\N=\r$. Therefore, for any given $\N$, we only need to test the primitive quantities of maximal rank. Furthermore, it is sufficient to just consider one permutation, since all the others are physically equivalent. For concreteness we will consider the standard instance of the standard isomer (see \S\ref{sec:arrangement}). 

We will start in \S\ref{subsubsec:sieve0} by showing how this procedure can be used to easily rule out the new $4$-party information quantities found in \S\ref{sec:four} as candidate inequalities. We already know that this must be the case as a consequence of two facts. First, it was shown in \citep{Bao:2015bfa} that any valid holographic inequality for four parties is necessarily implied by the upliftings of the inequalities for two and three parties.\footnote{ In other words, there are no genuinely new holographic inequalities for four parties.} Second, since any valid inequality for four parties is therefore redundant, it cannot be associated to a primitive quantity \citep{Hubeny:2018trv}. Thus while we already in a sense know the answer, the 4-party example will nevertheless be useful to show how the sieve works. 

The same procedure can in principle be applied to an arbitrary number of parties, running the test for each primitive quantity in the arrangement. However, the sieve can also be reformulated into a slightly more elaborate version. As we observed in \S\ref{sec:four}, all the new information quantities are superbalanced. It turns out to be particularly interesting to focus on this specific subspace. The general logic will the same as described above, we will simply evaluate an information quantity on a family of configurations trying to find two situations where the results have opposite sign. The difference is that instead of focusing on a particular primitive quantity, like in the following example of \S\ref{subsubsec:sieve0}, we will consider an unspecified quantity in the superbalance subspace (which does not need to be primitive) and use the sieve to derive a set of constraints on the coefficients  $\{q_\si\}$  that must be satisfied by any information quantity $\bQ_\N$ that cannot be ruled out by the procedure as a candidate inequality. 

As we will see, these constraints will be in the form of linear inequalities for the coefficients of $\bQ_\N$ -- they will specify a polyhedral cone in the coefficient space. The extremal rays of this cone can then be interpreted as the most ``stringent'' superbalanced holographic inequalities (in entropy space) that survive  the test. Any other inequality which is more stringent would fail to pass the test, while any weaker inequality can be obtained as a conical combination of the ones which are associated to the extremal rays of this cone. The information quantities associated to the inequalities which correspond to the extremal rays of this cone need not be primitive in general, but they provide  an inner\footnote{ 
By inner bound, we mean that the actual holographic entropy polyhedron must contain the region specified by the sieve.  This is because the sieve may fail to rule out false inequalities which, when assumed true, would restrict us to a smaller region of the extended entropy space.} \textit{bound for the holographic entropy polyhedron}.

In \S\ref{subsubsec:sieve1} we present this version of the sieve for the $\N=4$ case. We report the result for $\N=5$ in \S\ref{subsubsec:sieve2}, showing that the sieve leads to a simple derivation of all the new holographic inequalities proven in \citep{Bao:2015bfa}. In \S\ref{subsubsec:sieve3} we comment on extending the construction to the non-superbalanced case and other potential generalizations; eg.,  enlarging the class of configurations used for the test, or by a more refined analysis of the local structure of the arrangement.

\subsubsection{A simple example}
\label{subsubsec:sieve0}

We will now show how to rule out the new information quantities found in \S\ref{sec:four} as candidate inequalities (with the only obvious exception of the uplifting of the tripartite information). Since these quantities are related by purifications pairwise, we only need to show that one quantity for each pair can have either sign for suitably chosen configurations. For convenience we report here the quantities for which we will explore the sign, written in the $\I_\n$-basis
\begin{align}
\begin{split}
&\bQ_4^{(1)}=\I_3^{\a\cs\d}-\I_3^{\b\cs\d}\\
&\bQ_4^{(4)}=\I_3^{\a\b\cs}+\I_3^{\a\b\d}-\I_3^{\b\cs\d}\\
&\bQ_4^{(7)}=\I_3^{\a\b\cs}+\I_3^{\a\b\d}+\I_3^{\a\cs\d}+\I_3^{\b\cs\d}-3\I_4^{\a\b\cs\d}
\end{split}
\label{eq:quantities_to_rule_out}
\end{align}

It is immediately clear that the first quantity ($\bQ_4^{(1)}$) cannot have a definite sign, since it is antisymmetric under the swap $\a\leftrightarrow\b$. More generally, if an information quantity has an expansion in the $\I_\n$-basis which only contains terms of the same degree, but whose coefficients do not all have the same sign, it cannot be associated to a valid inequality. The second quantity above ($\bQ_4^{(4)}$) is an example. To see this, it is sufficient to evaluate this quantity on the canonical building blocks $\c_4^\circ[\a\b\cs]$ and $\c_4^\circ[\b\cs\d]$ obtaining
\begin{align}
\begin{split}
&\bQ_4^{(4)}(\c_4^\circ[\a\b\cs])=\comb_{\a\b\cs}\prec 0\\
&\bQ_4^{(4)}(\c_4^\circ[\b\cs\d])=-\,\comb_{\b\cs\d}\succ 0
\end{split}
\end{align}

Finally, consider the last quantity in \eqref{eq:quantities_to_rule_out}. Notice that in this case all the terms $\I_3$ have the same sign, and that the term $\I_4$ has the opposite sign. For this reason, the evaluation of $\bQ_4^{(7)}$ on any canonical building block always gives a negative sign\footnote{ We leave it as an exercise for the reader that this is the case.}, for example
\begin{equation}
\bQ_4^{(7)}(\c_4^\circ[\a\b\cs])=\comb_{\a\b\cs}\prec 0 \, .
\end{equation}
On the other hand, if we evaluate this quantity on the locally purified canonical building block $\c_4^\circledcirc[\a(\b\cs\d)]$, we obtain
\begin{equation}
\bQ_4^{(7)}(\c_4^\circledcirc[\a(\b\cs\d)])=-2\comb_{\b\cs\d}\succ 0 \, ,
\end{equation}
ruling out $\bQ_4^{(7)}$ as a possible inequality.

\subsubsection{Four-party superbalanced subspace}
\label{subsubsec:sieve1}

Having introduce the logic of the sieve in a simple concrete example, we will now consider a more abstract (and powerful) version of the procedure in the particular subspace of superbalanced information quantities. By definition, an arbitrary superbalanced information quantity can be written in the ${\bf I}_\n$ basis as follows
\begin{equation}
\bQ_4=-q_1\I_3^{\a\b\cs}-q_2\I_3^{\a\b\d}-q_3\I_3^{\a\cs\d}-q_4\I_3^{\b\cs\d}+q_5\I_4^{\a\b\cs\d}\,.
\label{eq:arbitrary_4_quantity}
\end{equation}
For convenience, in this section we assume that such a quantity is specified only up to an overall sign.\footnote{ For the purpose of this discussion, the convention described in \S\ref{sec:arrangement} to fix the sign in the definition of an abstract quantity $\widetilde{\bQ}$ is irrelevant.} The reason for the particular choice of signs for the coefficients will become clear momentarily. We want to evaluate \eqref{eq:arbitrary_4_quantity} on all the canonical building blocks for $\N=4$, using the result of Lemma~\ref{lemma:canonical}. For the building block of degree $\n=2$ we simply have $\bQ_4(\c_4^\circ[\si_2])=0$. On the other hand, for the building blocks of degree $\n=3$ we obtain, for example
\begin{equation}
\bQ_4(\c_4^\circ[\a\b\cs])=-q_1\;\comb_{\a\b\cs}\,.
\end{equation}

Thus far we have been assuming, for convenience, that the quantity $\widetilde{\bQ}_4$ was only defined up to an unspecified overall sign. We can now use the above relation to make a choice. If we choose to fix this sign in the definition of $\widetilde{\bQ}_4$ such that $q_1\geq 0$, the above result implies that $\bQ_4$ is non-negative when evaluated on $\c_4^\circ[\a\b\cs]$. The quantity $\widetilde{\bQ}_4$, now completely specified, can only be a candidate holographic inequality if it is consistently non-negative also when evaluated on other configurations. This is what will allow us to determine further constraints on the coefficients $\{q_1,q_2,q_3,q_4,q_5\}$. Evaluating now $\bQ_4$ on the other canonical building blocks of degree $\n=3$ gives, similarly, the constraints $q_2,q_3,q_4\geq 0$. The last canonical building block on which we have to evaluate $\bQ_4$ is the only one of degree $\n=4$, for which we obtain
\begin{equation}
\bQ_4(\c_4^\circ[\a\b\cs\d])=-q_5\;\comb_{\a\b\cs\d}
\end{equation}
which again is non-negative if $q_5\geq 0$. To summarize, $\bQ_4$ is consistently non-negative only if 
\begin{equation}
q_1,q_2,q_3,q_4,q_5\geq 0
\label{eq:constraints_positivity}
\end{equation}
clarifying our choice of signs in \eqref{eq:arbitrary_4_quantity}.

We next want to evaluate $\bQ_4$ on the locally purified canonical building blocks, using the results of Lemma~\ref{lemma:lpcbb}. Similarly to the previous case, for $\n=2$, we have
\begin{equation}
\bQ_4(\c_4^\circledcirc[\a(\b\cs)])=0
\end{equation}
and likewise for all other $\si_2$. In the case $\n=3$ instead 
\begin{equation}
\bQ_4(\c_4^\circledcirc[\a(\b\cs\d)])=(-q_1-q_2-q_3-q_4+2q_5)\;\comb_{\b\cs\d} \, ,
\label{eq:evaluation_lpcbb}
\end{equation}
from which we obtain the constraint
\begin{equation}
q_1+q_2+q_3+q_4-2q_5\geq 0 \, .
\label{eq:constraints_lpcbb}
\end{equation}
It is immediate to check that by evaluating $\bQ_4$ on the other configurations, for different $\si_3$, we obtain precisely the same constraint, because of the symmetries that simply permute $q_1,q_2,q_3,q_4$ in the above expression.

Having evaluated $\bQ_4$ on all the canonical building blocks and their locally purified versions consistent with $\N=4$, we now want to repeat the same procedure for all purifications $\mathbb{P}_\ell\bQ_4$ of $\bQ_4$, with $\ell\in\{\a,\b,\cs,\d\}$. Let us consider the purification with respect to the color  $\d$:
\begin{align}
\mathbb{P}_\d\bQ=&-(q_1+q_2+q_3+q_4-2q_5)\, \I^3_{\a\b\cs}-q_2\I^3_{\a\b\d}-q_3\I^3_{\a\cs\d}-q_4\I^3_{\b\cs\d}\nonumber\\
&+(q_2+q_3+q_4-q_5) \, \I^4_{\a\b\cs\d} \, .
\label{eq:purified_D}
\end{align}
We can now treat the quantity \eqref{eq:purified_D} as a new quantity, writing it as
\begin{equation}
\mathbb{P}_\d\bQ_4=-q'_1\I_3^{\a\b\cs}-q'_2\I_3^{\a\b\d}-q'_3\I_3^{\a\cs\d}-q'_4\I_3^{\b\cs\d}+q'_5\I_4^{\a\b\cs\d}
\end{equation}
where the new coefficients $q'_1,q'_2,q'_3,q'_4,q'_5$ are combination of the original $q_i$ given by \eqref{eq:purified_D}. We can then repeat exactly the same procedure that we carried out for the original quantity $\bQ_4$, obtaining precisely the same constraints that we derived before, but now for the new coefficients $q'_i$. Specifically, evaluating $\mathbb{P}_\d\bQ_4$ on the canonical building blocks gives the constraints $q'_1,q'_2,q'_3,q'_4,q'_5\geq 0$, which in terms of the original $q_i$ translate to 
\begin{align}
\begin{split}
& q_1+q_2+q_3+q_4-2q_5\geq 0\\
& q_2+q_3+q_4-q_5\geq 0
\end{split}
\end{align}
Notice that the first constraint is redundant, since it is again \eqref{eq:constraints_lpcbb}, while the second is new. By purifying $\bQ_4$ with respect to the other three colors, we then obtain analogous constraints.  Altogether, we have
\begin{align}
\begin{split}
& q_2+q_3+q_4-q_5\geq 0\\
& q_1+q_3+q_4-q_5\geq 0\\
& q_1+q_2+q_4-q_5\geq 0\\
& q_1+q_2+q_3-q_5\geq 0\\
\end{split}
\label{eq:constraints_purifications}
\end{align}

Similarly, we can now evaluate \eqref{eq:purified_D} on the locally purified canonical building blocks, obtaining again the constraint \eqref{eq:constraints_lpcbb}, but now for the new coefficients $q'_i$. Writing it in the terms of the original coefficients $q_i$, this is
\begin{equation}
q'_1+q'_2+q'_3+q'_4-2q'_5\geq 0\quad\mapsto\quad q_1+4q_2+4q_3+4q_4-4q_5\geq 0
\end{equation}
and including the results for all possible purifications of $\bQ_4$ we have:
\begin{align}
\begin{split}
& q_1+4q_2+4q_3+4q_4-4q_5\geq 0\\
& 4q_1+q_2+4q_3+4q_4-4q_5\geq 0\\
& 4q_1+4q_2+q_3+4q_4-4q_5\geq 0\\
& 4q_1+4q_2+4q_3+q_4-4q_5\geq 0
\end{split}
\label{eq:constraints_purifications_lpccb}
\end{align}

Collectively, the constraints \eqref{eq:constraints_positivity}, \eqref{eq:constraints_lpcbb}, \eqref{eq:constraints_purifications}, and \eqref{eq:constraints_purifications_lpccb}, specify a convex polyhedral cone in the coefficient space $\mathbb{R}^5_+$. Each vector within this cone is associated to an information quantity $\bQ_4$ which consistently has a definite sign when evaluated on the canonical building blocks and their purified versions. Therefore, any vector \textit{outside} this cone corresponds to an information quantity that \textit{cannot} be associated to a true holographic inequality, since it would be ruled out by the sieve. On the other hand, each vector \textit{inside} the cone corresponds to an information quantity that is associated to a good \textit{candidate} inequality which should then be tested, or proved, via other methods.

Since the space of solutions to the constraints is a convex polyhedral cone, it can equivalently be described by a set of \textit{extremal rays}. The extremal rays are the generators of the cone, in the sense that every vector inside the cone can be obtained as a conical linear combination\footnote{ A conical linear combination is a linear combination with non-negative coefficients.} of the extremal rays. We can associate with the extremal rays the corresponding information quantities. Then the inequalities arising from the said quantities may be interpreted in entropy space as the most stringent superbalanced inequalities that are admissible through the sieve (cf., below for further comments).  Any other inequality that one cannot rule out by our procedure is expressible as a conical linear combination of the aforementioned ones. Basically, the inequalities we extract are those that restrict as much as possible, consistently with the sieve, the region of entropy space where entropy vectors associated to states and configurations can be located.

For the $\N=4$ case that we just presented, the extremal rays of the cone of constraints are, up to permutations,  
\begin{equation}
\{\{1,0,0,0,0\},\,\{1,1,0,0,1\}\} \,.
\end{equation}
Converting these rays into the corresponding information quantities using \eqref{eq:arbitrary_4_quantity}, we obtain the associated holographic inequalities 
\begin{align}
\begin{split}
&-\I_3(\a:\b:\cs)\geq 0\\
&-\I_3(\cs\d:\a:\b)\geq 0
\end{split}
\end{align}
which can immediately be recognized as the instances of MMI corresponding to some of the facets of the $4$-party holographic entropy cone \citep{Bao:2015bfa}. 

The other facets of the cone correspond instead to certain instances of subadditivity and the Araki-Lieb inequality. The fact that they do not emerge from our construction thus far is a consequence of our restriction to superbalanced quantities. We will return to this  important point in \S\ref{subsubsec:sieve3}, where we discuss possible extensions of the sieve to the non-superbalanced case.

\subsubsection{The five-party case}
\label{subsubsec:sieve2}
 
We now turn to the results of employing our sieve for $\N=5$. Since the logic is exactly the same as for four parties, we will only give a brief sketch of the derivation. We start with the following general form of a superbalanced information quantity for five parties: 
\begin{align}
\begin{split}
\bQ_5(\a:\b:\cs:\d:\e)= &-q_1\I_3^{\a\b\cs}-q_2\I_3^{\a\b\d}-q_3\I_3^{\a\b\e}-q_4\I_3^{\a\cs\d}-q_5\I_3^{\a\cs\e}\\
&-q_6\I_3^{\a\d\e}-q_7\I_3^{\b\c\d}-q_8\I_3^{\b\cs\e}-q_9\I_3^{\b\d\e}-q_{10}\I_3^{\cs\d\e}\\
&+q_{11}\I_4^{\a\b\cs\d}+q_{12}\I_4^{\a\b\cs\e}+q_{13}\I_4^{\a\b\d\e}+q_{14}\I_4^{\a\cs\d\e}+q_{15}\I_4^{\b\cs\d\e}\\
&-q_{16}\I_5^{\a\b\cs\d\e} \, .
\end{split}
\label{eq:general_five}
\end{align}
As before, with this convention for the signs, the evaluation on the canonical building blocks implies that we need to impose
\begin{equation}
q_i\geq 0,\qquad \forall\,i\in\{1,2,...,16\}
\label{eq:five_constraints_1}
\end{equation}
to consistently have $\bQ_5\geq 0$. 

For the locally purified canonical building block, again we do not obtain any constraint for $\n=2$ since
\begin{equation}
\bQ_5(\c_5^\circledcirc[\a(\b\cs)])=0
\end{equation}
and similarly for all permutations. 

For $\n=3$, consider $\c_5^\circledcirc[\a(\b\cs\d)]$. Since the color $\e$ is decoupled, the decoupling Lemma implies that all terms in \eqref{eq:general_five} which contain $\e$ will vanish and we have
\begin{equation}
\bQ_5(\c_5^\circledcirc[\a(\b\cs\d)])=\bQ'_4(\c_4^\circledcirc[\a(\b\cs\d)])
\end{equation}
where we deleted the disk with color $\e$ and 
\begin{equation}
\bQ'_4=-q_1\I_3^{\a\b\cs}-q_2\I_3^{\a\b\d}-q_4\I_3^{\a\cs\d}-q_7\I_3^{\b\cs\d}+q_{11}\I_4^{\a\b\cs\d} \, .
\label{eq:five_reduced}
\end{equation}
This shows how the sieve can be implemented recursively, making it more efficient. For each value of $\N$, one only has to derive a small set of new relations which are specific to $\N$, and then complete the sieve with the relations found for $\N'<\N$. Evaluating \eqref{eq:five_reduced} on $\c_4^\circledcirc[\a(\b\cs\d)]$ correspondingly gives 
\begin{equation}
\bQ_4(\c_4^\circledcirc[\a(\b\c\d)])=(-q_1-q_2-q_4-q_7+2q_{11})\;\comb_{\b\cs\d}\, ,
\end{equation}
which is the analogue of \eqref{eq:evaluation_lpcbb}, with the appropriate replacement of the coefficients. By scanning over all cases with $\n=3$, and reducing \eqref{eq:general_five} accordingly, we then have the constraints
\begin{align}
\begin{split}
& q_1+q_2+q_4+q_7-2q_{11}\geq 0\\
& q_1+q_3+q_5+q_8-2q_{12}\geq 0\\
& q_2+q_3+q_6+q_9-2q_{13}\geq 0\\
& q_4+q_5+q_6+q_{10}-2q_{14}\geq 0\\
& q_7+q_8+q_9+q_{10}-2q_{15}\geq 0
\end{split}
\label{eq:five_constraints_2}
\end{align}
The genuinely new constraints which are specific to $\N=5$ are now obtained by evaluating $\bQ_5$ on $\c_5^\circledcirc[\a(\b\cs\d\e)]$, which gives
\begin{align}
\begin{split}
\bQ_5(\c_5^\circledcirc[\a(\b\cs\d\e)])=&(-q_{1}-q_{2}-q_{3}-q_{4}-q_{5}-q_{6}\\
&+q_{11}+q_{12}+q_{13}+q_{14}-q_{15})\;\comb_{\b\cs\d\e}
\end{split}
\end{align}
Including all  permutations we obtain the constraints
\begin{align}
\begin{split}
& q_{1}+q_{2}+q_{3}+q_{4}+q_{5}+q_{6}-q_{11}-q_{12}-q_{13}-q_{14}+q_{15}\geq 0\\
& q_{1}+q_{2}+q_{3}+q_{7}+q_{8}+q_{9}-q_{11}-q_{12}-q_{13}-q_{15}+q_{14}\geq 0\\
& q_{1}+q_{4}+q_{5}+q_{7}+q_{8}+q_{10}-q_{11}-q_{12}-q_{14}-q_{15}+q_{13}\geq 0\\
& q_{2}+q_{4}+q_{6}+q_{7}+q_{9}+q_{10}-q_{11}-q_{13}-q_{14}-q_{15}+q_{12}\geq 0\\
& q_{3}+q_{5}+q_{6}+q_{8}+q_{9}+q_{10}-q_{12}-q_{13}-q_{14}-q_{15}+q_{11}\geq 0
\end{split}
\label{eq:five_constraints_3}
\end{align}

We next turn to  the purifications of $\bQ_5$. Purifying with respect to $\e$, we get
\begin{align}
\begin{split}
\mathbb{P}_\e\bQ_5= &-(q_1+q_3+q_5+q_8-2q_{12})\I_3^{\a\b\cs}-(q_2+q_3+q_6+q_9-2q_{13})\I_3^{\a\b\d}\\
&-q_3\I_3^{\a\b\e}-(q_4+q_5+q_6+q_{10}-2q_{14})\I_3^{\a\cs\d}-q_5\I_3^{\a\cs\e}-q_6\I_3^{\a\d\e}\\
&-(q_7+q_8+q_9+q_{10}-2q_{15})\I_3^{\b\c\d}-q_8\I_3^{\b\cs\e}-q_9\I_3^{\b\d\e}-q_{10}\I_3^{\cs\d\e}\\
&+(q_{3}+q_{5}+q_{6}+q_{8}+q_{9}+q_{10}-q_{12}-q_{13}-q_{14}-q_{15}+q_{11})\I_4^{\a\b\cs\d}\\
&+(q_3+q_5+q_8-q_{12})\I_4^{\a\b\cs\e}+(q_3+q_6+q_9-q_{13})\I_4^{\a\b\d\e}\\
&+(q_5+q_6+q_{10}-q_{14})\I_4^{\a\cs\d\e}+(q_8+q_9+q_{10}-q_{15})\I_4^{\b\cs\d\e}\\
&-(q_{3}+q_{5}+q_{6}+q_{8}+q_{9}+q_{10}-q_{12}-q_{13}-q_{14}-q_{15}+q_{16})\I_5^{\a\b\cs\d\e}
\end{split}
\label{eq:five_purification_e}
\end{align}
As in the $\N=4$ case, we can now view this as a new ansatz quantity with redefined coefficients $\{q'_1(q_j), \ldots,  q'_{16}(q_j)\}$. We can re-evaluate this new quantity on all canonical building blocks and their locally purified versions, obtaining again the constraints \eqref{eq:five_constraints_1}, \eqref{eq:five_constraints_2}, and \eqref{eq:five_constraints_3}, but now with the coefficients $q_i \mapsto q'_i$. Using the explicit map $q_i'(q_j)$ given by \eqref{eq:five_purification_e}, we then obtain a new set of constraints (many of which will be redundant). Repeating the procedure for all purifications of $\bQ_5$, collecting all constraints, and removing all redundancies, we finally obtain a collection of $35$ constraints which specify a polyhedral convex cone in $\mathbb{R}_+^{16}$. Extracting the extremal rays and converting them into the corresponding information quantities using \eqref{eq:general_five}, we obtain the following candidate holographic inequalities:
\begin{equation} 
\begin{split}
0  \leq & -\I_3^{\a\b\cs} \\ 
0\leq &-\I_3^{\a\b\cs}-\I_3^{\a\b\d}+\I_4^{\a\b\cs\d} \\  
0\leq &-\I_3^{\a\b\d}-\I_3^{\a\b\e}-\I_3^{\a\cs\d}-\I_3^{\a\cs\e}+\I_4^{\a\b\cs\d}+\I_4^{\a\b\cs\e}+\I_4^{\a\b\d\e}
\\
& \quad +\; \I_4^{\a\cs\d\e}-\I_5^{\a\b\cs\d\e} \\    
0\leq &-\I_3^{\a\b\cs}-\I_3^{\a\b\d}-\I_3^{\a\b\e}+\I_4^{\a\b\cs\d}+\I_4^{\a\b\cs\e}+\I_4^{\a\b\d\e}-\I_5^{\a\b\cs\d\e} \\  
0\leq &-\I_3^{\a\b\d}-\I_3^{\a\cs\d}-\I_3^{\a\cs\e}-\I_3^{\b\cs\e}-\I_3^{\b\d\e}+\I_4^{\a\b\cs\d}+\I_4^{\a\b\cs\e}+\I_4^{\a\b\d\e}+\I_4^{\a\cs\d\e}\\
& \quad +\; \I_4^{\b\cs\d\e}-\I_5^{\a\b\cs\d\e} \\
0\leq &-\I_3^{\a\b\d}-\I_3^{\a\b\e}-\I_3^{\a\cs\e}-\I_3^{\b\cs\d}+\I_4^{\a\b\cs\d}+\I_4^{\a\b\cs\e}+\I_4^{\a\b\d\e} \\
0\leq &-\I_3^{\a\b\e}-\I_3^{\a\cs\e}-\I_3^{\a\d\e}-\I_3^{\b\cs\d}+\I_4^{\a\b\cs\e}+\I_4^{\a\b\d\e}+\I_4^{\a\cs\d\e} \\
0\leq &-\I_3^{\a\b\cs}-\I_3^{\a\d\e}-\I_3^{\b\cs\d}-\I_3^{\b\cs\e}+\I_4^{\a\b\cs\d}+\I_4^{\a\b\cs\e} \\
0\leq &-\I_3^{\a\b\cs}-2\I_3^{\a\b\e}-2\I_3^{\a\cs\d}-\I_3^{\a\d\e}-\I_3^{\b\cs\d}-\I_3^{\b\cs\e}+2\I_4^{\a\b\cs\d}+2\I_4^{\a\b\cs\e} \\
&\quad +\; \I_4^{\a\b\d\e}+\I_4^{\a\cs\d\e} \\
0\leq &-\I_5^{\a\b\cs\d\e} 
\end{split}
\label{eq:result_sieve_5}
\end{equation}

One can immediately verify that the first four expressions above are upliftings of MMI to $\N=5$, specifically they are
\begin{align}
\begin{split}
&-\I_3(\a:\b:\cs)\geq 0\\
&-\I_3(\cs\d:\a:\b)\geq 0\\
&-\I_3(\d\e:\a:\b\cs)\geq 0\\
&-\I_3(\cs\d\e:\a:\b)\geq 0
\end{split}
\end{align}
The other expressions (except for the last one) are precisely the five-party holographic inequalities proven\footnote{ Technically, they have been proven only for geometries for which entropies can be computed by the RT formula (not HRT).} in \citep{Bao:2015bfa}. Note that the expressions in the $\I_\n$-basis are simpler than in the entropy basis. According to the current version of the sieve, one should consider the last expression in \eqref{eq:result_sieve_5} as a new candidate inequality associated to the $5$-partite information, $\I_5\leq 0$. However, as we mentioned before, this is not a true inequality, since counterexamples are known. In practice, this false inequality should be replaced by a weaker one (or several), which perhaps could be found by an upgraded version of the sieve. As for the $\N=4$ case discussed in the previous section, the result of this construction does not include the instances of subadditivity and the Araki-Lieb inequality which characterize the $5$-partite holographic entropy cone. This is again a consequence of the fact that so far the sieve have been developed for superbalanced quantities only. We will discuss possible generalizations in the next subsection.  

Finally, it should be clear that this derivation of the inequalities found in \citep{Bao:2015bfa} does not prove that the corresponding information quantities are primitive. Whether this is the case or not should be established by the usual construction from configurations. We leave this problem for future work.

\subsubsection{Extending the sieve}
\label{subsubsec:sieve3}

Let us briefly comment on how one could try to upgrade the current version of the sieve.

\paragraph{(1). The case of non-superbalanced quantities:} The derivation of the $\N=4$ and $\N=5$ inequalities of the entropy cone relied on the assumption that the corresponding quantities are superbalanced. The fact that all known purely holographic inequalities are associated to superbalanced quantities is an intriguing property, but is a-priori unclear whether this should necessarily be the case. 

If one wants to test a given information quantity $\bQ$ which is not superbalanced, one can certainly try to apply the sieve in its simplest version, as exemplified in \eqref{subsubsec:sieve0}. However, when $\bQ$ is not superbalanced, it is unclear how to extend the sieve in its more general form for the following reason.

For given $\N$, we can consider the generalization of the ansatz \eqref{eq:arbitrary_4_quantity} to $\N$ colors where we also include the terms $\I_1$ and $\I_2$. When we evaluate this expression on the various configurations, the terms $\I_1$ in general do not cancel and it is not possible to attribute a sign to the resulting expression without requiring further information from the configurations, in particular the area of the various bulk surfaces. The same issue is encountered for balanced quantities as well, since even if the terms $\I_2$ might cancel nicely, such a quantity would not in general remain balanced under purifications, and the purified version would again contain the terms $\I_1$.

There are however particular situations where the sieve can be applied in its more abstract form even for non-superbalanced quantities. For given $\N$, consider the superbalanced subspace, and suppose that we run the procedure that we described. The result is a set of extremal rays which are the generators of the conical region in this subspace corresponding to the set of information quantities that pass the test. Suppose then that $\bQ$ is a superbalanced information quantity within this conical region, and suppose that we have chosen the overall sign in its definition such that $\bQ\geq 0$. Consider then the information quantity $\bQ'$ given by the following (schematic) expression:
\begin{equation}
\bQ'=\sum\alpha \, \I_1+\sum\beta \, \I_2+\gamma\, \bQ,\qquad   \alpha,  \beta \geq 0 \, ,
\label{eq:sieve_not_superbalanced}
\end{equation}
where the specific terms in the sums depend on $\N$. Clearly, any quantity of this form will also satisfy the inequality $\bQ'\geq 0$ according to the sieve, since $\I_1$ and $\I_2$ are always non-negative.

The meaning of \eqref{eq:sieve_not_superbalanced} is that we can imagine obtaining such a quantity $\bQ'$ as a conical combination of a quantity in the aforementioned conical region of the superbalanced subspace, and instances of $\I_1$ and $\I_2$. Since the instances of $\I_1$ are not primitive quantities, we can instead imagine to repeat the same construction where we use, besides the instances of $\I_2$, the primitive instances of $\bQ_2^\text{AL}$. In the full space of information quantities at given $\N$, we can then construct a full-dimensional cone by considering, as generators, the extremal rays that we obtained in the superbalanced subspace, supplemented by the $\N+{\N \choose 2}$ vectors associated to the instances of $\I_2$ and $\bQ_2^\text{AL}$. Any vector in this cone would then correspond to a quantity, now not necessarily superbalanced (or even balanced), that cannot be ruled out by the sieve.

In the case of $\N=4$, this construction gives precisely the holographic entropy cone. For $\N=5$, it supplements the list \eqref{eq:result_sieve_5} with the correct instances of subadditivity and the Araki-Lieb inequality which are known to correspond to the facets of the cone.

\paragraph{(2). Evaluation on new configurations:} The current version of the sieve relies on two particular families of configurations, namely the canonical building blocks and their locally purified version. Clearly one way to improve the sieve would be to extend the class of configurations that are used for the test. 

The structure of the locally purified canonical building blocks seems to be intimately related to the structure of the building blocks described in \S\ref{sec:four} for the derivation of the new four-party primitive quantities. A reasonable expectation is that understanding how to construct new building blocks could similarly suggest what configurations should be used for the sieve. 

Another direction would be to search for particular configurations $\c_5$ for which one has $\I_5(\c_5)>0$. A distinctive feature of the configurations that we have used so far is that we never had to evaluate the area of any bulk surface. The result of the evaluation of a quantity $\bQ$ on a configuration $\c_\N$ was always given in terms of a formal linear combination of surfaces $\comb_\si$ to which one can formally attribute a sign using a partial ordering in the space of combinations of surfaces. In other words, the crucial ingredient was, as usual, the pattern of correlations that determines \textit{which} connected bulk surfaces enter into the computation of an entropy $S_\si$. The currently known examples for which $\I_5>0$ are not of this kind, and  positive values are only attained after the areas have been evaluated.

\paragraph{(3). Improving the sieve by studying the structure of the arrangement:} So far we have remained agnostic about the detailed structure of the arrangement and how the configurations that we used for the sieve are localized with respect to the various hyperplanes. One possibility is that studying the local structure of the arrangement could provide useful information for the construction of new special configurations that, as described above, could be employed to upgrade the sieve (for example configurations for which $\I_5>0$ at the level of the proto-entropy).

\section{Discussion}  
\label{sec:discuss}

The main goal of the present work was to further develop the framework introduced in \citep{Hubeny:2018trv} for the analysis of multipartite correlations in holography, and more generally in quantum field theory. In \S\ref{sec:arrangement} we have introduced a new object, that we called the $\N$-party \textit{holographic entropy arrangement}, which we envision to be the proper `reference frame' for the analysis of $\N$-partite correlations. We studied some of its general structural properties, established a useful taxonomy for its elements, and discussed how these can be organized into equivalence classes according to certain symmetries.

We then discussed, in \S\ref{sec:degenerate}, how certain algebraic properties of primitive quantities relate to their behavior as measures of correlations in arbitrary QFT and introduced the notion of \textit{superbalance} (and more generally $\r$-balance). This property makes an information quantity particularly well behaved in QFT and interestingly holds for all known information quantities derived so far (except for $\I_2$ and $\bQ_2^\text{AL}$), including all the inequalities of \citep{Bao:2015bfa}.

The construction of the arrangement relies on a collection of special configurations that we refer to as building blocks. In \S\ref{sec:relations} we initiated the analysis of how, under this construction, arrangements associated to different numbers of parties are related to each other.  We then showed in \S\ref{sec:four} how to expand the set of canonical building blocks of \citep{Hubeny:2018trv} to derive new primitive quantities beyond the result of the $\I_\n$-theorem. We exemplified the construction by deriving three new information quantities for $\N=4$ and a new infinite family of information quantities for any $\N\geq 4$. 

The arrangement then served as the starting point, in \S\ref{sec:sieve}, for the construction of another object, the \textit{holographic entropy polyhedron}, which we argued is the most natural representation of the set of holographic inequalities. We explained how the machinery first used in \citep{Hubeny:2018trv} for the search of primitive information quantities, with a few proper modifications, can be employed to construct a candidate polyhedron, in principle for any number of parties. Furthermore, we showed that for $\N=4,5$ this construction reproduces known results about the holographic entropy cone with remarkable simplicity. More importantly, the examples clearly show that the procedure is `scalable', since for each $\N$ one only has to upgrade the results which are already known for $\N' < \N$. 

The program initiated in \citep{Hubeny:2018trv}, and further developed by the present work, is however still in its infancy -- a lot remains to be done. We describe below various salient directions for future investigations, which should aid in taking the program to its logical conclusion.

\paragraph{Construction of the arrangement for arbitrary $\N$:} The new building blocks introduced in \S\ref{sec:four} were constructed by relaxing one of the assumptions behind the  $\I_\n$-theorem, in particular by allowing regions of different colors to envelop each other. Our goal was to exemplify how this allows for the derivation of new information quantities, while still maintaining the regions to be non-adjoining. By no means is our intention for these building blocks to be exhaustive. Although we do not expect many more new information quantities to exist for $\N=4$, there certainly exist other building blocks, which are not equivalent to the ones that we considered thus far.  These in principle could allow for the generation of further primitive quantities. Moreover, allowing the regions to be adjoining, or even allowing more than two regions to adjoin at a single point, may unearth even more quantities.

A central question for the future is how to construct the \textit{full} arrangement, for an arbitrary number of parties, and how to do it efficiently. In the case of non-adjoining regions, one way to proceed would be, following the logic introduced in 
\citep{Hubeny:2018trv}, to first derive the minimal set of building blocks that generate all equivalence classes of configurations. The existence of this set is guaranteed by the fact that the arrangement is finite, but finding it is a formidable challenge.
At present, while we suspect that we have complete knowledge the arrangement for $\N=3$, we are far from having a  formal proof. 

Assuming we have all the building blocks at hand, the next step would be to study the combinatorics of combining them to generate all primitive quantities in an efficient way. The naive approach quickly becomes unfeasible as $\N$ grows. Instead, one should organize the search based on the classification of primitives presented in \S\ref{sec:arrangement}, according to their rank and character. Furthermore, for an efficient construction, one should find a way to generate various primitives modulo permutations. 

To target the search for genuinely new quantities (those of rank $\r=\N$), one should avoid realizing, when combining the building blocks, any color-reducing architectures (see \S\ref{sec:relations}). This however  still does not suffice  to derive the full arrangement. As explained on several occasions, it is also important to know which upliftings of the quantities of rank $\r<\N$ are primitive. Since the number of such upliftings is expected to grow quickly as $\N$ increases, it would be useful to understand at a more general level, independently from any specific $\bQ_\r$ and value of $\N$, which upliftings remain primitive as $\N$ changes. This would require a detailed study of color-reducing architectures and how they can be realized by the building blocks.

Finally, it remains to be understood if one can limit the classifications of the building blocks to the case where regions are non-adjoining, or if instead there exist ``degenerate'' information quantities (or perhaps even a hierarchy of degeneracies) which can only be obtained from adjoining regions.\footnote{ So far $\bQ_2^\text{AL}$ is the only example, and is equivalent to $\I_2$, which can be derived from non-adjoining configurations. We do not have any example of an information quantity which can only be derived from adjoining configurations, such that all of its purifications satisfy the same property.} This is also related to the question of whether there exist balanced, but not superbalanced, quantities, other than $\I_2$ or $ \bQ^\text{AL}_2$.

\paragraph{Analysis of the local structure of the arrangement:} The hyperplanes in the arrangement intersect in particular subspaces of higher codimension, and decompose the extended entropy space into various `cells'. Both for the construction of the arrangement discussed above, and for the usage of the arrangement as a reference for the analysis of multipartite correlation (which we discuss below), it would be useful to study the local structure of the arrangement in detail, and develop a formalism for an efficient description.

A first step would be to study the structure of each subarrangement $\arr_\N^\r\subset\arr_\N$ (viewed on the corresponding $(2^\r - 1)$-dimensional subspace of the entropy space) and compare its structure to the arrangement $\arr_\r$. This should already reveal useful information about how arrangements for different number of colors are related to each other. Furthermore, one would like to study how different subarrangements `interact' within the full arrangement, i.e., at which specific locations on the arrangement do information quantities of different rank and character intersect. Similarly, one would like to delineate which cells of extended entropy space are bounded by specific information quantities.

Second, one would like to know how various locations on the arrangement, and cells, are associated to particular algebraic properties of the various information quantities. For example, we noted in \citep{Hubeny:2018trv} that balanced quantities intersect on a special subspace. Moreover,  we have seen in \S\ref{sec:sieve} that the sieve suggests a natural decomposition of entropy space into the subspace of superbalanced quantities and a transverse subspace generated by $\I_1$ and $\I_2$. More generally, one can imagine characterizing other locations according to the notion of $\r$-balance.

Finally, the information about the local structure of the arrangement contains redundancies associated to the permutation symmetries. Since the information quantities related by the action of $\sym_{\N+1}$ are physically equivalent, there are also locations of the arrangement which are correspondingly equivalent for all practical purposes. It would be extremely useful to find a way to truncate the arrangement to a fundamental domain after quotienting out this permutation symmetry, thereby removing unnecessary redundancies from the construction. 

We have already  seen that different representations (bases) of the arrangement in the entropy space can have their own specific advantages.  For example, as demonstrated by \eqref{eq:result_sieve_5}, the five new $\N=5$ inequalities of  \citep{Bao:2015bfa} are much more compact when written in terms of the  $\I_\n$ basis as opposed to the entropy basis.  However, even the $\I_\n$ basis, whose components are by construction $\sym_{\n}$ symmetric, does not manifest the full $\sym_{\N+1}$ symmetry optimally, and further simplifications can in fact be attained in a different basis tailored to this larger symmetry.  This will be further explored in \cite{He:2018aa}.

\paragraph{Characterization of multipartite correlations in a given state:}  Armed with the knowledge of the full local structure of the arrangement one can  investigate the localization of a pair $(\c,\psi_\Sigma)$ with respect to it in the entropy space. The extent to which a particular pair 
$(\c,\psi_\Sigma)$ localizes is related to the divergence of various quantities in the arrangement in QFT, and will depend on certain topological properties of the configuration $\c$ (cf., \S\ref{subsec:super_balance}).

For geometric states, one can imagine using the knowledge of the local structure of the arrangement to study the relation between properties of the bulk geometry and the pattern of multipartite correlations. Entropy space provides a rather `coarse-grained' characterization of a particular $(\c,\psi_\Sigma)$. This is because in general it is possible to find for a  
state $\ket{\psi'_\Sigma}$, very different from $\ket{\psi_\Sigma}$, a new configuration $\c'_\N$ such that the two pairs have the same `localization properties' with respect to the arrangement. This was indeed the redundancy which we gauge-fixed in our construction (see \S\ref{sec:review}). 

We could however do better by studying families of states and configurations. For example, we can choose a particular state
$\ket{\psi_\Sigma}$, and a family of `probe configurations' $\c_\N(\lambda)$, and study the localization properties of this entire family with respect to the arrangement (this is reminiscent of the various methods of reconstructing bulk geometry using entanglement entropies).
 This can then be compared to the behavior of the same family for a different state $\ket{\psi'_\Sigma}$. Similarly, one can imagine a situation where a particular configuration is chosen, and one scans over a family of states  $\ket{\psi_\Sigma(\lambda)}$. 

Another interesting question concerns the localization properties of typical states in a given theory. Similar questions can also be asked, using the same arrangement derived from holography, for other quantum systems, e.g., QFTs outside the large $N$ approximation, or non-relativistic  many body systems.
In particular, while we expect that only geometric states in holographic CFTs can be exactly localized on quantities in the arrangement (for some subset of the configurations, and in the strict $N\rightarrow\infty$ limit), one would more broadly like to characterize such a QFT in terms of the structural properties of the arrangement, and gain further insight into deviations therefrom.

\paragraph{Dynamics:} The previous discussion pertained to the analysis of the structure of multipartite correlations in a particular state, but it would also be interesting to explore how our framework could be used to characterize dynamical evolution. It was for example shown in \citep{Hosur:2015ylk} that the tripartite information provides a  certain  measure of quantum chaos and information scrambling. Given the richer structure of the arrangement for $\N\geq 4$, it seems reasonable to expect that these new quantities might be able to probe more fine-grained details of such evolution. 

In general one can imagine dynamical evolution being visualized as a sort of `flow' of any given family of configurations in entropy space. The arrangement then provides a way to characterize the flow. An interesting direction would be to imagine a situation with a family of probe configurations in a given initial state, as described above, and to follow how their localization changes under time evolution. For example, it might transpire that certain locations function as attractors. Alternately,  there could be constraints to the flow, preventing a sort of `phase transition' from one cell to another.

The growth of entanglement entropy of a region in field theory has been used extensively as a useful diagnostic to characterize quantum quenches (see for example \cite{Calabrese:2016xau} for a review). The arrangement could be useful in this context as well, with the flow characterizing the evolution of the pattern of multipartite correlations. In this respect, it would also be interesting to explore the connection between our picture and the `minimal membrane' description of entanglement growth developed in \cite{Nahum:2016muy} for random quantum circuits, and recently applied to holography in \cite{Mezei:2018jco}.\footnote{ See also \cite{Bao:2018wwd} for a discussion about the holographic inequalities of \citep{Bao:2015bfa} in this framework.}

\paragraph{Relation to other measures:} While our framework has been developed using the von Neumann entropy, it would be interesting to explore the connection with other measures of correlations commonly employed in quantum information theory and quantum field theory.

Once a set of entropic information quantities (the arrangement) has been identified for a given number of parties $\N$, a natural generalization of these quantities is obtained by simply replacing the von Neumann entropy of a polychromatic subsystem $S_\si$, with the $\alpha$-Renyi entropy $S_\si^{(\alpha)}$. It would be interesting to study these quantities in detail, given that Renyi entropies can also be computed holographically using the prescription of \cite{Dong:2016fnf}, and that in quantum mechanics the structure of the corresponding `entropy cone' simplifies considerably \cite{Linden_2013}.

Another potentially useful direction would be to establish a clear connection between the structure emergent from our framework and properties of relative entropies, which are known to be well behaved measure of correlations for continuum field theories (see \cite{Witten:2018lha} for a recent review).

\paragraph{Multipartite entanglement structures:}  In \citep{Hubeny:2018trv} we already commented on  the relation between certain types of factorization of a density matrix, and the localization of a pair $(\c,\psi_\Sigma)$ on the hyperplanes associated to the multipartite information $\I_\n$. The emergent picture suggests  that the vanishing of $\I_\n$ in field theory is symptomatic of the absence of more obvious multipartite correlations. It would be similarly useful to know  what the vanishing of other quantities in the arrangement implies for the structure of the density matrix.

More generally, when a pair $(\c,\psi_\Sigma)$ localizes on higher codimension subspaces, we expect the structure of the density matrix to further specialize. Correspondingly, numerous types of multipartite correlations should vanish simultaneously, as suggested by the geometric picture. 

In general it would be interesting to understand how to characterize the structure of density matrices corresponding to localization on particular locations of the arrangement. However, it should remain clear that these statements have to be understood in an approximate sense, given that exact localization can only take place in the strict $N\rightarrow\infty$ limit, and perhaps only at the heuristic level, since a description of subregions in QFT should be realized using the language of algebras of observables, rather than density matrices. The aforementioned relation to properties of relative entropies should be particularly useful in this sense.

Finally, fleshing out this connection in detail could ultimately  help in investigating the conjecture of \cite{Cui:2018dyq} about the general structure of geometric states.

\paragraph{Derivation of the polyhedron for arbitrary $\N$:} Assuming that one has full knowledge of the arrangement for some number of parties $\N$, we explained in \S\ref{sec:sieve} how to use the sieve to extract a candidate polyhedron. However, given the complexity of the problem of constructing the full arrangement, it would also be interesting to explore another direction. 

As we showed in the case of $\N=4$ and $\N=5$, one can use some version of the sieve to derive an inner bound for the polyhedron, even without having any knowledge of the arrangement.\footnote{ Note however, that the choice of the optimal configurations for the sieve appears to be very closely related to the set of building blocks used to construct primitive quantities.} One possibility is that the construction of a candidate polyhedron might actually be simpler than the construction of the full arrangement. To do this, one should first find a way of deriving the optimal version of the sieve which gives the most stringent bound,  by identifying suitable 
`platonic' configurations to be used for the test. Then, one should prove that the extremal rays obtained from the procedure do in fact correspond to primitive information quantities. The power in our implementation of the abstract version of the sieve relied on the assumption that the (non-obvious) facets of the polyhedron are superbalanced. Should one prove this to be true in general, one would attain a significant simplification of the problem. 

Finally, we reemphasize that these procedures are intended to derive a set of candidates for new universal holographic inequalities, but it is a-priori unclear to what extent they can be helpful in actually proving them.\footnote{ See also the discussion of \cite{Hubeny:2018trv} for further comments on this point and the relation between the polyhedron and the holographic entropy cone of \citep{Bao:2015bfa}.} Perhaps one could gain further insight by trying to combine these ideas with techniques based on bit threads recently introduced in \cite{Cui:2018dyq, Hubeny:2018bri}.

\paragraph{Interpretation of universal holographic inequalities:} Suppose that for a given $\N$ we are able to construct the full arrangement and derive the corresponding polyhedron. Furthermore, suppose that we are also able to prove that the facets of the polyhedron do in fact correspond to valid universal holographic inequalities. Can we use the detailed knowledge of these structures and the ``experiments'' described in the above paragraphs, to gain further insight regarding the interpretation of the holographic inequalities?

The answer to this question depends on the extent to which the arrangement (nb: not the polyhedron) is specific to the holographic set-up (see also below). Let us consider a key example. It has already been noticed in \citep{Hayden:2011ag} that states which saturate SSA holographically, do so only if they also simultaneously saturate MMI and a particular instance of SA.\footnote{ Relatedly, in the entropy cone analysis of \citep{Bao:2015bfa}, SSA is implied by MMI (at leading order in the planar  ($1/N$) expansion.}
More precisely, from the general form \cite{Hayden_2004} of a tripartite density matrix $\rho_{\a\b\cs}$ that saturates SSA exactly\footnote{ In the form $S_{\a\b}+S_{\b\cs}\geq S_{\b}+S_{\a\b\cs}$.}, it follows that the reduced density matrix $\rho_{\a\cs}$ takes the form
\begin{equation}
\rho_{\a\cs}=\sum_i\, p_i\, \rho^{(i)}_{\a}\otimes\rho^{(i)}_{\cs}
\label{eq:reduced_markov}
\end{equation}
i.e., it is a separable state. In the holographic context, MMI implies that this can only happen if $\I_2(\a:\cs)=0$ and the density matrix has an even more special form; namely, if it completely factorizes as $\rho_{\a\cs}=\rho_{\a}\otimes\rho_{\cs}$.

An analogous situation arises in the vacuum of an arbitrary quantum field theory (not necessarily holographic). In general it is not possible to saturate SSA in the vacuum of a QFT -- this is a consequence of the Reeh-Schlieder theorem. However, one can 
achieve this  when the spatial regions  defining the subsystems lie on a null plane \cite{Casini:2017roe}. If SSA is saturated for a certain choice of configuration, the density matrix is a quantum Markov state \cite{Hayden_2004}, implying again the structure \eqref{eq:reduced_markov} for $\rho_{\a\cs}$. One can moreover show that the density matrix does in fact factorize as described above by also invoking the saturation of $\alpha$-Renyi entropies  (for arbitrary $\alpha$) \cite{Casini:2018kzx}. In turn, this implies that the mutual information between $\a$ and $\cs$ also vanishes; likewise $\I_3(\a:\b:\cs)=0$.

This observation suggests the following intriguing question. Is it possible in a QFT to find a `physical state' which (nearly) saturates SSA but has an $o(1)$ value for $\I_3$, irrespective of the sign?\footnote{ In quantum mechanics, such states certainly exist, and one can likewise imagine constructing a similar state also in QFT. The existence of a state in Hilbert space however does not guarantee that it has any physical significance.}  Should the answer be in the negative, the conditional mutual information would be small only when both $\I_3$ and $\I_2$ are simultaneously small, which would then violate our notion of primitivity, leading us to conclude that SSA is not a primitive quantity. Should this be the case, it behooves us to understand better how the centrality of SSA in quantum field theories (where it follows from the monotonicity of relative entropy \cite{Witten:2018lha}), gels with the idea of primitive information quantities used extensively herein.

Such a situation could suggest an interpretation that, at least for states in some code subspace in a holographic field theory, the universal holographic inequalities are a signal of the non-primitivity of certain types of multipartite correlations. Furthermore, it is possible that this behavior, in particular this inter-dependence between different types of multipartite correlations, which is captured very cleanly by holography, is much more general in QFT.

\paragraph{Universality of the arrangement:} As explained in \S\ref{sec:review}, we have used holographic intuition to introduce a precise notion of faithfulness and primitivity for information quantities. At least at an heuristic level, these notions can be understood more generally in QFT. At this stage it is not fully clear how the arrangement depends on the fact that we have used holography, and in particular RT/HRT, for its construction. One possibility is that there is a strong correlation between the arrangement and certain properties of geometric states (and perturbations thereof) in holographic field theories, and that for other states, or more generally other QFT, the arrangement would be different.
 
On the other hand, the construction based on the proto-entropy makes use of a very limited amount of information about the fact that the states we are considering have a geometric dual (cf., \cite{Hubeny:2018trv}). The building blocks and the constraints associated to them only depend on the presence/absence of correlations between the component regions, but are insensitive to the actual value of the mutual information. Therefore it is also possible that the arrangement is indeed much more universal, and would be the same for a broader class of states and theories.

It would be interesting to explore if the notions of faithfulness and primitivity can be made precise more generally in QFT, using  methods other than holography, and if an arrangement can be constructed. A similar question pertains to finite dimensional Hilbert spaces. In full generality in quantum information theory the arrangement is likely to be infinite, but it would be interesting to explore if under some limited setting of physical relevance (perhaps many body systems, or field theories on a lattice), one can introduce a similar notion of faithfulness and primitivity and derive a corresponding arrangement. In turn, this could ultimately provide useful intuition for the understanding of the holographic case, or more generally for other quantum field theories.

\section*{Acknowledgments}

It is a pleasure to thank Horacio Casini, Sergio Hernandez Cuenca, Xi Dong, Matthew Headrick, and Eduardo Teste for useful discussions.
The authors would like to thank KITP, UCSB for hospitality during the workshop ``Chaos and Order: From strongly correlated systems to black holes'', where the research was supported in part by the National Science Foundation under Grant No.\ NSF PHY17-48958 to the KITP.
M.~Rota would also like to thank QMAP at University of California Davis and University College London for hospitality while this work was in progress.

V.~Hubeny and M.~Rangamani are supported  by U.S.\ Department of Energy grant DE-SC0019480 for research described in \S\ref{subsec:review0} and \S\ref{sec:discuss}, while the rest was supported under grant DE-SC0009999, in addition to funds from the University of California. 
M.~Rota is supported by the Simons Foundation via the ``It from Qubit'' collaboration and by funds from University of California.


\begin{thebibliography}{10}

\bibitem{Hubeny:2018trv}
V.~E. Hubeny, M.~Rangamani, and M.~Rota, {\it {Holographic entropy relations}},
   {\em Fortschritte der Physik} {\bf 66} (2018), no.~11-12 1800067,
  [\href{http://arxiv.org/abs/1808.07871}{{\tt arXiv:1808.07871}}].

\bibitem{2016arXiv161202437W}
M.~{Walter}, D.~{Gross}, and J.~{Eisert}, {\it {Multi-partite entanglement}},
  \href{http://arxiv.org/abs/1612.02437}{{\tt arXiv:1612.02437}}.

\bibitem{Horodecki_2009}
R.~Horodecki, P.~Horodecki, M.~Horodecki, and K.~Horodecki, {\it Quantum
  entanglement},  {\em Reviews of Modern Physics} {\bf 81} (Jun, 2009)
  865--942, [\href{http://arxiv.org/abs/quant-ph/0702225}{{\tt
  quant-ph/0702225}}].

\bibitem{G_hne_2009}
O.~G{\"u}hne and G.~T{\'o}th, {\it Entanglement detection},  {\em Physics
  Reports} {\bf 474} (Apr, 2009) 1--75,
  [\href{http://arxiv.org/abs/0811.2803}{{\tt arXiv:0811.2803}}].

\bibitem{McGill:1954aa}
W.~McGill, {\it Multivariate information transmission},  {\em Transactions of
  the IRE Professional Group on Information Theory} {\bf 4} (1954), no.~4
  93--111.

\bibitem{Kitaev:2005dm}
A.~Kitaev and J.~Preskill, {\it {Topological entanglement entropy}},  {\em
  Phys. Rev. Lett.} {\bf 96} (2006) 110404,
  [\href{http://arxiv.org/abs/hep-th/0510092}{{\tt hep-th/0510092}}].

\bibitem{Levin:2006zz}
M.~Levin and X.-G. Wen, {\it {Detecting Topological Order in a Ground State
  Wave Function}},  {\em Phys. Rev. Lett.} {\bf 96} (2006) 110405,
  [\href{http://arxiv.org/abs/cond-mat/0510613}{{\tt cond-mat/0510613}}].

\bibitem{Hayden:2011ag}
P.~Hayden, M.~Headrick, and A.~Maloney, {\it {Holographic Mutual Information is
  Monogamous}},  {\em Phys. Rev.} {\bf D87} (2013), no.~4 046003,
  [\href{http://arxiv.org/abs/1107.2940}{{\tt arXiv:1107.2940}}].

\bibitem{Hosur:2015ylk}
P.~Hosur, X.-L. Qi, D.~A. Roberts, and B.~Yoshida, {\it {Chaos in quantum
  channels}},  {\em JHEP} {\bf 02} (2016) 004,
  [\href{http://arxiv.org/abs/1511.04021}{{\tt arXiv:1511.04021}}].

\bibitem{VanRaamsdonk:2010pw}
M.~Van~Raamsdonk, {\it {Building up spacetime with quantum entanglement}},
  {\em Gen. Rel. Grav.} {\bf 42} (2010) 2323--2329,
  [\href{http://arxiv.org/abs/1005.3035}{{\tt arXiv:1005.3035}}]. [Int. J. Mod.
  Phys.D19,2429(2010)].

\bibitem{Maldacena:2013xja}
J.~Maldacena and L.~Susskind, {\it {Cool horizons for entangled black holes}},
  {\em Fortsch. Phys.} {\bf 61} (2013) 781--811,
  [\href{http://arxiv.org/abs/1306.0533}{{\tt arXiv:1306.0533}}].

\bibitem{Plenio:2007aa}
M.~B. Plenio and S.~Virmani, {\it An introduction to entanglement measures},
  {\em Quant. Inf. Comput.} {\bf 7} (2007) 1--51,
  [\href{http://arxiv.org/abs/quant-ph/0504163}{{\tt quant-ph/0504163}}].

\bibitem{Verstraete_2002}
F.~Verstraete, J.~Dehaene, B.~De~Moor, and H.~Verschelde, {\it Four qubits can
  be entangled in nine different ways},  {\em Physical Review A} {\bf 65} (Apr,
  2002) [\href{http://arxiv.org/abs/quant-ph/0109033}{{\tt quant-ph/0109033}}].


\bibitem{Ryu:2006bv}
S.~Ryu and T.~Takayanagi, {\it {Holographic derivation of entanglement entropy
  from AdS/CFT}},  {\em Phys. Rev. Lett.} {\bf 96} (2006) 181602,
  [\href{http://arxiv.org/abs/hep-th/0603001}{{\tt hep-th/0603001}}].

\bibitem{Hubeny:2007xt}
V.~E. Hubeny, M.~Rangamani, and T.~Takayanagi, {\it {A Covariant holographic
  entanglement entropy proposal}},  {\em JHEP} {\bf 07} (2007) 062,
  [\href{http://arxiv.org/abs/0705.0016}{{\tt arXiv:0705.0016}}].

\bibitem{VanRaamsdonk:2016exw}
M.~Van~Raamsdonk, {\it {Lectures on Gravity and Entanglement}},  in {\em
  {Proceedings, Theoretical Advanced Study Institute in Elementary Particle
  Physics: New Frontiers in Fields and Strings (TASI 2015): Boulder, CO, USA,
  June 1-26, 2015}}, pp.~297--351, 2017.
\newblock \href{http://arxiv.org/abs/1609.00026}{{\tt arXiv:1609.00026}}.

\bibitem{Rangamani:2016dms}
M.~Rangamani and T.~Takayanagi, {\it {Holographic Entanglement Entropy}},  {\em
  Lect. Notes Phys.} {\bf 931} (2017) pp.1--246,
  [\href{http://arxiv.org/abs/1609.01287}{{\tt arXiv:1609.01287}}].

\bibitem{Harlow:2018fse}
D.~Harlow, {\it {TASI Lectures on the Emergence of the Bulk in AdS/CFT}},
  \href{http://arxiv.org/abs/1802.01040}{{\tt arXiv:1802.01040}}.

\bibitem{Bao:2015bfa}
N.~Bao, S.~Nezami, H.~Ooguri, B.~Stoica, J.~Sully, and M.~Walter, {\it {The
  Holographic Entropy Cone}},  {\em JHEP} {\bf 09} (2015) 130,
  [\href{http://arxiv.org/abs/1505.07839}{{\tt arXiv:1505.07839}}].

\bibitem{Cui:2018dyq}
S.~X. Cui, P.~Hayden, T.~He, M.~Headrick, B.~Stoica, and M.~Walter, {\it {Bit
  Threads and Holographic Monogamy}},
  \href{http://arxiv.org/abs/1808.05234}{{\tt arXiv:1808.05234}}.

\bibitem{Freedman:2016zud}
M.~Freedman and M.~Headrick, {\it {Bit threads and holographic entanglement}},
  {\em Commun. Math. Phys.} {\bf 352} (2017), no.~1 407--438,
  [\href{http://arxiv.org/abs/1604.00354}{{\tt arXiv:1604.00354}}].

\bibitem{Wall:2012uf}
A.~C. Wall, {\it {Maximin Surfaces, and the Strong Subadditivity of the
  Covariant Holographic Entanglement Entropy}},  {\em Class. Quant. Grav.} {\bf
  31} (2014), no.~22 225007, [\href{http://arxiv.org/abs/1211.3494}{{\tt
  arXiv:1211.3494}}].

\bibitem{Headrick:2007km}
M.~Headrick and T.~Takayanagi, {\it {A Holographic proof of the strong
  subadditivity of entanglement entropy}},  {\em Phys. Rev.} {\bf D76} (2007)
  106013, [\href{http://arxiv.org/abs/0704.3719}{{\tt arXiv:0704.3719}}].

\bibitem{Rota:2017ubr}
M.~Rota and S.~J. Weinberg, {\it {New constraints for holographic entropy from
  maximin: A no-go theorem}},  {\em Phys. Rev.} {\bf D97} (2018), no.~8 086013,
  [\href{http://arxiv.org/abs/1712.10004}{{\tt arXiv:1712.10004}}].

\bibitem{Bao:2018wwd}
N.~Bao and M.~Mezei, {\it {On the Entropy Cone for Large Regions at Late
  Times}},  \href{http://arxiv.org/abs/1811.00019}{{\tt arXiv:1811.00019}}.

\bibitem{Czech:2018aa}
B.~Czech and X.~Dong, {\it Private communication \& to appear}.

\bibitem{Hubeny:2018aa}
V.~Hubeny, M.~Rangamani, and M.~Rota, {\it Work in progress}.

\bibitem{Hubeny:2018bri}
V.~E. Hubeny, {\it {Bulk locality and cooperative flows}},  {\em JHEP} {\bf 12}
  (2018) 068, [\href{http://arxiv.org/abs/1808.05313}{{\tt arXiv:1808.05313}}].

\bibitem{Pippenger:2003aa}
N.~Pippenger, {\it The inequalities of quantum information theory},  {\em IEEE
  Trans. Inf. Theory} {\bf 49} (2003), no.~4 773--789.

\bibitem{Faulkner:2013ana}
T.~Faulkner, A.~Lewkowycz, and J.~Maldacena, {\it {Quantum corrections to
  holographic entanglement entropy}},  {\em JHEP} {\bf 11} (2013) 074,
  [\href{http://arxiv.org/abs/1307.2892}{{\tt arXiv:1307.2892}}].

\bibitem{Headrick:2010zt}
M.~Headrick, {\it {Entanglement Renyi entropies in holographic theories}},
  {\em Phys. Rev.} {\bf D82} (2010) 126010,
  [\href{http://arxiv.org/abs/1006.0047}{{\tt arXiv:1006.0047}}].

\bibitem{Stanley:2004aa}
R.~P. Stanley, {\it An introduction to hyperplane arrangements},  {\em
  Geometric combinatorics} {\bf 13} (2004) 389--496.

\bibitem{Orlik:2013ab}
P.~Orlik and H.~Terao, {\em Arrangements of hyperplanes}, vol.~300.
\newblock Springer Science \& Business Media, 2013.

\bibitem{He:2018aa}
T.~He and V.~E. Hubeny, {\it To appear}.

\bibitem{Calabrese:2016xau}
P.~Calabrese and J.~Cardy, {\it {Quantum quenches in 1+1 dimensional conformal
  field theories}},  {\em J. Stat. Mech.} {\bf 1606} (2016), no.~6 064003,
  [\href{http://arxiv.org/abs/1603.02889}{{\tt arXiv:1603.02889}}].

\bibitem{Nahum:2016muy}
A.~Nahum, J.~Ruhman, S.~Vijay, and J.~Haah, {\it {Quantum Entanglement Growth
  Under Random Unitary Dynamics}},  {\em Phys. Rev.} {\bf X7} (2017), no.~3
  031016, [\href{http://arxiv.org/abs/1608.06950}{{\tt arXiv:1608.06950}}].

\bibitem{Mezei:2018jco}
M.~Mezei, {\it {Membrane theory of entanglement dynamics from holography}},
  {\em Phys. Rev.} {\bf D98} (2018), no.~10 106025,
  [\href{http://arxiv.org/abs/1803.10244}{{\tt arXiv:1803.10244}}].

\bibitem{Dong:2016fnf}
X.~Dong, {\it {The Gravity Dual of Renyi Entropy}},  {\em Nature Commun.} {\bf
  7} (2016) 12472, [\href{http://arxiv.org/abs/1601.06788}{{\tt
  arXiv:1601.06788}}].

\bibitem{Linden_2013}
N.~Linden, M.~Mosonyi, and A.~Winter, {\it The structure of renyi entropic
  inequalities},  {\em Proceedings of the Royal Society A: Mathematical,
  Physical and Engineering Sciences} {\bf 469} (Aug, 2013) 20120737--20120737,
  [\href{http://arxiv.org/abs/1212.0248}{{\tt arXiv:1212.0248}}].

\bibitem{Witten:2018lha}
E.~Witten, {\it {APS Medal for Exceptional Achievement in Research: Invited
  article on entanglement properties of quantum field theory}},  {\em Rev. Mod.
  Phys.} {\bf 90} (2018), no.~4 045003,
  [\href{http://arxiv.org/abs/1803.04993}{{\tt arXiv:1803.04993}}].

\bibitem{Hayden_2004}
P.~Hayden, R.~Jozsa, D.~Petz, and A.~Winter, {\it Structure of states which
  satisfy strong subadditivity of quantum entropy with equality},  {\em
  Communications in Mathematical Physics} {\bf 246} (Apr, 2004) 359--374.

\bibitem{Casini:2017roe}
H.~Casini, E.~Teste, and G.~Torroba, {\it {Modular Hamiltonians on the null
  plane and the Markov property of the vacuum state}},  {\em J. Phys.} {\bf
  A50} (2017), no.~36 364001, [\href{http://arxiv.org/abs/1703.10656}{{\tt
  arXiv:1703.10656}}].

\bibitem{Casini:2018kzx}
H.~Casini, E.~Teste, and G.~Torroba, {\it {All the entropies on the
  light-cone}},  {\em JHEP} {\bf 05} (2018) 005,
  [\href{http://arxiv.org/abs/1802.04278}{{\tt arXiv:1802.04278}}].

\end{thebibliography}


\providecommand{\href}[2]{#2}\begingroup\raggedright\endgroup

\end{document}